\newtheorem{theorem}{Theorem}[section]
\newtheorem{lemma}[theorem]{Lemma}
\newtheorem{claim}[theorem]{Claim}
\newtheorem{aspt}{Assumption}[section]
\newtheorem{defn}{Definition}[section]
\newtheorem{remark}{Remark}[section]
\newcommand{\pr}{\mathbb{P}}
\newcommand{\numbuyers}{K}
\newcommand{\numusers}{N}
\newcommand{\budget}{B_k}
\newcommand{\buydec}{N_{k}}
\newcommand{\Q}{\mathcal{Q}}
\newcommand{\ind}{\mathbf{1}}
\newcommand{\stepone}{\text{(i)}\xspace}
\newcommand{\steptwo}{\text{(ii)}\xspace}
\newcommand{\stepthree}{\text{(iii)}\xspace}
\newcommand{\defneq}{:=}
\newcommand{\footremember}[2]{%
    \footnote{#2}
    \newcounter{#1}
    \setcounter{#1}{\value{footnote}}%
}
\title{Equilibria of Data Marketplaces with Privacy-Aware Sellers under Endogenous Privacy Costs}
\author{Diptangshu Sen\footremember{alley}{Georgia Institute of Technology. Email: dsen30@gatech.edu}
\and Jingyan Wang\footremember{trailer}{Georgia Institute of Technology. Email: jingyanw@gatech.edu} 
\and Juba Ziani\footremember{somethingelse}{Georgia Institute of Technology. Email: jziani3@gatech.edu}
}
\date{}
\begin{document}

\maketitle

\begin{abstract}
We study a two-sided online data ecosystem comprised of an online platform, users on the platform, and downstream learners or data buyers. The learners can buy user data on the platform (to run a statistic or machine learning task). Potential users decide whether to join by looking at the trade-off between i) their benefit from joining the platform and interacting with other users and ii) the privacy costs they incur from sharing their data.

First, we introduce a novel modeling element for two-sided data platforms: the privacy costs of the users are endogenous and depend on how much of their data is purchased by the downstream learners. Then, we characterize marketplace equilibria in certain simple settings. In particular, we provide a full characterization in two variants of our model that correspond to different utility functions for the users: i) when each user gets a constant benefit for participating in the platform and ii) when each user's benefit is linearly increasing in the number of other users that participate. In both variants, equilibria in our setting are significantly different from equilibria when privacy costs are exogenous and fixed, highlighting the importance of taking endogeneity in the privacy costs into account. Finally, we provide simulations and semi-synthetic experiments to extend our results to more general assumptions. We experiment with different distributions of users' privacy costs and different functional forms of the users' utilities for joining the platform.
\end{abstract}

\newpage
\section{Introduction}
In today's digital era, tremendous amounts of data  are generated and processed daily. This proliferation of data is used towards a multitude of purposes, such as providing population-level statistics, refining sophisticated advertising strategies, enhancing recommender systems, better understanding user preferences, or training automated loan and hiring algorithms, to name a few. Some organizations hold vast amounts of user or customer data, while others continually seek to augment their datasets by acquiring and purchasing data from external sources. In light of this, the study of online data transactions is crucial in the context of the data-centric world we live in.

One critical issue that arises when user data is collected and transacted is data privacy. Users and consumers are becoming more and more aware of how their data can be shared or leaked, and privacy considerations are becoming an increasingly important factor for many when it comes to engaging with new applications or platforms. A study conducted by the Pew Research Center \cite{pew_privacy} found that roughly $80\%$ of adult Americans are ``very/somewhat concerned'' about how their data would be used once collected, and that ``the potential risks of collecting data about them outweigh the benefits''.

These concerns are exacerbated by numerous real-world instances where online and social media platforms have disclosed private user data to third parties, sometimes without explicit consent. Facebook (now Meta) is known to have shared user data with a large number of third-parties, with perhaps the most notorious and impactful example being the Cambridge Analytica scandal, where Facebook shared the private data of up to 87 million users with the political consulting firm Cambridge Analytica for the sake of political targeting~\cite{Meta}. A lesser known example is that of PayPal revealing that it shares personal data with hundreds of third parties~\cite{Paypal}. This data encompasses personal identifiable information (PII), payment details, and app metadata, among others, highlighting the extensive reach of data sharing practices. Tax preparation companies have also been sharing private taxpayer data with tech giants like Google and Meta for years~\cite{taxprep}. As platforms share more and more data with an increasingly larger number of third-party buyers, users and participants on online platforms are increasingly exposed to heightened risks of privacy violations. In this work, we aim to model how user privacy risks and costs are not just exogenous and fixed, but a function of \emph{how widely their data is shared and transacted}. 

We study this phenomenon in two-sided online platforms: the platform can collect data from participating users, and can share or resell this data to potentially numerous downstream data buyers. Our main modeling novelty and highlight is that we explicitly consider how users' privacy costs can depend on \emph{how many third parties acquire their data}, in particular treating user privacy costs as \emph{endogenous} and a function of the platform's data pricing and re-selling decisions. We are interested in understanding the impact of this endogeneity on user behavior and levels of user participation in two-sided online platforms. We argue that this endogeneity \emph{fundamentally changes} how users decide to interact with online platforms, with the platform's price acting as a knob that impacts the amount of data collected by third parties, user privacy costs, and user decisions to join a platform.

\textbf{Summary of contributions:} Our main contributions are organized as follows: 
\begin{itemize}
\item In Section~\ref{sec:model}, we propose a new model of two-sided data platforms with endogenous privacy costs that depend not only on the users' exogenous privacy preferences, but also on the downstream buyers' purchasing decisions.
\item In Section~\ref{sec:eq_conditions}, we show how the problem of finding an equilibria in our setting reduces to finding the solution of an equation with a single variable, the total user participation rate $\alpha$ on the platform at equilibrium. We also highlight how this equilibrium condition can be used to approximately compute equilibria efficiently. 
\item In Section~\ref{sec:eq_characterization}, we provide \emph{closed-form} characterizations of our equilibria, as a function of the price at which the platform sells data, in two settings: i) when users get a constant benefit from joining the platform and ii) when users get a linearly increasing benefit (in the participation rate) when they join. In both cases, we make the assumption that the users' privacy preferences are drawn from a uniform distribution. We make explicit comparisons to how participation equilibria would look like if privacy costs were exogenous, highlighting that endogeneity has a significant impact on equilibrium structure.
\item In Section~\ref{sec:experiments}, we experimentally expand on the equilibrium characterizations of Section~\ref{sec:eq_characterization}. We do so by relaxing our assumptions across several dimensions: i) we consider a more general distribution of users' privacy preferences inspired by real-life surveys on how humans value their privacy and ii) we consider more general, non-linear participation benefit functions for the users that encode diminishing return effects. 
\end{itemize}

\textbf{Related work:} Over the past decade, there has been a relatively large body of work concerned with studying and understanding properties of data markets and data collection on online platforms. Data markets are different from other markets in many ways, one of which is that data is freely replicable~\cite{sarkar2019ec}. Another specificity of data is privacy concerns: online users' personal data can be sensitive, and users can incur significant privacy losses when their data is shared online. On this front,~\cite{ghosh2011selling} is one of the first works that identifies privacy as a commodity and proposes mechanisms to elicit private data truthfully.

The line of work on data transactions spans many different dimensions, such as the study of mechanisms to design and sell data~\cite{gkatzelis2015pricing,bergemann2018design,bergemann2019information} and the study of mechanisms for data acquisition from sources that need to be compensated, e.g., \cite{roth2012conducting, dataacquisition, shuran_yiling}. This list is non-exhaustive, but a more comprehensive survey can be found at~\cite{bergemann_data_markets}. 

The works in this area are divided into two main lines of research. In the first one, the platform offers a service to the users that is often trained on their data; the benefit to a user is then endogenous, as it in turn depends on how much data users provide to the platform~\cite{liu2016learning, liu2017sequential, liu2018surrogate, dekel2010incentive, meir2011strategyproof, meir2012algorithms, perote2003impossibility, satml2023, fallah2023optimal}. The second one is when users are compensated directly through payments by the learner or marketplace collecting their data~\cite{cummings2015accuracy,cai2015optimum,abernethy2015low, shuran_yiling, roth2012conducting, dataacquisition}; in this case, their benefit from joining is exogenous. Our work aligns with the first category, where a user derives benefit from other users participating on the platform, rather than from monetary payments.

When it comes to data marketplaces, much of the existing literature focuses on a single side of the marketplace at a time; works that model the marketplace in its entirety, simultaneously considering data purchasing and data selling decisions are much rarer, with perhaps the most notable example being the recent work of~\cite{anish_data_marketplace}. While their focus is on the design of revenue-maximizing and incentive-compatible mechanisms, we focus on the understanding of user behavior and their participation levels at equilibrium. There is a central platform which acts as an intermediary (or market-maker). The platform collects data from users, who decide whether they want to participate by trading off their benefit from joining the platform. The platform then acts as an aggregator for the buyer side who want to purchase data to improve their learning tasks subject to budget constraints. The main way in which our work differs from most of the existing literature is that we consider a setting where the data owner's privacy costs are \emph{endogenous}, and scale with how widely buyers acquire their data; e.g., a user whose data is sold to few downstream buyers incurs much lower privacy costs than a user whose data is widely shared. To the best of our knowledge, \cite{gkatzelis2015pricing} is the only other work that explicitly models these effects in the context of data markets. However, we note that the problem we study and our techniques are different from theirs. Their setting focuses on algorithmic mechanism design for eliciting the privacy costs incurred by users and choosing how to compensate them monetarily for their data. We focus instead on understanding how users decide to organically participate, at equilibrium, on an online platform. We do not elicit user privacy valuations, and we do not design incentive-compatible monetary payments and allocations; users instead join because they derive utility from interacting with other users or from a service offered by the platform.

\section{Model}\label{sec:model}
We consider a marketplace comprised of two sides: a user side, that derives utility from joining the platform, and a data buyer side, that benefits from access to user data. The marketplace is centered around a central platform or intermediary that handles user-side data collection and aggregation, and buyer-side data transactions.

\noindent 

\textbf{Platform:} The platform collects data from users that decide to participate, and can re-sell this data to the data buyers. The platform sells data through a posted-price mechanism, controlled by a single price $P$ at which the data points are sold. 

\textbf{Users:} We assume that there is a non-atomic population of \emph{individually rational} users. The user population has total mass $\numusers$. Each user is interested in participating on the platform, but decides whether to do so by trading off their benefit from joining the platform for the privacy cost they incur from their data being sold downstream. In what follows, we separately discuss the models for the users, the buyers, and their interactions.

Let $\alpha \in [0,1]$ be the fraction of the users on the platform. If a user chooses to participate, they earn a \emph{benefit $Q(\alpha)$}, where this benefit function $Q$ is allowed to depend on $\alpha$. This has two potential motivations: i) a user may derive more utility from participating on a platform that has more similar users to her, or ii) the platform offers a data-dependent service to the users, and the quality of service increases with the amount of data the platform has access to (for example, this service could come in the form of better recommendations).

However, a participating user also incurs a \emph{privacy cost}\footnote{For simplicity, we assume a user cannot join a platform and then opt out of data collection.}, as a user's data may be shared by the platform with downstream data buyers. Each user is associated with a real number $v \in [0,1]$, where $v$ denotes the privacy valuation that captures how much the user cares about the privacy of their data. For simplicity, we call a user with privacy valuation $v$ also as ``user $v$''. The higher the $v$, the more a user cares about their privacy, where a value of $1$ corresponds to the most stringent privacy attitudes whereas $0$ corresponds to a user that is completely uninterested in privacy considerations. We define a probability distribution $f(.)$ to describe the distribution of privacy valuations: i.e., a user picked uniformly at random has privacy valuation $v$ with density $f(v)$.

Now, let $n_v$ be the random variable\footnote{Randomness may come for example from how data on the platform is allocated to buyers, as per Section~\ref{sec:eq_conditions}} that defines the number of buyers who purchase user $v$'s data. We define:
\[
     \ind_{k,v} = \begin{cases}
     1, \quad if \quad \text{buyer $k$ buys user $v$'s data} \\
     0, \quad o.w.
     \end{cases}
\]
We assume the privacy cost incurred by user $v$ is linear and  given by $c(v,n_v) = v n_v = v \left( \sum_{k=1}^{\numbuyers}\ind_{k,v} \right)$. To characterize the trade-off between $Q(.)$ and $c(.)$, we assume that each user has a \emph{quasi-linear utility} function: i.e., the net utility of the user is measured by the difference between the benefit received from service and the expected privacy cost incurred from participation. At participation rate $\alpha$, the utility that user $v$ gets in expectation is:  
\begin{align}\label{opt:user_utility}
      U_v = \begin{cases}
      Q(\alpha) - \mathbb{E}\left[v \left( \sum_{k=1}^{\numbuyers}\ind_{k,v} \right) \right], \quad \text{if user $v$ participates}\\
      0, \quad o/w. 
      \end{cases}
\end{align}

\begin{remark}[Range of $v$]
Note that $v$ is restricted to lie in $[0,1]$. This is without loss of generality, so long as privacy valuations are bounded, since the privacy valuations can be rescaled without affecting user behavior, as long as the benefit function $Q(.)$ is rescaled by the same factor. 
\end{remark}

\textbf{Buyer behavior:} 
There are $\numbuyers$ \emph{budget-constrained} buyers indexed $1$ through $\numbuyers$. Each buyer $k$ has a budget $\budget$ with which to purchase data from the platform. The buyer's decision problem is to choose what mass $\buydec$ of the total available data to buy. Each buyer aims to maximize the amount of data they collect subject to their budget constraint (for example, in order to maximize the accuracy of some downstream learning task). Therefore, when facing price $P$ for each unit mass of user data, buyer $k$ buys a mass of user data given by:
\begin{align}\label{eq:buyer_decision}
 N_k(\alpha) = \min\left(\alpha N, \frac{B_k}{P}\right).
\end{align} 

Without loss of generality, we make the following assumption on the buyers' budgets:

\begin{aspt}[Index-ordered budgets]\label{ass:budget_order}
We assume that the buyer budgets $B_1$ through $B_{\numbuyers}$ are in increasing order, i.e., 
\[
    0 < B_1 \leq B_2 \leq ..... \leq B_{\numbuyers} < \infty.
\]
We also define $B_0 \defneq 0$ and $B_{\numbuyers + 1} \defneq +\infty$ for technical convenience.
\end{aspt}

\textbf{Summary of Interactions:} We provide an ordered summary of the interactions in the marketplace:
\begin{enumerate}
    \item The platform sets a price $P$ for every unit mass of data that they sell.
    \item The buyers choose how much data to buy, based on the price $P$. This can be seen as buyers asking for $\frac{B_k}{P}$ amount of data, but may obtain less data if there is less data available on the platform than what the buyer asks for (i.e., $\alpha N \leq B_k/P$). Each buyer only pays for the data eventually allocated to him.
    \item Users (data owners) decide whether they want to participate on the platform. They do so based on knowledge of the amount of data bought by buyers at $P$, as well as knowledge of the user distribution of privacy valuations.\footnote{We note in fact that a user only needs to know the participation rate $\alpha$ at equilibrium and the number of buyers that buy their data, which are both statistics that can be computed and released by the platform itself.}
    \item The platform allocates data to the buyers based on the demand from step 2 and the available data from step 3.
\end{enumerate}

\section{Equilibrium Conditions and Computation}\label{sec:eq_conditions}
\subsection{Equilibrium Conditions}

We start by characterizing user best responses and use them to derive a simple necessary and sufficient equilibrium condition on the user participation rate $\alpha$. Formally, we define an equilibrium in our market as a situation in which \emph{rational} users do not want to deviate from their current participation decision:

\begin{defn}\label{defn:equilibrium}
Given platform price $P$, and given that buyers choose to buy $N_k(\alpha)$ as given in Equation~\eqref{eq:buyer_decision}, we say $\alpha$ is a ``user participation equilibrium'' if and only if when the participation rate is $\alpha$:
\begin{itemize}
    \item all participating users cannot increase their utility by leaving the platform;
    \item all non-participating users cannot increase their utility by joining the platform.
\end{itemize}
We assume that if an agent is indifferent between participating and not participating, we break ties in favor of participation.
\end{defn}

Before deriving our equilibrium conditions, we make the following assumption on how buyers select which users they buy data from.
\begin{aspt}\label{ass:homgenous_buying}
Each buyer buys data uniformly at random from the participating users, i.e., buyers have no preference over the data of specific individuals in the user group.  
\end{aspt}

\noindent
We can now provide our equilibrium conditions. These conditions are provided in the form of a necessary and sufficient condition that $\alpha$ must satisfy at equilibrium. We focus on understanding when non-trivial equilibria, i.e. those where at least a non-trivial fraction of the users decide to participate and $\alpha > 0$, arise:

\begin{claim}\label{clm:user_decision}
Let $P$ be the price set by the platform. 
$\alpha > 0$ is a participation equilibrium if and only if it satisfies the following relation: 
\begin{align}
        \alpha = F\left( \frac{\alpha \numusers Q(\alpha) }{\sum_{k=1}^{\numbuyers}N_k(\alpha)} \right), \label{eq:alpha_opt}
\end{align}
with $F(\cdot)$ the cumulative distribution of users' privacy parameter $v$ 
and $N_k(\alpha)$ given by Equation ~\eqref{eq:buyer_decision}.
\end{claim}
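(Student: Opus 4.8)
The plan is to fix an arbitrary participation rate $\alpha > 0$, compute the best response of a single (measure-zero) user who takes $\alpha$ as given, and then impose the self-consistency requirement that the mass of users who best-respond by participating equals $\alpha$ itself; this consistency is exactly Equation~\eqref{eq:alpha_opt}. I would begin by computing the expected privacy cost of a participating user $v$. Conditioning on rate $\alpha$, the participating population has mass $\alpha \numusers$, and by Assumption~\ref{ass:homgenous_buying} buyer $k$ draws a mass $N_k(\alpha)$ uniformly at random from this population. For a non-atomic population this means each participating user is bought by buyer $k$ with marginal probability $\pr[\ind_{k,v}=1] = N_k(\alpha)/(\alpha \numusers)$, which is a valid probability since $N_k(\alpha) = \min(\alpha \numusers, B_k/P) \le \alpha \numusers$. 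Because $v$ is deterministic, linearity of expectation (which needs only these marginals, not independence across buyers) gives
\[
\expect\!\left[ v \sum_{k=1}^{\numbuyers} \ind_{k,v} \right] = v \sum_{k=1}^{\numbuyers} \frac{N_k(\alpha)}{\alpha \numusers} = v\,\frac{\sum_{k=1}^{\numbuyers} N_k(\alpha)}{\alpha \numusers},
\]
so by Equation~\eqref{opt:user_utility} the net utility of a participating user $v$ is $U_v = Q(\alpha) - v\,\sum_{k=1}^{\numbuyers} N_k(\alpha)/(\alpha \numusers)$.

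Next I would extract the threshold structure of best responses. For $\alpha > 0$ and $P > 0$ the denominator $\sum_{k=1}^{\numbuyers} N_k(\alpha)$ is strictly positive, since each $N_k(\alpha) > 0$ because $B_k > 0$ by Assumption~\ref{ass:budget_order}; hence $U_v$ is a strictly decreasing affine function of $v$. Combining individual rationality with the tie-breaking rule of Definition~\ref{defn:equilibrium}, user $v$'s best response is to participate if and only if $U_v \ge 0$, i.e.\ if and only if $v \le v^\star(\alpha)$, where
\[
v^\star(\alpha) := \frac{\alpha \numusers Q(\alpha)}{\sum_{k=1}^{\numbuyers} N_k(\alpha)}.
\]
Thus at a fixed rate $\alpha$ the set of users who prefer to participate is exactly $\{v : v \le v^\star(\alpha)\}$, whose population fraction is $F(v^\star(\alpha))$.

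I would then close the fixed-point argument in both directions. For sufficiency, if $\alpha = F(v^\star(\alpha))$, then assigning precisely the users with $v \le v^\star(\alpha)$ to participate yields a mass fraction $\alpha$ in which every participant has $U_v \ge 0$ (no incentive to leave) and every non-participant has $U_v < 0$ (no incentive to join), so $\alpha$ is an equilibrium by Definition~\ref{defn:equilibrium}. For necessity, at any equilibrium the participating set must agree with the best-response set $\{v \le v^\star(\alpha)\}$ up to a measure-zero set — otherwise a positive mass of users would strictly gain by switching, contradicting Definition~\ref{defn:equilibrium} — and therefore its fraction $\alpha$ equals $F(v^\star(\alpha))$, which is Equation~\eqref{eq:alpha_opt}.

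I expect the main obstacle to be the careful treatment of the non-atomic population: justifying that an individual user's deviation leaves $\alpha$, and hence the threshold $v^\star(\alpha)$, unchanged so that best responses may be computed against a fixed $\alpha$, and that ``uniformly at random from a non-atomic mass $\alpha \numusers$'' is precisely what yields the marginal inclusion probability $N_k(\alpha)/(\alpha \numusers)$. Once these modeling points are pinned down, the remaining work — rearranging $U_v \ge 0$ into the threshold and composing with $F$ — is routine algebra.
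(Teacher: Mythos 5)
Your proposal is correct and follows essentially the same route as the paper's proof: compute the expected privacy cost $v\sum_{k} N_k(\alpha)/(\alpha \numusers)$ from Assumption~\ref{ass:homgenous_buying}, derive the participation threshold $v \leq \alpha \numusers Q(\alpha)/\sum_k N_k(\alpha)$ via individual rationality, and compose with $F$ to obtain the fixed-point condition. Your additions --- verifying the denominator is strictly positive, spelling out both directions of the equivalence, and flagging the measure-zero/non-atomic subtleties --- are sound refinements of the same argument rather than a different approach.
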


Given a fixed platform price $P$, we often re-write the equilibrium conditions as the following non-linear system of equations, for purposes of technical exposition: 
\begin{align}\label{eq:nonlinear_sys}
    &\alpha = F\left( \frac{\alpha \numusers Q(\alpha) }{\sum_{k=1}^{\numbuyers} N_k(\alpha)} \right); \\
    & N_k(\alpha) = \min \left(\alpha \numusers, \frac{B_k}{P} \right) \quad \forall~ k \in [\numbuyers]. \nonumber
\end{align}

\begin{proof}
The proof is divided into $3$ steps: in step $1$, we write user $v$'s expected privacy cost when participating in closed-form. In step $2$, we show how \textit{individual rationality} gives us a condition that the privacy valuation $v$ must satisfy for user $v$ to participate. Finally, in step $3$, we incorporate the distribution of user privacy valuations to obtain the equation the participation rate $\alpha$ must satisfy. 

\paragraph{Step 1:} We characterize the users' expected utilities from participation. Recall that user $v$'s privacy cost is given by $v \cdot \left( \sum_{k=1}^{\numbuyers}\ind_{k_v}\right)$ where $\ind_{k,v}$ is the indicator function that buyer $k$ buys user $v$'s data. Therefore,  the expected privacy cost incurred by user $v$ is:
\begin{align*}
   \mathbb{E}[\text{Cost}_{v}] 
   = \mathbb{E} \left[ v \cdot \left( \sum_{k=1}^{\numbuyers}\ind_{k,v}\right) \right] 
    = v \cdot \sum_{k=1}^{\numbuyers}\mathbb{E}[\ind_{k,v}]
\end{align*}
which can be rewritten as:
\begin{align}\label{eq:cost}
     \mathbb{E}[\text{Cost}_{v}]  = v \cdot \sum_{k=1}^{\numbuyers}\pr\left[ \text{buyer $k$ buys user $v$'s data} \right].
\end{align}
Now we use Assumption \ref{ass:homgenous_buying} to derive an expression for the probability term. Since buyers buy the participating data points uniformly at random, and a total $N_k(\alpha)/\alpha N$ fraction of the data is bought by buyer $k$, we have:
\begin{align*}
    \pr \left[ \text{buyer $k$ buys user $v$'s data} \right] 
    = \frac{N_k(\alpha)}{\alpha \numusers}
\end{align*}
for all participating users independent of their valuation $v$.  Plugging the probability back in Equation~\ref{eq:cost} for the expected cost, we obtain 
\begin{align}
    \mathbb{E}[\text{Cost}_{v}] = v \cdot \left( \sum_{k=1}^{\numusers}\frac{N_k(\alpha)}{\alpha \numusers} \right). \label{exp:privacy_cost}
\end{align}

\paragraph{Step 2:} From Equation~\ref{exp:privacy_cost}, we can now provide participation conditions for each user that we will use to characterize $\alpha$. Since user $v$'s net utility is given by: 
\begin{align*}\label{opt:user_utility}
      U_v = \begin{cases}
      Q(\alpha) - \mathbb{E}\left[v \left( \sum_{k=1}^{\numbuyers}\ind_{k,v} \right) \right], \quad \text{if user $v$ participates}\\
      0, \quad o/w, 
      \end{cases}
\end{align*}
and since user $v$ is individually rational, they choose to participate if and only if the net utility from participation exceeds the net utility from abstention, i.e., $Q(\alpha) - \mathbb{E}[\text{Cost}_v] \geq 0$. Now, plugging in the expression of the expected cost derived in Equation~\eqref{exp:privacy_cost}, we have that user $v$ participates if and only if:
\begin{align*}
    Q(\alpha) - v \cdot \left( \sum_{k=1}^{\numusers}\frac{N_k(\alpha)}{\alpha \numusers} \right) \geq 0,
\end{align*}
which is equivalent to 
\begin{align*}
v \leq \frac{\alpha \numusers Q(\alpha)} {\sum_{k=1}^{\numbuyers}N_k(\alpha)}.
\end{align*}

\paragraph{Step 3:} We now can derive the condition that the participation rate $\alpha$ satisfies at equilibrium. Since a user participates if and only if 
\[
v \leq \frac{\alpha \numusers Q(\alpha)} {\sum_{k=1}^{\numbuyers}N_k(\alpha)},
\]
the fraction of the total user mass that participates on the platform is given exactly by:
\[
   F\left( \frac{\alpha \numusers Q(\alpha)} {\sum_{k=1}^{\numbuyers}N_k(\alpha)} \right),
\]
remembering that $F$ is the c.d.f. of the distribution of users' privacy valuations. Therefore, $\alpha$ constitutes an induced user participation rate at equilibrium if and only if
\begin{align*}  
    \alpha &= F\left( \frac{\alpha \numusers Q(\alpha)} {\sum_{k=1}^{\numbuyers}N_k(\alpha)} \right).
\end{align*}
This concludes the proof of the claim.
\end{proof}
 
\textbf{Consequences for Equilibrium Computation:} We note that a simple \textit{grid search} approach approximately computes \emph{all} market equilibria for any fixed platform price $P$. Our problem reduces to solving Equation~\eqref{eq:alpha_opt} which only depends on a single real variable, $\alpha \in (0,1]$. We can then discretize over $(0,1]$ and return the values of $\alpha$ that satisfy Equation~\eqref{eq:alpha_opt} up to a pre-specified tolerance level, for any given market price $P$. The experimental results of Section~\ref{sec:experiments} are all computed by our grid search approach.

\section{Closed-Form Characterization of Market Equilibria in Simple Settings}\label{sec:eq_characterization}

We provide closed-form equilibrium characterizations in two variants of our model: one in which each user gets a constant benefit from participation, and the other in which each user's benefit increases linearly with the number of other users that participate. We highlight that these equilibrium characterizations can significantly differ depending on how users' benefits are structured, showing how participation equilibria are sensitive to the structure and the quality of service and interactions the users face on the platform. We provide experimental results for additional classes of benefit functions in Section~\ref{sec:experiments}.

\subsection{Preliminaries}
Before stating our results, we will set up some basic assumptions and definitions in an attempt to simplify the exposition. 

\begin{aspt}[Uniform Valuations]\label{ass:uniform_dist}
We assume that the distribution over the users' privacy parameter $v$ is uniform in the interval $[0, 1]$.
\end{aspt}
\noindent

In this section, we characterize the equilibria of our problem for any given value of $P > 0$, i.e., we compute the value(s) of $\alpha$ that arise at equilibrium. We focus on non-trivial equilibria where $\alpha > 0$. Throughout this section, we operate under the uniform distribution assumption. In this case, given $P$, the conditions for $\alpha$ to be an equilibrium become:
\begin{equation}\label{opt:platform}
    \begin{split}
    &\alpha = \min\left(1, \frac{\alpha \numusers Q(\alpha)}{\left( \sum_{k=1}^{\numbuyers}N_k(\alpha) \right)} \right); \\
    &N_k(\alpha) = \min \left(\alpha \numusers, \frac{\budget}{P} \right) \quad \forall~k \in [\numbuyers].
    \end{split}
\end{equation}

We define two types of non-zero equilibria depending on the value of the user participation rate $\alpha$ at said equilibrium. If $\alpha = 1$, we call the equilibrium ``an equilibrium with full participation'', while if $0<\alpha < 1$, we call it ``an equilibrium with partial participation''.

\begin{defn}[Cumulative Budget Notation]\label{defn:Bleq}
    We define $B_{\leq m}$ as the sum of the budgets for buyers $1$ through $m$, i.e., 
    \[
     B_{\leq m} 
     = \begin{cases}
     \sum_{k=1}^{m} \budget & if~ m \in [\numbuyers], \\
     0 & if ~ m = 0. 
     \end{cases}
    \]
\end{defn}

\subsection{Market Equilibria in the Constant Benefit Case}\label{sec:constant}

In this section, we characterize the equilibria of our market in the case where participating on the platform yields a constant utility $Q(\alpha) = \Q$ for each user. That is, the utility obtained by a participating user is independent of all other users' participation decisions.

Our main theorems provide closed-form characterizations of the equilibria in this setting. We divide our results into three cases, depending on how the benefit $\Q$ to users scales compared to the other problem parameters. Theorem~\ref{thm:eq_exist_highQ} considers the case when $\Q$ is large enough. Theorem~\ref{thm:eq_exist_modQ} considers the moderate case in which the range of possible values of $\Q$ is both upper and lower bounded. Finally, Theorem~\ref{thm:eq_exist_lowQ} considers the case in which $\Q$ is small. Our proofs are given in Appendix~\ref{app:constant}. We use the notation $\gamma(k) \triangleq \frac{(\numbuyers - k)B_{k} + B_{\leq k}}{\Q \numusers}$ in our theorem statements.

\begin{theorem}[High Benefit Case]\label{thm:eq_exist_highQ}
Suppose, $\Q \geq \numbuyers$. Then for all $P > 0$, there exists a unique equilibrium given by $\alpha = 1$.
\end{theorem}

In the high benefit case, the only equilibrium that arises is the one with full participation. Intuitively, the benefit for participation is so high that users have strong incentives to join the platform, even if their privacy valuations are high and even if buyers buy all user data. 

\begin{theorem}[Moderate Benefit Case]\label{thm:eq_exist_modQ}
Suppose $\numbuyers > \Q  \geq \frac{B_{\leq \numbuyers}}{B_{\numbuyers}}$. There exists a price threshold $\bar{P} < \gamma(K)$ such that:
\begin{itemize}
\item \textbf{Full Participation} For all $P \geq \bar{P}$, there is a unique (non-trivial) equilibrium given by $\alpha = 1$.
\item \textbf{Partial Participation} For all $P < \bar{P}$, there is a unique (non-trivial) equilibrium satisfying $\alpha < 1$. This equilibrium is given by  
\[
\alpha =  \frac{1}{\numbuyers - k}\left[\Q - \frac{B_{\leq k}}{P \numusers} \right]
\]
where k is the unique integer with $0 \le k \le K-1$ that satisfies
\[
\gamma(k) < P \leq \gamma(k+1). 
\]
\begin{figure}[ht]
    \begin{center}
    \includegraphics[width=0.5\textwidth]{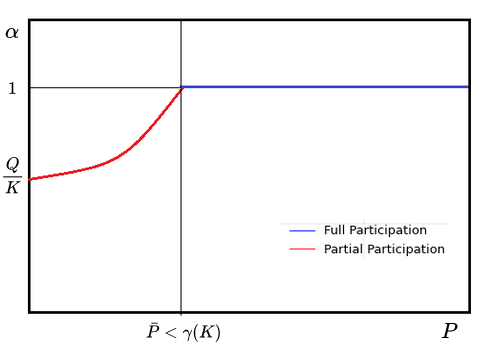}
    \caption{\footnotesize{The user participation rate $\alpha$ (y-axis) as a function of the platform price $P$ (x-axis), in the moderate benefit case of $\numbuyers > \Q B_{\numbuyers} \geq B_{\leq \numbuyers}$. Partial participation (in red) arises when the price satisfies $P < \overline{P}$, while full participation (in blue) arises when $P \ge \overline{P}$. .}}\label{fig:alphavsP_constant_modben}
    \end{center}
\end{figure}
\end{itemize}
\end{theorem}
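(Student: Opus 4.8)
The plan is to reduce the fixed-point condition of Claim~\ref{clm:user_decision} to a statement about total buyer demand and then read off the closed form. Under Assumption~\ref{ass:uniform_dist}, Equation~\eqref{opt:platform} reads $\alpha = \min\bigl(1, g(\alpha)\bigr)$ with $g(\alpha) = \frac{\alpha \numusers \Q}{\sum_{k=1}^{\numbuyers} N_k(\alpha)}$. Writing $G(\alpha) \defneq \sum_{k=1}^{\numbuyers} N_k(\alpha) = \sum_{k=1}^{\numbuyers} \min(\alpha \numusers, B_k/P)$ for the total demand, the key observation is that for $\alpha > 0$ the partial-participation equation $g(\alpha) = \alpha$ is equivalent to the single clean condition $G(\alpha) = \numusers \Q$ (the factor $\alpha$ cancels), while full participation $\alpha = 1$ is an equilibrium if and only if $g(1) \ge 1$, i.e.\ $G(1) \le \numusers \Q$. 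First I would record these two reformulations; everything else is an analysis of the piecewise-linear, concave, non-decreasing function $G$.

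Next I would derive the closed form regime-by-regime. Since budgets are ordered (Assumption~\ref{ass:budget_order}), at a given $(\alpha, P)$ the budget-constrained buyers are exactly $1, \dots, k$ for some $k$, namely when $B_k/P < \alpha \numusers \le B_{k+1}/P$, and there $G(\alpha) = \frac{B_{\leq k}}{P} + (\numbuyers - k)\alpha \numusers$. Solving $G(\alpha) = \numusers \Q$ in this regime gives exactly $\alpha = \frac{1}{\numbuyers - k}\bigl[\Q - \frac{B_{\leq k}}{P \numusers}\bigr]$, matching the claimed formula. I would then substitute this $\alpha$ back into the two regime-defining inequalities; using the identity $\gamma(k+1) = \frac{(\numbuyers - k)B_{k+1} + B_{\leq k}}{\Q \numusers}$, these collapse precisely to $\gamma(k) < P \le \gamma(k+1)$. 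Because $\gamma$ is non-decreasing in $k$ (the increments equal $\frac{(\numbuyers-k)(B_{k+1}-B_k)}{\Q\numusers} \ge 0$) with $\gamma(0) = 0$ and $\gamma(\numbuyers) = \frac{B_{\leq \numbuyers}}{\Q \numusers}$, the half-open intervals $(\gamma(k), \gamma(k+1)]$ partition $(0, \gamma(\numbuyers)]$, so each price $P$ in this range selects a unique $k$ and hence a unique candidate $\alpha$.

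To locate the threshold and settle uniqueness I would use monotonicity. The map $P \mapsto G(1) = \sum_k \min(\numusers, B_k/P)$ is non-increasing and equals $\numbuyers \numusers$ for small $P$; since $\Q < \numbuyers$ the equation $G(1) = \numusers \Q$ has a unique root $\bar P$, lying in the range where $G(1)$ is strictly decreasing, which defines the threshold. For $P < \bar P$ we have $G(1) > \numusers \Q$, so since $G(0) = 0$ and $G$ is continuous and strictly increasing on the relevant range, there is a unique $\alpha^\ast \in (0,1)$ with $G(\alpha^\ast) = \numusers \Q$; this is the unique (partial) equilibrium, and full participation fails because $g(1) < 1$. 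For $P \ge \bar P$ we have $G(1) \le \numusers \Q$, so full participation is an equilibrium while $G(\alpha) = \numusers \Q$ has no root in $(0,1)$, making $\alpha = 1$ the unique equilibrium (ties broken toward participation at $P = \bar P$). Finally, to get $\bar P < \gamma(\numbuyers)$ I would evaluate $G(1)$ at $P = \gamma(\numbuyers)$: there $B_k/P = B_k \Q \numusers / B_{\leq \numbuyers}$, so $G(1) = \numusers \sum_k \min(1, B_k \Q / B_{\leq \numbuyers})$; since $\sum_k B_k \Q / B_{\leq \numbuyers} = \Q$ exactly while the assumption $\Q \ge B_{\leq \numbuyers}/B_{\numbuyers}$ caps the top term, we get $G(1) \le \numusers \Q$ at $\gamma(\numbuyers)$, forcing $\bar P \le \gamma(\numbuyers)$ and strictly so whenever $\Q > B_{\leq \numbuyers}/B_{\numbuyers}$.

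I expect the main obstacle to be the regime bookkeeping and the uniqueness argument rather than any single computation: one must be sure that the candidate produced in the regime ``$1,\dots,k$ constrained'' is genuinely self-consistent (this is exactly the content of the equivalence with $\gamma(k) < P \le \gamma(k+1)$) and that no equilibrium hides in another regime or at a kink of $G$. Reducing the partial condition to the monotone scalar equation $G(\alpha) = \numusers \Q$ is what disposes of both difficulties at once, since a continuous strictly increasing function crosses the level $\numusers \Q$ at most once. A minor point to flag is the boundary $\Q = B_{\leq \numbuyers}/B_{\numbuyers}$, where the candidate at $P = \gamma(\numbuyers)$ equals $1$ and the strict inequality $\bar P < \gamma(\numbuyers)$ degenerates to equality.
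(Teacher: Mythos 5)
Your proposal is correct, but it takes a genuinely different route from the paper. The paper's proof (Appendix A) is organized around the buyer threshold $k^*$ of Claim~\ref{clm:N_monotone}: it derives necessary and sufficient price windows for a partial-participation equilibrium at each $k^*$ (namely $\gamma(k^*) < P \le \gamma(k^*+1)$, plus $P < P^*(k^*)$ when $\Q > \numbuyers - k^*$) and for a full-participation equilibrium at each $k^*$ ($\xi(k^*) < P \le \xi(k^*+1)$ and $P \ge P^*(k^*)$), then proves existence and uniqueness of a pivotal index $\bar k$ admitting both types, shows via binding-constraint claims that the two regimes at $\bar k$ are separated by $\bar P = P^*(\bar k)$, and finally verifies that the price windows across all $k^*$ are disjoint and cover $\mathbb{R}_{>0}$. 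You instead collapse everything onto the scalar equation $G(\alpha) = \numusers\Q$ for the total demand $G(\alpha) = \sum_k \min(\alpha\numusers, B_k/P)$, exploit that $G$ is piecewise-linear, concave and non-decreasing in $\alpha$, and define $\bar P$ directly as the unique root of the monotone map $P \mapsto G(1)$. This buys a much shorter uniqueness argument (a concave non-decreasing function crosses a level at most once when its value at $1$ exceeds the level) and makes the threshold's existence transparent, whereas the paper's combinatorial route buys the finer buyer-side characterization of Theorem~\ref{thm:eq_existence_full} and the explicit formula $\bar P = P^*(\bar k)$, which is then reused for the low-benefit theorem. One gloss in your write-up deserves tightening: ``$G$ strictly increasing on the relevant range'' should be justified, since $G$ plateaus at $B_{\leq \numbuyers}/P$ once $\alpha\numusers \ge B_{\numbuyers}/P$; the repair is immediate either from concavity (a flat stretch at level $\numusers\Q$ would force $G(1) = \numusers\Q$, contradicting $G(1) > \numusers\Q$ for $P < \bar P$) or from noting that for $P \le \gamma(\numbuyers)$ the plateau begins at $\alpha = B_{\numbuyers}/(P\numusers) \ge \Q B_{\numbuyers}/B_{\leq \numbuyers} \ge 1$ under the moderate-benefit assumption. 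Finally, your boundary flag is a genuine catch: at $\Q = B_{\leq \numbuyers}/B_{\numbuyers}$ one indeed gets $\bar P = P^*(\bar k) = \gamma(\numbuyers)$ (e.g., $\numbuyers = 2$, $B_1 = 1$, $B_2 = 3$, $\Q = 4/3$ gives $P^*(1) = \gamma(2) = 3/\numusers$), so the theorem's strict inequality $\bar P < \gamma(\numbuyers)$ holds only for $\Q > B_{\leq \numbuyers}/B_{\numbuyers}$ and degenerates to equality at the boundary, even though the full/partial dichotomy and uniqueness assertions survive there.
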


In the moderate benefit case, the type of equilibria we obtain now depends on the price, and in particular on how it impacts the fraction of the available data that is bought by the buyers. At low prices, buyers buy a higher fraction of the available data, increasing users' privacy costs. Then users with lower privacy valuations are willing to participate while users with higher privacy valuations are not, leading to partial participation. These partial participation equilibria disappear as the price $P$ becomes large enough, in  which case the buyers buy less and less data, lowering the privacy costs of the users. We plot these equilibria in Figure~\ref{fig:alphavsP_constant_modben}.

\begin{theorem}[Low Benefit Case]\label{thm:eq_exist_lowQ}
Suppose $\Q < \frac{B_{\leq \numbuyers}}{B_{\numbuyers}}$. There exists a price threshold $\bar{P} = \frac{B_{\leq K}}{\Q \numusers} = \gamma(K)$ such that:
\begin{itemize}
\item \textbf{Full Participation} For all $P > \bar{P}$, there exists a unique (non-trivial) equilibrium given by $\alpha = 1$.
\item \textbf{Partial Participation} For all $P < \bar{P}$, there is a unique (non-trivial) equilibrium satisfying $\alpha < 1$. This equilibrium is given by  
\[
\alpha =  \frac{1}{\numbuyers - k}\left[\Q - \frac{B_{\leq k}}{P \numusers} \right]
\]
where k is the unique integer with $0 \le k \le K-1$ that satisfies
\[
\gamma(k) < P \leq \gamma(k+1). 
\]
\item \textbf{Mixed Participation} At $P = \bar{P}$, the set of participation equilibria is given by
\[
\alpha \in \left[\frac{Q B_K}{B_{\leq K}}, 1\right].
\]
\end{itemize}
\end{theorem}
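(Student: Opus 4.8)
The plan is to reduce the equilibrium condition in Equation~\eqref{opt:platform} to the geometry of the single aggregate demand function $g(\alpha) \defneq \sum_{k=1}^{\numbuyers} N_k(\alpha) = \sum_{k=1}^{\numbuyers}\min\left(\alpha\numusers, \frac{\budget}{P}\right)$. Writing the fixed point as $\alpha = \min\bigl(1, \frac{\alpha\numusers\Q}{g(\alpha)}\bigr)$ and dividing by $\alpha > 0$, one sees that $\alpha$ is a non-trivial equilibrium if and only if either (a) $\alpha = 1$ and $g(1) \le \Q\numusers$ (full participation, where the $\min$ is clipped at $1$), or (b) $0 < \alpha < 1$ and $g(\alpha) = \Q\numusers$ (partial participation). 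All three bullets then follow from tracking where the horizontal line at height $\Q\numusers$ meets the graph of $g$.

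First I would record the structure of $g$: it is continuous, nondecreasing, and piecewise linear, with breakpoints at $\alpha_k \defneq B_k/(P\numusers)$; on each interval $(\alpha_k, \alpha_{k+1}]$ it equals $\frac{B_{\leq k}}{P} + (\numbuyers - k)\numusers\,\alpha$, it is strictly increasing on $[0, \alpha_{\numbuyers}]$, and it is constant at its maximum value $g(\alpha_{\numbuyers}) = B_{\leq \numbuyers}/P$ for $\alpha \ge \alpha_{\numbuyers}$. The decisive quantitative input of the low-benefit regime is that $\Q < B_{\leq \numbuyers}/B_{\numbuyers}$ is exactly equivalent to $\bar P = \gamma(\numbuyers) = B_{\leq \numbuyers}/(\Q\numusers) > B_{\numbuyers}/\numusers$; that is, at the threshold price every buyer is strictly budget-constrained even at $\alpha = 1$, so $\alpha_{\numbuyers} < 1$ there. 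This is what makes the low-benefit threshold coincide with $\gamma(\numbuyers)$ and is the structural distinction from the moderate case.

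For the full-participation bullet, I would prove $g(1) \le \Q\numusers \iff P \ge \bar P$: for $P \ge \bar P > B_{\numbuyers}/\numusers$ all buyers are budget-constrained at $\alpha = 1$, so $g(1) = B_{\leq \numbuyers}/P$ and the inequality reads $P \ge \bar P$; for $P \le B_{\numbuyers}/\numusers$, monotonicity of $g(1)$ in $P$ together with the low-benefit hypothesis forces $g(1) > \Q\numusers$, ruling out full participation. For the partial bullet, when $P < \bar P$ we have $g(0) = 0 < \Q\numusers < g(1)$, so by the intermediate value theorem and strict monotonicity on $[0, \alpha_{\numbuyers}]$ there is a unique root $\alpha^\star \in (0,1)$ of $g(\alpha) = \Q\numusers$. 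Solving the linear expression for $g$ on the active interval gives $\alpha^\star = \frac{1}{\numbuyers - k}\bigl[\Q - \frac{B_{\leq k}}{P\numusers}\bigr]$, and a direct computation shows $\alpha^\star \in (\alpha_k, \alpha_{k+1}] \iff \gamma(k) < P \le \gamma(k+1)$; since $\gamma(0) = 0$, $\gamma(\numbuyers) = \bar P$, and $\gamma$ is nondecreasing (because $\gamma(k+1) - \gamma(k) = (\numbuyers - k)(B_{k+1} - B_k)/(\Q\numusers) \ge 0$), exactly one such $k$ exists.

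The delicate case, and the main obstacle, is the mixed bullet at $P = \bar P$, which is precisely where the flat part of $g$ sits at the critical height $\Q\numusers$. At $P = \bar P$ we have $g(\alpha) = \Q\numusers$ for every $\alpha \ge \alpha_{\numbuyers} = \Q B_{\numbuyers}/B_{\leq \numbuyers}$, so condition (b) is satisfied by the whole interval $[\alpha_{\numbuyers}, 1)$, while $g(1) = \Q\numusers$ makes $\alpha = 1$ satisfy condition (a); taking the union yields the continuum $\alpha \in [\Q B_{\numbuyers}/B_{\leq \numbuyers}, 1]$. The care needed is to track the transition from strict monotonicity to flatness exactly at the threshold, and to verify that substituting $P = \bar P$ into $\alpha_{\numbuyers}$ reproduces the stated lower endpoint $\Q B_{\numbuyers}/B_{\leq \numbuyers}$, so that uniqueness genuinely fails only at this single price.
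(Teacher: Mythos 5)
Your proposal is correct, and it takes a genuinely different route from the paper. You collapse the whole system into the single aggregate demand curve $g(\alpha) \defneq \sum_{k=1}^{\numbuyers}\min\left(\alpha\numusers, \frac{B_k}{P}\right)$ and read off all equilibria as intersections of this continuous, nondecreasing, piecewise-linear curve with the horizontal line at height $\Q\numusers$ (plus the clipped case $\alpha = 1$ when $g(1) \leq \Q\numusers$); existence and uniqueness come from the intermediate value theorem and strict monotonicity below $\alpha_{\numbuyers} = B_{\numbuyers}/(P\numusers)$, and the continuum at $\bar P$ is exactly the flat top of $g$ at height $B_{\leq \numbuyers}/P$ sitting at the critical level. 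The paper instead proceeds combinatorially: it fixes a buyer threshold $k^*$ (Claim~\ref{clm:N_monotone}), derives necessary and sufficient price windows for partial and full participation at each $k^*$ (Lemmas~\ref{lem:price_alphaless1} and~\ref{lem:price_alpha1}), and then for the low-benefit case shows $\bar k = \numbuyers$, that all partial equilibria at $k^* = \numbuyers$ occur at the single price $\gamma(\numbuyers)$ (Lemma~\ref{lem:lowbenefit}), establishes the continuum there (Claim~\ref{clm:lowben_feas_alpha}), and stitches the price axis together by reusing the moderate-case machinery (Claims~\ref{clm:alpha_monotone}, \ref{clm:P_binding_alphaless1}, \ref{clm:P_binding_alpha1}). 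The two are mathematically equivalent --- your breakpoints $\alpha_k = B_k/(P\numusers)$ are precisely the paper's buyer thresholds, and your interval computation $\alpha^\star \in (\alpha_k, \alpha_{k+1}] \iff \gamma(k) < P \leq \gamma(k+1)$ reproduces the paper's condition verbatim --- but your version is shorter, self-contained, and makes both uniqueness and the knife-edge multiplicity at $P = \bar P$ transparent, whereas the paper's $k^*$-indexed analysis additionally delivers the buyer-side characterization (which buyers exhaust their budgets and the quantities $N_k$, as in Theorem~\ref{thm:eq_existence_full}) and shares its lemmas with the moderate-benefit theorem. One small point worth making explicit in your write-up: uniqueness for $P > \bar P$ also requires ruling out partial equilibria, which in your framework is the one-line observation that $\max_{\alpha} g(\alpha) = B_{\leq \numbuyers}/P < B_{\leq \numbuyers}/\bar P = \Q\numusers$, so the horizontal line never meets the curve and only the clipped solution $\alpha = 1$ survives.
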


\begin{figure}[ht]
    \begin{center}
    \includegraphics[width=0.5\textwidth]{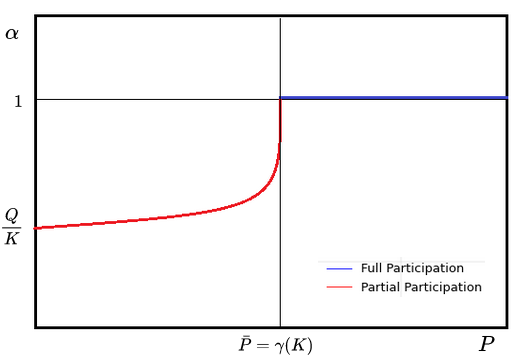}
    \caption{\footnotesize{The user participation rate $\alpha$ (y-axis) as a function of the platform price $P$ (x-axis), in the ow benefit case where $\Q B_{\numbuyers} < B_{\leq \numbuyers}$. Observe the presence of multiple equilibria at $\bar P = \gamma(\numbuyers) = \frac{Q B_K}{B_{\le K}}$.}}\label{fig:alphavsP_constant_lowben}
    \end{center}
\end{figure}
We plot these equilibria in Figure~\ref{fig:alphavsP_constant_lowben}. The insights of Theorem~\ref{thm:eq_exist_lowQ} are similar to those of Theorem~\ref{thm:eq_exist_modQ}, but there are two slight differences. The first is that at a fixed $P$, the partial participation equilibria are lower than in the medium benefit case; this is unsurprising, as the benefit from participation is lesser, and incentives to participate are weaker. Perhaps more interestingly, we note that there exists a single price where there is a continuum of equilibria that ``connect'' the partial participation equilibria with the full participation equilibria. The intuition is the following: at this threshold price, we reach a point where for $\alpha$ big enough, there is no buyer in the marketplace that can buy all the available data on the marketplace anymore; instead, each buyer $k$ buys exactly $B_k/\bar P$ data points. The total amount of data on the market is $\sum_k B_k/\bar P = B_{\leq K}/\bar{P}$, and each user $v$ has utility $\Q - v \cdot B_{\leq K}/(\alpha N \bar P) = \Q - v \Q/\alpha$. Any participation rate that is high enough is then stable: if $\alpha = v$, a user with valuation $v$ has utility exactly $0$, and only users with privacy valuation $v$ or less are willing to participate\footnote{This is independent of tie-breaking so long as $f(v)$ has no point mass. In this case, the participation rate is exactly $F(v)$, independently of how users with privacy valuation $v$, the only ones that are indifferent between participation and non-participation, behave.}. This continuum disappears once $P > \bar{P}$, noting that a user's utility is $\Q - v \cdot B_{\leq K}/(\alpha N P) > \Q - v \Q/\alpha$, and setting an initial participation rate $\alpha < 1$, users whose privacy valuation slightly higher than $\alpha$ still get positive utility and have an incentive to participate; these users then decide to participate and to deviate from the current solution. \\

\textbf{Comparison with the setting of exogenous privacy costs:}
To show the impact of endogeneity, we consider a simpler alternate scenario where user privacy costs are completely exogenous and do not depend on how many buyers obtain a given user's data. Each user now has a privacy cost of participation given by a non-negative real-number $c$. For simplicity of exposition, we persist with the assumption that their privacy costs are uniformly distributed in the interval $[0, V]$, and $F(x) = x/V$. 

In this setting, any user who chooses to participate on the platform gets a fixed benefit $\Q$. Therefore, a user with cost of participation $c$ will participate if and only if $\Q \geq c$. This implies that the user participation rate at equilibrium is given exactly by 
\[
\alpha = F(\Q) =  \min\left(1, \frac{\Q}{V}\right).
\]

Importantly, the participation rate $\alpha$ is independent of $P$. We plot the exogenous equilibria in Figure \ref{fig:exogenous_constant} as a function of the platform benefit $\Q$ instead. We remark that the user participation rate increases linearly with $\Q$ until it reaches full participation $\alpha = 1$ at $\Q = V$, beyond which the platform benefit is high enough that it always induces full participation.
\begin{figure}[!ht]
  \centering
  \includegraphics[width=.4\textwidth]{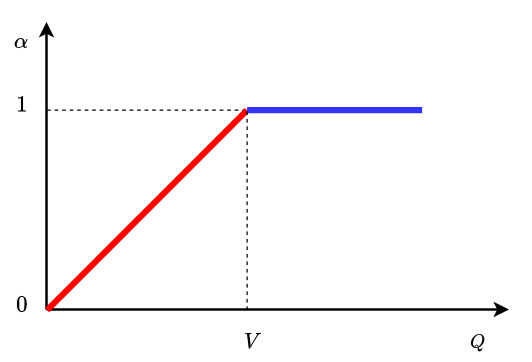}
  \caption{Market equilibria in the constant benefit case when user privacy costs are exogenous and uniformly distributed in $[0, V]$.}
  \label{fig:exogenous_constant}
\end{figure}

We see that when privacy costs are exogenous, we still have a \textit{low benefit} ($\Q < V$) and \textit{high benefit} ($\Q \geq V$) regime. However, for a fixed $\Q$, the participation rate $\alpha$ at equilibrium is \textbf{unique} (either partial participation or full participation) and \emph{always independent of the price $P$}. On comparing with the endogenous setting, we see that the equilibrium outcomes are exactly identical in the \textit{high benefit} case, with $\alpha = 1$ being the unique equilibrium irrespective of price $P$. However, there are stark differences between the two settings at smaller values of $\Q$. In the endogenous privacy cost setting, there is base level of participation which is determined by $\Q$, but higher levels of $\alpha$ can also be achieved at the same $\Q$ as $P$ increases; changing $P$ can in fact induce a range of market equilibria $\alpha$ (both partial participation and full participation types) for a fixed $\Q$.

An important aspect of the endogeneity of our model is that it shows how a platform can incentivize participation by providing more privacy, here entirely through the way it prices data re-sale to third parties. $P$ acts as a knob that can be turned to control the privacy promised by the platform. This is in contrast with the world in which privacy attitudes are exogenous and one has to incentivize further participation is through improved quality of service (or through lowering users' privacy costs through implementing explicit privacy techniques such as differential privacy, which we leave to previous work, e.g.~\cite{fallah2023optimal,satml2023}).

\subsection{Market Equilibria in the Linear Benefit Case}\label{sec:linear}

We consider the setting where the benefit offered by the platform is linearly increasing in the number of participating users, i.e., there exists a constant $C > 0$ such that $\Q = C \alpha \numusers$. This assumption is not too far-fetched. For example, consider social networking platforms like Facebook. Users derive much higher utility from participating if there are already many other users participating. In this case, the updated equilibrium condition, derived directly from Equation~\eqref{eq:nonlinear_sys}, is as below :
\begin{align}\label{opt:lin}
    & \alpha = \min \left(1, \frac{C \alpha^2 \numusers^2}{\sum_{k=1}^{\numbuyers}N_k} \right)~~~\text{and}~~~
    N_k = \min \left( \alpha \numusers, \frac{B_k}{P} \right) \quad \forall ~k \in [\numbuyers]. 
\end{align}

We now introduce our main results. We show that the market can induce completely different equilibria structures depending on the marginal benefit $C$ offered by the platform.
Theorem \ref{thm:lin_highben} demonstrates that when the platform offers a high enough marginal benefit, all users are ready to forego their privacy costs and participate on the platform. In Theorem \ref{thm:lin_lowben} and Figure~\ref{fig:alphavsP_lin_lowcase}, we consider a setting where the benefit offered by the platform is not very high and therefore, users care a lot more about their privacy costs. In this setting, we show that there is price threshold which separates the regime where users participate from where they do not. Finally, in Theorem \ref{thm:lin_splcase} and Figure~\ref{fig:alphavsP_lin_splcase}, we consider a special case where $C\numusers = \numbuyers$. We show that it is now possible to have infinitely many equilibria at the same price.
The proofs for all theorems are provided in Appendix~\ref{app:linear}, where we in fact provide a stronger version of our theorems that fully characterize buyer behavior. 

\begin{theorem}[High Benefit Case]\label{thm:lin_highben}
When $C\numusers > \numbuyers$, then for all prices $P > 0$, there exists a unique non-trivial equilibrium with $\alpha = 1$. 
\end{theorem}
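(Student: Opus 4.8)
The plan is to work directly with the single-variable equilibrium condition in Equation~\eqref{opt:lin}, namely $\alpha = \min\left(1, \frac{C\alpha^2\numusers^2}{\sum_{k=1}^{\numbuyers} N_k(\alpha)}\right)$ with $N_k(\alpha) = \min(\alpha\numusers, B_k/P)$, and to split the argument into two parts: (i) verifying that $\alpha = 1$ is an equilibrium, and (ii) ruling out any partial-participation equilibrium $\alpha \in (0,1)$. Since the theorem concerns non-trivial equilibria, I restrict attention to $\alpha \in (0,1]$ throughout, so that Claim~\ref{clm:user_decision} applies.

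The key quantity I would isolate is the ratio $g(\alpha)/\alpha$, where I write $g(\alpha) = \frac{C\alpha^2\numusers^2}{\sum_{k=1}^{\numbuyers} N_k(\alpha)}$ for the right-hand expression inside the $\min$. The crucial observation is the uniform bound $\sum_{k=1}^{\numbuyers} N_k(\alpha) \le \numbuyers\, \alpha\numusers$, which holds termwise because $N_k(\alpha) = \min(\alpha\numusers, B_k/P) \le \alpha\numusers$. For any $\alpha \in (0,1]$ this yields $\frac{g(\alpha)}{\alpha} = \frac{C\alpha\numusers^2}{\sum_{k} N_k(\alpha)} \ge \frac{C\alpha\numusers^2}{\numbuyers\,\alpha\numusers} = \frac{C\numusers}{\numbuyers} > 1$, invoking the hypothesis $C\numusers > \numbuyers$; hence $g(\alpha) > \alpha$ on all of $(0,1]$. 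Before dividing, I would note that $\sum_{k} N_k(\alpha) > 0$ whenever $\alpha > 0$, since $B_1 > 0$ (Assumption~\ref{ass:budget_order}) and $P > 0$ force each $N_k(\alpha) > 0$, so every division above is legitimate.

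From the single inequality $g(\alpha) > \alpha$ the whole statement follows quickly. At $\alpha = 1$ we get $g(1) > 1$, so $\min(1, g(1)) = 1 = \alpha$, confirming that full participation is an equilibrium. For a candidate $\alpha \in (0,1)$, I case-split on the $\min$: either $g(\alpha) \ge 1$, in which case $\min(1, g(\alpha)) = 1 \neq \alpha$, or $g(\alpha) < 1$, in which case $\min(1, g(\alpha)) = g(\alpha) > \alpha$. In both cases the fixed-point equation fails, so no partial-participation equilibrium exists. Combining the two parts gives that $\alpha = 1$ is the unique non-trivial equilibrium.

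I do not expect a genuine obstacle here: the argument reduces to one inequality driven by $\sum_k N_k(\alpha) \le \numbuyers\,\alpha\numusers$. The only points requiring care are handling the $\min$ in the equilibrium condition by the case-split on whether $g(\alpha) \ge 1$, and justifying the divisions by $\sum_k N_k(\alpha)$ and by $\alpha$, both nonzero on $(0,1]$. The termwise bound is tight precisely when every buyer's demand $B_k/P$ exceeds the available mass $\alpha\numusers$, which is why the strict hypothesis $C\numusers > \numbuyers$ is exactly what is needed to push the ratio above $1$ uniformly; the participation conclusion itself does not depend on the finer buyer-allocation details underlying the stronger appendix version.
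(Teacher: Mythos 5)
Your proposal is correct and follows essentially the same route as the paper: both hinge on the termwise bound $\sum_{k=1}^{\numbuyers} N_k(\alpha) \leq \numbuyers \alpha \numusers$ together with $C\numusers > \numbuyers$ to rule out any $\alpha \in (0,1)$, and both verify $\alpha = 1$ by checking $C\numusers^2 \geq \sum_k N_k$. Your explicit case-split on the $\min$ and the remark that $\sum_k N_k(\alpha) > 0$ are slightly more careful write-ups of the same argument, not a different approach.
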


This is perhaps the most intuitive case. When $C \numusers > \numbuyers$, the maximum cost incurred by a user is $v \sum_k \frac{N_k}{\alpha \numusers} \leq \numbuyers$. If all users participate and $\alpha = 1$, they then each user gets utility $C \alpha \numusers - v \sum_k \frac{N_k}{\alpha \numusers} \geq C \numusers  - \numbuyers  > 0$. I.e., no user wants to leave the platform, and we have an equilibrium. 

\begin{theorem}[Low Benefit Case]\label{thm:lin_lowben}
When $C\numusers < \numbuyers$, there exists a unique price threshold $\bar{P}$ such that:
\begin{enumerate}
\item For $P < \bar{P}$, no non-trivial equilibrium exists.
\item For $P \geq \bar{P}$, full participation $(\alpha = 1)$ is always an equilibrium.
\item For $P > \bar{P}$, there exists a unique partial participation equilibrium $(0< \alpha < 1)$ of the form $\alpha = \frac{A}{P}$ for an instance-dependent\footnote{$A$ only depends on $(B_1,\ldots,B_k)$, $C$ and $N$.} constant $A$ if and only if there exists $\hat k$ such that $\frac{B_{\hat k}}{N} < \frac{\sum_{j \leq \hat k} B_j}{C\numusers^2 - \numusers (\numbuyers - \hat k)} \leq \frac{B_{\hat k+1}}{N}$. Further, $A$ can be computed in polynomial time in $K$. 
\end{enumerate}
\end{theorem}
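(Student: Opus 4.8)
The plan is to reduce the equilibrium condition in Equation~\eqref{opt:lin} to a single-variable root-finding problem and then exploit the resulting piecewise-linear structure. Write $S(\alpha) \defneq \sum_{k=1}^{\numbuyers} N_k(\alpha) = \sum_{k=1}^{\numbuyers}\min(\alpha\numusers, B_k/P)$ for the total mass of data sold. By Equation~\eqref{opt:lin}, $\alpha=1$ is a full-participation equilibrium iff $C\numusers^2/S(1)\ge 1$, i.e.\ $S(1)\le C\numusers^2$, while $0<\alpha<1$ is a partial-participation equilibrium iff $\alpha = C\alpha^2\numusers^2/S(\alpha)$, i.e.\ $S(\alpha)=C\alpha\numusers^2$. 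So partial equilibria are exactly the roots in $(0,1)$ of
\[
g(\alpha)\defneq S(\alpha) - C\alpha\numusers^2 .
\]
The first structural fact I would establish is that $g$ is continuous and piecewise linear, and on the regime where exactly buyers $1,\dots,m$ are budget-saturated (so $N_k = B_k/P$ for $k\le m$ and $N_k=\alpha\numusers$ for $k>m$) its slope is $\numusers(\numbuyers - m - C\numusers)$. Since the number of saturated buyers is non-decreasing in $\alpha$, these slopes are non-increasing, so $g$ is \emph{concave}; moreover $g(0)=0$, $g'(0^+)=\numusers(\numbuyers-C\numusers)>0$ (using $C\numusers<\numbuyers$), and $g(\alpha)\to -\infty$. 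Hence $g$ rises then falls and has a \emph{unique} positive root $\alpha^\star$.

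Next I would pin down the threshold $\bar P$ and settle parts (1) and (2). Each term $\min(\numusers, B_k/P)$ is continuous and non-increasing in $P$, so $S(1)$ is continuous and non-increasing, decreasing from $\numbuyers\numusers$ (as $P\to 0$) to $0$ (as $P\to\infty$); since $\numbuyers\numusers > C\numusers^2 > 0$ there is a unique $\bar P$ with $S(1)=C\numusers^2$, and $S(1)\le C\numusers^2 \iff P\ge \bar P$. With the tie-breaking convention of Definition~\ref{defn:equilibrium}, this gives part (2): $\alpha=1$ is an equilibrium exactly when $P\ge\bar P$. For part (1), when $P<\bar P$ we have $S(1)>C\numusers^2$, i.e.\ $g(1)>0$; by concavity together with $g(0)=0$ and $g'(0^+)>0$, $g>0$ on all of $(0,1]$, so $g$ has no root in $(0,1)$ and $\alpha=1$ is not an equilibrium, hence no non-trivial equilibrium exists.

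For part (3), fix $P>\bar P$. Then $g(1)=S(1)-C\numusers^2<0$, so the unique positive root $\alpha^\star$ lies in $(0,1)$ and is the sole partial equilibrium. To get its closed form, suppose $\alpha^\star$ falls in the regime where buyers $1,\dots,\hat k$ are budget-saturated, i.e.\ $\alpha^\star\in[\,B_{\hat k}/(P\numusers),\,B_{\hat k+1}/(P\numusers)\,]$. There $g$ is linear, $g(\alpha)=B_{\leq \hat k}/P + (\numbuyers-\hat k)\alpha\numusers - C\alpha\numusers^2$, and solving $g=0$ yields $\alpha^\star = A/P$ with
\[
A = \frac{B_{\leq \hat k}}{\numusers\,(C\numusers - \numbuyers + \hat k)} .
\]
Because $\alpha^\star$ is a \emph{downward} crossing, the local slope is negative, forcing $C\numusers-\numbuyers+\hat k>0$ and hence $A>0$. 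The regime is self-consistent precisely when $A/P$ lies in the stated interval, i.e.\ $B_{\hat k}/\numusers < A \le B_{\hat k+1}/\numusers$, which is exactly the theorem's condition and is manifestly \emph{price-independent} (both $\alpha^\star$ and the kinks $B_k/(P\numusers)$ scale as $1/P$); thus $\hat k$ and $A$ depend only on $(B_1,\dots,B_{\numbuyers}),C,\numusers$. Existence of a partial equilibrium is therefore equivalent to the existence of such an index $\hat k$.

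Finally I would argue uniqueness of $\hat k$ and the complexity claim. Uniqueness of the partial equilibrium is immediate from uniqueness of $\alpha^\star$; to present it in the stated index form I would show that at most one $k$ satisfies the half-open condition $B_k/\numusers < A_k \le B_{k+1}/\numusers$, where $A_k \defneq B_{\leq k}/(\numusers(C\numusers-\numbuyers+k))$, via a monotonicity/tiling argument on the thresholds $B_k/\numusers$ so that the half-open regime intervals partition the axis and the root selects exactly one. For computation, one forms the prefix sums $B_{\leq k}$ and scans $k=1,\dots,\numbuyers$, testing the condition and returning $A=A_{\hat k}$, which is $\bigO(\numbuyers)$. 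I expect the \textbf{main obstacle} to be the bookkeeping of the piecewise-linear regimes: proving that the regime-consistency requirement is equivalent to the displayed inequality and that exactly one index satisfies it, while correctly handling boundary cases where $\alpha^\star$ sits at a kink (resolved by the half-open intervals and the tie-breaking convention).
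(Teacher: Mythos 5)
Your proposal is correct, and it takes a genuinely different route from the paper's. The paper never forms your aggregate function $g$: it instead conditions on the buyer threshold $k^*$ of Claim~\ref{clm:N_monotone} and derives, for each $k^*$, necessary and sufficient \emph{price windows} for partial equilibria (Lemma~\ref{lem:lin_alphaless1_nec_cond}) and full-participation equilibria (Lemma~\ref{lem:lin_alpha1_nec_cond}), then obtains uniqueness of $\hat k$ from two propagation lemmas on the budget inequalities (Lemmas~\ref{lem:tech1} and~\ref{lem:tech2}) and assembles the theorem by a case split on whether $\hat k$ exists (Appendix~\ref{app:lin_simplethm}). Your single-crossing argument --- $g(\alpha)\defneq \sum_{k}\min(\alpha\numusers, B_k/P)-C\alpha\numusers^2$ is concave piecewise linear with $g(0)=0$ and initial slope $\numusers(\numbuyers-C\numusers)>0$ --- delivers uniqueness of the partial equilibrium in one stroke, makes the $1/P$ scaling and the identity $A=\bar P$ transparent, and your uniqueness-of-$\hat k$ ``tiling'' sketch is easily completed: two admissible indices $k_1<k_2$ would produce two distinct roots of $g$ lying in disjoint half-open regime intervals (disjoint since $B_{k_1+1}\le B_{k_2}$), contradicting root uniqueness, and the paper's tie convention in Claim~\ref{clm:N_monotone} matches your half-open intervals. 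One consequence of your approach deserves emphasis: it proves something slightly \emph{stronger} than the statement, namely that for every $P>\bar P$ the unique root lies in $(0,1)$ and hence in some negative-slope regime, so $\hat k$ \emph{always exists} when $C\numusers<\numbuyers$ and $0<B_1$. This can also be verified combinatorially by descending induction from $k=\numbuyers$: the right inequality is automatic there since $B_{\numbuyers+1}=+\infty$; failure of the left inequality at $k$ (i.e.\ $B_{\leq k}\le(C\numusers-\numbuyers+k)B_k$) implies, after subtracting $B_k$ and dividing, the right inequality at $k-1$; and the left inequality cannot fail at the smallest admissible index, where the denominator $C\numusers-(\numbuyers-k)$ lies in $(0,1]$. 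So the theorem's ``if and only if'' clause is always satisfied and the paper's Case~2 in Appendix~\ref{app:lin_simplethm} is vacuous --- you should state this explicitly, since as written your part~(3) argument asserts unconditional existence while your closing sentence reads as though the $\hat k$ condition might fail. What your route does not directly produce is the finer buyer-side characterization of Theorem~\ref{thm:lin_exist} (which threshold $k^*$ each equilibrium sits at, and the exact per-regime price windows), though that information is recoverable from your regime bookkeeping.
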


\begin{figure}[t]
    \begin{center}
    \includegraphics[width=0.5\textwidth]{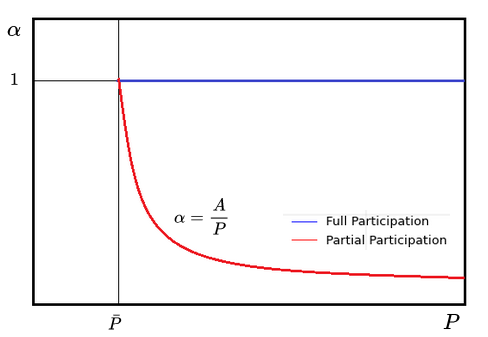}
    \caption{\footnotesize{This figure shows the evolution of user participation rate $\alpha$ on the y-axis as a function of the platform price $P$ on the x-axis. This is for the low benefit case when $C\numusers < \numbuyers$ and $\hat k$ exists. The red line represents partial participation equilibria while the blue line represents full participation equilibria. It is clear that there are multiple equilibria at a single price $P$.}}\label{fig:alphavsP_lin_lowcase}
    \end{center}
\end{figure}
New and interesting partial participation equilibria arise in Theorem~\ref{thm:lin_lowben} and Figure~\ref{fig:alphavsP_lin_lowcase}. Here, we note that $\alpha$ is decreasing in $P$! This goes against the intuition that higher $P$ should lead to less data being bought by the buyers, which increases user incentives to participate. The intuition here is that one can create a self-reinforcing feedback loop that works as follows: as $P$ increases, $\alpha$ decreases; then, user utilities also decrease; this incentives fewer users to participate, leading to a decreasing $\alpha$.  

\begin{theorem}[Special Case]\label{thm:lin_splcase}
When $C\numusers = \numbuyers$, at any price $P > 0$, there exists: 
\begin{enumerate}
    \item a non-trivial equilibrium with $\alpha = 1$. 
    \item Any $\alpha \in \left(0, \frac{B_1}{P\numusers}\right] \cap \left(0, 1 \right)$ is a partial equilibrium.
\end{enumerate}
\end{theorem}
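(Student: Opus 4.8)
The plan is to specialize the equilibrium characterization of Claim~\ref{clm:user_decision} to the uniform-valuation, linear-benefit regime and then exploit the algebraic cancellation that the knife-edge relation $C\numusers = \numbuyers$ forces. First I would recall that under Assumption~\ref{ass:uniform_dist} the CDF satisfies $F(x) = \min(1,x)$ for $x \geq 0$, so substituting $Q(\alpha) = C\alpha\numusers$ reduces the equilibrium condition to exactly Equation~\eqref{opt:lin}, namely $\alpha = \min\left(1, \frac{C\alpha^2\numusers^2}{\sum_{k} N_k(\alpha)}\right)$ with $N_k(\alpha) = \min(\alpha\numusers, B_k/P)$. Both parts of the theorem then amount to checking for which $\alpha$ this single fixed-point relation holds.

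For item~1, I would set $\alpha = 1$ and bound $\sum_{k=1}^{\numbuyers} N_k(1) = \sum_k \min(\numusers, B_k/P) \leq \numbuyers\numusers$. Since $C\numusers = \numbuyers$ gives $C\numusers^2 = \numbuyers\numusers$, the inner ratio $\frac{C\numusers^2}{\sum_k N_k(1)} \geq 1$, so the outer minimum equals $1$ and $\alpha = 1$ solves the equation for every $P > 0$. This shows full participation is always an equilibrium, independently of the price.

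For item~2, the key observation is that when $0 < \alpha \leq \frac{B_1}{P\numusers}$ we have $\alpha\numusers \leq B_1/P \leq B_k/P$ for every $k$, using the budget ordering of Assumption~\ref{ass:budget_order}; hence every buyer is data-constrained and $N_k(\alpha) = \alpha\numusers$. Consequently $\sum_k N_k(\alpha) = \numbuyers\alpha\numusers$, and the inner ratio becomes $\frac{C\alpha^2\numusers^2}{\numbuyers\alpha\numusers} = \frac{C\numusers}{\numbuyers}\,\alpha = \alpha$, where the final equality is precisely $C\numusers = \numbuyers$. Because $\alpha < 1$ on the stated interval, the outer $\min$ is inactive and the fixed-point equation holds identically; by the necessity-and-sufficiency of Claim~\ref{clm:user_decision}, every such $\alpha$ is a partial-participation equilibrium, giving the claimed continuum $\left(0, \frac{B_1}{P\numusers}\right] \cap (0,1)$.

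I do not anticipate a serious obstacle: the entire theorem rests on the cancellation $\frac{C\numusers^2}{\numbuyers\alpha\numusers} = \alpha$, which is available only at the knife-edge $C\numusers = \numbuyers$ and explains why a continuum of equilibria appears here but not generically. The one point requiring care is correctly identifying the regime in which all buyers satisfy $N_k(\alpha) = \alpha\numusers$; this is pinned down exactly by the upper endpoint $\frac{B_1}{P\numusers}$ together with $B_1 \leq B_k$, and I would confirm that the intersection with $(0,1)$ in the statement properly absorbs the case $\frac{B_1}{P\numusers} \geq 1$.
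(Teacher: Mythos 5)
Your proof is correct and follows essentially the same route as the paper: direct verification of the fixed-point condition $\alpha = \min\left(1, \frac{C\alpha^2\numusers^2}{\sum_{k}N_k}\right)$, using $\sum_k N_k \leq \numbuyers\numusers = C\numusers^2$ to establish $\alpha = 1$, and the cancellation $\frac{C\alpha^2\numusers^2}{\numbuyers\alpha\numusers} = \alpha$ when every buyer is data-constrained on $\left(0, \frac{B_1}{P\numusers}\right]$. The paper's appendix additionally proves a converse (partial participation equilibria can arise only at buyer threshold $k^* = 0$, i.e., only on this interval), but the theorem as stated makes no uniqueness claim, so your sufficiency-only argument fully covers it.
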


\begin{figure}[h]
    \begin{center}
    \includegraphics[width=0.5\textwidth]{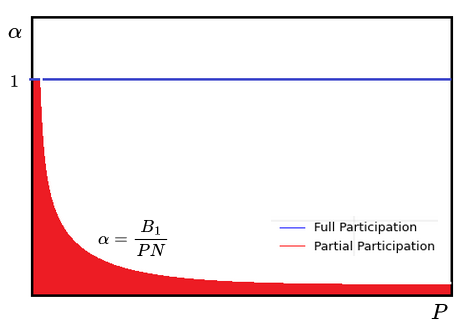}
    \caption{\footnotesize{This figure shows the evolution of user participation rate $\alpha$ on the y-axis as a function of the platform price $P$ on the x-axis. This is for the special case when $C\numusers = \numbuyers$. The shaded red region represents partial participation equilibria while the blue line represents full participation equilibria. It is clear that there can be infinitely many equilibria at a single price $P$.}}\label{fig:alphavsP_lin_splcase}
    \end{center}
\end{figure}

Theorem~\ref{thm:lin_splcase} and Figure~\ref{fig:alphavsP_lin_splcase} show similar insights that there exists a full participation equilibrium at all prices $P > 0$, like in Theorem~\ref{thm:lin_highben}. A major difference here, however, is that there are also infinitely many partial participation equilibria at each price $P$, upper bounded by $\frac{B_1}{P\numusers}$. The intuition here is that for any price $P$ and for sufficiently small $\alpha$ ($0 < \alpha \leq \frac{B_1}{P\numusers}$), every buyer has enough money to buy all the data available on the market. In this scenario, any user $v$ participating on the platform incurs a privacy cost of $v  \cdot \left(\numbuyers \alpha \numusers \right)$. Observe that user $v$ has an incentive to participate if and only if $C \alpha^2 \numusers^2 - v\cdot (\numbuyers \alpha \numusers) =  C \alpha^2 \numusers^2 - v\cdot (C \numusers^2 \alpha) \geq 0$ (as $\numbuyers = C \numusers$), i.e., $v \leq \alpha$. Thus, at price $P$, every $\alpha$ such that $0 < \alpha \leq \frac{B_1}{P\numusers}$ is feasible. \\ 

\textbf{Comparison to the setting with exogenous privacy costs:} For the linear benefit case, we again do a comparative analysis with the setting when all user privacy costs are exogenous. In this case, the benefit of joining the platform is given by $Q(\alpha) = C\alpha \numusers$ when the user participation rate is $\alpha$. At this point, a new user with cost of participation $c$ will join if and only if: 
\[
       C\alpha \numusers - c \geq 0, \quad \text{or equivalently} \quad c \leq C\alpha \numusers.
\]
This implies that the user participation rate at equilibrium is given by $\alpha = F(C \alpha \numusers) = \min \left(1, \frac{C\alpha \numusers}{V}\right)$. Figure \ref{fig:exogenous_linear} depicts all possible market equilibria under different regimes of $C\numusers$. If $C\numusers > V$, $\alpha = 0$ and $\alpha = 1$ are the only two possible equilibria. When $C\numusers = V$, all values of $\alpha \in [0,1]$ are feasible equilibria. Finally, if $C\numusers < V$, there are no equilibria with $\alpha > 0$.\\
\begin{figure}[!ht]
  \centering
  \includegraphics[width=.50\textwidth]{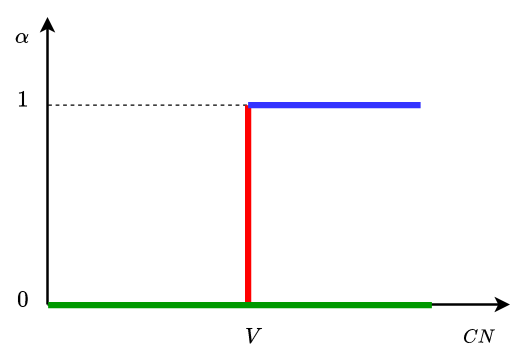}
  \caption{Market equilibria in the linear benefit case when user privacy costs are entirely exogenous and uniformly distributed in $[0, V]$.}
  \label{fig:exogenous_linear}
\end{figure} 

Figure \ref{fig:exogenous_linear} indicates that when user privacy costs are exogenous, the market equilibria has a very simple structure, again completely \emph{independently of the platform price $P$}. For low values of the benefit, there are no non-trivial market equilibria, but once the benefit offered is high enough, $\alpha = 1$ emerges as the only non-trivial equilibrium. The exogenous and endogenous settings are identical with respect to the high benefit case (producing only full participation and $\alpha = 0$ equilibria), but the outcomes can vary largely when benefits offered on the platform are smaller. 

In the exogenous setting, there is a threshold value of benefit below which the market admits no equilibria with $\alpha > 0$. This is not true in the endogenous cost setting which can induce user participation even at low levels of benefit. Instead of a benefit threshold, the endogenous setting has a price threshold (at any given level of benefit $C\numusers$) and all non-trivial equilibria emerge above the price threshold. It is important to note that as the benefit parameter $C$ decreases, the price threshold is higher. Basically, in the small benefit regime, the platform has to set high enough prices so that it can maintain incentives for user participation by lowering their expected privacy costs using the price signal. The endogenous setting admits a rich equilibrium structure above the price threshold, with the existence of both full and partial participation equilibria.

\section{Experiments}\label{sec:experiments}
We now aim to complement our theoretical results with experiments that expand along two dimensions: first, we consider more general distributions of privacy attitudes that are informed by real-life surveys~\cite{privacy_measure} on privacy attitudes and inspired from previous work~\cite{personalized_privacy} on personalized differential privacy. Second, we consider more general benefit functions for participating users that capture the idea of ``diminishing returns'': i.e., a user on the platform may not be able to interact with or get some benefit from every other user on the platform, and in turn may get less marginal benefit from each additional user past a certain point. 

\subsection{Beyond Uniform Privacy Valuations}

In this section, we consider a more general model of privacy valuation that is inspired by~\cite{privacy_measure} and~\cite{personalized_privacy}.~\cite{privacy_measure} run a survey in which participants reported their privacy attitudes, ranked as ``low'', ``medium'', and ``high'' levels of concerns. They show that a $0.107$ fraction of the participants had a low level of privacy, vs $0.355$ for medium and $0.537$ for high. Based on these considerations,~\cite{personalized_privacy} build a distribution of differential privacy parameters $\varepsilon$ that puts a probability mass of $0.537$ on $\varepsilon \in [0,\varepsilon_M)$ (high privacy concerns)\footnote{Note that in differential privacy, a lower value of $\varepsilon$ corresponds to a higher privacy requirement.}, of $0.355$ on $\varepsilon \in [\varepsilon_M,1)$ (medium privacy concerns), and of $0.107$ on $\varepsilon = 1$ (low privacy concerns), for some $\varepsilon_M \in (0,1)$.

For the sake of our experiments, instead of working with privacy requirements in terms of differential privacy parameter $\varepsilon$ where lower epsilon corresponds to higher privacy concerns (see~\cite{dworkroth} for more details on differential privacy), we work with privacy valuations, where a higher valuation corresponds to more stringent privacy concerns. Using the data from~\cite{privacy_measure} and inspired by the approach of~\cite{personalized_privacy}, we define our ranges of privacy valuations to be $v = 0$ with probability $0.107$ (lowest privacy attitude, there is no cost to privacy), $v$ uniform in $(0,v_M]$ (medium privacy attitude) with probability $0.537$, and $v$ uniform in $(v_M,1]$ (high privacy attitude) with probability $0.355$. We refer to this distribution as the \emph{personalized privacy} distribution. In our experiments, we provide results when varying $v_M \in (0,1)$, provided in Figure~\ref{fig:custom_privacy_constant} for the case of a constant benefit function $Q(\alpha) = \Q$, and in Figure~\ref{fig:custom_privacy_linear} for the case of a linear benefit function $Q(\alpha) = C \alpha N$.

\begin{figure}[!ht]
  \centering
  \raisebox{35pt}{\parbox[b]{.09\textwidth}{Low}}%
  \subfloat[][]{\includegraphics[width=.30\textwidth]{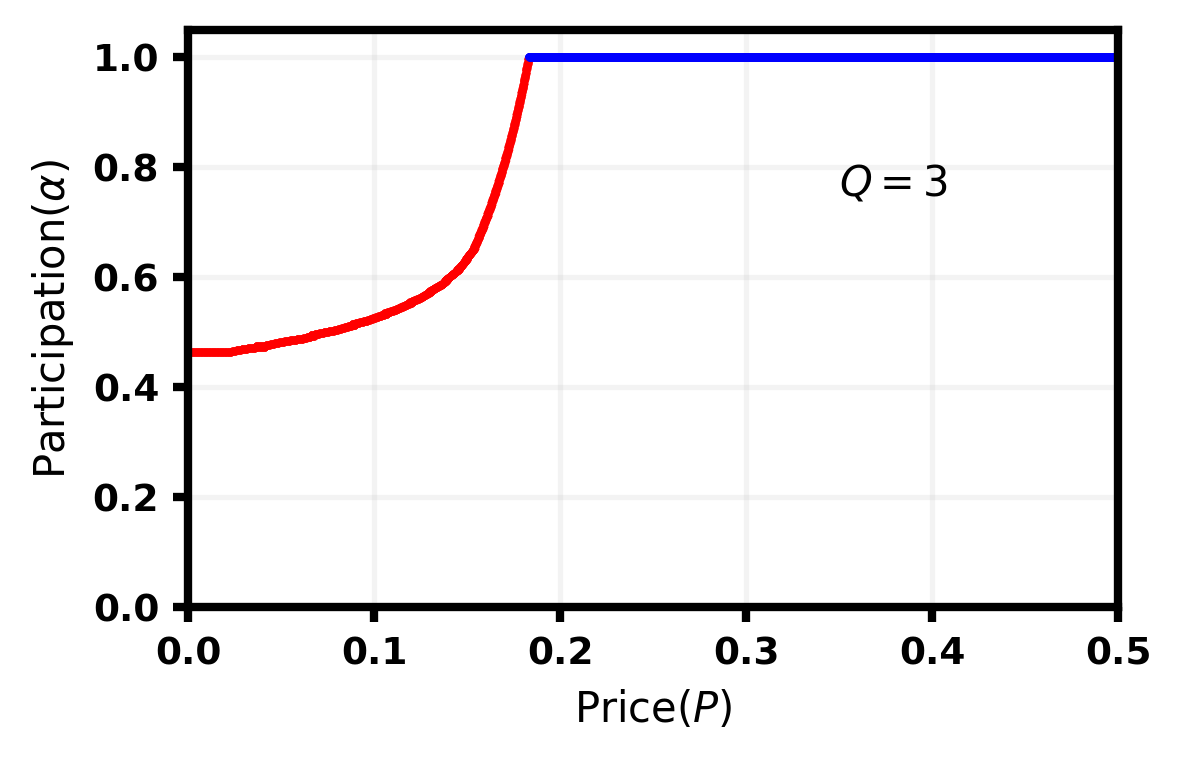}}\hfill
  \subfloat[][]{\includegraphics[width=.30\textwidth]{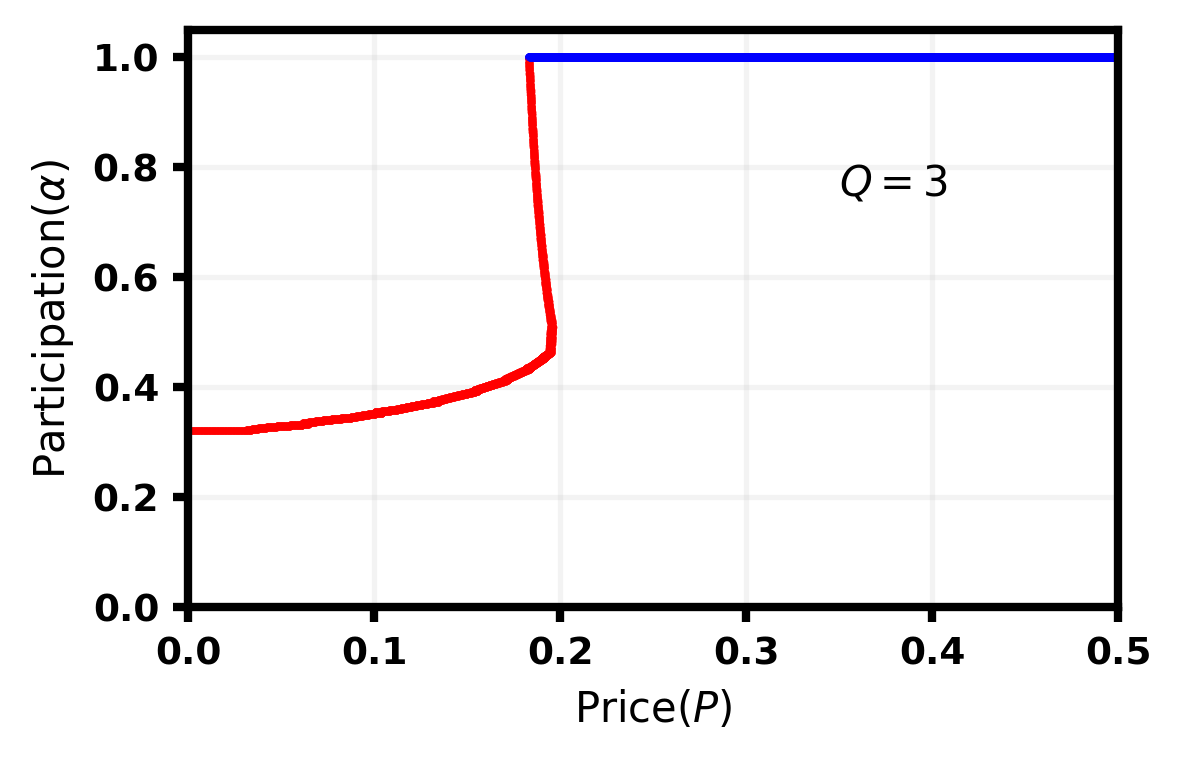}}\hfill
  \subfloat[][]{\includegraphics[width=.30\textwidth]{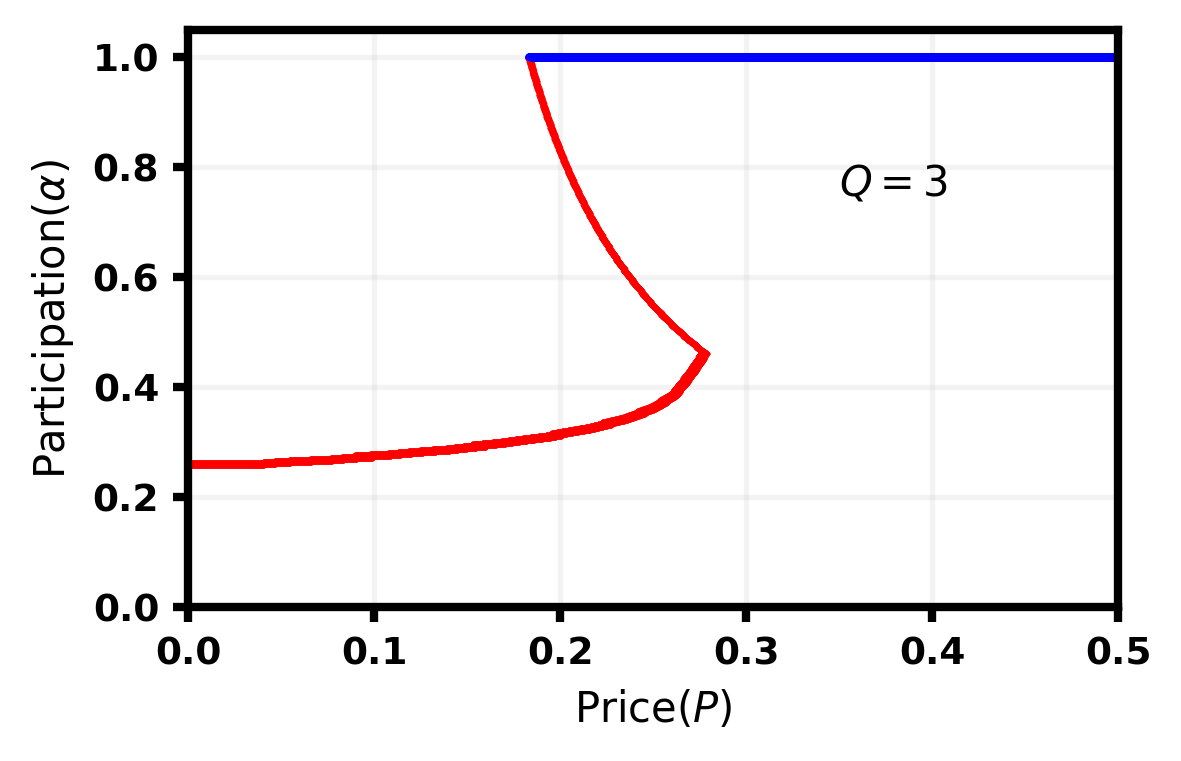}}\par
  \raisebox{35pt}{\parbox[b]{.09\textwidth}{Mod}}%
  \subfloat[][]{\includegraphics[width=.30\textwidth]{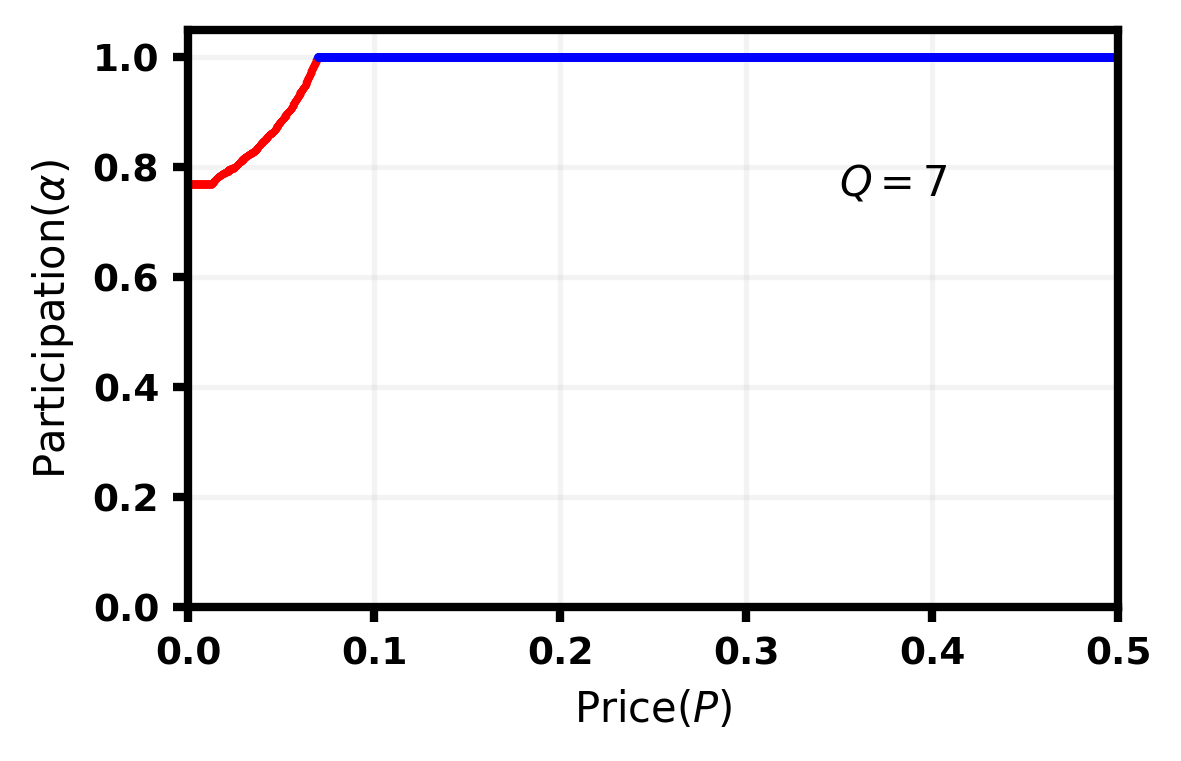}}\hfill
  \subfloat[][]{\includegraphics[width=.30\textwidth]{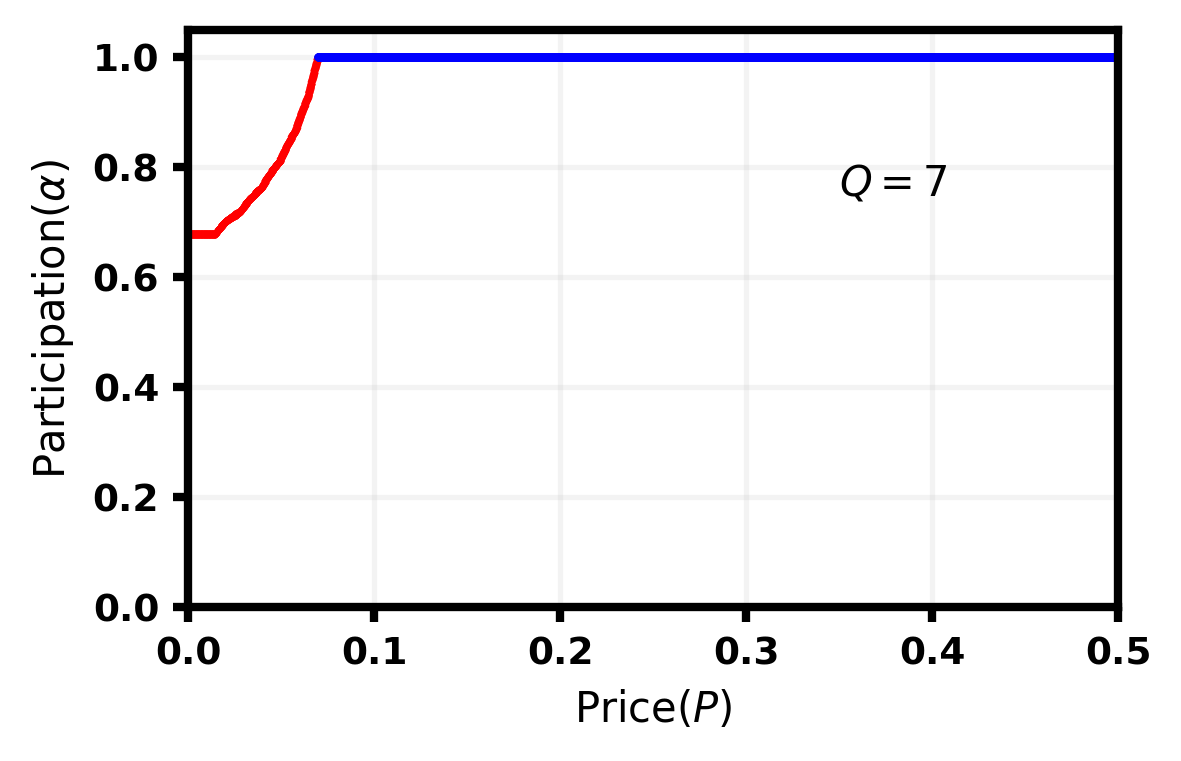}}\hfill
  \subfloat[][]{\includegraphics[width=.30\textwidth]{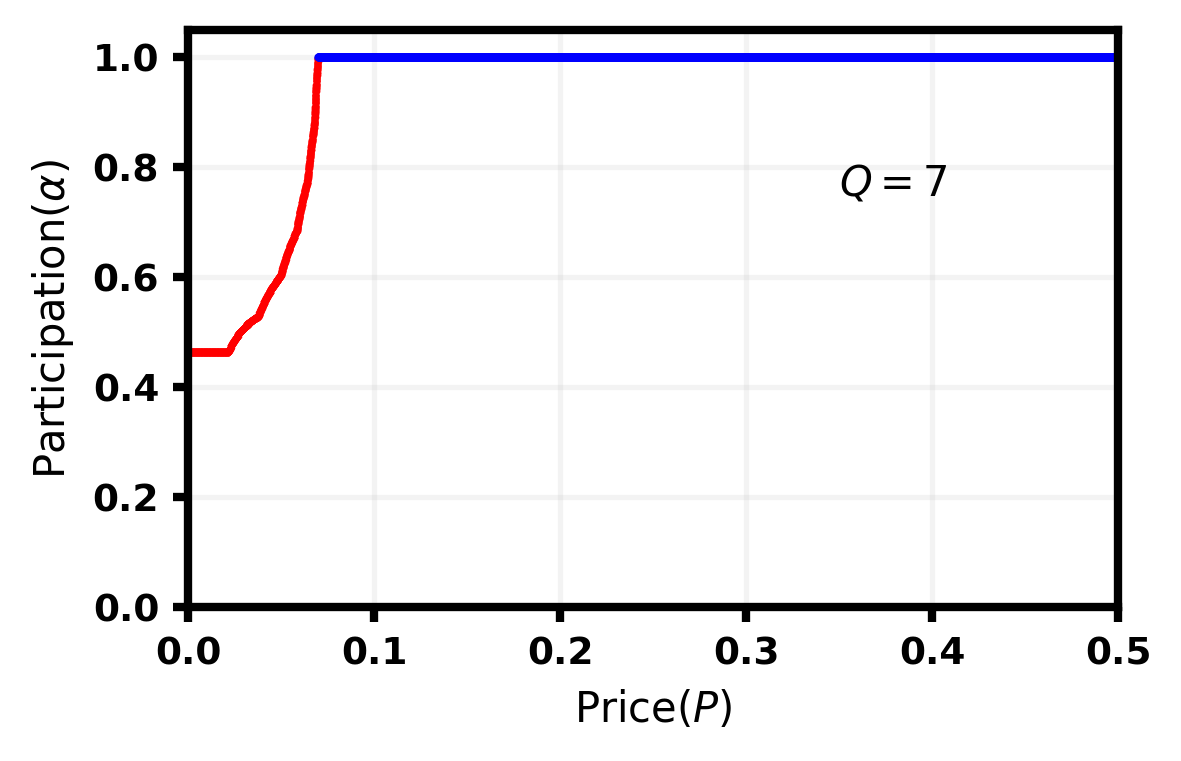}}\par
  \raisebox{35pt}{\parbox[b]{.09\textwidth}{High}}%
  \subfloat[][]{\includegraphics[width=.30\textwidth]{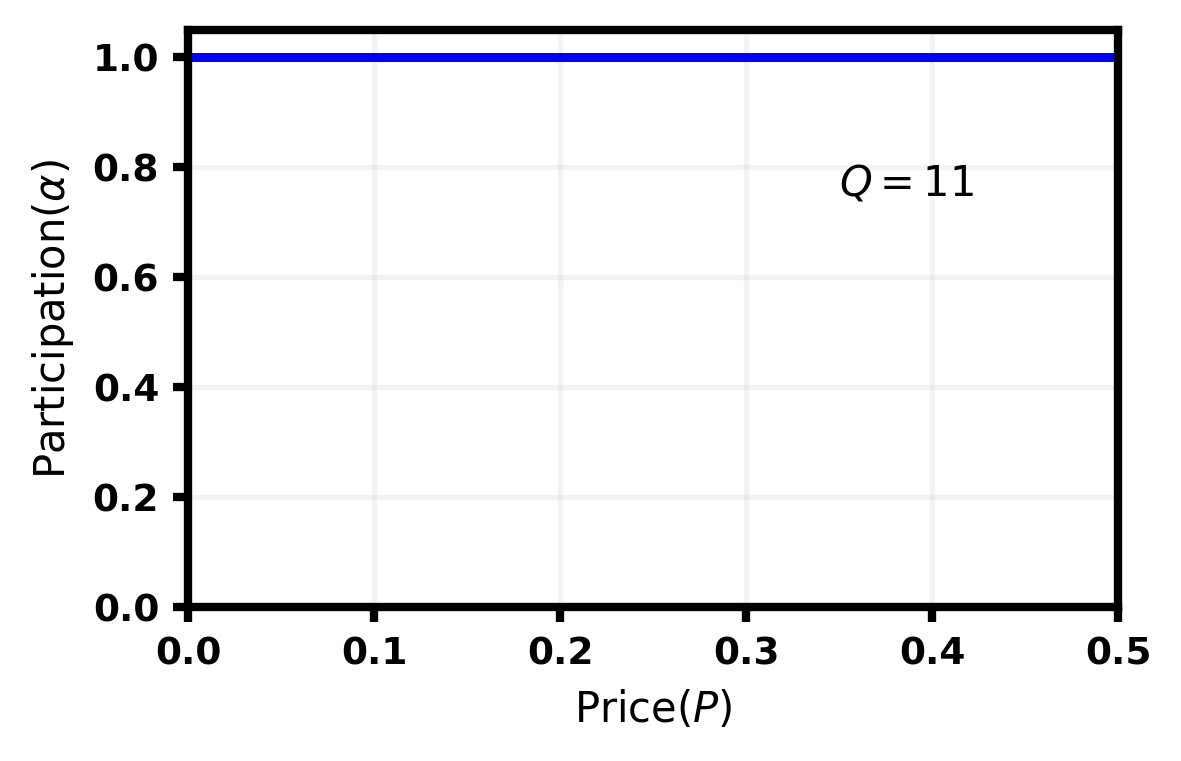}}\hfill
  \subfloat[][]{\includegraphics[width=.30\textwidth]{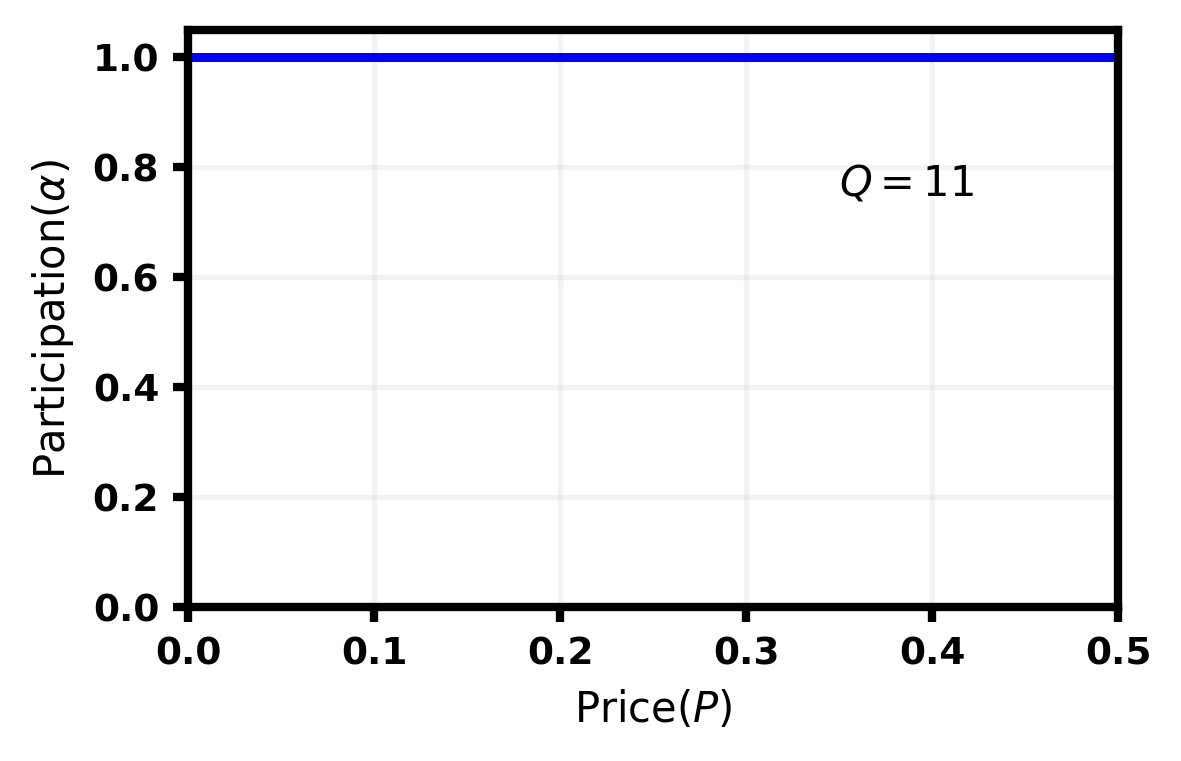}}\hfill
  \subfloat[][]{\includegraphics[width=.30\textwidth]{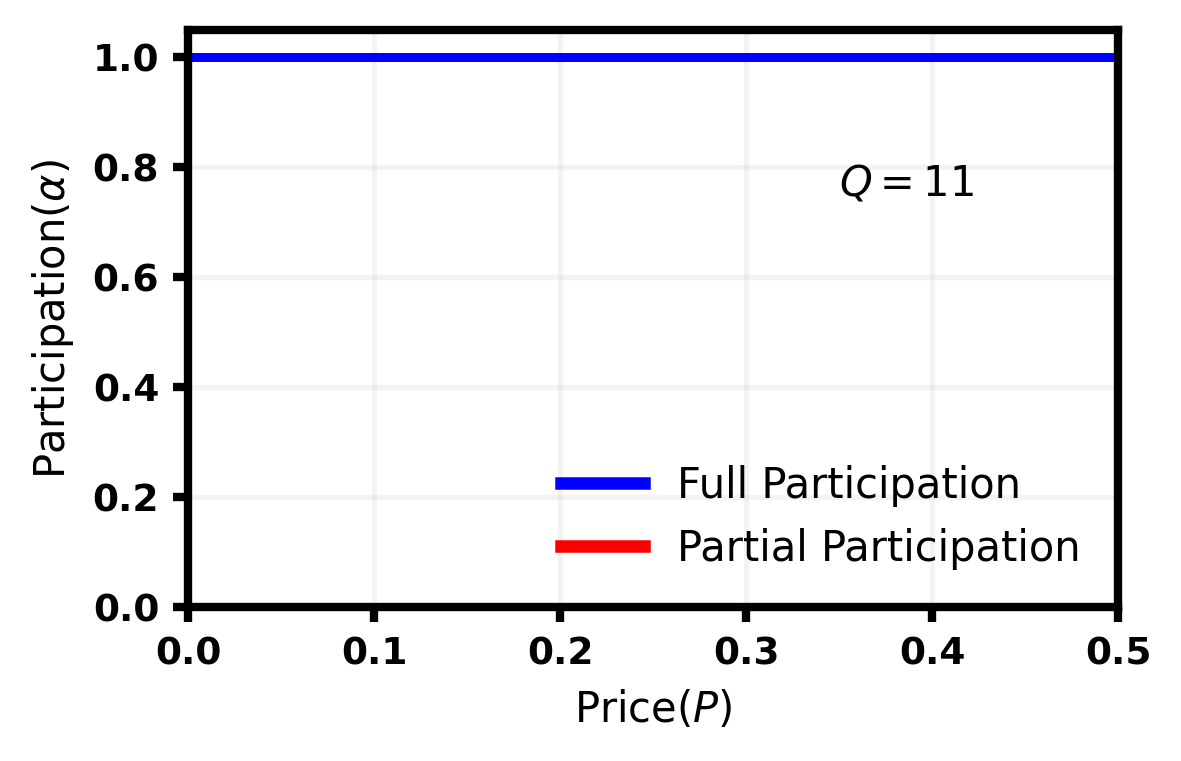}}
  \caption{Plots for the constant case under the personalized privacy distribution of user privacy valuations. From left to right, the values of $v_M$ are $0.3$, $0.5$ and $0.7$. Error tolerance $ = 2 \times 10^{-3}$.}
  \label{fig:custom_privacy_constant}
\end{figure}
Figure \ref{fig:custom_privacy_constant} captures the variation of participation rate $\alpha$ at equilibrium as a function of the platform price $P$ for \textit{low}, \textit{moderate} and \textit{high benefit} regimes. We show results for 3 different values of $v_M$ ($0.3$, $0.5$ and $0.7$) which parameterizes the distribution of user privacy valuations as discussed earlier. It is found that the outcomes in the \textit{moderate} and \textit{high benefit} regimes are consistent across changes in $v_M$ and closely track our results in the uniform distribution case. In the \textit{high benefit} regime, the quality of service $\Q$ is sufficiently high such that all users, irrespective of their privacy attitudes, find it beneficial to participate which explains that $\alpha = 1$ is the only non-trivial equilibrium at all prices $P > 0$. In the \textit{moderate benefit} regime, the general trend is that at smaller values of $P$, we have partial participation equilibria ($0 < \alpha < 1$) because users with high privacy valuations stay off the market. As $P$ increases, $\alpha$ increases steadily until it reaches full participation at some price threshold $\bar P$, beyond which $\alpha = 1$ is the only non-trivial equilibrium. It is important to note that as $v_M$ increases, the base participation rate ($\alpha$ close to $P = 0$) becomes smaller. This is expected because a higher $v_M$ indicates a higher average privacy valuation in the population. In the \textit{low benefit} regime, we make the following observations: At smaller prices, there are partial participation equilibria ($0 < \alpha < 1$) and $\alpha$ tends to increase as $P$ increases which is intuitive (higher prices lead to less data bought on the market hence lower privacy costs). Note that the base participation rates are smaller than their \textit{moderate benefit} regime counterparts due to smaller $\Q$. At higher values of $v_M$, we also observe multiplicity of equilibria, including a sequence of partial participation equilibria where $\alpha$ decreases with increasing price $P$; such equilibria, while counter-intuitive, can in fact be self-sustaining, as we observed in Section~\ref{sec:linear}. 
Finally, as $P$ becomes high enough, users' expected privacy costs decrease sufficiently to ensure that everyone participates.     
\begin{figure}[!ht]
  \centering
  \raisebox{20pt}{\parbox[b]{.09\textwidth}{}}%
  \subfloat[][$C = 0.0005$]{\includegraphics[width=.30\textwidth]{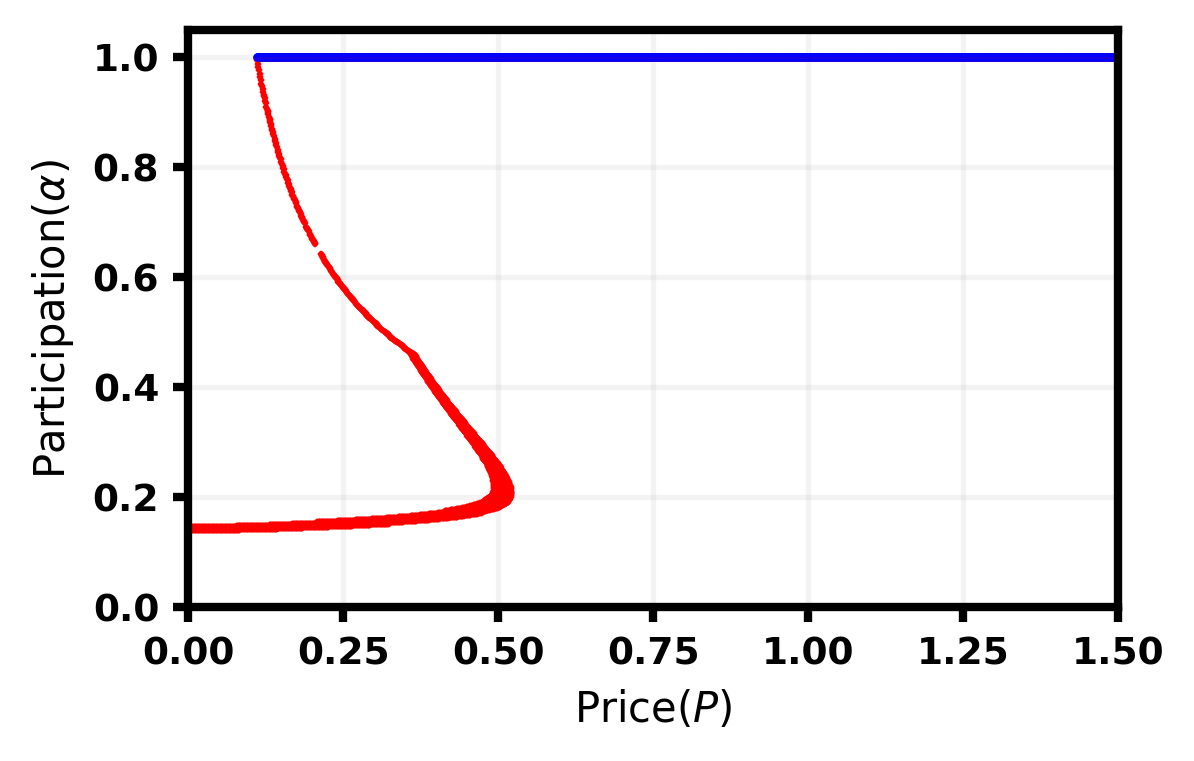}}\hfill
  \subfloat[][$C = 0.0008$]{\includegraphics[width=.30\textwidth]{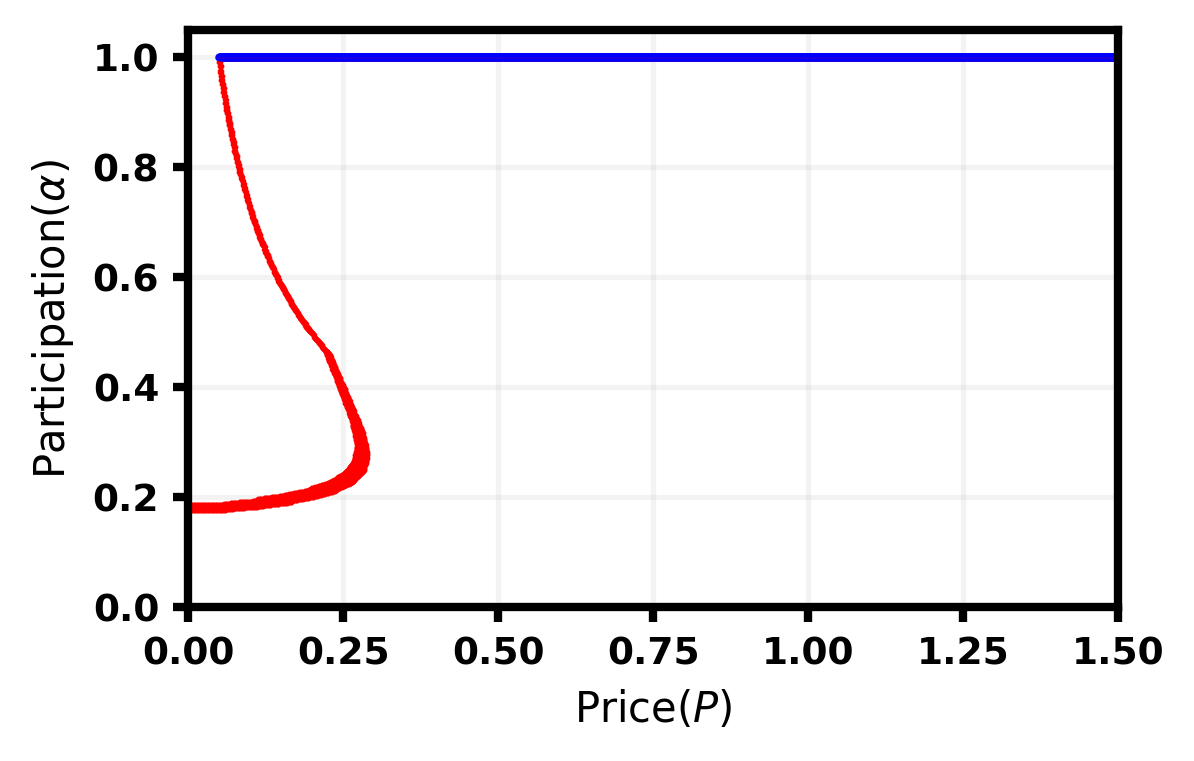}}\hfill
  \subfloat[][$C = 0.001$]{\includegraphics[width=.30\textwidth]{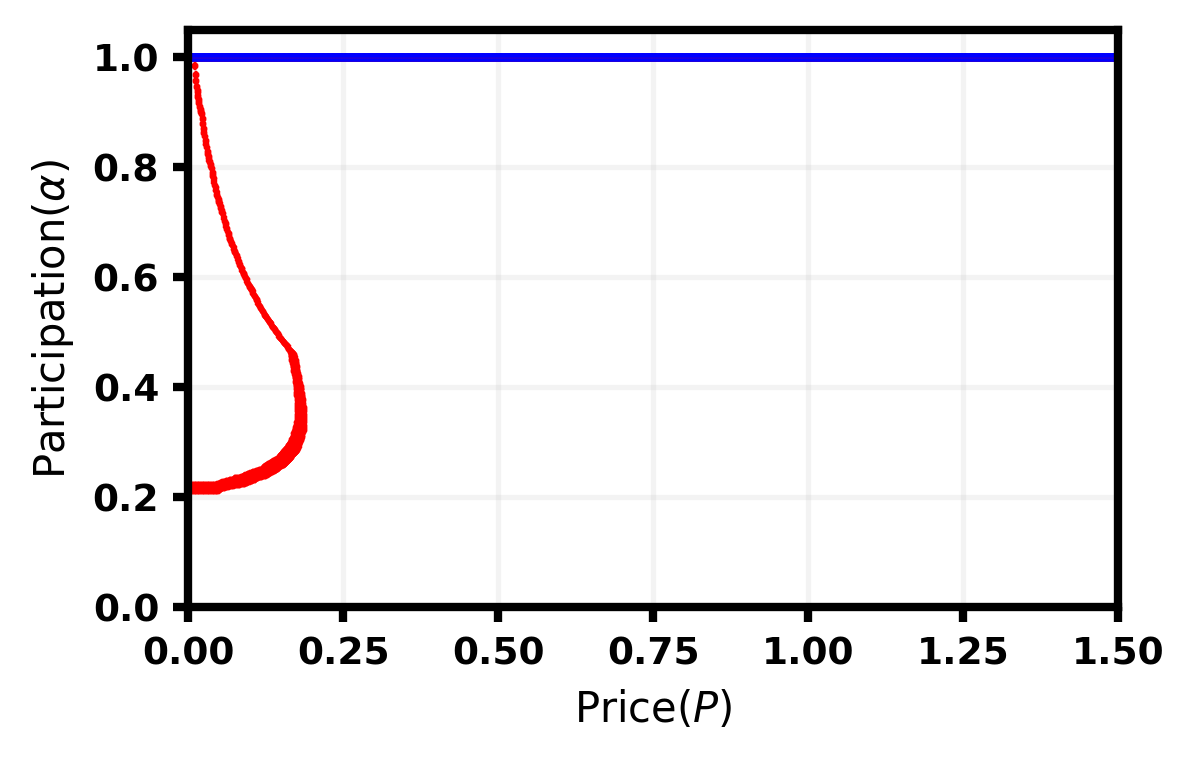}}\par
  \raisebox{20pt}{\parbox[b]{.09\textwidth}{}}%
  \subfloat[][$C = 0.0012$]{\includegraphics[width=.30\textwidth]{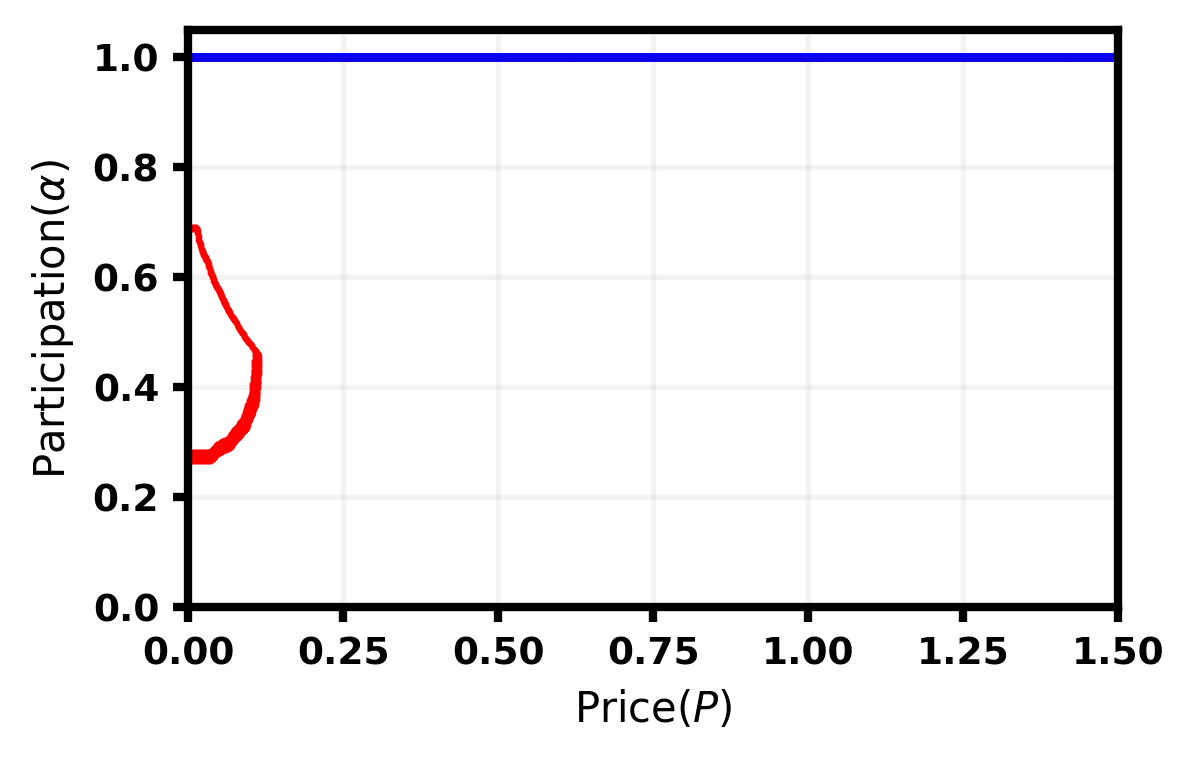}}\hfill
  \subfloat[][$C = 0.0015$]{\includegraphics[width=.30\textwidth]{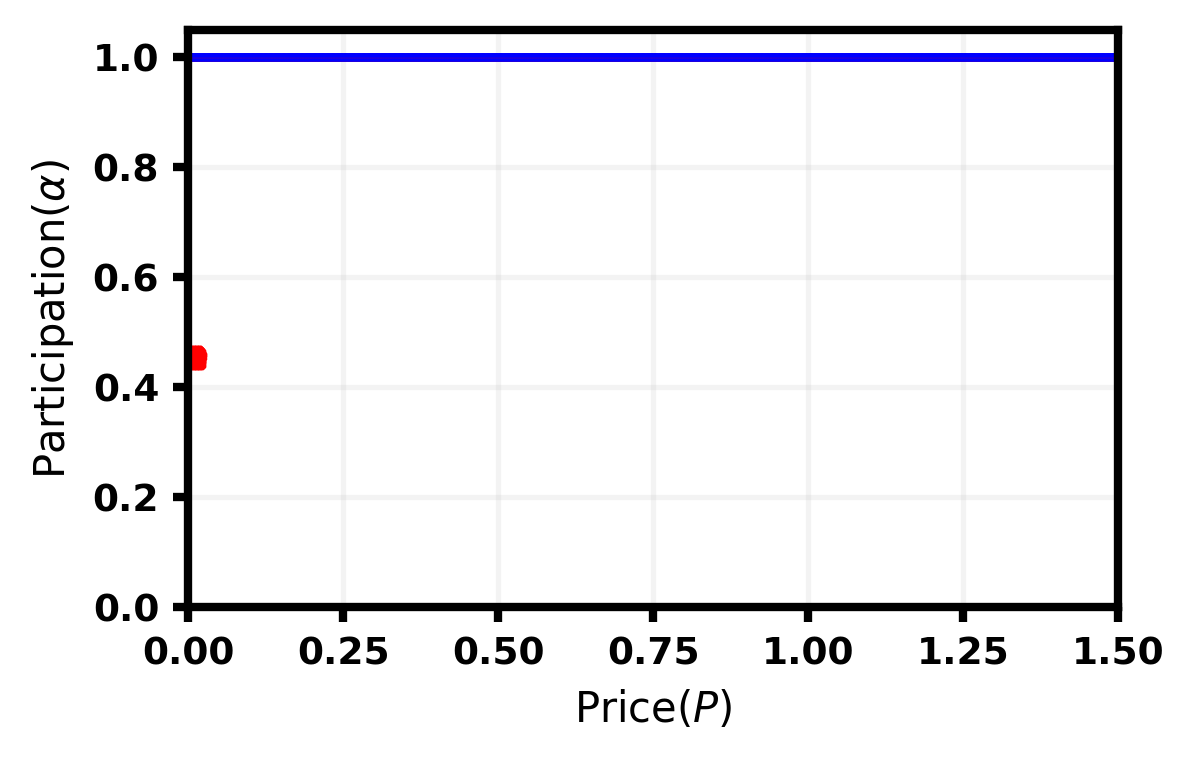}}\hfill
  \subfloat[][$C = 0.002$]{\includegraphics[width=.30\textwidth]{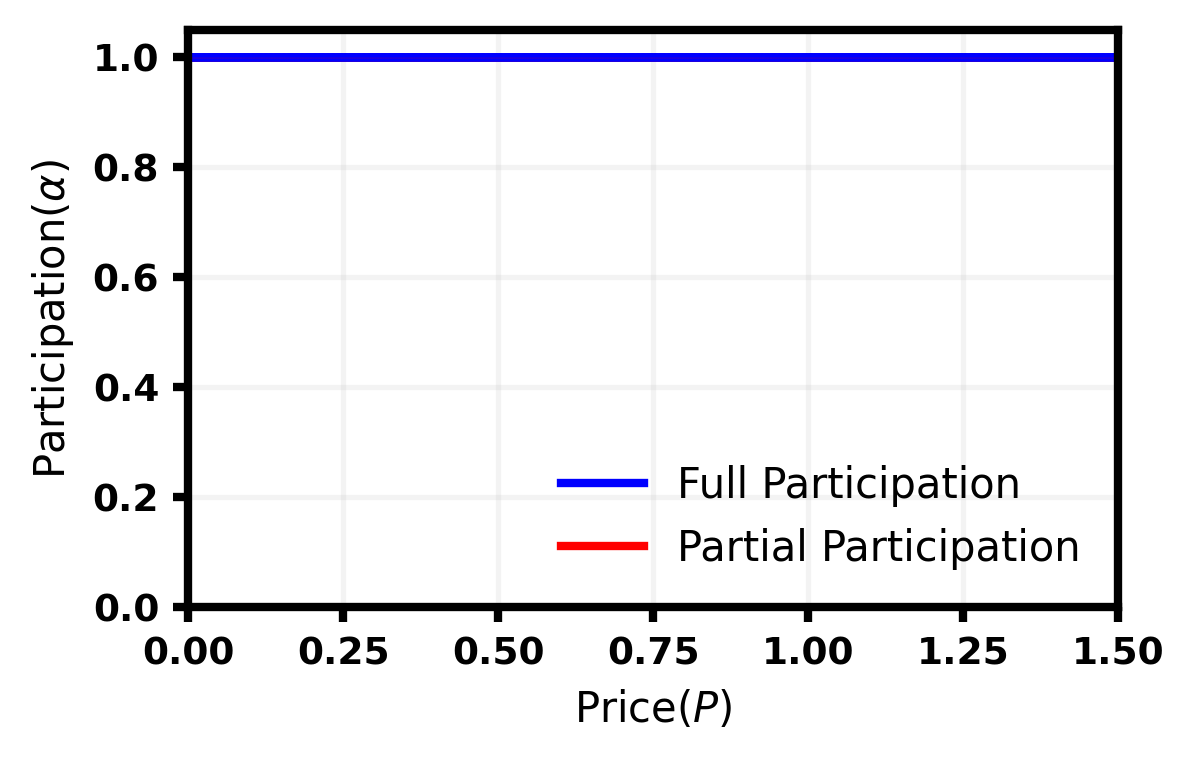}}\par
  \caption{Plots for the linear case under the personalized privacy distribution of user privacy valuations with $v_M = 0.7$ and increasing values of $C$. Error tolerance $ = 3\times 10^{-3}$.}
  \label{fig:custom_privacy_linear}
\end{figure}
In Figure \ref{fig:custom_privacy_linear}, we explore the equilibria characteristics in the case where the benefit function $Q(\alpha)$ is linear in $\alpha$, under the custom distribution of user privacy valuations. At lower values of $C$ (low benefit), there are two types of partial participation equilibria, one where $\alpha$ decreases with increasing $P$ and another where $\alpha$ increases with increasing $P$. This is in contrast with the uniform distribution case where all partial participation equilibria in the low benefit regime decrease with increasing price. As $C$ increases, the partial participation equilibria vanish rapidly until $C$ is high enough such that we have full participation at all prices $P > 0$. 

\subsection{Beyond Linear Utilities from Participation}

We now consider the benefit from participation that are not constant nor linear in $\alpha$. In particular, we are interested in modelling:

\begin{itemize}
\item \emph{Diminishing returns}: as more users join the platform, each existing user gets fewer and fewer marginal benefit from each additional participant. This models how any given user can only meaningfully interact with a small number of participants, and adding more users beyond a certain point is of little benefit. We illustrate these benefit functions in Figure~\ref{fig:dimreturn_functions}.
\item \emph{S-shaped}: here, users initially see an increasing marginal benefit from each additional user, until they reach a critical mass of users they benefit from interacting with. Then, diminishing returns kick in as per the above. We illustrate these S-shaped benefit functions in Figure~\ref{fig:beta_functions}.
 \end{itemize}

\paragraph{Diminishing Returns} In the case of diminishing returns, we study a simple functional form for user benefit, given by $Q(\alpha) = C (\alpha \numusers)^s$ for $s \in [0,1]$. Note that the constant case and the linear case are special cases of diminishing returns when $s = 0$ and $s = 1$ respectively. We vary the parameter $s$ in our experiments, noting that lowering the value of $s$ increases how fast returns diminish. At $s = 1$, the benefit function is an increasing line with constant returns, which in fact corresponds exactly to the linear benefit case; at $s = 0$, it is a flat line where there are no additional returns from adding participants on the platform. We plot our equilibria in Figure~\ref{fig:uniform_privacy_dim1} for $s=0.2$ and in Figure~\ref{fig:uniform_privacy_dim2} for $s = 0.7$, in both cases for several values of C.

\begin{figure}[!ht]
  \centering
  \raisebox{20pt}{\parbox[b]{.11\textwidth}{}}%
  \subfloat[][$s = 0.2$]{\includegraphics[width=.3\textwidth]{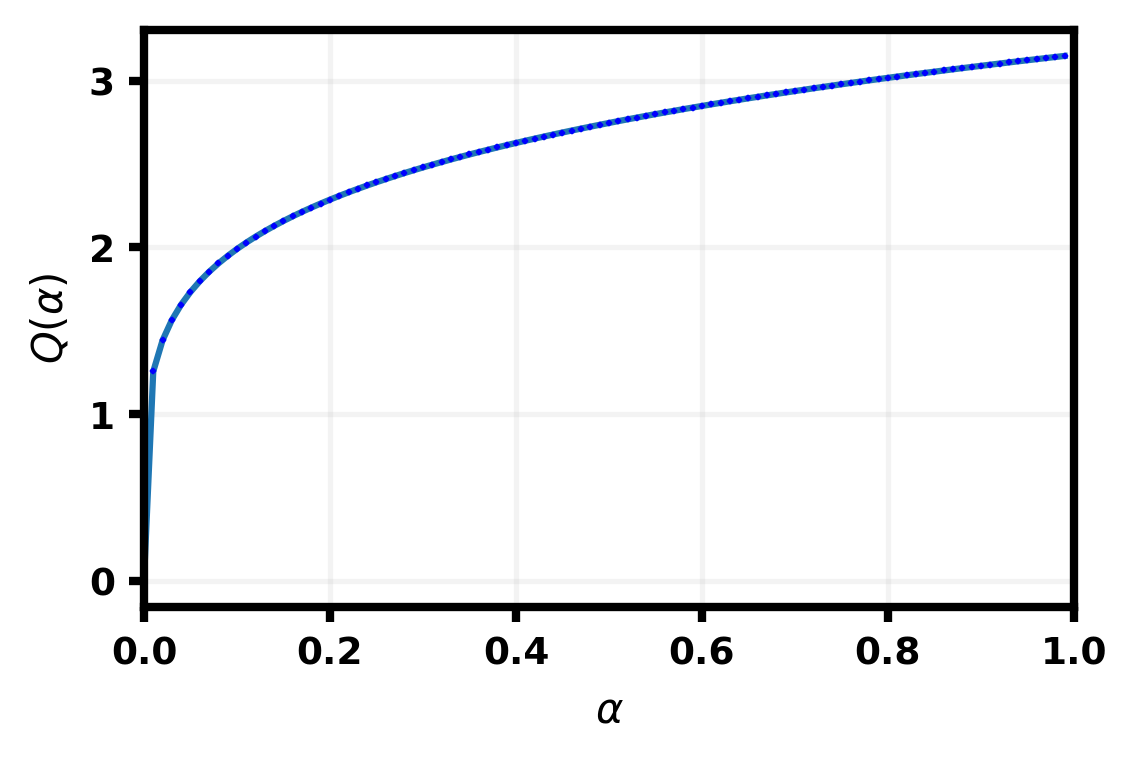}}\hfill
  \subfloat[][$s = 0.5$]{\includegraphics[width=.3\textwidth]{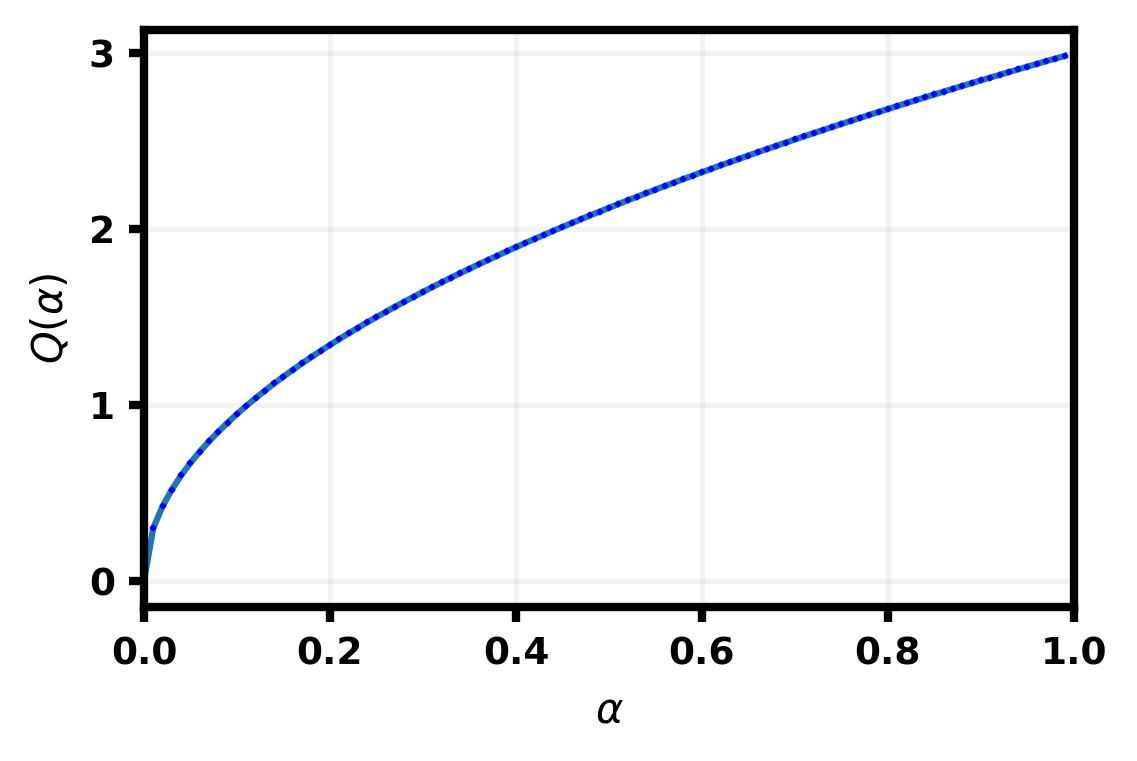}}\hfill
  \subfloat[][$s = 0.7$]{\includegraphics[width=.3\textwidth]{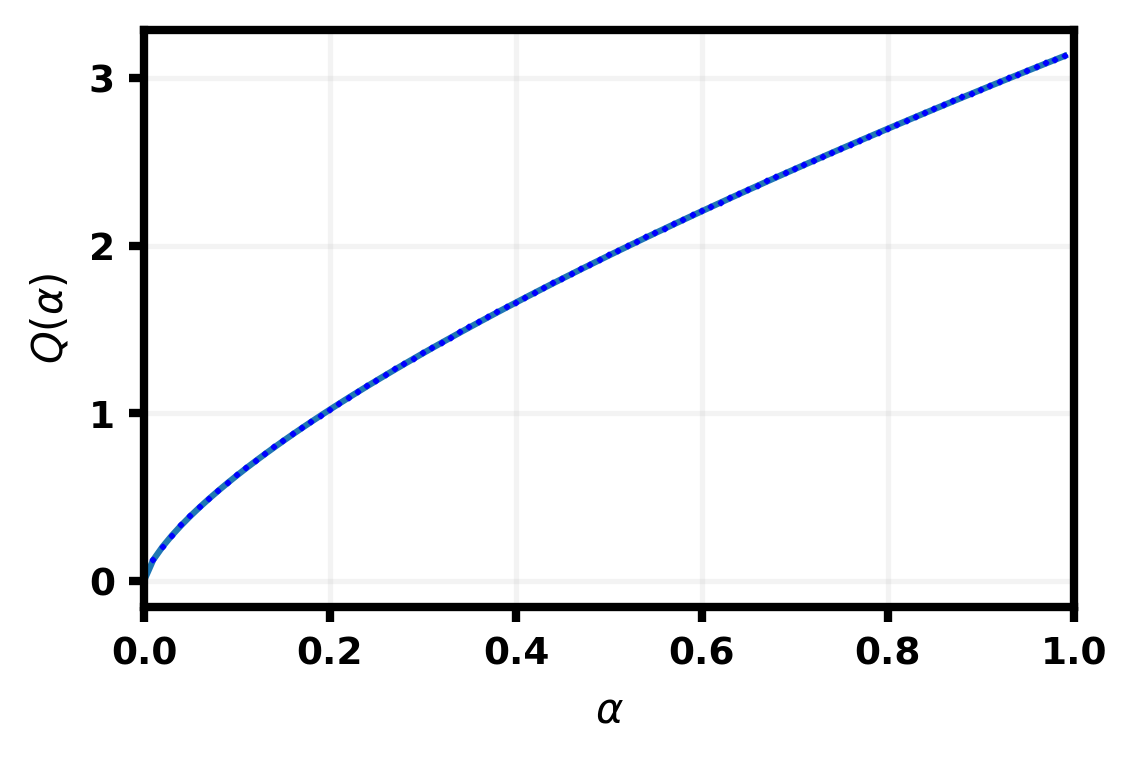}}\par
  \caption{Shape of the benefit function with diminishing returns for varying parameters $s \in [0, 1]$. As $s$ becomes smaller, the slope of the benefit function decreases faster, leading to stronger diminishing return effects.}
  \label{fig:dimreturn_functions}
\end{figure}

\begin{figure}[!ht]
  \centering
  \raisebox{20pt}{\parbox[b]{.11\textwidth}{}}%
  \subfloat[][$C = 0.025$]{\includegraphics[width=.3\textwidth]{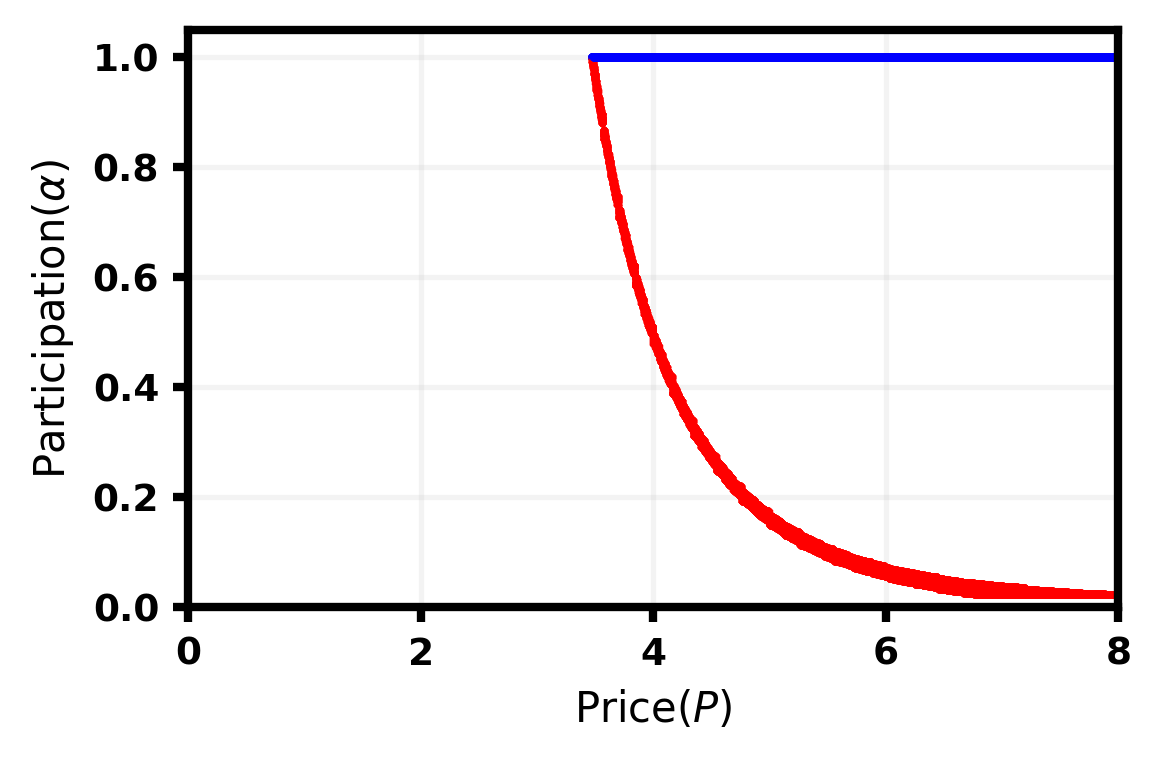}}\hfill
  \subfloat[][$C = 0.1$]{\includegraphics[width=.3\textwidth]{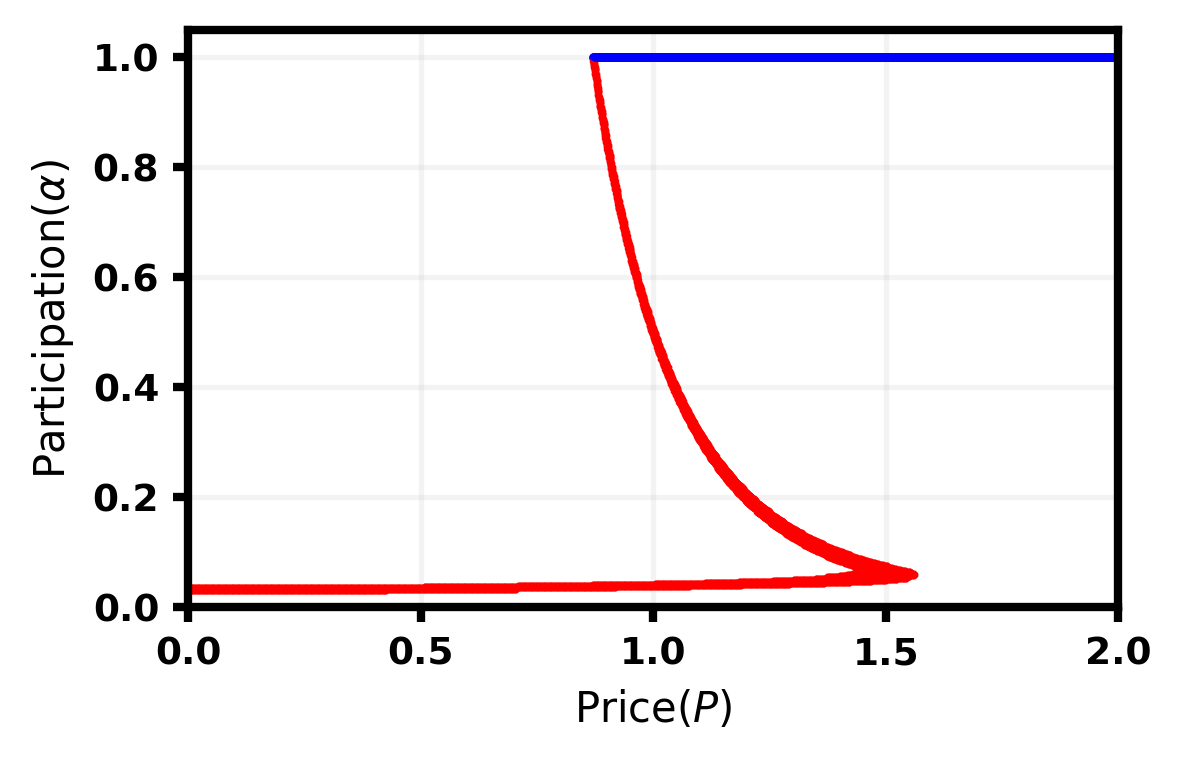}}\hfill
  \subfloat[][$C = 0.2$]{\includegraphics[width=.3\textwidth]{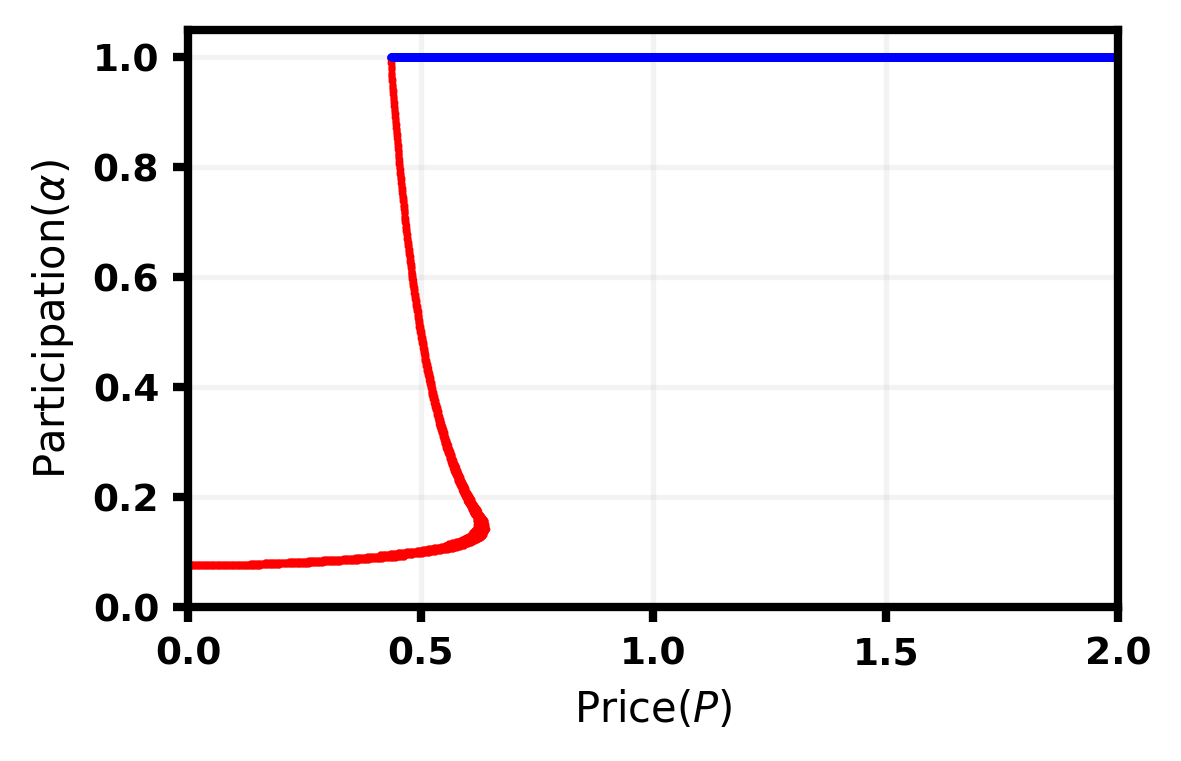}}\par
  \raisebox{20pt}{\parbox[b]{.11\textwidth}{}}%
  \subfloat[][$C = 0.5$]{\includegraphics[width=.3\textwidth]{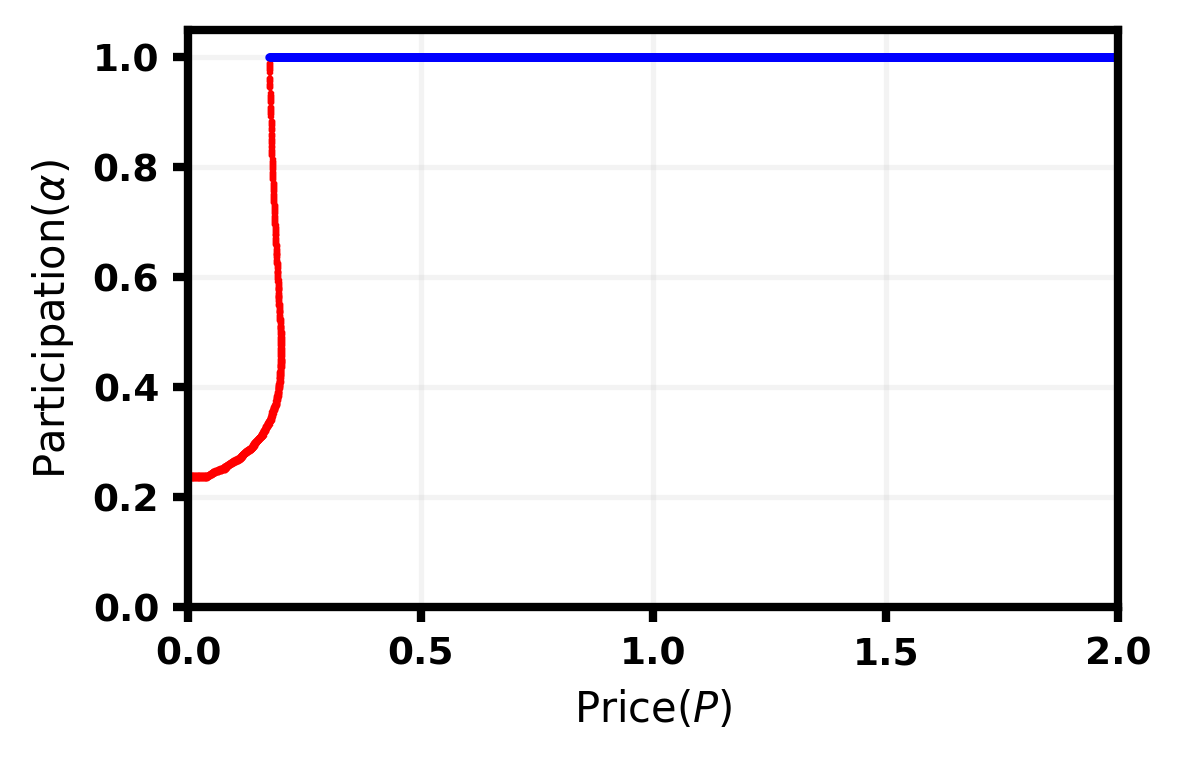}}\hfill
  \subfloat[][$C = 1.0$]{\includegraphics[width=.3\textwidth]{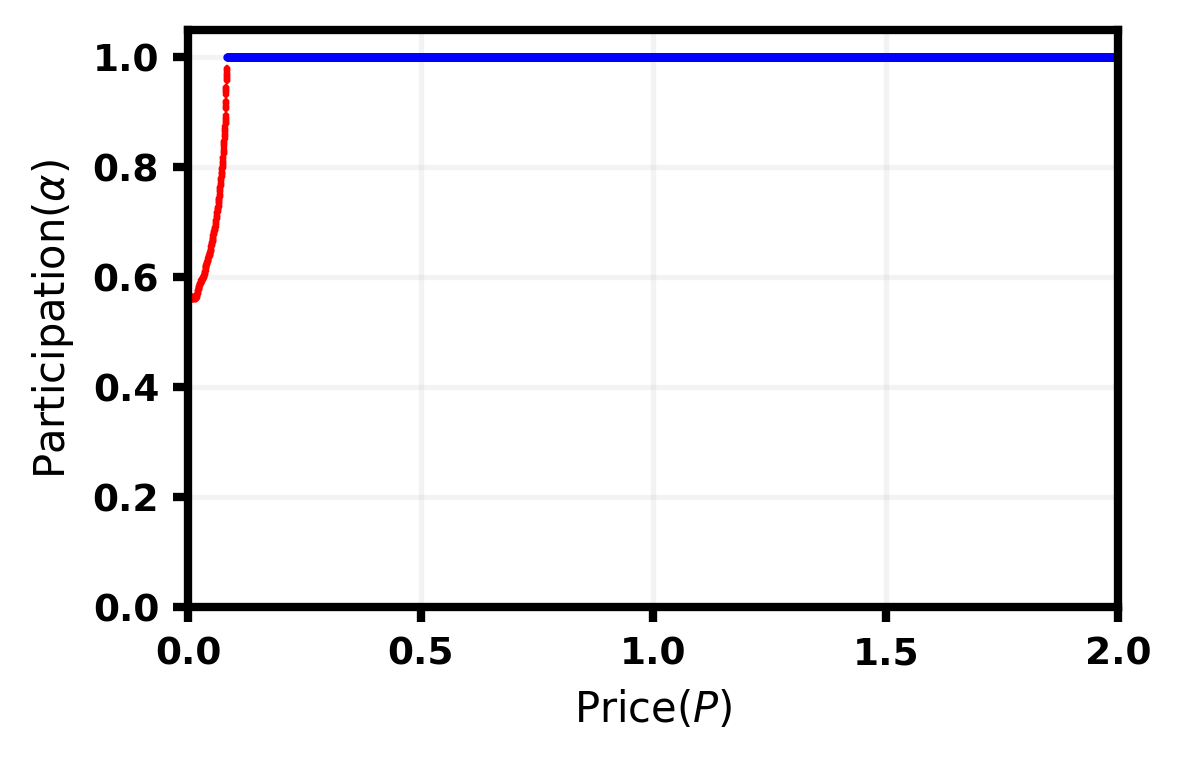}}\hfill
  \subfloat[][$C = 2.0$]{\includegraphics[width=.3\textwidth]{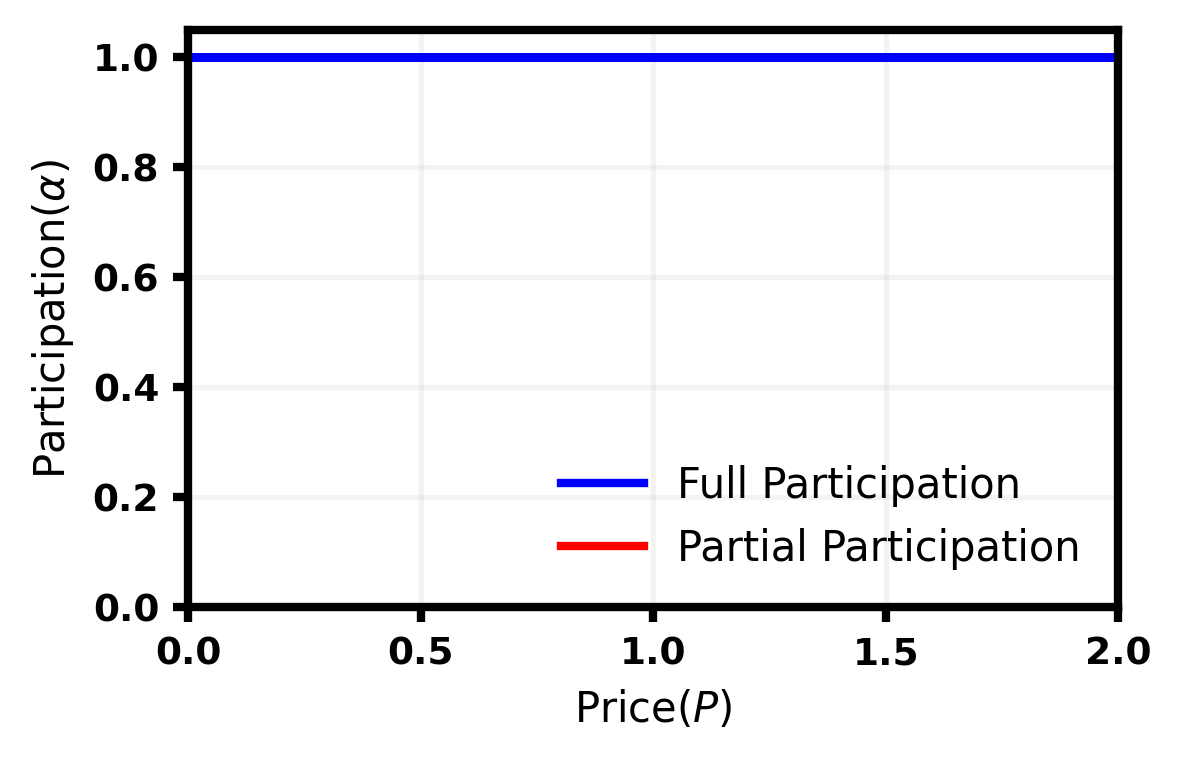}}\par
  \caption{Plots for the case with a diminishing returns quality function ($s = 0.2$) under the uniform distribution of user privacy valuations. Error tolerance $ = 2 \times 10^{-3}$. }
  \label{fig:uniform_privacy_dim1}
\end{figure}

\begin{figure}[!ht]
  \centering
  \raisebox{20pt}{\parbox[b]{.11\textwidth}{}}%
  \subfloat[][$C = 0.001$]{\includegraphics[width=.3\textwidth]{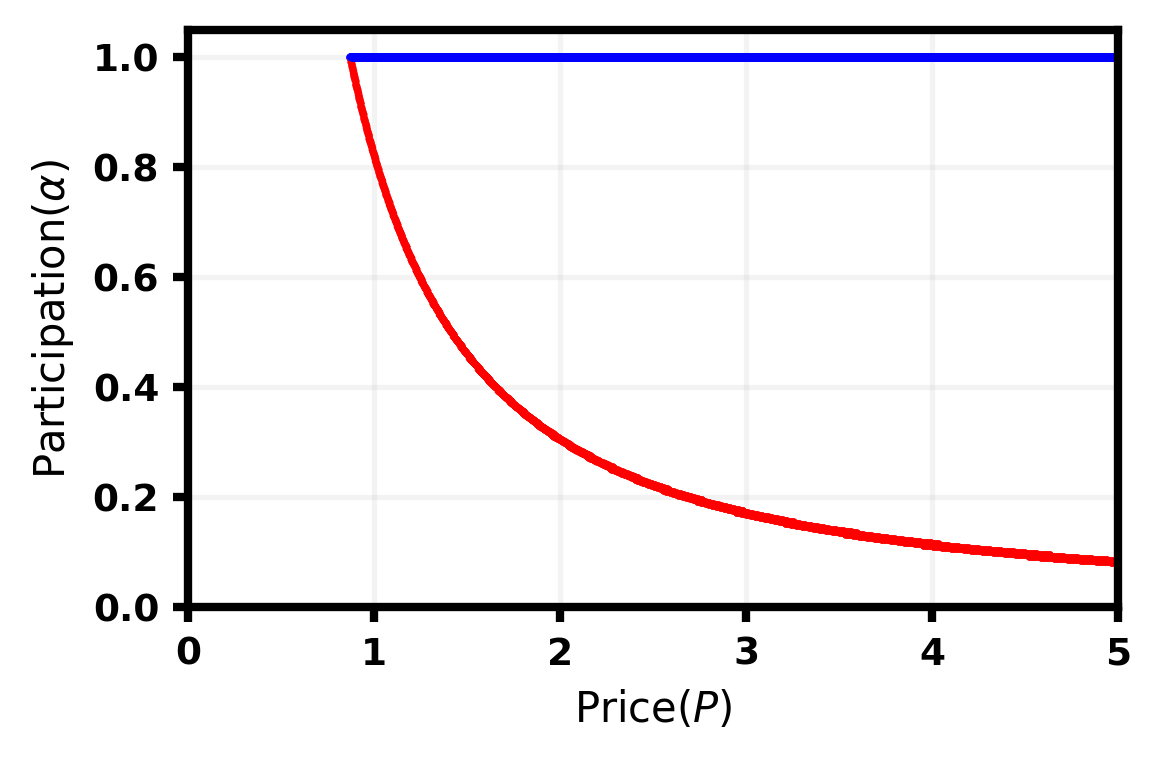}}\hfill
  \subfloat[][$C = 0.0025$]{\includegraphics[width=.3\textwidth]{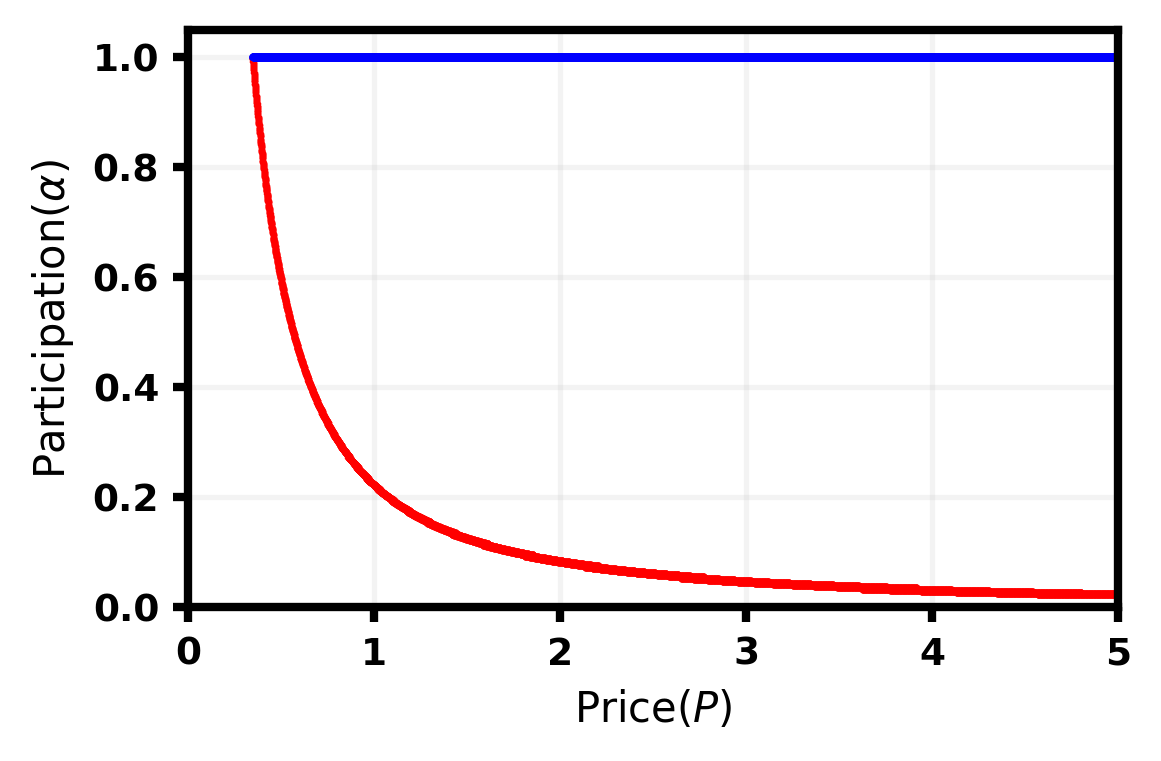}}\hfill
  \subfloat[][$C = 0.005$]{\includegraphics[width=.3\textwidth]{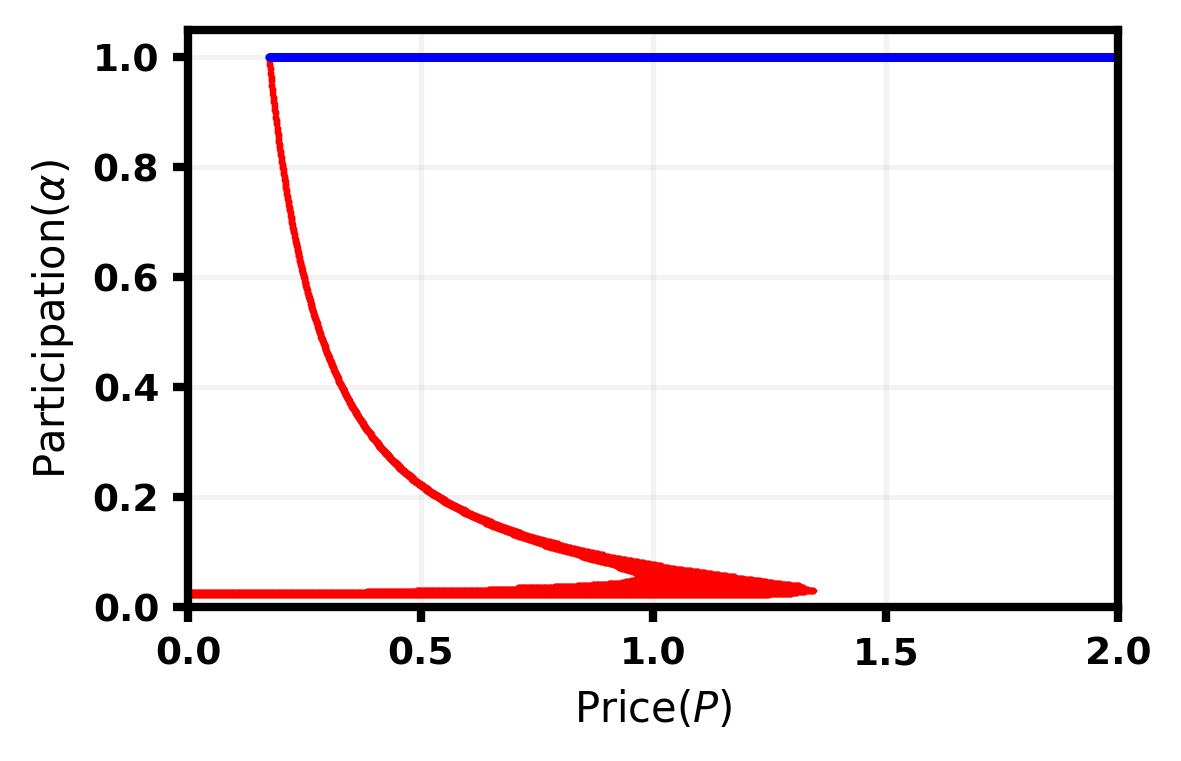}}\par
  \raisebox{20pt}{\parbox[b]{.11\textwidth}{}}%
  \subfloat[][$C = 0.0075$]{\includegraphics[width=.3\textwidth]{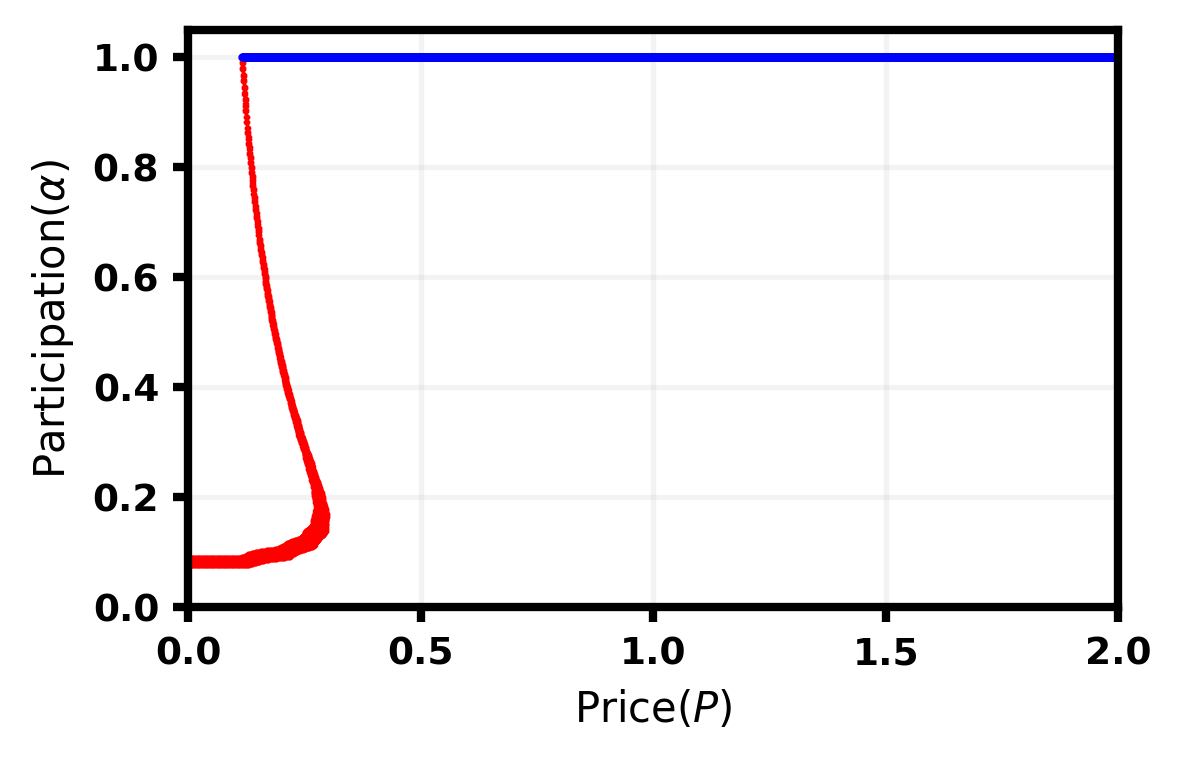}}\hfill
  \subfloat[][$C = 0.01$]{\includegraphics[width=.3\textwidth]{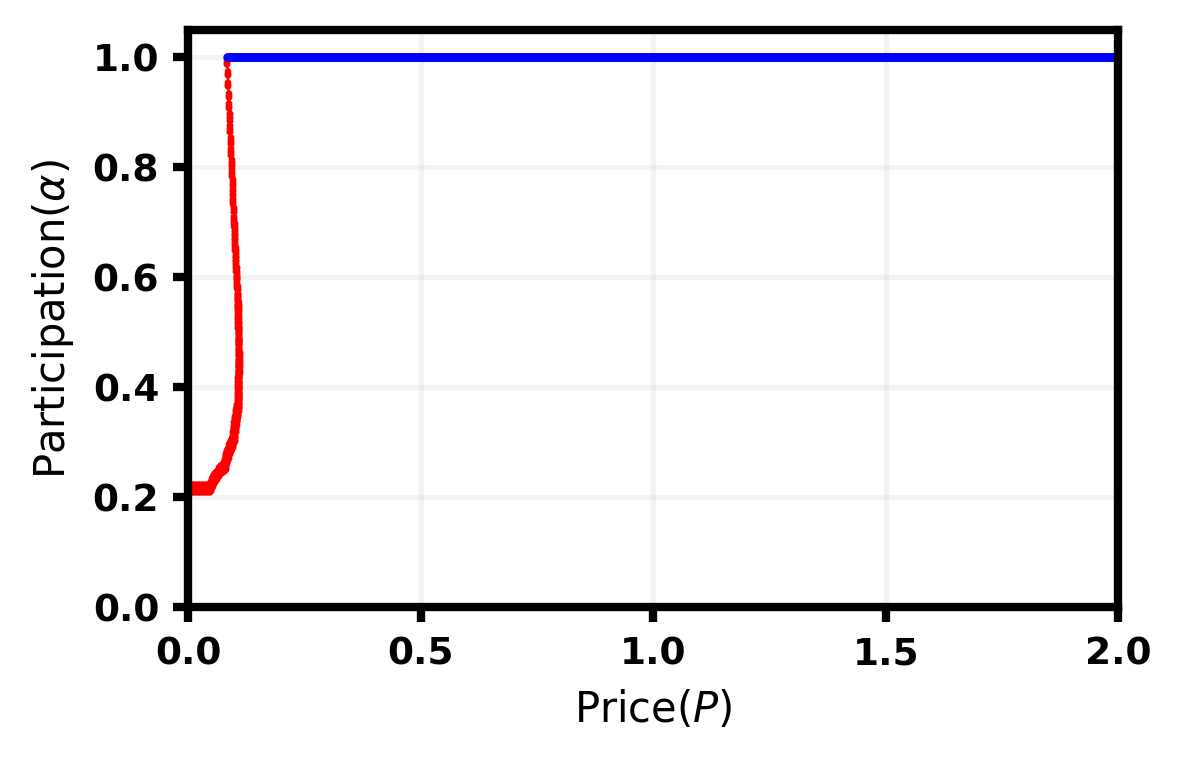}}\hfill
  \subfloat[][$C = 0.1$]{\includegraphics[width=.3\textwidth]{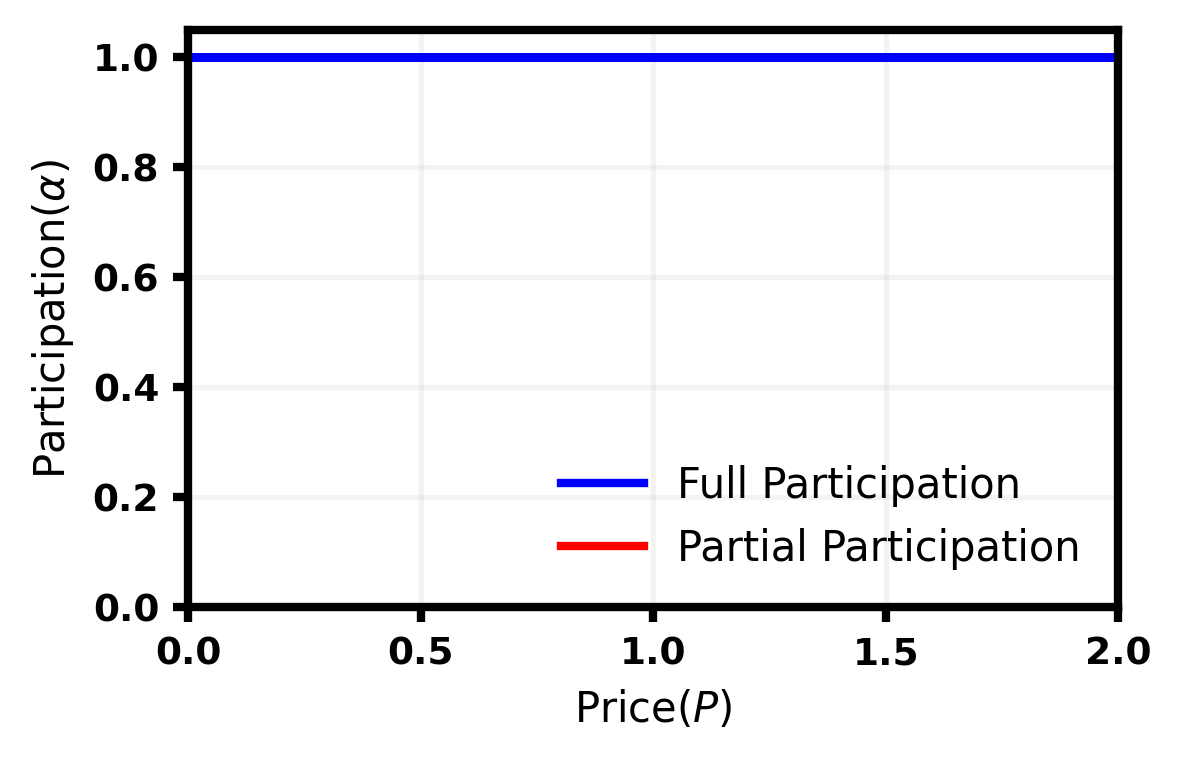}}\par
  \caption{Plots for the case with a diminishing returns quality function ($s = 0.7$) under the uniform distribution of user privacy valuations. Error tolerance $= 2 \times 10^{-3}$.}
  \label{fig:uniform_privacy_dim2}
\end{figure}
In Figures \ref{fig:uniform_privacy_dim1} and \ref{fig:uniform_privacy_dim2}, we observe that at very high values of $C$, we have full participation equilibria for all price $P > 0$. As we decrease $C$ gradually, we see the emergence of partial participation equilibria at smaller prices, where $\alpha$ increases with price, similarly to the theory of the constant case. As we decrease $C$ further, there is a new and interesting transition regime where at smaller prices, there are two types of partial participation equilibria: one where $\alpha$ increases with price $P$ and the other where $\alpha$ decreases with $P$. Finally, as $C$ becomes really small, the equilibrium structure is similar to the linear case low benefit regime: there is a price threshold below which there are no non-trivial equilibria. Above the threshold, we have two types of equilibria: one with full participation and the other with partial participation where $\alpha$ decreases with price. 

\paragraph{S-shaped benefit function} We study S-shaped functions with simple functional form $Q(\alpha) = C G(\alpha)$ where $C$ is a scaling factor and $G(\alpha)$ is a \emph{regularized incomplete beta function} given by
$G(\alpha) = I_{\alpha}(a,b) = \frac{B(\alpha; a,b)}{B(a,b)}$.
$a$ and $b$ are the parameters of the \textit{beta function}. We vary $a$ and $b$ to simulate varying degrees of asymmetry in the S-shaped benefit function, as illustrated in Figure~\ref{fig:beta_functions}. 
\begin{figure}[!ht]
  \centering
  \raisebox{20pt}{\parbox[b]{.11\textwidth}{}}%
  \subfloat[][$Beta(5, 5)$]{\includegraphics[width=.3\textwidth]{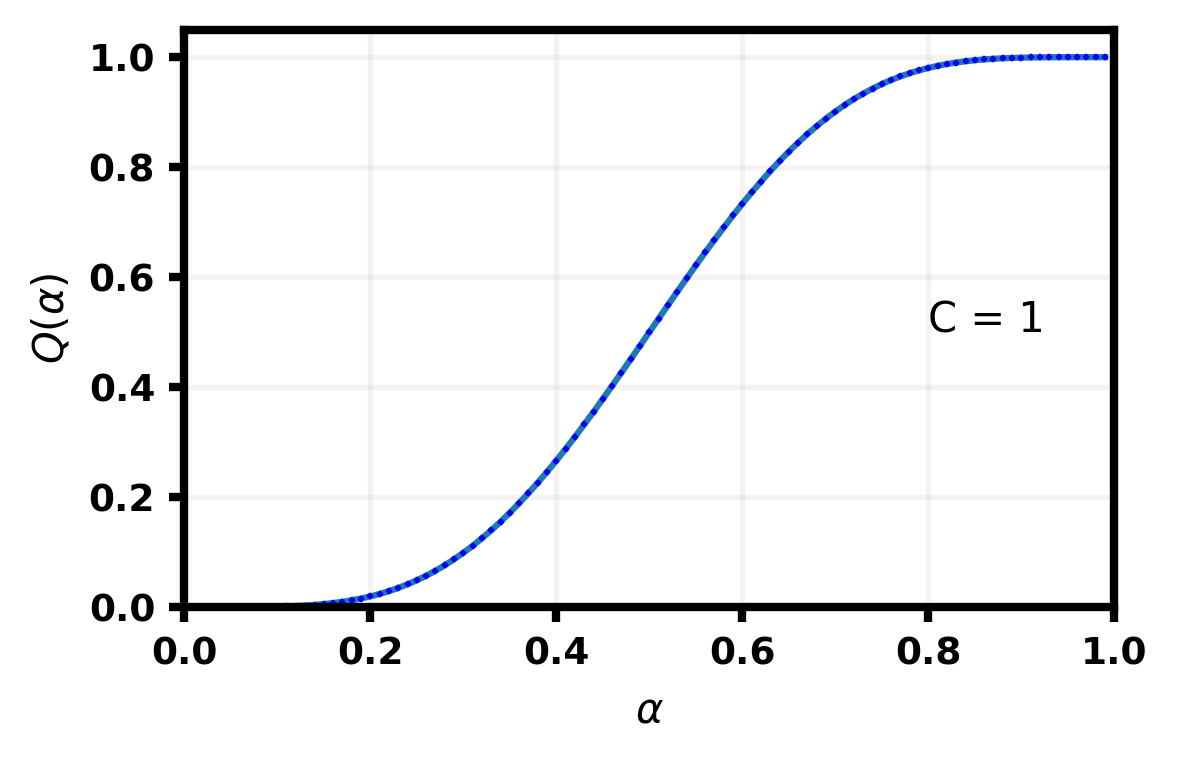}}\hfill
  \subfloat[][$Beta(5, 10)$]{\includegraphics[width=.3\textwidth]{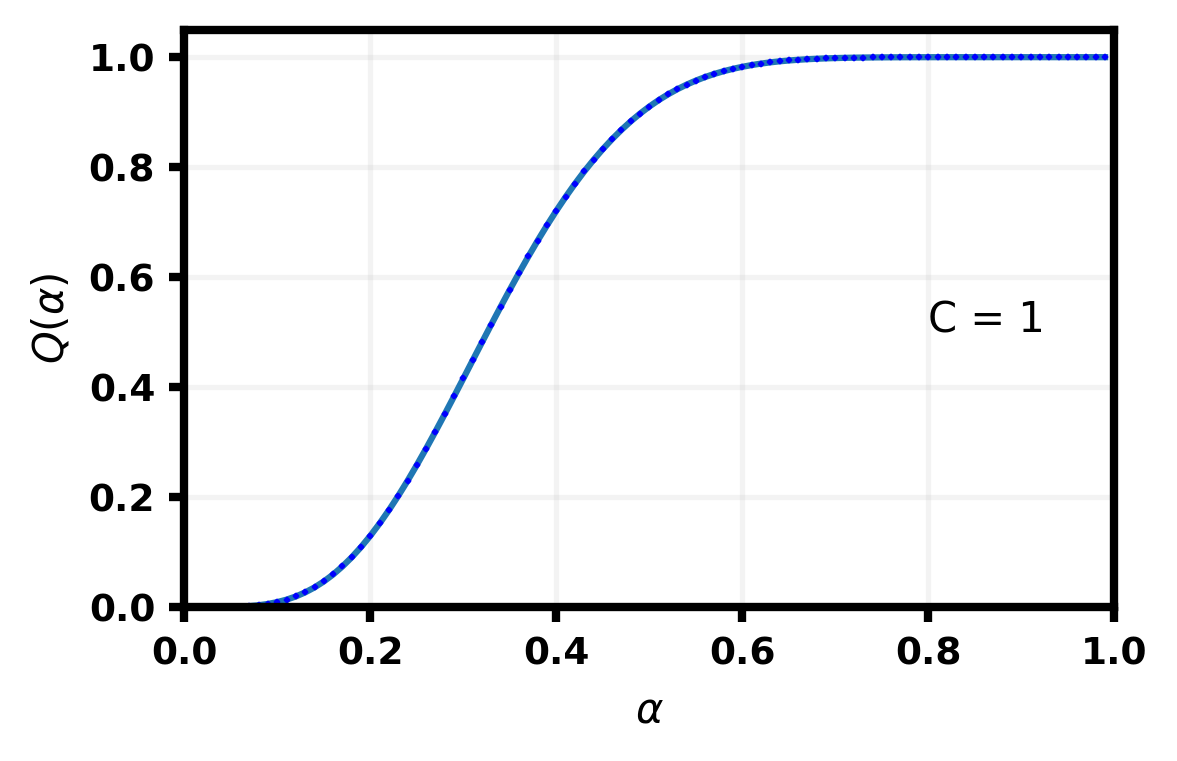}}\hfill
  \subfloat[][$Beta(10, 5)$]{\includegraphics[width=.3\textwidth]{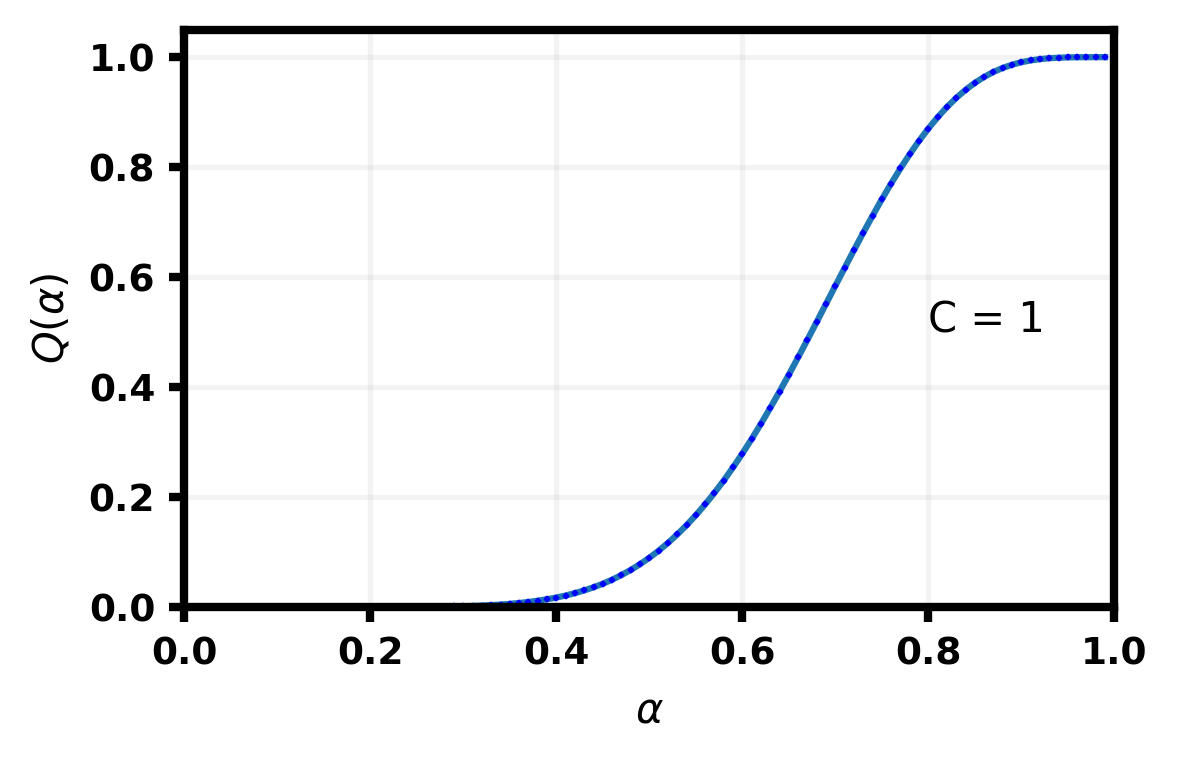}}\par
  \caption{S-shaped benefit function for varying parameters $a,b$}
  \label{fig:beta_functions}
\end{figure}

Then, Figures \ref{fig:uniform_privacy_s1}, \ref{fig:uniform_privacy_s2} and \ref{fig:uniform_privacy_s3} capture the market equilibria for the different s-shaped benefit functions shown earlier. We observe consistent trends across all the variants of the benefit function. At lower values of $C$, there exists a price threshold below which there are no non-trivial market equilibria. Above the threshold, there are both full and partial participation equilibria. There may be one or more types of partial participation equilibria---some which are increasing in $P$ and others which are decreasing in $P$. However, as $C$ increases, the increasing partial participation equilibria vanish as they reach a point where all users participate; those which remain decrease in $P$. 

\begin{figure}[!ht]
  \centering
  \raisebox{20pt}{\parbox[b]{.11\textwidth}{}}%
  \subfloat[][$C = 2$]{\includegraphics[width=.3\textwidth]{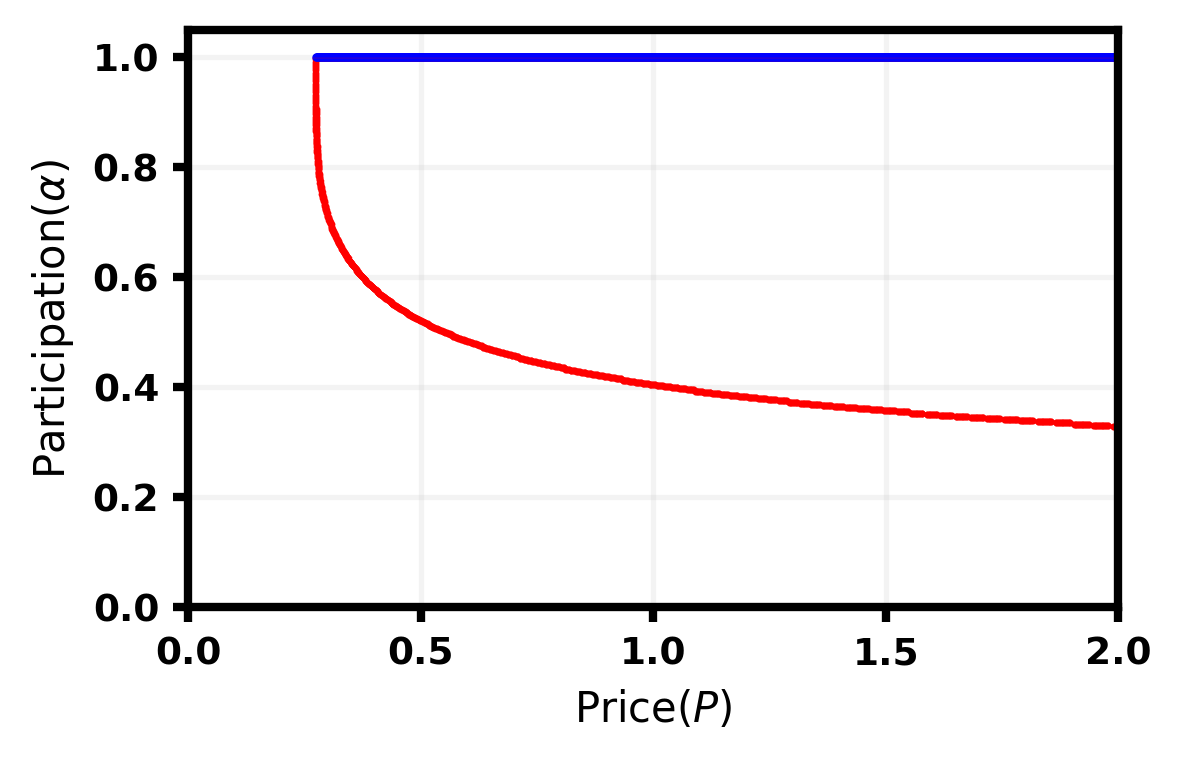}}\hfill
  \subfloat[][$C = 5$]{\includegraphics[width=.3\textwidth]{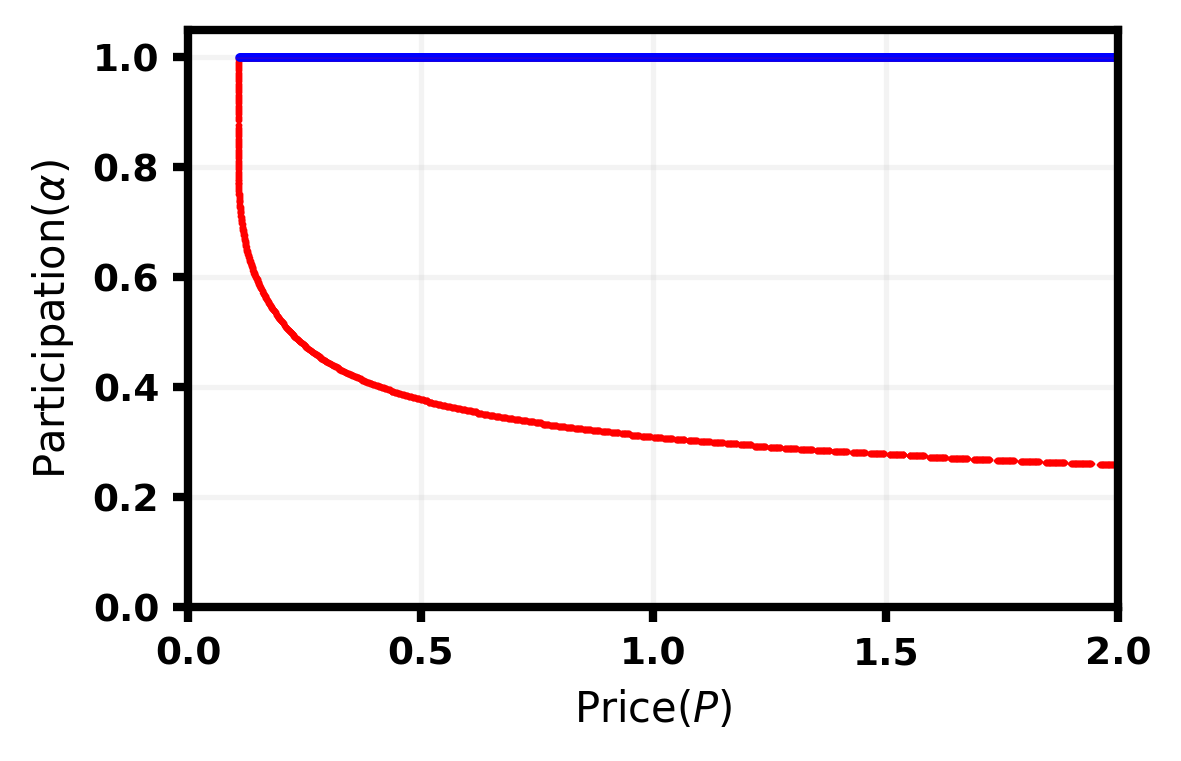}}\hfill
  \raisebox{20pt}{\parbox[b]{.11\textwidth}{}}%
  \subfloat[][$C = 10$]{\includegraphics[width=.3\textwidth]{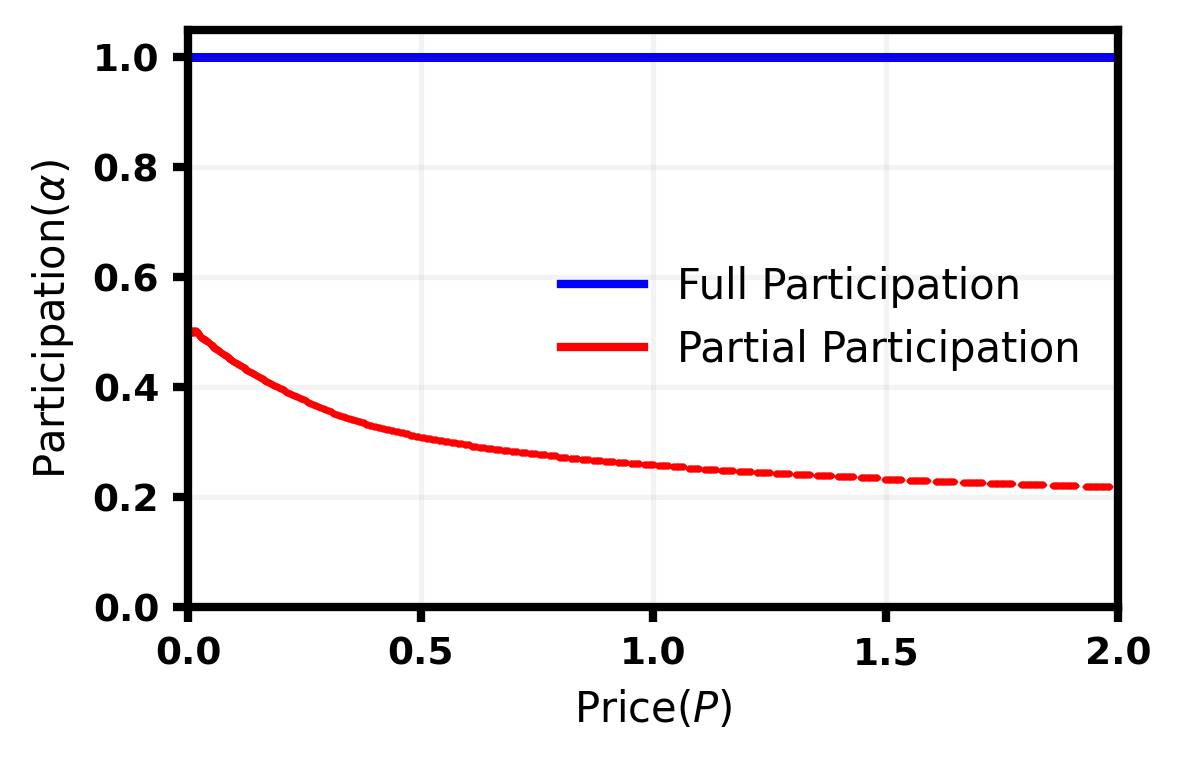}}\par
  \caption{Equilibria under uniformly distributed privacy valuations and a s-shaped benefit function with $(a, b) = (5, 5)$, enumerated for different values of $C$. Error tolerance $= 3 \times 10^{-3}$.}
  \label{fig:uniform_privacy_s1}
\end{figure}

\begin{figure}[!ht]
  \centering
  \raisebox{20pt}{\parbox[b]{.11\textwidth}{}}%
  \subfloat[][$C = 2$]{\includegraphics[width=.24\textwidth]{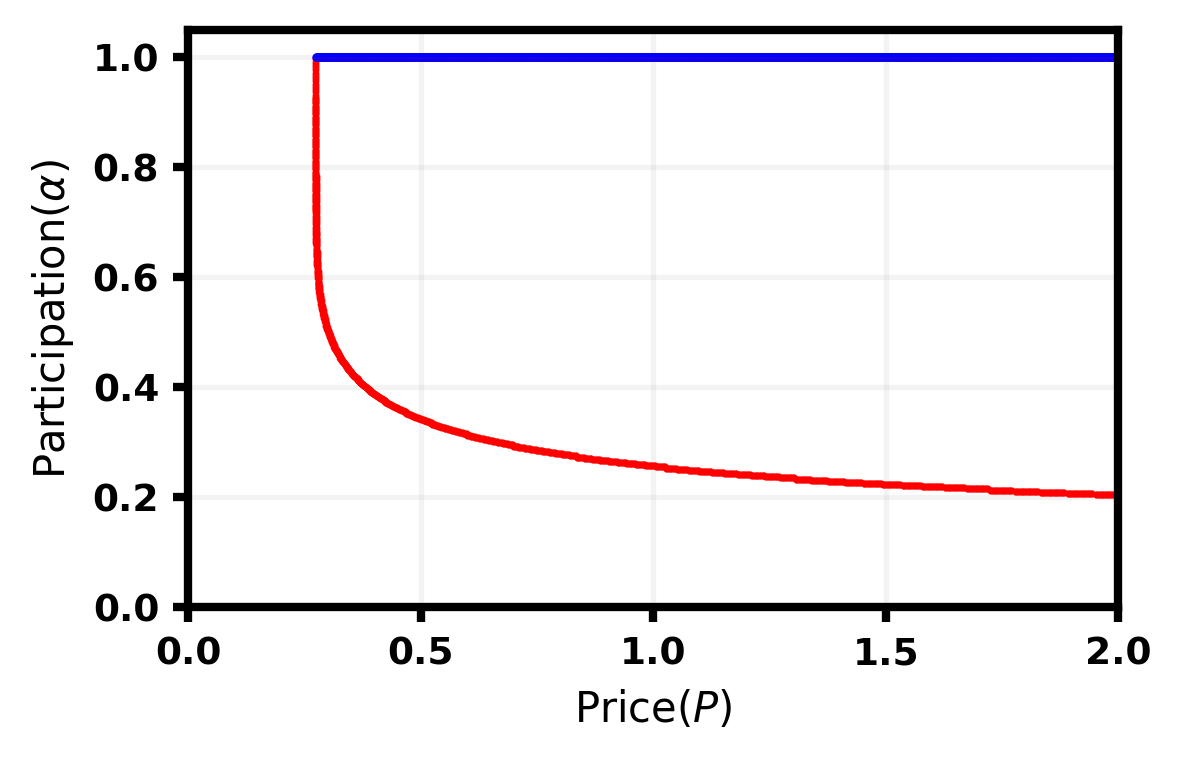}}\hfill
  \subfloat[][$C = 5$]{\includegraphics[width=.24\textwidth]{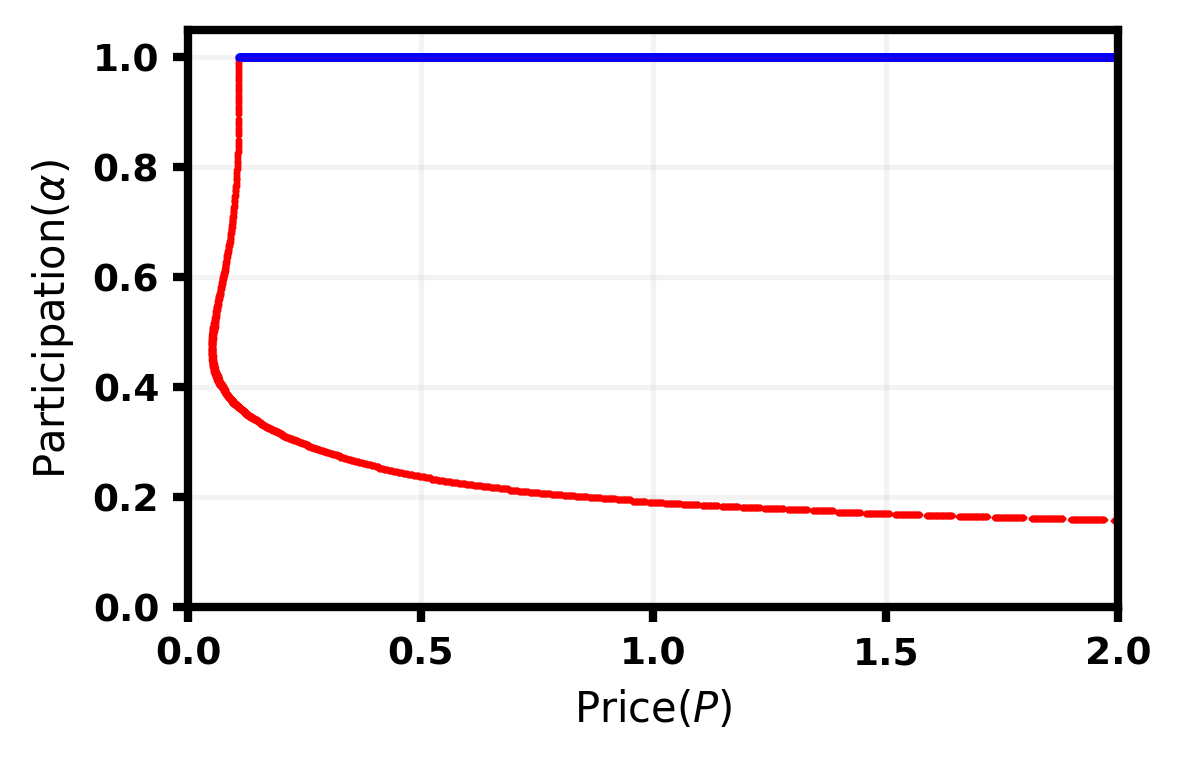}}\hfill
  \raisebox{20pt}{\parbox[b]{.11\textwidth}{}}%
  \subfloat[][$C = 7$]{\includegraphics[width=.24\textwidth]{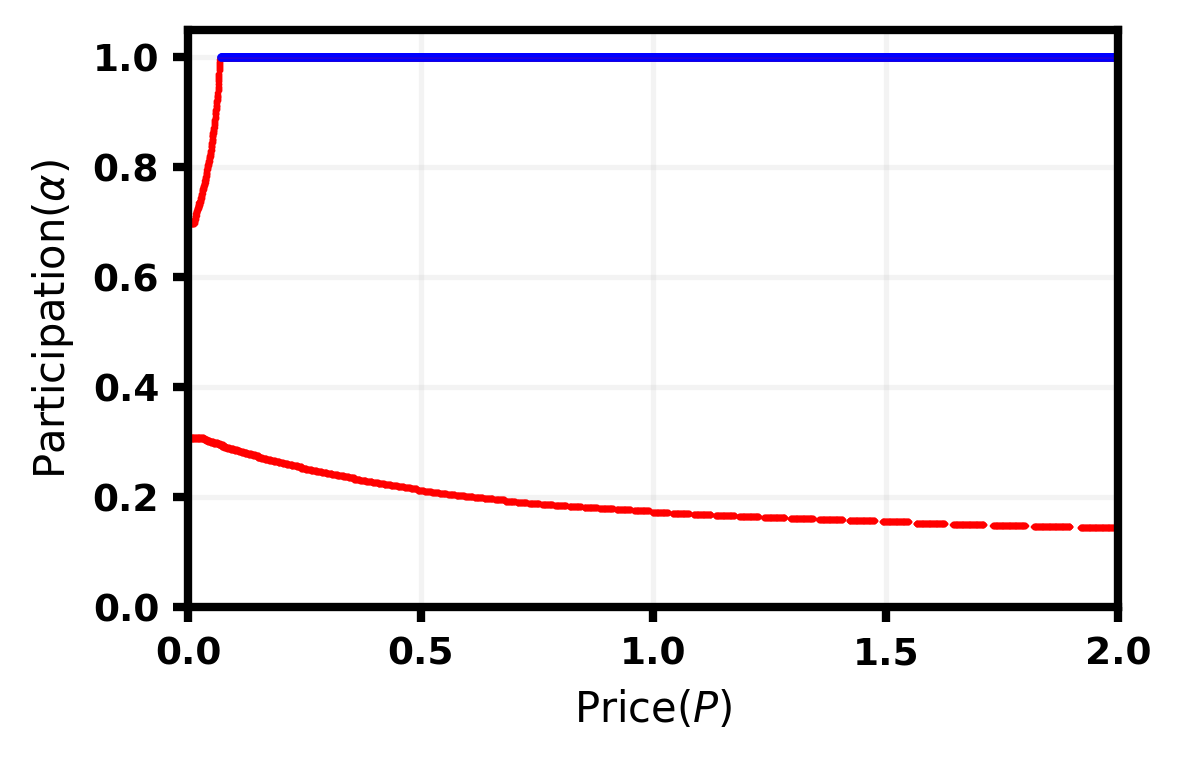}}\hfill
  \subfloat[][$C = 10$]{\includegraphics[width=.24\textwidth]{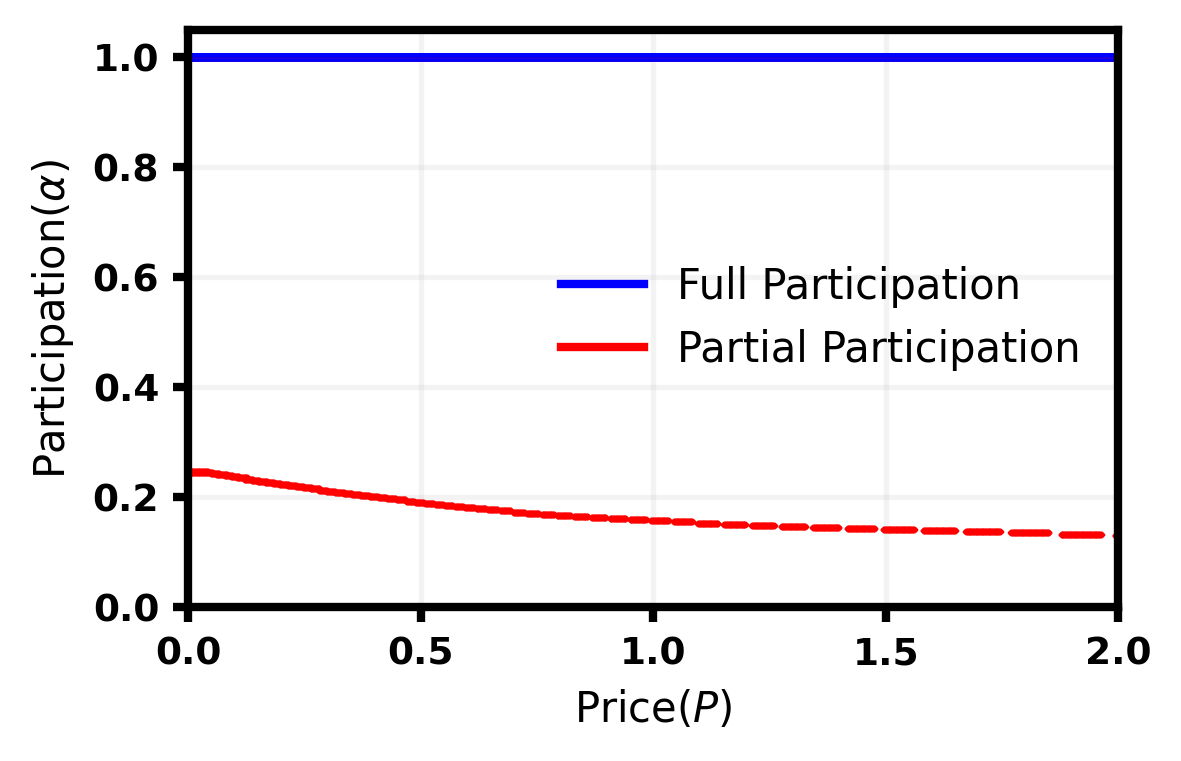}}\par
  \caption{Equilibria under uniformly distributed privacy valuations and a s-shaped benefit function with $(a, b) = (5, 10)$, enumerated for different values of $C$. Error tolerance $= 3 \times 10^{-3}$.}
  \label{fig:uniform_privacy_s2}
\end{figure}

\begin{figure}[!ht]
  \centering
  \raisebox{20pt}{\parbox[b]{.11\textwidth}{}}%
  \subfloat[][$C = 5$]{\includegraphics[width=.3\textwidth]{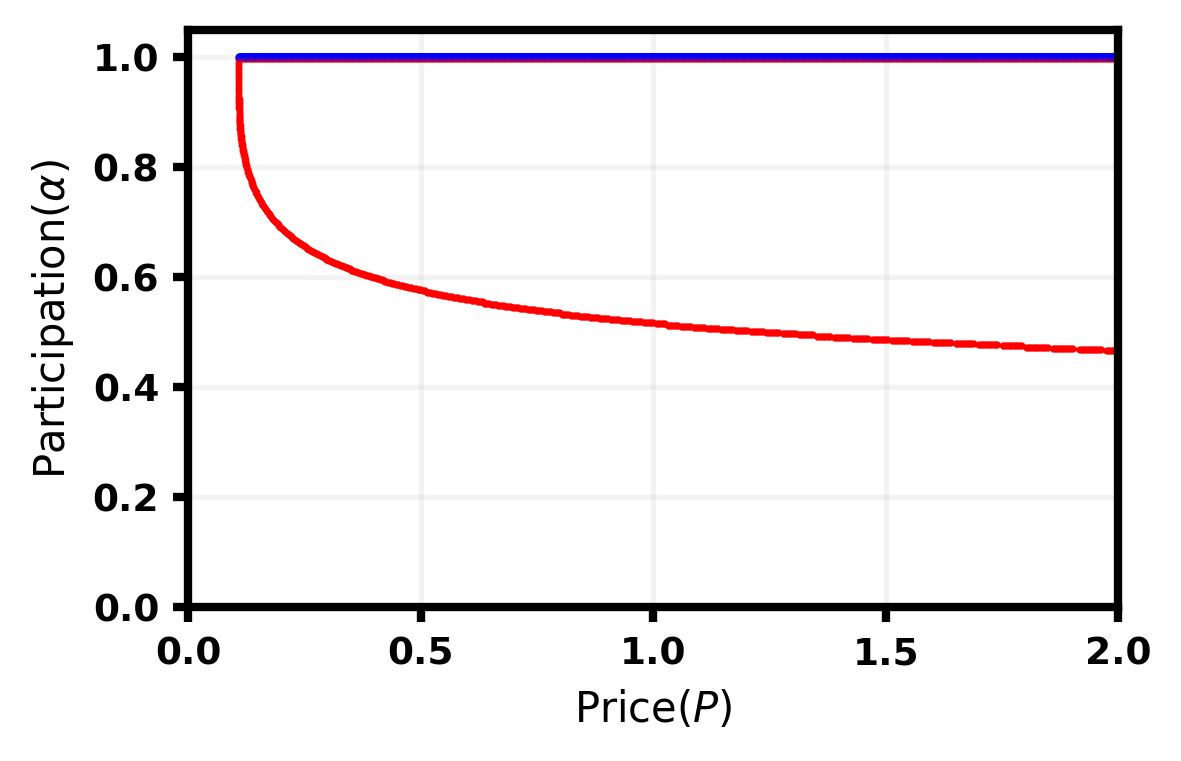}}\hfill
  \raisebox{20pt}{\parbox[b]{.11\textwidth}{}}%
  \subfloat[][$C = 10$]{\includegraphics[width=.3\textwidth]{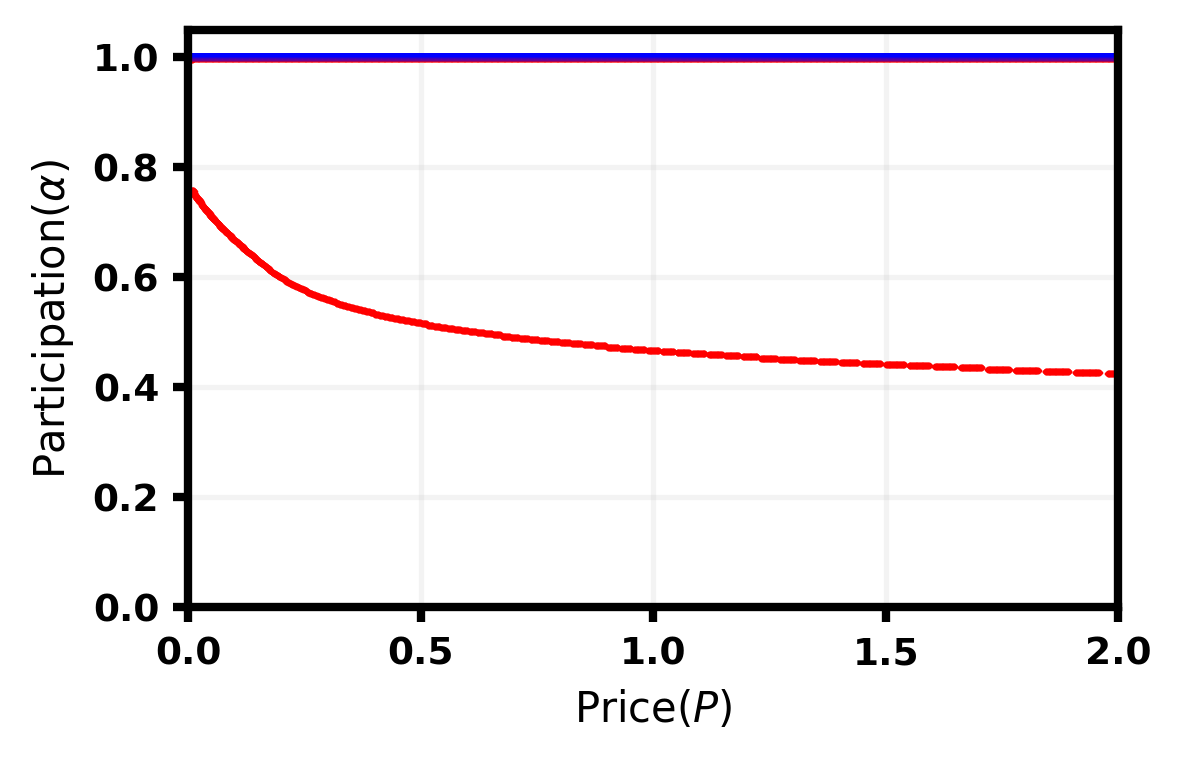}}\hfill
  \subfloat[][$C = 20$]{\includegraphics[width=.3\textwidth]{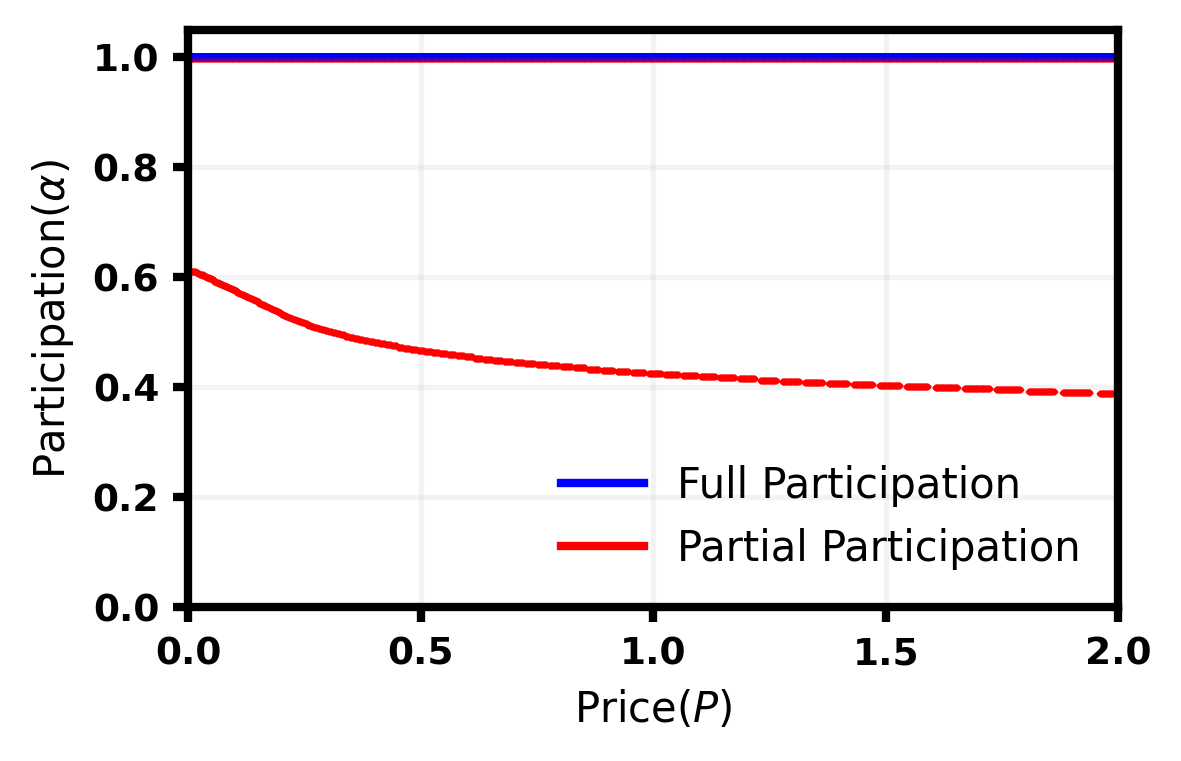}}\par
  \caption{Equilibria under uniformly distributed privacy valuations and a s-shaped benefit function with $(a, b) = (10, 5)$, enumerated for different values of $C$. Error tolerance $= 6 \times 10^{-3}$.}
  \label{fig:uniform_privacy_s3}
\end{figure}

\bibliographystyle{plain}
\bibliography{mybib}

\appendix

\section{Proofs for the constant case}\label{app:constant}

These results provide a complete characterization of market equilibria for full and partial participation of users. While developing the proof, we will analyze the full participation and the partial participation cases separately. 

Before stating the stronger version of our theorems and starting the proof, we will first note a useful result on the buyer side of the market, which will be useful in simplifying our equilibrium analysis, reducing the analysis of buyer best responses to a single parameter $k^*$.

\begin{claim}\label{clm:N_monotone}
    For a given $(P, \alpha)$-tuple at equilibrium, $\buydec$'s are always non-decreasing in $k$. In turn, there exists a threshold $k^*$ such that
    \begin{align*}        
      \buydec < \alpha \numusers, \quad \forall~k \leq k^* 
    \end{align*}
    and
    \begin{align*}
    \buydec = \alpha \numusers, \quad \forall~ k > k^*.
    \end{align*}
\end{claim}

\begin{proof}
This can be trivially verified from \ref{opt:platform}. Recall that $\buydec$ is given by:
\[  
\buydec = \min \left(\alpha \numusers, \frac{\budget}{P} \right) \quad \forall ~k \in [\numbuyers]. 
\] 
Since $P$ and $\alpha$ are fixed, $\buydec$ is non-decreasing in $\budget$. The claim now follows directly from the budget monotonicity in Assumption \ref{ass:budget_order}.
\end{proof}

This result is quite intuitive: each buyer wants to buy as much as possible, so the larger their budget, the more they buy. When $k^* = 0$, all buyers purchase the entire available data and no buyer exhausts their budget; when $k^* = \numbuyers$, all buyers use up their entire budget to purchase a partial amount of the total number of available data points. In the rest of the proof, when an equilibrium induces a threshold $k^*$ on the buyers, we talk about an \emph{equilibrium at $k^*$}.

\subsection{Part 1: A Full Equilibrium Characterization}\label{app:constant_eq_chart}
We first start by ruling out corner cases, noting that no partial participation equilibrium can exist at $k^* = \numbuyers$ and no full participation equilibrium can exist at $k^* = 0$ under our condition that $\numbuyers > \Q \geq \frac{B_{\leq \numbuyers}}{B_{\numbuyers}}$; we characterize these cases separately in Theorems \ref{thm:eq_exist_highQ} and \ref{thm:eq_exist_lowQ} and Appendix \ref{app:constant_thm_simple}.

\begin{claim}
Suppose $\Q B_{\numbuyers} \geq B_{\leq \numbuyers}$. Then, there is no equilibrium with partial user participation $(0 < \alpha < 1)$ at $k^* = \numbuyers$.
\end{claim}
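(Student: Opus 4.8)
The plan is to show that the defining feasibility requirement of an equilibrium at $k^* = \numbuyers$ is incompatible with partial participation once we impose the hypothesis $\Q B_{\numbuyers} \geq B_{\leq \numbuyers}$. First I would unpack what $k^* = \numbuyers$ means via Claim~\ref{clm:N_monotone}: every buyer is budget-constrained, so $N_k(\alpha) = \frac{\budget}{P} < \alpha \numusers$ for all $k \in [\numbuyers]$. Since budgets are index-ordered (Assumption~\ref{ass:budget_order}), the tightest of these inequalities is the one for $k = \numbuyers$, so the feasibility of an equilibrium at $k^* = \numbuyers$ is equivalent to the single condition
\[
\alpha > \frac{B_{\numbuyers}}{P \numusers}.
\]
Moreover, in this regime the aggregate mass purchased is $\sum_{k=1}^{\numbuyers} N_k(\alpha) = \frac{B_{\leq \numbuyers}}{P}$, using Definition~\ref{defn:Bleq}.

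Next I would substitute this aggregate demand into the partial-participation equilibrium condition of Equation~\eqref{opt:platform}. Because $0 < \alpha < 1$, the outer $\min$ must be attained by its second argument (otherwise $\alpha = 1$), so the condition reads $\alpha = \frac{\alpha \numusers \Q}{B_{\leq \numbuyers}/P}$. Cancelling the factor $\alpha > 0$ leaves $1 = \frac{\numusers \Q P}{B_{\leq \numbuyers}}$, which pins down the price exactly at $P = \frac{B_{\leq \numbuyers}}{\numusers \Q}$. In other words, a partial equilibrium at $k^* = \numbuyers$ could arise only at this single price.

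Finally, I would feed this forced price back into the feasibility constraint $\alpha > \frac{B_{\numbuyers}}{P\numusers}$. Substituting $P = \frac{B_{\leq \numbuyers}}{\numusers \Q}$ yields $\frac{B_{\numbuyers}}{P\numusers} = \frac{\Q B_{\numbuyers}}{B_{\leq \numbuyers}}$, so any such equilibrium would require $\alpha > \frac{\Q B_{\numbuyers}}{B_{\leq \numbuyers}}$. Under the hypothesis $\Q B_{\numbuyers} \geq B_{\leq \numbuyers}$ the right-hand side is at least $1$, forcing $\alpha > 1$ and contradicting $\alpha < 1$. Hence no partial participation equilibrium can exist at $k^* = \numbuyers$.

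I expect the only delicate point — rather than a genuine obstacle — to be the cancellation step: one must first justify that in the partial regime the $\min$ is inactive so that the equality $\alpha = \frac{\alpha \numusers \Q}{\sum_k N_k}$ holds exactly, and then cancel $\alpha > 0$ legitimately. Everything else is direct algebraic substitution, and it is the \emph{strictness} of the feasibility inequality (inherited from $N_k < \alpha\numusers$ being strict for $k \le k^*$) that keeps the contradiction with $\alpha \le 1$ clean even in the boundary case $\Q B_{\numbuyers} = B_{\leq \numbuyers}$.
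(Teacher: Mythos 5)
Your proposal is correct and follows essentially the same route as the paper's proof: you deduce from $k^* = \numbuyers$ that every buyer is budget-constrained so $\sum_k N_k = B_{\leq \numbuyers}/P$, use the partial-participation equality to pin the price at $P = B_{\leq \numbuyers}/(\Q \numusers)$, and then show the strict feasibility constraint $B_{\numbuyers}/P < \alpha \numusers$ forces $\alpha > \Q B_{\numbuyers}/B_{\leq \numbuyers} \geq 1$, contradicting $\alpha < 1$. Your explicit attention to the strictness of the inequality at the boundary case $\Q B_{\numbuyers} = B_{\leq \numbuyers}$ is a nice touch that the paper leaves implicit, but the argument is the same.
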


\begin{proof}
We will prove by contradiction. Suppose, there exists a partial participation equilibrium $(0 < \alpha < 1)$ at $k^* = \numbuyers$. In that case, $N_k = \frac{B_k}{P} (< \alpha \numusers)$ for all $k \in [\numbuyers]$ (by the definition of $k^*$ in Claim \ref{clm:N_monotone}). Since $0 < \alpha < 1$ (by assumption), we must have the following equality (recall, $\alpha = \min\left(1, \frac{\alpha \Q \numusers}{\sum_k N_k}\right)$): 
\[
       \Q \numusers = \sum_{k=1}^{\numbuyers} N_k = \frac{B_{\leq \numbuyers}}{P} \quad \text{which implies} \quad P = \frac{B_{\leq \numbuyers}}{\Q \numusers}.
\]
This price $P$ must satisfy $\frac{B_{\numbuyers}}{P} < \alpha \numusers$ which implies $\alpha > \frac{Q B_{\numbuyers}}{B_{\leq \numbuyers}} \geq 1$. Thus we have a contradiction because $\alpha < 1$ by assumption. This concludes the proof. 
\end{proof}

\begin{claim}
Suppose $\numbuyers > Q$. Then, there is no equilibrium with full user participation $(\alpha = 1)$ at $k^* = 0$. 
\end{claim}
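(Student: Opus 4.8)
The plan is to argue by contradiction, exactly mirroring the structure of the preceding claim. Suppose toward a contradiction that there is a full participation equilibrium, i.e.\ $\alpha = 1$, which induces buyer threshold $k^* = 0$. The first step is to translate the structural meaning of $k^* = 0$ using Claim~\ref{clm:N_monotone}: the threshold $k^* = 0$ says that every buyer $k > k^* = 0$ purchases the entire available mass, so $N_k = \alpha \numusers$ for all $k \in [\numbuyers]$, while the set $\{k \le k^*\}$ of budget-exhausting buyers is empty. Hence $\sum_{k=1}^{\numbuyers} N_k = \numbuyers \cdot \alpha \numusers$.

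Next I would substitute $\alpha = 1$ together with this sum into the equilibrium condition from Equation~\eqref{opt:platform}. The fixed-point equation $\alpha = \min\bigl(1, \frac{\alpha \numusers \Q}{\sum_k N_k}\bigr)$ then reads $1 = \min\bigl(1, \frac{\numusers \Q}{\numbuyers \numusers}\bigr) = \min\bigl(1, \frac{\Q}{\numbuyers}\bigr)$. For $\alpha = 1$ to actually solve this equation, the inner ratio must be at least $1$, i.e.\ $\frac{\Q}{\numbuyers} \ge 1$, equivalently $\Q \ge \numbuyers$. This directly contradicts the hypothesis $\numbuyers > \Q$, which completes the argument.

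I expect no genuine obstacle here, since the statement reduces to a one-line verification once the buyer side is resolved. The only points requiring care are the clean reduction of $k^* = 0$ to ``every buyer purchases $\alpha \numusers$'' via Claim~\ref{clm:N_monotone} (so that one reads the threshold convention correctly and confirms no buyer is budget-constrained), and checking that the $\min$ in the equilibrium condition cannot attain the value $1$ once $\Q < \numbuyers$. Intuitively, at $k^* = 0$ the total data demanded equals $\numbuyers$ copies of the whole participating population, so each participant's data is sold to all $\numbuyers$ buyers and the per-user privacy cost is $v\,\numbuyers$; since the benefit $\Q$ is strictly below $\numbuyers$, the highest-valuation participants (those with $v$ near $1$) obtain strictly negative utility and prefer to leave, so full participation cannot be sustained at $k^* = 0$.
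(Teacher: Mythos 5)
Your proposal is correct and follows essentially the same route as the paper: argue by contradiction, use Claim~\ref{clm:N_monotone} to get $\sum_{k=1}^{\numbuyers} N_k = \numbuyers \numusers$ at $k^* = 0$ with $\alpha = 1$, and then observe that the equilibrium condition in \eqref{opt:platform} forces $\Q \numusers \geq \numbuyers \numusers$, i.e.\ $\Q \geq \numbuyers$, contradicting $\numbuyers > \Q$. The added remark that the $\min$ attains $1$ only when the inner ratio is at least $1$ is exactly the step the paper uses implicitly, so there is nothing to fix.
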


\begin{proof}
Again, we prove by contradiction. Suppose, there exists an equilibrium at $k^* = 0$ with $\alpha = 1$. By the definition of $k^*$ (refer Claim \ref{clm:N_monotone}), we derive $\sum_{k=1}^{\numbuyers} N_k = \numbuyers \alpha \numusers = \numbuyers\numusers$. Additionally, since $\alpha = 1$, from the expression of $\alpha$ in \ref{opt:platform}, we obtain: 
\[
        \Q \numusers \geq \sum_{k=1}^{\numbuyers}N_k = \numbuyers \numusers,
\]
which implies $\Q \geq \numbuyers$ (contradiction !). This concludes the proof. 
\end{proof}

In the rest of the section, we first show a more precise version of Theorem~\ref{thm:eq_exist_modQ} which also characterizes equilibria on the buyer side. We will then, in Section \ref{app:constant_thm_simple}, show how Theorem~\ref{thm:eq_existence_full} simplifies into Theorem~\ref{thm:eq_exist_modQ}. 

Let us quickly recall the definition of $\gamma(k)$ from Theorem \ref{thm:eq_exist_modQ}. Additionally, for ease of exposition, we introduce a few more notations: 
\begin{enumerate}
    \item $\gamma(k) = \frac{(\numbuyers - k)B_k + B_{\leq k}}{\Q \numusers}$;
    \item $\xi(k) = \frac{B_k}{\numusers}$;
    \item $P^{*}(k) = \frac{B_{\leq k}}{\numusers \left( \Q - (\numbuyers - k) \right)}$.
\end{enumerate}

\begin{theorem}\label{thm:eq_existence_full}
Suppose $\frac{B_{\leq \numbuyers}}{B_{\numbuyers}} \leq \Q < \numbuyers$. Fix $k^* \in \{0,\ldots,\numbuyers\}$. There exists an equilibrium with buyer threshold $k^*$ at price $P$ with: 
\begin{itemize}
    \item full participation of users $(\alpha = 1)$ if and only if: 
    \[
        k^* > 0, \quad \xi(k^*) < P \leq \xi(k^*+1) \quad \text{and} \quad P \geq P^*(k^*).
    \]
    In this case, the quantity of data bought by each buyer is given by:
    \[
        N_k = \frac{B_k}{P} \quad \forall k \leq k^* \quad \text{and} \quad N_k = \numusers \quad \forall k > k^*. 
    \]
    \item partial participation of users $(0 <\alpha < 1)$ if and only if: 
    \[
     k^* < \numbuyers, \quad \gamma(k^*) < P \leq \gamma(k^*+1), 
    \]
    and additionally, 
    \[
    P < P^*(k^*) \quad \text{when } \Q > \numbuyers - k^*.
    \]
    In this case, the participation rate is given by:
    \[ 
    \alpha = \frac{1}{\numbuyers - k^*}\left[\Q - \frac{B_{\leq k^*}}{P \numusers} \right],
    \]
    and the quantity of data bought by each buyer is given by:
    \[
        N_k = \frac{B_k}{P} \quad \forall~ k \leq k^* \quad \text{and} \quad N_k = \frac{\numusers}{\numbuyers - k^*}\left[\Q - \frac{B_{\leq k^*}}{P\numusers} \right] \quad \forall k > k^*.
    \]
\end{itemize}
\end{theorem}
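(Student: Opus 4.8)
The plan is to fix the buyer threshold $k^*$ supplied by Claim~\ref{clm:N_monotone} and reduce both the value equation and the threshold-consistency constraints to bands of admissible prices $P$. By Claim~\ref{clm:N_monotone}, a threshold-$k^*$ configuration means $N_k = B_k/P$ for $k \leq k^*$ and $N_k = \alpha\numusers$ for $k > k^*$, and this is internally consistent exactly when $\frac{B_{k^*}}{P} < \alpha\numusers \leq \frac{B_{k^*+1}}{P}$ (using $B_0 = 0$ and $B_{\numbuyers+1} = +\infty$ from Assumption~\ref{ass:budget_order}); the strict/non-strict split here mirrors the $<$ versus $=$ in the statement of Claim~\ref{clm:N_monotone} and the tie-breaking convention of Definition~\ref{defn:equilibrium}. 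Summing gives $\sum_k N_k = \frac{B_{\leq k^*}}{P} + (\numbuyers - k^*)\alpha\numusers$, which I substitute into the equilibrium condition~\eqref{opt:platform}. The remainder of the proof turns each constraint into a price inequality and checks the equivalences in both directions.

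For the full-participation branch I set $\alpha = 1$, so that~\eqref{opt:platform} reduces to $\sum_k N_k \leq \Q\numusers$, i.e. $\frac{B_{\leq k^*}}{P} \leq \numusers\big(\Q - (\numbuyers - k^*)\big)$. First I rule out $k^* = 0$ (it would force $\Q \geq \numbuyers$, contradicting $\Q < \numbuyers$), so $k^* \geq 1$ and $B_{\leq k^*} > 0$; the left side is then strictly positive, which forces $\Q > \numbuyers - k^*$ and rearranges to $P \geq P^*(k^*)$. The threshold-consistency inequalities at $\alpha = 1$ collapse to $\xi(k^*) < P \leq \xi(k^*+1)$. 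Since each step is an equivalence, this yields the claimed iff together with the stated $N_k$'s.

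For the partial-participation branch I set $0 < \alpha < 1$, so the $\min$ in~\eqref{opt:platform} selects its second argument and dividing by $\alpha$ forces $\sum_k N_k = \Q\numusers$; solving gives $\alpha = \frac{1}{\numbuyers - k^*}\big[\Q - \frac{B_{\leq k^*}}{P\numusers}\big]$, which requires $k^* < \numbuyers$ (the case $k^* = \numbuyers$ over-determines $P$ and is excluded separately). I then translate the three remaining requirements: $\alpha > 0$ amounts to $P > \frac{B_{\leq k^*}}{\Q\numusers}$ and is implied by the lower threshold bound; substituting $\alpha\numusers$ into $\frac{B_{k^*}}{P} < \alpha\numusers \leq \frac{B_{k^*+1}}{P}$ and using $\gamma(k^*+1) = \frac{(\numbuyers-k^*)B_{k^*+1} + B_{\leq k^*}}{\Q\numusers}$ gives exactly $\gamma(k^*) < P \leq \gamma(k^*+1)$; and $\alpha < 1$ is automatic when $\Q \leq \numbuyers - k^*$ but becomes $P < P^*(k^*)$ precisely when $\Q > \numbuyers - k^*$.

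The main obstacle is the careful bookkeeping around the sign of $\Q - (\numbuyers - k^*)$, which controls whether $P^*(k^*)$ is an active cutoff: I must show that in the full-participation branch the regime $\Q \leq \numbuyers - k^*$ is genuinely impossible (so the conditions never manufacture a spurious equilibrium), while in the partial branch it merely makes the $\alpha < 1$ constraint vacuous. Tracking strict versus non-strict inequalities through each substitution so that the band endpoints land on $\gamma$, $\xi$, and $P^*$ exactly as stated is the other delicate point, as is verifying the ``if'' direction --- that a price satisfying the listed conditions yields a genuinely self-consistent pair $(\alpha, (N_k)_k)$ rather than just a formal solution --- which is where the standing assumption $\frac{B_{\leq \numbuyers}}{B_{\numbuyers}} \leq \Q < \numbuyers$ is invoked.
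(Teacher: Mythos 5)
Your overall route is the same as the paper's: fix the buyer threshold $k^*$ from Claim~\ref{clm:N_monotone}, encode threshold consistency as $\tfrac{B_{k^*}}{P} < \alpha\numusers \leq \tfrac{B_{k^*+1}}{P}$, substitute $\sum_k N_k = \tfrac{B_{\leq k^*}}{P} + (\numbuyers-k^*)\alpha\numusers$ into the fixed-point condition, and convert everything into price bands (this is exactly the paper's Lemmas~\ref{lem:alphaless1}, \ref{lem:price_alphaless1} and \ref{lem:price_alpha1}, plus the two corner-case claims). However, your exclusion of $k^* = \numbuyers$ in the partial branch has a genuine gap: ``the case $k^* = \numbuyers$ over-determines $P$'' is not a valid reason. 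At $k^* = \numbuyers$ the value equation does pin $P$ to the single price $B_{\leq\numbuyers}/(\Q\numusers) = \gamma(\numbuyers)$, but a single admissible price is not a contradiction --- indeed, in the low-benefit regime $\Q < B_{\leq\numbuyers}/B_{\numbuyers}$ the paper's Theorem~\ref{thm:eq_exist_lowQ} exhibits a whole continuum of partial-participation equilibria at exactly that one price. What actually kills $k^* = \numbuyers$ here is the standing hypothesis $\Q \geq B_{\leq\numbuyers}/B_{\numbuyers}$: with $P = B_{\leq\numbuyers}/(\Q\numusers)$, threshold consistency $B_{\numbuyers}/P < \alpha\numusers$ forces $\alpha > \Q B_{\numbuyers}/B_{\leq\numbuyers} \geq 1$, contradicting $\alpha < 1$. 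Relatedly, you misplace where the standing assumption bites: you invoke it in the ``if'' direction, but the paper's sufficiency constructions never use $\Q \geq B_{\leq\numbuyers}/B_{\numbuyers}$ (only $\Q < \numbuyers$, to handle $k^* = 0$); its sole role is this corner-case exclusion on the necessity side.

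A second, smaller issue: in the full-participation branch your claim that ``each step is an equivalence'' hides the sign of $\Q - (\numbuyers - k^*)$, since $\tfrac{B_{\leq k^*}}{P} \leq \numusers\bigl(\Q - (\numbuyers - k^*)\bigr)$ rearranges to $P \geq P^*(k^*)$ only when $\Q > \numbuyers - k^*$. You flag this and propose to show that the regime $\Q \leq \numbuyers - k^*$ is ``genuinely impossible'' under the listed conditions --- but under a literal reading it is not: if $\Q < \numbuyers - k^*$ then $P^*(k^*) < 0$ and $P \geq P^*(k^*)$ holds vacuously, so (e.g.\ $\numbuyers = 3$, budgets $(1,2,10)$, $\Q = 1.5$, $k^* = 1$, any $P \in (\xi(1),\xi(2)]$) all three stated conditions can be met while $\sum_k N_k > \Q\numusers$ and no full-participation equilibrium exists. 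The correct repair, implicit in the paper's proof of Lemma~\ref{lem:price_alpha1}, is to read $P \geq P^*(k^*)$ jointly with $\Q > \numbuyers - k^*$ (equivalently, treat $P^*(k^*)$ as $+\infty$ when $\Q \leq \numbuyers - k^*$); on the necessity side, positivity of $B_{\leq k^*}$ forces $\Q > \numbuyers - k^*$ exactly as you derive. With these two repairs your argument coincides with the paper's.
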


The proof of the theorem is given in the rest of the section. We divide this proof into two cases, first the partial participation case, and second the full participation case. 

\paragraph{Case 1: $0 < \alpha < 1$ (Partial Participation)} 
In this segment, we derive necessary and sufficient conditions for the existence of equilibria with partial participation of users (i.e., $0 < \alpha < 1$). We start by obtaining a closed-form expression for the participation rate $\alpha$ for an equilibrium at $k^*$ when it is known that $\alpha$ is fractional.

\begin{lemma}\label{lem:alphaless1}
Any equilibrium at $k^* \in \{0,1,2....(\numbuyers -1)\}$ with $0 < \alpha < 1$ satisfies 
\begin{align}
\alpha = \frac{1}{\numbuyers - k^*} \left[ \Q - \frac{B_{\leq k^*}}{P \numusers} \right]. \label{exp:alpha} 
\end{align}
\end{lemma}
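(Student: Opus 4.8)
The plan is to substitute the buyer-side structure pinned down by the threshold $k^*$ of Claim~\ref{clm:N_monotone} directly into the equilibrium condition~\eqref{opt:platform}, specialized to the constant-benefit case $Q(\alpha) = \Q$, and then solve the resulting scalar equation for $\alpha$. Since the lemma only asserts the \emph{value} of $\alpha$ at an equilibrium at $k^*$ with $0 < \alpha < 1$ (existence and the admissible price ranges being handled in the surrounding theorem), I may take such an equilibrium as given and simply compute.

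First I would expand the denominator $\sum_{k=1}^{\numbuyers} N_k$ that appears in~\eqref{opt:platform}. By the definition of the threshold $k^*$ in Claim~\ref{clm:N_monotone}, each buyer $k \leq k^*$ exhausts their budget, so $N_k = \budget / P$, while each buyer $k > k^*$ purchases the entire available mass, so $N_k = \alpha \numusers$. Using the cumulative-budget notation of Definition~\ref{defn:Bleq}, this gives
\[
\sum_{k=1}^{\numbuyers} N_k = \frac{B_{\leq k^*}}{P} + (\numbuyers - k^*)\,\alpha \numusers.
\]
Next I would use the hypothesis $0 < \alpha < 1$ to select the active branch of the $\min$ in~\eqref{opt:platform}: since $\alpha < 1$, the minimum cannot be attained by the constant $1$, so it must equal the fractional term, i.e.\ $\alpha = \frac{\alpha \numusers \Q}{\sum_k N_k}$. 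Substituting the expansion above and then cancelling the common factor $\alpha$ (legitimate because $\alpha > 0$) collapses the identity to the linear relation
\[
\numusers \Q = \frac{B_{\leq k^*}}{P} + (\numbuyers - k^*)\,\alpha \numusers,
\]
and solving for $\alpha$ yields exactly the claimed closed form $\alpha = \frac{1}{\numbuyers - k^*}\left[\Q - \frac{B_{\leq k^*}}{P \numusers}\right]$. Note this also uses $k^* \leq \numbuyers - 1$, so that $\numbuyers - k^* \geq 1$ and the division is well defined.

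I do not expect a genuine obstacle here: the only two points requiring care are justifying that the $\min$ resolves to the fractional branch and that the cancellation of $\alpha$ is valid, both of which follow immediately from the standing hypotheses $0 < \alpha < 1$ and $\alpha > 0$; everything else is routine algebra. The one thing I would flag is that the argument is purely a \emph{necessary} condition derivation---it shows that \emph{if} an equilibrium at $k^*$ with fractional $\alpha$ exists, its participation rate must take this value---leaving the question of for which prices $P$ such an equilibrium actually arises (and lies in $(0,1)$ with the correct induced $k^*$) to the subsequent parts of the proof of Theorem~\ref{thm:eq_existence_full}.
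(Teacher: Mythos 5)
Your proposal is correct and follows essentially the same route as the paper's proof: decompose $\sum_{k=1}^{\numbuyers} N_k$ via the threshold $k^*$ of Claim~\ref{clm:N_monotone} into $\frac{B_{\leq k^*}}{P} + (\numbuyers-k^*)\alpha\numusers$, use $0<\alpha<1$ to select the fractional branch of the $\min$, cancel $\alpha$, and solve the resulting linear equation. Your added remarks---that $k^*\leq \numbuyers-1$ keeps the division well defined and that the argument establishes only a necessary condition---are accurate and consistent with how the paper defers existence to Lemma~\ref{lem:price_alphaless1}.
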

\begin{proof}
Since we know that this is an equilibrium at $k^*$, we have:
\begin{align}
  \sum_{k=1}^{\numbuyers} \buydec= \sum_{k=1}^{k^*} \buydec + \sum_{k=k^*+1}^{\numbuyers} \buydec \stackrel{\stepone}{=} \sum_{k=1}^{k^*} \frac{\budget}{P} + \alpha \numusers(\numbuyers-k^*) \stackrel{\steptwo}{=} \frac{B_{\leq k^*}}{P} + \alpha \numusers(\numbuyers-k^*). \label{exp:sum_Nk}
\end{align}
where step~\stepone results from the definition of $k^*$ in Claim \ref{clm:N_monotone} and step~\steptwo results from the definition of $B_{\leq k^*}$ in Defn~\ref{defn:Bleq}. 
Now, recall from \eqref{opt:platform} that $\alpha = \min \left( 1, \frac{\alpha \Q \numusers}{\left( \sum_{k=1}^{\numbuyers} \buydec \right)} \right)$. Since $0 < \alpha < 1$ by assumption, it must be that:
\[
    \alpha = \frac{\alpha \Q \numusers}{\left( \sum_{k=1}^{\numbuyers} \buydec \right)},
\]
and since $\alpha > 0$, this implies:
\begin{align}
\Q \numusers = \sum_{k=1}^{\numbuyers} \buydec. \label{eq:temp}
\end{align}
Combining Equations \eqref{exp:sum_Nk} and \eqref{eq:temp}, we obtain:
\begin{align*}
        &\Q \numusers = \frac{B_{\leq k^*}}{P} + \alpha \numusers(\numbuyers-k^*); 
\end{align*}
and hence, by rearranging terms,
\begin{align*}
\alpha = \frac{1}{\numbuyers - k^*} \left[ \Q - \frac{B_{\leq k^*}}{P \numusers} \right]. 
\end{align*}
\end{proof}

We then show necessary and sufficient conditions for the existence of a partial participation equilibrium at $k^* < \numbuyers$. 
\begin{lemma}\label{lem:price_alphaless1}
    Fix $k^* \in \{0,1,2...(\numbuyers-1)\}$. There \textbf{exists} an equilibrium at $k^*$ with $0 < \alpha < 1$ if and only if: 
    \begin{align}
     \gamma(k^*) < P \leq \gamma(k^*+1). \label{eq:condP1} 
    \end{align}
    Additionally, if $\Q > \numbuyers - k^*$, we also need:
    \begin{align}
    P < P^*(k^*).  \label{eq:condP2} 
    \end{align}
\end{lemma}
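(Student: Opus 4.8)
The plan is to translate the defining property of the buyer threshold $k^*$ (Claim~\ref{clm:N_monotone}) into a pair of inequalities on $\alpha$, substitute the closed form from Lemma~\ref{lem:alphaless1}, and read off the price conditions. Recall that an equilibrium at $k^*$ means $\buydec = B_k/P < \alpha\numusers$ for every $k \le k^*$ and $\buydec = \alpha\numusers$ (equivalently $B_k/P \ge \alpha\numusers$) for every $k > k^*$. Since the budgets are non-decreasing (Assumption~\ref{ass:budget_order}), only the innermost constraints bind, so these collapse to the single sandwich
\[
\frac{B_{k^*}}{P} < \alpha\numusers \le \frac{B_{k^*+1}}{P},
\]
where the left inequality is vacuous when $k^* = 0$. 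Together with $0 < \alpha < 1$, these are exactly the conditions characterizing a partial equilibrium at $k^*$, so the whole lemma is an equivalence to be verified by plugging in $\alpha = \frac{1}{\numbuyers - k^*}\left[\Q - \frac{B_{\leq k^*}}{P\numusers}\right]$ from Lemma~\ref{lem:alphaless1}.

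First I would dispatch the two buyer-consistency inequalities. Clearing $P$ and the factor $\frac{1}{\numbuyers - k^*}$ from the lower bound $B_{k^*}/P < \alpha\numusers$ collects the terms into $(\numbuyers - k^*)B_{k^*} + B_{\leq k^*} < \Q\numusers P$, i.e. $P > \gamma(k^*)$. The same manipulation applied to the upper bound $\alpha\numusers \le B_{k^*+1}/P$ yields $\Q\numusers P \le (\numbuyers - k^*)B_{k^*+1} + B_{\leq k^*}$; invoking the recursion $B_{\leq k^*+1} = B_{\leq k^*} + B_{k^*+1}$ shows the right-hand numerator equals $(\numbuyers - (k^*+1))B_{k^*+1} + B_{\leq k^*+1}$, so this is precisely $P \le \gamma(k^*+1)$. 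Hence the buyer-side conditions are equivalent to \eqref{eq:condP1}.

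Next I would enforce $0 < \alpha < 1$. Positivity amounts to $P > B_{\leq k^*}/(\Q\numusers)$, which is already implied by $P > \gamma(k^*)$ because $(\numbuyers - k^*)B_{k^*} \ge 0$; thus $\alpha > 0$ is free. For $\alpha < 1$, rearranging $\frac{1}{\numbuyers - k^*}\left[\Q - \frac{B_{\leq k^*}}{P\numusers}\right] < 1$ gives $\Q - (\numbuyers - k^*) < \frac{B_{\leq k^*}}{P\numusers}$. If $\Q \le \numbuyers - k^*$ the left side is non-positive while the right side is positive, so the inequality is automatic. If instead $\Q > \numbuyers - k^*$, both sides are positive and the inequality rearranges to $P < \frac{B_{\leq k^*}}{\numusers(\Q - (\numbuyers - k^*))} = P^*(k^*)$, which is exactly \eqref{eq:condP2}. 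This reproduces the conditional extra constraint in the statement.

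Finally I would assemble both directions: for the ``only if'' direction, any partial equilibrium at $k^*$ must have $\alpha$ as in Lemma~\ref{lem:alphaless1} and must satisfy the sandwich together with $0<\alpha<1$, hence the displayed price conditions; for the ``if'' direction, given a price meeting those conditions, defining $\alpha$ by the same formula produces a value in $(0,1)$ and allocations $\buydec = \min(\alpha\numusers, B_k/P)$ consistent with threshold $k^*$, i.e. a genuine equilibrium. I expect the main difficulty to be bookkeeping rather than conceptual: tracking the strict-versus-non-strict inequalities (forced by the strict ``$<$'' for $k \le k^*$ and the equality for $k > k^*$ in the definition of $k^*$), checking the vacuous lower bound and $B_{\leq 0} = 0$ when $k^* = 0$, and above all the sign-dependent case split for $\alpha < 1$, which is precisely what introduces the conditional requirement $P < P^*(k^*)$.
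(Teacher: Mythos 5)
Your proposal is correct and takes essentially the same route as the paper's own proof: both directions rest on the closed form of Lemma~\ref{lem:alphaless1}, the budget sandwich $\frac{B_{k^*}}{P} < \alpha\numusers \le \frac{B_{k^*+1}}{P}$ coming from the threshold structure of Claim~\ref{clm:N_monotone}, and the sign-dependent case split on $\Q$ versus $\numbuyers - k^*$ that produces the conditional constraint $P < P^*(k^*)$. One microscopic caveat: when $k^* = 0$ the quantity $B_{\leq 0}/(P\numusers)$ is zero rather than positive, so strictness of $\alpha < 1$ in that corner relies on the standing moderate-benefit assumption $\Q < \numbuyers$ (which the paper invokes explicitly at exactly this point), but this does not affect the validity of your argument.
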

\begin{proof}
We need to establish that conditions \eqref{eq:condP1} and \eqref{eq:condP2} are both necessary and sufficient for the existence of equilibrium at $k^*$ with $0 < \alpha < 1$. 

\paragraph{Necessary conditions} We will show that if there exists an equilibrium at $k^*$ with $0 < \alpha < 1$, then conditions \eqref{eq:condP1} and \eqref{eq:condP2} must be satisfied. We start by showing that condition \eqref{eq:condP2} must be satisfied, via direct inspection of Equation \eqref{exp:alpha}. Recall from Lemma \ref{lem:alphaless1} that if there exists an equilibrium at $k^*$ with $0 < \alpha < 1$, it must be that:
\[
 \alpha = \frac{1}{\numbuyers - k^*} \left[ \Q - \frac{B_{\leq k^*}}{P \numusers} \right].
\]
When $\Q \leq \numbuyers - k^*$, it is trivial to see that $\alpha < 1$ for any value of $P > 0$. However, when $\Q > \numbuyers - k^*$, in order for $\alpha$ in Equation \eqref{exp:alpha} to be strictly less than $1$, we need to ensure that:
\[
   \Q - \frac{B_{\leq k^*}}{P \numusers} < \numbuyers - k^*; 
\] 
and then by slight rearrangement of terms, we have:
\[
   P < \frac{B_{\leq k^*}}{\numusers \left( \Q - (\numbuyers-k^*) \right)} = P^*(k^*).
\]
Now we establish that condition \eqref{eq:condP1} is also satisfied, using the monotonicity property of $\buydec$'s at any equilibrium (refer Claim \eqref{clm:N_monotone}). We note that if there exists an equilibrium at $k^*$, it must be that:
\[
   N_{k^*} < \alpha \numusers \quad \text{and} \quad N_{k^*+1} = \alpha \numusers,
\]
Recall that, for any $k \in [\numbuyers]$, $N_k = \min \left(\alpha \numusers, \frac{B_k}{P} \right)$. Therefore, from the condition above, we get:
\[
   \frac{B_{k^*}}{P \numusers} < \alpha \leq \frac{B_{k^*+1}}{P \numusers}.
\]
Plugging in the expression for $\alpha$ derived in \eqref{exp:alpha} in the above inequality, we can verify that condition \eqref{eq:condP1} is satisfied: 
\begin{align*}
    &\frac{B_{k^*}}{P \numusers} < \frac{1}{\numbuyers - k^*} \left[ \Q - \frac{B_{\leq k^*}}{P \numusers} \right] \leq \frac{B_{k^*+1}}{P \numusers} \\
    &\iff \frac{(\numbuyers - k^*)B_{k^*}}{P \numusers} < \Q - \frac{B_{\leq k^*}}{P \numusers} \leq \frac{(\numbuyers - k^*)B_{k^*+1}}{P \numusers} \\
    &\iff \frac{(\numbuyers-k^*)B_{k^*} + B_{\leq k^*}}{P \numusers} < \Q \leq \frac{(\numbuyers-k^*)B_{k^*+1} + B_{\leq k^*}}{P \numusers} \\
    &\iff  \frac{(\numbuyers-k^*)B_{k^*} + B_{\leq k^*}}{\Q \numusers} < P \leq \frac{(\numbuyers-k^*-1)B_{k^*+1} + B_{\leq k^*+1}}{\Q \numusers}\\
    &\iff \gamma(k^*) < P \leq \gamma(k^*+1).
\end{align*}
This concludes the proof of the necessary conditions for the existence of an equilibrium at $k^*$ with $0 < \alpha < 1$. 

\paragraph{Sufficient conditions} For the proof of sufficiency, we need to show that for any price which satisfies the conditions \eqref{eq:condP1}-\eqref{eq:condP2}, there exists an equilibrium at $k^*$ with $0 < \alpha < 1$. Given a price $P'$ which satisfies conditions \eqref{eq:condP1}-\eqref{eq:condP2}, it suffices to come up with a feasible participation rate $\alpha'$ for which:
\begin{subequations}
\begin{align}
    0 < \alpha' < 1; \label{eq:condP3}
\end{align}
\begin{align}
\frac{B_{k^*}}{P \numusers} < \alpha' \leq \frac{B_{k^*+1}}{P \numusers}; \quad \text{and}\label{eq:condP4}
\end{align}
\begin{align}
    \sum_{k=1}^{\numbuyers} N_k = \Q \numusers. \label{eq:condP5}
\end{align}
\end{subequations}
Fix $k^* < \numbuyers$. Now, pick any price $P'$ that satisfies conditions \eqref{eq:condP1}-\eqref{eq:condP2}. Define $\alpha' = 
\frac{1}{\numbuyers - k^*} \left[ \Q - \frac{B_{\leq k^*}}{P'\numusers} \right]$. The first step is to verify that $\alpha'$ satisfies condition \eqref{eq:condP3}. 

Since $P'$ satisfies condition \eqref{eq:condP1}, this implies $P' > \gamma(k^*) =  \frac{(\numbuyers - k^*)B_{k^*} + B_{\leq k^*}}{\Q \numusers}$. Note that $k^* < \numbuyers$ (by choice) and $B_{k^*} \geq 0$. This means: 
\begin{align*}
    P > \frac{B_{\leq k^*}}{\Q \numusers} 
    \quad \text{which implies} \quad \Q > \frac{B_{\leq k^*}}{P \numusers}.  
\end{align*}
Therefore, 
\[
\alpha'=\frac{1}{\numbuyers - k^*}\left[\Q - \frac{B_{\leq k^*}}{P\numusers} \right] > 0.
\]
Now we argue that $\alpha' < 1$. There are two cases to consider here:
\begin{enumerate}
    \item $\Q \leq \numbuyers - k^*$: When $k^* = 0$, we actually have the strict inequality $\Q < \numbuyers$. In this case, $\alpha'$ reduces to $\frac{\Q}{\numbuyers}$ which is clearly $< 1$.
    When $\Q \leq \numbuyers - k^*$ and $k^* > 0$, we have: 
    \[
    \alpha' = \frac{1}{\numbuyers - k^*}\left[\Q - \frac{B_{\leq k^*}}{P\numusers} \right] \leq \frac{1}{\numbuyers - k^*}\left[\numbuyers-k^* - \frac{B_{\leq k^*}}{P\numusers}\right] < 1.
    \]  
    \item $\Q > \numbuyers - k^*$: Using the fact that $P'$ satisfies condition \eqref{eq:condP2}, we have:
    \begin{align*}
        &P' < \frac{B_{\leq k^*}}{\numusers \left( \Q - (\numbuyers - k^*) \right)} \\
        \implies &\Q - (\numbuyers - k^*) < \frac{B_{\leq k^*}}{P' \numusers} \\
        \implies & \Q - \frac{B_{\leq k^*}}{P' \numusers} < \numbuyers - k^*.
    \end{align*}
    Therefore,
    \begin{align*}
        \alpha' & = \frac{1}{\numbuyers - k^*}\left[\Q - \frac{B_{\leq k^*}}{P\numusers} \right] \\
        & <  \frac{1}{\numbuyers - k^*} (K-k^*)=1.
    \end{align*}
\end{enumerate}
Now, we verify condition \eqref{eq:condP4}. Actually, it suffices to look at the ratios $\frac{B_{k^*}}{P' \alpha' \numusers}$ and $\frac{B_{k^*+1}}{P' \alpha' \numusers}$ and verify that they are $< 1$ and $\geq 1$ respectively.
\begin{align*}
        \frac{B_{k^*}}{P' \alpha' \numusers} & \stackrel{\stepone}{=} \frac{B_{k^*}}{P' \numusers} \cdot \frac{P' \numusers (\numbuyers - k^*)}{\Q P' \numusers - B_{\leq k^*}} \\
        &= \frac{(\numbuyers-k^*)B_{k^*}}{\Q P' \numusers - B_{\leq k^*}}\\
        &\stackrel{\steptwo}{<} \frac{(\numbuyers-k^*)B_{k^*}}{(\numbuyers - k^*)B_{k^*}}
        \\
        &= 1.
\end{align*}
Step \stepone follows by direct substitution of $\alpha'$. 
The inequality in step \steptwo follows from the fact that $P' > \gamma(k^*)$ and therefore, 
\[
P' > \frac{(\numbuyers - k^*)B_{k^*} + B_{\leq k^*}}{\Q \numusers} \quad \text{which implies} \quad \Q P' \numusers - B_{\leq k^*} > (\numbuyers-k^*)B_{k^*}.
\]
We use similar arguments to verify that $\frac{B_{k^*+1}}{P' \alpha' \numusers} \geq 1$. 
\begin{align*}
    \frac{B_{k^*+1}}{P' \alpha' \numusers} & \stackrel{\stepthree}{=} \frac{B_{k^*+1}}{P' \numusers} \cdot \frac{P' \numusers (\numbuyers - k^*)}{\Q P' \numusers - B_{\leq k^*}} \\
    &= \frac{(\numbuyers-k^*)B_{k^*+1}}{\Q P' \numusers - B_{\leq k^*}}\\
    &\geq 1.
\end{align*}
Step \stepthree is by direct substitution of $\alpha'$ while the last step follows from the fact that $P' \leq \gamma(k^*+1)$. 
\[
     P' \leq \frac{(\numbuyers - k^*)B_{k^*+1} + B_{\leq k^*}}{\Q \numusers}, 
\]
By re-arranging terms, we obtain the inequality: 
\[
    \Q P' \numusers - B_{\leq k^*} \leq (\numbuyers - k^*)B_{k^*+1},
\]
which immediately implies the assertion that $\frac{B_{k^*+1}}{P' \alpha' \numusers} \geq 1$. The final piece of the proof is to verify the condition \eqref{eq:condP5}: 
\begin{align*}
    \sum_{k=1}^{\numbuyers} N_k &= \sum_{k=1}^{k^*} N_k + \sum_{k=k^*+1}^{\numbuyers} N_k \\
    &\stackrel{(a)}{=} \sum_{k=1}^{k^*} \frac{B_k}{P'} + \sum_{k^*+1}^{\numbuyers}\alpha' \numusers \\
    &= \frac{B_{\leq k^*}}{P'} + (\numbuyers - k^*)\alpha' \numusers\\
    &\stackrel{(b)}{=} \frac{B_{\leq k^*}}{P'} + \Q \numusers - \frac{B_{\leq k^*}}{P'} \\
    &= \Q \numusers. 
\end{align*}
Step (a) follows from the monotonicity of $B_k$'s and condition \eqref{eq:condP4} which we have already established, while step (b) is by direct substitution of $\alpha'$. Therefore, $\left( \alpha', (\buydec)_{k=1}^{\numbuyers} \right)$ constitute an equilibrium at $k^*$ induced by price $P'$. This concludes the proof. 
\end{proof}
\noindent

\paragraph{Case 2: $\alpha = 1$ (Full Participation)} 
Now, we derive the conditions for the existence of equilibria with full user participation (i.e., $\alpha = 1$). 
\begin{lemma}\label{lem:price_alpha1}
Fix $k^* \in \{1,2, \ldots\numbuyers\}$. There \textbf{exists} an equilibrium at $k^*$ with $\alpha = 1$ if and only if: 
\begin{align}
   \xi(k^*) < P \leq \xi(k^*+1), \label{eq:condP1_alpha1}
\end{align}
and,
\begin{align}
  P \geq P^*(k^*). \label{eq:condP2_alpha1} 
\end{align}
\end{lemma}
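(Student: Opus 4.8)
The plan is to prove both directions of the ``if and only if'' by translating the two defining features of a full-participation equilibrium at threshold $k^*$ into the two price conditions \eqref{eq:condP1_alpha1} and \eqref{eq:condP2_alpha1}. Fixing $\alpha = 1$, the buyer demands reduce to $N_k = \min(\numusers, B_k/P)$, so the whole argument becomes a statement purely about the price $P$. The two features I need to encode are: (i) that the induced buyer threshold is exactly $k^*$ in the sense of Claim~\ref{clm:N_monotone}, and (ii) that $\alpha = 1$ is a genuine fixed point of the participation map in \eqref{opt:platform}.

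First I would handle the threshold structure. By Claim~\ref{clm:N_monotone}, an equilibrium ``at $k^*$'' means $N_k < \numusers$ for $k \le k^*$ and $N_k = \numusers$ for $k > k^*$. Since $N_k = \min(\numusers, B_k/P)$, the condition $N_{k^*} < \numusers$ is equivalent to $B_{k^*}/P < \numusers$, i.e.\ $P > B_{k^*}/\numusers = \xi(k^*)$, while $N_{k^*+1} = \numusers$ is equivalent to $B_{k^*+1}/P \ge \numusers$, i.e.\ $P \le \xi(k^*+1)$. Using budget monotonicity (Assumption~\ref{ass:budget_order}) together with the conventions $B_0 = 0$ and $B_{\numbuyers+1} = +\infty$, these two inequalities pin down the threshold at every relevant index simultaneously, and they are exactly condition \eqref{eq:condP1_alpha1}. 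Care is needed at the two boundaries: at $P = \xi(k^*)$ one checks $N_{k^*} = \numusers$, so the left inequality must be strict, whereas at $P = \xi(k^*+1)$ one has $N_{k^*+1} = \numusers$, so the right inequality may be non-strict.

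Next I would encode that $\alpha = 1$ is a fixed point. From \eqref{opt:platform}, $\alpha = 1$ holds iff $\min\bigl(1, \tfrac{\Q \numusers}{\sum_k N_k}\bigr) = 1$, i.e.\ iff $\sum_{k=1}^{\numbuyers} N_k \le \Q \numusers$. Splitting the sum at the threshold and using Definition~\ref{defn:Bleq} gives $\sum_k N_k = B_{\le k^*}/P + (\numbuyers - k^*)\numusers$, so the fixed-point condition becomes $B_{\le k^*}/P \le \bigl(\Q - (\numbuyers - k^*)\bigr)\numusers$. Rearranging this into the clean threshold $P \ge P^*(k^*)$ is the crux and the main place where care is required, since the rearrangement multiplies through by $\Q - (\numbuyers - k^*)$ and its sign matters. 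When $\Q > \numbuyers - k^*$ the factor is positive and the inequality becomes $P \ge P^*(k^*)$ with $P^*(k^*) > 0$, matching \eqref{eq:condP2_alpha1}; when $\Q = \numbuyers - k^*$ the right-hand side is $0$ while $B_{\le k^*} > 0$ (as $k^* \ge 1$), so the condition is infeasible, consistent with reading $P^*(k^*) = +\infty$; and when $\Q < \numbuyers - k^*$ the right-hand side is negative and again infeasible. The main obstacle is therefore not a hard inequality but this bookkeeping: I must argue that for $k^* \ge 1$ the strict bound $\sum_k N_k > (\numbuyers - k^*)\numusers$ forces $\Q > \numbuyers - k^*$ as a precondition for any full-participation equilibrium, after which the positive-denominator rearrangement yields exactly \eqref{eq:condP2_alpha1}.

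Finally I would assemble the two directions. For necessity, an equilibrium at $k^*$ with $\alpha = 1$ satisfies both the threshold inequalities and $\sum_k N_k \le \Q\numusers$, which by the above yield \eqref{eq:condP1_alpha1} and \eqref{eq:condP2_alpha1}. For sufficiency, given any $P$ satisfying both conditions I would set $\alpha = 1$, compute the demands explicitly as $N_k = B_k/P$ for $k \le k^*$ and $N_k = \numusers$ for $k > k^*$, and verify directly that the induced threshold is indeed $k^*$ and that $\sum_k N_k \le \Q\numusers$, so that $\alpha = 1$ reproduces itself under \eqref{opt:platform}. This certifies a valid equilibrium at $k^*$ and closes the proof.
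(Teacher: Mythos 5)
Your proposal is correct and follows essentially the same route as the paper's proof: both directions are obtained by translating the threshold structure of Claim~\ref{clm:N_monotone} into $\xi(k^*) < P \leq \xi(k^*+1)$ and the fixed-point condition $\sum_{k=1}^{\numbuyers} N_k \leq \Q\numusers$ into $P \geq P^*(k^*)$, with sufficiency verified by explicitly constructing the demands. Your explicit bookkeeping on the sign of $\Q - (\numbuyers - k^*)$ (showing $k^* \geq 1$ and $B_{\leq k^*} > 0$ force $\Q > \numbuyers - k^*$ before dividing) is a minor but welcome refinement of a step the paper performs without comment.
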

\begin{proof}
We start by proving that conditions \eqref{eq:condP1_alpha1}-\eqref{eq:condP2_alpha1} on $P$ are necessary for the existence of equilibrium at $k^*$ with $\alpha = 1$. We then follow it up with a proof of sufficiency. 

\paragraph{Necessary conditions} We will show that if there exists an equilibrium at $k^*$ with $\alpha = 1$, then conditions \eqref{eq:condP1_alpha1}-\eqref{eq:condP2_alpha1} on $P$ must be satisfied. In order to check that condition \eqref{eq:condP1_alpha1} is satisfied, we will use the monotonicity properties of $\buydec$'s at equilibrium established in Claim \ref{clm:N_monotone}. 

Since the said equilibrium is at $k^*$, we have:
\[
    N_{k^*} < \alpha \numusers \quad \text{and} \quad N_{k^*+1} = \alpha \numusers.
\]
Recall that for any $k \in [\numbuyers]$, $\buydec = \min \left(\frac{B_k}{P}, \alpha \numusers \right)$. Therefore, the above condition is equivalent to:
\[
  \frac{B_{k^*}}{P} < \alpha \numusers \leq \frac{B_{k^*+1}}{P}. 
\]
Since $\alpha = 1$, multiplying by $\frac{P}{\numusers}$ throughout, we conclude that condition \eqref{eq:condP1_alpha1} is satisfied.
\noindent
In order to show that condition \eqref{eq:condP2_alpha1} is satisfied, we will inspect the equilibrium conditions on $\alpha$. At equilibrium, $\alpha$ must satisfy:
\[
  \alpha = \min \left(1, \frac{\alpha \Q \numusers}{\sum_{k=1}^{\numbuyers} \buydec} \right). 
\]
Since $\alpha = 1$, it must be that:  
\begin{align}
\frac{\Q \numusers}{\sum_{k=1}^{\numbuyers}\buydec} \geq 1 \quad \text{which implies} \quad \Q \numusers \geq \sum_{k=1}^{\numbuyers}\buydec. \label{eq:temp2}
\end{align}
Furthermore, we know that the said equilibrium is at $k^*$, so we already have an expression for $\sum_{k=1}^{\numbuyers}N_k$ from Equation \eqref{exp:sum_Nk}.
Combining equations \eqref{eq:temp2} and \eqref{exp:sum_Nk}, we obtain: 
\begin{align*}
        &\Q \geq \frac{B_{\leq k^*}}{P \numusers} + (\numbuyers - k^*) \\
        \implies &P \geq \frac{B_{\leq k^*}}{\numusers \left( \Q - (\numbuyers - k^*) \right)} \\
        \implies & P \geq P^*(k^*). 
\end{align*}
Thus, we have shown that condition \eqref{eq:condP2_alpha1} is also satisfied. This concludes the proof. 

\paragraph{Sufficient conditions} For proof of sufficiency, we need to show that $\alpha = 1$ is a feasible equilibrium at any price $P$ which satisfies conditions \eqref{eq:condP1_alpha1}-\eqref{eq:condP2_alpha1}. 

Fix $k^* > 0$. Now pick any price $P'$ which satisfies conditions \eqref{eq:condP1_alpha1}-\eqref{eq:condP2_alpha1}. In order to show that $P'$ induces an equilibrium at $k^*$ with $\alpha = 1$, it is sufficient to show: 
\begin{subequations}
\begin{align}
    \frac{B_{k^*}}{P'} < \numusers \leq \frac{B_{k^*+1}}{P'}; \label{eq:cond6}
\end{align}
and, 
\begin{align}
    \Q \numusers \geq \sum_{k=1}^{\numbuyers} N_k. \label{eq:cond7}
\end{align}
\end{subequations}
Note that since $\xi(k^*) < P' \leq \xi(k^*+1)$, we have: 
\[
    \frac{B_{k^*}}{\numusers} < P' \leq \frac{B_{k^*+1}}{\numusers},
\]
which directly implies condition \eqref{eq:cond6}. 
Finally, we have to verify condition \eqref{eq:cond7}. Note that: 
\[
   \sum_{k=1}^{\numbuyers} \buydec = \frac{B_{\leq k^*}}{P'} + \numusers(\numbuyers - k^*).
\]
Additionally, since $P' \geq P^*(k^*)$, we immediately see: 
\[
       \frac{B_{\leq k^*}}{P'} + \numusers(\numbuyers - k^*) \leq \Q \numusers. 
\]
Hence, $(1, (\buydec)_{k=1}^{\numbuyers})$ constitutes an equilibrium at $k^*$ induced by $P'$. This concludes the proof of sufficiency. 
\end{proof}


\subsection{Part 2: Our Simplified Theorems \ref{thm:eq_exist_highQ}, \ref{thm:eq_exist_modQ} \& \ref{thm:eq_exist_lowQ}}\label{app:constant_thm_simple}
In this section, we present the proof of Theorem \ref{thm:eq_exist_highQ} and show how to simplify the results of Appendix \ref{app:constant_eq_chart} to obtain Theorem \ref{thm:eq_exist_modQ} and with slight modification, Theorem \ref{thm:eq_exist_lowQ}. 

\subsubsection{Proof of Theorem \ref{thm:eq_exist_highQ}: High Benefit Case $(\Q \geq \numbuyers)$}
When $\Q \geq \numbuyers$, we will show that at any price, the only equilibrium possible is with $\alpha = 1$. Pick any price $P > 0$. 
Note that: 
\begin{align}
   \sum_{k=1}^{\numbuyers}N_k = \sum_{k=1}^{\numbuyers} \min \left(\alpha \numusers, \frac{B_k}{P} \right) \leq \numbuyers \alpha \numusers.  \label{ineq:high_benefit}
\end{align}
Recall from \ref{opt:platform} that at equilibrium, $\alpha$ is given by $\alpha = \min \left(1, \frac{\alpha \Q \numusers}{\sum_{k=1}^{\numbuyers}N_k} \right)$. Therefore, 
\[
    \alpha = \min \left(1, \frac{\alpha \Q \numusers}{\sum_{k=1}^{\numbuyers}N_k} \right) \stackrel{\stepone}{\geq} \min \left(1, \frac{\alpha \Q \numusers}{\numbuyers \alpha \numusers} \right) = \min \left(1, \frac{\Q}{\numbuyers} \right) \stackrel{\steptwo}{=} 1. 
\]
Step~\stepone follows from \eqref{ineq:high_benefit} while step~\steptwo follows from the fact that $\Q \geq \numbuyers$. This concludes the proof.

\subsubsection{Proof of Theorem \ref{thm:eq_exist_modQ}: Moderate Benefit Case ($\frac{B_{\leq \numbuyers}}{B_{\numbuyers}} \leq Q < \numbuyers$)} \label{app:constant_mod_red}

We need to establish that for any price $P > 0$, there exists an unique market equilibrium for the two-sided data marketplace and additionally, there is a price threshold which separates the partial participation and full participation regimes. We start by providing a brief proof sketch. The proof structure is as follows: 
\begin{enumerate}
    \item First, we need to show that there exists a unique value of $k^*$, given by $0 < \bar k < \numbuyers$ which admits both types of equilibria (with $\alpha < 1$ and $\alpha = 1$). For all $k^* < \bar k$, there only exists equilibria with partial participation ($\alpha < 1$) and for all $k^* > \bar k$, there only exists equilibria with full participation ($\alpha = 1$).
    \item The price range which induces equilibria at $\bar k$ is partitioned by the price threshold $\bar P = P^*(\bar k)$. To the left of $P^*(\bar k)$, there only exists partial participation equilibria $(\alpha < 1)$, while to the right of $P^*(\bar k)$, there only exists full participation equilibria. 
    \item The range of prices which induce equilibria at different values of $k^* \in \{0,1,2....\numbuyers\}$ are non-overlapping and span $\mathbb{R}_{>0}$.
\end{enumerate}
We will complete the proof in the same order as above. 

\paragraph{Subpart 1} The proof idea is as follows: first, we establish necessary and sufficient conditions on the budget sum $B_{\leq k^*}$ for existence of equilibria at $k^*$ with partial participation (Claim \ref{clm:condk1}) and full participation (Claim \ref{clm:condk2}). We use these claims to establish existence of $\bar k$ and then argue uniqueness of $\bar k$.     

\begin{claim}\label{clm:condk1}
    There exists an equilibrium at $k^* > 0$ with $\alpha < 1$ if and only if $k^*$ satisfies: 
    \begin{align}
    B_{\leq k^*} > \left(\Q - (\numbuyers-k^*) \right) B_{k^*}. \label{condk1}
    \end{align}
\end{claim}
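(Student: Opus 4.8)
The plan is to characterize existence of a partial equilibrium at $k^*$ entirely through Lemma~\ref{lem:price_alphaless1}, which already pins down exactly which prices $P$ induce such an equilibrium: namely those satisfying \eqref{eq:condP1}, and additionally \eqref{eq:condP2} when $\Q > \numbuyers - k^*$. Thus the statement ``there exists an equilibrium at $k^* > 0$ with $\alpha < 1$'' is equivalent to ``the set of feasible prices is nonempty,'' and the whole task reduces to deciding when that set is nonempty and showing this is equivalent to \eqref{condk1}. First I would record the elementary monotonicity fact that $\gamma(k^*) \le \gamma(k^*+1)$, since a direct computation gives $\gamma(k^*+1) - \gamma(k^*) = \frac{(\numbuyers - k^*)(B_{k^*+1} - B_{k^*})}{\Q \numusers} \ge 0$ by Assumption~\ref{ass:budget_order} and $k^* < \numbuyers$; so the interval $(\gamma(k^*), \gamma(k^*+1)]$ arising from \eqref{eq:condP1} is nonempty whenever $B_{k^*+1} > B_{k^*}$. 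I would then split on the sign of $\Q - (\numbuyers - k^*)$, mirroring the two regimes of Lemma~\ref{lem:price_alphaless1}.

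In the regime $\Q \le \numbuyers - k^*$, condition \eqref{eq:condP2} is vacuous, so a feasible price exists precisely when $(\gamma(k^*), \gamma(k^*+1)]$ is nonempty. On the other side of the claimed equivalence, the right-hand side of \eqref{condk1} is $(\Q - (\numbuyers - k^*))B_{k^*} \le 0$, while its left-hand side $B_{\leq k^*} > 0$ (as $k^* > 0$ and budgets are strictly positive), so \eqref{condk1} holds automatically. Hence both the existence statement and \eqref{condk1} hold in this regime, and the equivalence is immediate.

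In the regime $\Q > \numbuyers - k^*$, a feasible price must lie in $(\gamma(k^*), \gamma(k^*+1)] \cap (0, P^*(k^*))$, and since $\gamma(k^*) \le \gamma(k^*+1)$, nonemptiness of this set is governed by whether $\gamma(k^*) < P^*(k^*)$. The crux of the proof is the algebraic equivalence
\[
\gamma(k^*) < P^*(k^*) \iff B_{\leq k^*} > \bigl(\Q - (\numbuyers - k^*)\bigr) B_{k^*}.
\]
I would establish it by cross-multiplying the definitions $\gamma(k^*) = \frac{(\numbuyers - k^*)B_{k^*} + B_{\leq k^*}}{\Q \numusers}$ and $P^*(k^*) = \frac{B_{\leq k^*}}{\numusers(\Q - (\numbuyers - k^*))}$ (both denominators being positive in this regime), clearing the common factor $\numusers$ and the positive factor $\numbuyers - k^*$, and simplifying; the $\Q$-terms cancel to leave exactly \eqref{condk1}. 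This computation is the single genuinely load-bearing step.

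The main obstacle I anticipate is boundary bookkeeping: one must confirm that the \emph{intersection} of the two price constraints is truly nonempty, not merely that $\gamma(k^*) < P^*(k^*)$, and one must handle the degenerate possibility $B_{k^*} = B_{k^*+1}$, in which $(\gamma(k^*), \gamma(k^*+1)]$ collapses. Under strictly increasing budgets this degeneracy does not arise and the equivalence is clean; otherwise one argues that such a $k^*$ is never realized as a buyer threshold, consistent with the non-overlapping, $\mathbb{R}_{>0}$-spanning partition established later in this section. Combining the two regimes then yields the claimed equivalence in both directions.
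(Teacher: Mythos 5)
Your proposal follows the paper's proof essentially verbatim: both reduce existence to nonemptiness of the feasible price set given by Lemma~\ref{lem:price_alphaless1}, split on the sign of $\Q - (\numbuyers - k^*)$ (with \eqref{condk1} holding trivially when $\Q \leq \numbuyers - k^*$ since its right-hand side is nonpositive), and obtain \eqref{condk1} in the other regime from the cross-multiplied equivalence $\gamma(k^*) < P^*(k^*)$. Your explicit computation $\gamma(k^*+1)-\gamma(k^*) = \frac{(\numbuyers-k^*)(B_{k^*+1}-B_{k^*})}{\Q\numusers}$ and the attendant caveat about ties $B_{k^*}=B_{k^*+1}$ address a point the paper's sufficiency direction silently assumes (it asserts the interval $\left(\gamma(k^*), \gamma(k^*+1)\right]$ is nonempty without justification, which can fail under the weak ordering of Assumption~\ref{ass:budget_order}), so your added care is a minor but genuine tightening of an otherwise identical argument.
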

\begin{proof}
By Lemma \ref{lem:price_alphaless1}, we have the necessary and sufficient conditions for the existence of equilibria with $0 < \alpha < 1$ at $k^*$, given by: 
\[
             \gamma(k^*) < P \leq \gamma(k^*+1);
\]
and additionally, if $\Q > \numbuyers - k^*$, 
\[
            P < P^*(k^*).
\]
Therefore, it suffices to show that condition \eqref{condk1} is true if and only if there exists a feasible price $P$ which induces equilibrium at $k^*$ with $\alpha < 1$. 

Suppose, there exists a feasible price $P$ which induces equilibrium at $k^*$ with $\alpha < 1$. Observe that when $\Q \leq \numbuyers - k^*$, condition \eqref{condk1} is satisfied trivially because $\Q - (\numbuyers - k^*) \leq 0$ which makes the RHS $\leq 0$ while $B_{\leq k^*} > 0$ (since $k^* > 0$). In the other case when $\Q > \numbuyers - k^*$, there must exist a feasible price $P$ which satisfies both conditions \eqref{eq:condP1} and \eqref{eq:condP2}. This implies: 
\begin{align*}
         &P^*(k^*) > \gamma(k^*) \\
         \implies& \frac{B_{\leq k^*}}{\numusers \left(\Q - (\numbuyers - k^*) \right)} > \frac{(\numbuyers - k^*)B_{k^*} + B_{\leq k^*}}{\Q \numusers}.
\end{align*}
All numerators and denominators are strictly positive, so cross-multiplying and re-arranging terms, we find that \eqref{condk1} is satisfied. This concludes the proof that \eqref{condk1} is necessary for existence of equilibria at $k^*$ with $\alpha < 1$. 

For the proof of sufficiency, suppose there is some $k^* > 0$ which satisfies \eqref{condk1}. If $\Q \leq \numbuyers - k^*$, we can pick any price $P'$ in the non-empty interval $\left( \gamma(k^*), \gamma(k^*+1) \right]$. Otherwise, if $\Q > \numbuyers - k^*$, \eqref{condk1} implies $P^*(k^*) > \gamma(k^*)$. Therefore, we can again find a feasible price $P'$ which satisfies both conditions \eqref{eq:condP1}-\eqref{eq:condP2}. Then, by Lemma \ref{lem:price_alphaless1}, in either case, there exists $P'$ which will induce an equilibrium at $k^*$ with $\alpha < 1$. 
\end{proof}

\begin{claim}\label{clm:condk2}
There exists an equilibrium at $k^* > 0$ with $\alpha = 1$ if and only if $k^*$ satisfies: 
\begin{align}
    B_{\leq k^*} \leq \left( \Q - (\numbuyers - k^*) \right) B_{k^*+1}.  \label{condk2}
\end{align}
\end{claim}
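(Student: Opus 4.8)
The plan is to mirror the structure used for the companion Claim~\ref{clm:condk1}, using Lemma~\ref{lem:price_alpha1} as a black box. That lemma tells us a full-participation equilibrium at $k^*$ exists if and only if there is a price $P$ simultaneously satisfying the interval condition \eqref{eq:condP1_alpha1}, $\xi(k^*) < P \leq \xi(k^*+1)$, and the budget condition \eqref{eq:condP2_alpha1}, $P \geq P^*(k^*)$. Hence the claim reduces to showing that \eqref{condk2} is precisely the condition under which this feasible set of prices is non-empty. Throughout I assume $k^* > 0$, so that $B_{\leq k^*} > 0$.

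The key idea, and the cleanest way to avoid the sign pitfalls of dividing by the possibly-negative factor $\Q - (\numbuyers - k^*)$, is to work with the \emph{un-divided} form of \eqref{eq:condP2_alpha1}. Recall from the proof of Lemma~\ref{lem:price_alpha1} (Equations~\eqref{exp:sum_Nk} and \eqref{eq:temp2}) that at an equilibrium at $k^*$ we have $\sum_{k} \buydec = \frac{B_{\leq k^*}}{P} + \numusers(\numbuyers - k^*)$, and that $\alpha = 1$ forces $\Q\numusers \geq \sum_{k}\buydec$. Thus \eqref{eq:condP2_alpha1} is equivalent to
\[
\Q\numusers \ \geq\ g(P) \defneq \frac{B_{\leq k^*}}{P} + \numusers(\numbuyers - k^*).
\]
Since $B_{\leq k^*} > 0$, the map $g$ is strictly decreasing in $P$, so $\Q\numusers \geq g(P)$ is hardest to satisfy at small $P$ and easiest at large $P$. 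Within the admissible interval $(\xi(k^*), \xi(k^*+1)]$, which is closed on the right, the most favorable admissible price is therefore $P = \xi(k^*+1) = B_{k^*+1}/\numusers$.

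With this observation both directions follow quickly. For necessity, if an equilibrium at $k^*$ exists at some admissible $P_0 \leq \xi(k^*+1)$, monotonicity of $g$ gives $\Q\numusers \geq g(P_0) \geq g(\xi(k^*+1))$; substituting $P = \xi(k^*+1)$ and clearing denominators by multiplying through by $B_{k^*+1} > 0$ (notably \emph{not} by the sign-ambiguous quantity $\Q - (\numbuyers-k^*)$) yields exactly \eqref{condk2}. For sufficiency, if \eqref{condk2} holds, the same computation gives $\Q\numusers \geq g(\xi(k^*+1))$, so $P = \xi(k^*+1)$ satisfies both \eqref{eq:condP1_alpha1} (it is the right endpoint of the interval) and \eqref{eq:condP2_alpha1}; Lemma~\ref{lem:price_alpha1} then certifies a full-participation equilibrium at $k^*$.

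I expect the main difficulty to be bookkeeping rather than conceptual: one must be careful that the naive reading of $P \geq P^*(k^*)$ is only valid when $\Q > \numbuyers - k^*$, since $P^*(k^*)$ can be negative otherwise. Routing everything through the monotone function $g$ and multiplying only by the manifestly positive $B_{k^*+1}$ sidesteps this entirely and treats the regimes $\Q \leq \numbuyers - k^*$ and $\Q > \numbuyers - k^*$ uniformly. Two edge cases warrant a remark: when $B_{k^*} = B_{k^*+1}$ the price interval degenerates, handled via the budget monotonicity of Assumption~\ref{ass:budget_order} together with the global tiling of price ranges established later in Subpart~3; and when $k^* = \numbuyers$ the right endpoint is $\xi(\numbuyers+1) = +\infty$ under the convention $B_{\numbuyers+1} = +\infty$, so one takes $P \to \infty$ (equivalently, \eqref{condk2} becomes vacuously true), which again matches the existence of an $\alpha = 1$ equilibrium.
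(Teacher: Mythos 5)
Your proof is correct and takes essentially the same route as the paper: the paper likewise reduces the claim to non-emptiness of the price window in Lemma~\ref{lem:price_alpha1} and reads off \eqref{condk2} from the boundary comparison $P^*(k^*) \leq \xi(k^*+1)$, which is exactly your evaluation of the monotone map $g$ at the right endpoint $P = \xi(k^*+1)$. If anything, your un-divided formulation $\Q\numusers \geq g(P)$ is slightly more careful than the paper's one-line cross-multiplication (which silently assumes $\Q > \numbuyers - k^*$ when clearing the sign-ambiguous factor), and the degenerate case $B_{k^*} = B_{k^*+1}$ that you flag is glossed over by the paper's proof as well.
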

\begin{proof}
The proof is identical to the proof of Claim \ref{clm:condk1}. We use the conditions on $P$ in Lemma \ref{lem:price_alpha1} for existence of equilibria at $k^* > 0$ with $\alpha = 1$. Condition \eqref{condk2} is derived by noting that there must exist a feasible price $P$ which satisfies both conditions \eqref{eq:condP1_alpha1}-\eqref{eq:condP2_alpha1} and therefore, 
\[
       P^*(k^*) \leq \xi(k^*+1). 
\]
\end{proof}

\paragraph{Existence of $\bar k$} In order to prove existence of $\bar k$, we will first show that there does not exist $0 < k^* < \numbuyers$ such that $k^*$ admits equilibria with $\alpha = 1$ and $k^*+1$ admits equilibria with $\alpha < 1$ (Claim \ref{clm:alpha_ordered}). This will prove that all values of $k^*$ which admit partial participation equilibria \textbf{precede} all values of $k^*$ which admit full participation equilibria. Then we will use contradiction to establish that there must exist some $k^*$ which admits both types of equilibria. 

\begin{claim}\label{clm:alpha_ordered}
There does not exist $0 < k^* < \numbuyers$ such that $k^*$ admits equilibria with $\alpha = 1$ and $k^*+1$ admits equilibria with $\alpha < 1$. 
\end{claim}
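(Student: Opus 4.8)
The plan is to argue by contradiction, invoking the two clean necessary-and-sufficient conditions already established. Claim~\ref{clm:condk1} says that $k^*>0$ admits a partial-participation equilibrium if and only if $B_{\leq k^*} > (\Q - (\numbuyers-k^*))B_{k^*}$, and Claim~\ref{clm:condk2} says that $k^*>0$ admits a full-participation equilibrium if and only if $B_{\leq k^*} \leq (\Q - (\numbuyers-k^*))B_{k^*+1}$. So I would suppose, for contradiction, that some $0 < k^* < \numbuyers$ admits an $\alpha=1$ equilibrium while $k^*+1$ admits an $\alpha<1$ equilibrium, and then write both defining inequalities explicitly. Note that $k^* < \numbuyers$ guarantees $k^*+1 \leq \numbuyers$, so $B_{k^*+1}$ is a genuine finite budget and the conditions apply.

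The key computational step is to instantiate the partial-participation condition of Claim~\ref{clm:condk1} \emph{at index $k^*+1$} and simplify it using the recursion $B_{\leq k^*+1} = B_{\leq k^*} + B_{k^*+1}$ implied by Definition~\ref{defn:Bleq}. Doing this, the partial-participation condition at $k^*+1$ reads
\[
B_{\leq k^*} + B_{k^*+1} > \bigl(\Q - (\numbuyers - k^* - 1)\bigr) B_{k^*+1},
\]
and subtracting $B_{k^*+1}$ from both sides collapses the coefficient, turning the condition into exactly $B_{\leq k^*} > (\Q - (\numbuyers - k^*)) B_{k^*+1}$.

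I would then place this side by side with the full-participation condition for $k^*$, which is $B_{\leq k^*} \leq (\Q - (\numbuyers - k^*)) B_{k^*+1}$. These two inequalities involve the identical quantities $B_{\leq k^*}$ and $(\Q-(\numbuyers-k^*))B_{k^*+1}$ and are direct logical negations of one another, so they cannot hold simultaneously. This contradiction establishes the claim.

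There is no serious obstacle here once Claims~\ref{clm:condk1} and~\ref{clm:condk2} are in hand: the only point requiring care is the index bookkeeping in the coefficient $\Q - (\numbuyers - k)$ when shifting from $k^*+1$ back to $k^*$, and verifying that the telescoping of the cumulative budget exactly cancels the extra $B_{k^*+1}$ term. The whole argument is a two-line algebraic simplification followed by the observation that the result contradicts the full-participation condition.
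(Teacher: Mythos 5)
Your proposal is correct and follows essentially the same route as the paper: instantiate Claim~\ref{clm:condk1} at $k^*+1$, expand $B_{\leq k^*+1} = B_{\leq k^*} + B_{k^*+1}$ so the extra $B_{k^*+1}$ cancels, and observe that the resulting inequality $B_{\leq k^*} > (\Q - (\numbuyers - k^*))B_{k^*+1}$ is the exact negation of the Claim~\ref{clm:condk2} condition at $k^*$. In fact your version is cleaner than the paper's write-up, whose displayed inequality \eqref{ineq:temp2} carries a sign typo ($<$ where Claim~\ref{clm:condk1} gives $>$) that your careful bookkeeping avoids.
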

\begin{proof}
The proof is by contradiction. Suppose, there exists such a $k^*$. Then, using Claims \ref{clm:condk1} and \ref{clm:condk2}, we have: 
\begin{align}
      B_{\leq k^*} \leq \left( \Q - (\numbuyers - k^*) \right) B_{k^*+1}; \label{ineq:temp1}
\end{align}
and 
\begin{align}
      B_{\leq k^*+1} < \left( \Q - (\numbuyers - k^*-1) \right)B_{k^*+1}. \label{ineq:temp2} 
\end{align}
But condition \eqref{ineq:temp2} implies: 
\[
       B_{\leq k^*+1} < \left( \Q - (\numbuyers - k^*) \right)B_{k^*+1} + B_{k^*+1} \implies B_{\leq k^*} \leq \left( \Q - (\numbuyers - k^*) \right) B_{k^*+1};
\]
which contradicts condition \eqref{ineq:temp1}. This concludes the proof. 
\end{proof}
\paragraph{Proof of existence of $\bar k$:} Let $\widetilde k$ be the last value of $k^*$ which admits equilibria with $\alpha < 1$, therefore, $\widetilde k+1$ admits equilibria with $\alpha = 1$. Since $\bar k$ does not exist, it must be that $\widetilde k$ satisfies condition \eqref{condk1}, while $\widetilde k+1$ violates \eqref{condk1}, i.e., 
\begin{align}
    B_{\leq \widetilde k} > \left( \Q - (\numbuyers - \widetilde k) \right)B_{\widetilde k}; \label{ineq:temp3}
\end{align}
and 
\begin{align}
    B_{\leq \widetilde k+1} \leq \left( \Q - (\numbuyers - \widetilde k-1) \right)B_{\widetilde k+1}. \label{ineq:temp4}
\end{align}
But \eqref{ineq:temp4} implies that $B_{\leq \widetilde k} \leq \left( \Q - (\numbuyers - \widetilde k) \right)B_{\widetilde k+1}$. Combining this with \eqref{ineq:temp3}, we obtain: 
\[
    \left( \Q - (\numbuyers - \widetilde k) \right)B_{\widetilde k} < B_{\leq \widetilde k} \leq \left( \Q - (\numbuyers - \widetilde k) \right)B_{\widetilde k+1};
\]
which implies that $\widetilde k$ admits both partial and full participation equilibria and is a candidate value of $\bar k$. This concludes the proof of the existence result. 

\paragraph{Uniqueness of $\bar k$:} We can establish uniqueness of $\bar k$ using a straightforward argument. Suppose $\bar k$ is not unique, therefore, there must exist $\bar k_1$ and $\bar k_2$ which admit both types of equilibria. WLOG, assume $\bar k_1 < \bar k_2$. But, in this case, we have found two values of $k^*$ where the smaller value admits equilibria with $\alpha = 1$ and the larger value admits equilibria with $\alpha < 1$. This contradicts Claim \ref{clm:alpha_ordered} and proves uniqueness of $\bar k$. \\  
    
The uniqueness of $\bar k$ and Claim \ref{clm:alpha_ordered} directly implies the following result: 
\begin{claim}\label{clm:alpha_monotone}
Let $\bar k > 0$ be the unique value of $k^*$ that admits equilibria with both $\alpha < 1$ and $\alpha = 1$. Then for all $k^* < \bar{k}$, the only possible equilibria at $k^*$ is with $\alpha < 1$ and for all $k^* > \bar k$, the only possible equilibria is with $\alpha = 1$. 
\end{claim}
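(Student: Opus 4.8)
The plan is to read off, from the conditions already derived, that the buyer threshold $k^*$ behaves monotonically: small thresholds force partial participation and large ones force full participation, with $\bar k$ as the unique crossover. Write $S_<$ for the set of $k^* \in \{1,\dots,\numbuyers\}$ that admit an equilibrium with $0<\alpha<1$ and $S_=$ for those that admit an equilibrium with $\alpha=1$; by Claims~\ref{clm:condk1} and~\ref{clm:condk2} these are exactly the $k^*$ satisfying~\eqref{condk1} and~\eqref{condk2} respectively, and $\bar k$ is by definition the unique element of $S_< \cap S_=$. The statement to prove is equivalent to: every $k^* < \bar k$ lies in $S_<$ but not $S_=$, and every $k^* > \bar k$ lies in $S_=$ but not $S_<$. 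The boundary indices $k^*=0$ and $k^*=\numbuyers$ are already disposed of by the two corner claims ruling out full participation at $k^*=0$ and partial participation at $k^*=\numbuyers$, so I only treat $1 \le k^* \le \numbuyers$.

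First I would establish a \emph{coverage} fact: $S_< \cup S_= = \{1,\dots,\numbuyers\}$, i.e., every threshold admits at least one equilibrium type. Suppose~\eqref{condk1} fails at some $k^*$, so $B_{\leq k^*} \leq \big(\Q - (\numbuyers - k^*)\big)B_{k^*}$. Since $B_{\leq k^*}>0$ and $B_{k^*}>0$ for $k^* \geq 1$, this forces $\Q - (\numbuyers - k^*) > 0$; then budget monotonicity $B_{k^*} \leq B_{k^*+1}$ (Assumption~\ref{ass:budget_order}) yields $B_{\leq k^*} \leq \big(\Q - (\numbuyers - k^*)\big)B_{k^*} \leq \big(\Q - (\numbuyers - k^*)\big)B_{k^*+1}$, which is exactly~\eqref{condk2}. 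Hence failing to admit a partial equilibrium implies admitting a full one, giving coverage; combined with uniqueness of $\bar k$, every $k^* \neq \bar k$ lies in \emph{exactly} one of $S_<$, $S_=$.

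The main step is then a short descent/ascent argument built on Claim~\ref{clm:alpha_ordered}, which says no threshold $j$ has $j \in S_=$ and $j+1 \in S_<$. To show no $k^* > \bar k$ lies in $S_<$, suppose one does. Walking downward, the predecessor $k^*-1$ cannot be in $S_=$ (otherwise Claim~\ref{clm:alpha_ordered} is violated with $j = k^*-1$), so by coverage $k^*-1 \in S_<$; iterating, every index from $\bar k+1$ up to $k^*$ lies in $S_<$. In particular $\bar k + 1 \in S_<$; but $\bar k \in S_=$, so Claim~\ref{clm:alpha_ordered} with $j = \bar k$ contradicts $\bar k+1 \in S_<$. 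The symmetric ascent rules out full participation below $\bar k$: starting from a hypothetical $k^* < \bar k$ in $S_=$, coverage and Claim~\ref{clm:alpha_ordered} propagate membership in $S_=$ upward ($k^* \in S_= \Rightarrow k^*{+}1 \notin S_< \Rightarrow k^*{+}1 \in S_=$, and so on), eventually forcing $\bar k \notin S_<$ and contradicting $\bar k \in S_<$. Together with coverage and uniqueness, these give the claimed threshold structure.

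The routine computations are trivial; the only place demanding care is the coverage step, where one must notice that failure of~\eqref{condk1} already forces the coefficient $\Q - (\numbuyers - k^*)$ to be positive before invoking budget monotonicity (were this coefficient nonpositive, \eqref{condk1} would hold automatically, since its right-hand side would be $\leq 0 < B_{\leq k^*}$). This sign bookkeeping, together with ensuring the descent and ascent terminate exactly at $\bar k$ via its uniqueness, are the only subtleties; everything else follows directly from Claims~\ref{clm:condk1}--\ref{clm:alpha_ordered}.
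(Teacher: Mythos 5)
Your proof is correct, and it is best read as a rigorous completion of what the paper only asserts: the paper offers no argument for this claim beyond stating that uniqueness of $\bar k$ and Claim~\ref{clm:alpha_ordered} ``directly'' imply it. The genuinely new ingredient in your write-up is the coverage fact $S_{<} \cup S_{=} = \{1,\dots,\numbuyers\}$, which the paper never states, and which is in fact necessary: Claim~\ref{clm:alpha_ordered} only forbids the \emph{adjacent} pattern ($k^* \in S_{=}$, $k^*+1 \in S_{<}$), so without coverage an index admitting no equilibrium of either type could sit between $\bar k$ and an offending index and break the descent/ascent chains --- the ``direct implication'' has a hidden gap exactly there. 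Your derivation of coverage is sound, and your flagged subtlety is the right one: failure of~\eqref{condk1} at $k^* \geq 1$ forces $\Q - (\numbuyers - k^*) > 0$ because $B_{\leq k^*} > 0$, after which budget monotonicity (Assumption~\ref{ass:budget_order}) legitimately upgrades $B_{k^*}$ to $B_{k^*+1}$ and yields~\eqref{condk2}; without the sign observation the monotonicity step would be invalid. It is worth noting that the authors tacitly rely on this same coverage fact elsewhere --- their existence argument for $\bar k$ begins ``let $\widetilde k$ be the last value of $k^*$ which admits equilibria with $\alpha < 1$, \emph{therefore} $\widetilde k + 1$ admits equilibria with $\alpha = 1$,'' where the ``therefore'' is precisely your lemma --- so your proposal makes explicit an assumption used in two places. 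The remaining bookkeeping in your descent/ascent (keeping indices strictly between $0$ and $\numbuyers$ so Claim~\ref{clm:alpha_ordered} applies, terminating the chains at $\bar k$ via $\bar k \in S_{<} \cap S_{=}$, disposing of $k^* = 0$ by the corner claim, and the vacuity of the case $k^* > \bar k$ when $\bar k = \numbuyers$, as in the low-benefit regime) is all handled correctly; in exchange for slightly more machinery than the paper's one-line assertion, your version actually proves the slightly stronger statement that every threshold admits at least one equilibrium type.
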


\paragraph{Subpart 2} Here, we will show that the range of prices which produce equilibria at $\bar k$ with $\alpha < 1$ and $\alpha = 1$ are non-overlapping and separated by $P^*(\bar k)$. In order to prove this result, we will introduce two intermediate results in Claims \ref{clm:P_binding_alphaless1} and \ref{clm:P_binding_alpha1}. 

\begin{claim}\label{clm:P_binding_alphaless1}
The constraint on price $P$ in Lemma \ref{lem:price_alphaless1} given by: 
\[
        P < P^*(k^*),
\]
is binding if and only if $k^* = \bar k$. 
\end{claim}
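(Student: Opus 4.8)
The plan is to reduce the statement ``the constraint $P < P^*(k^*)$ is binding'' to the two algebraic inequalities \eqref{condk1} and \eqref{condk2}, which Claims~\ref{clm:condk1} and \ref{clm:condk2} already identified with the existence of a partial, respectively a full, participation equilibrium at $k^*$. First I would fix the meaning of ``binding.'' By Lemma~\ref{lem:price_alphaless1} the constraint $P < P^*(k^*)$ is only imposed when $\Q > \numbuyers - k^*$, and it is binding exactly when $P^*(k^*)$ is a genuine cutoff lying strictly inside the admissible window $(\gamma(k^*), \gamma(k^*+1)]$ for partial participation, i.e.\ when $\gamma(k^*) < P^*(k^*) \leq \gamma(k^*+1)$. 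Indeed, if $P^*(k^*) > \gamma(k^*+1)$ then every price in the window already satisfies $P < P^*(k^*)$ and the constraint is slack, whereas if $P^*(k^*) \leq \gamma(k^*)$ the constraint leaves no feasible price and so does not act as an interior cutoff.

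Second I would record two equivalences, obtained by cross-multiplication (all denominators are positive once $\Q > \numbuyers - k^*$), reusing the computations inside Claims~\ref{clm:condk1} and \ref{clm:condk2} and the expansion $\gamma(k^*+1) = \frac{(\numbuyers - k^*)B_{k^*+1} + B_{\leq k^*}}{\Q \numusers}$ (which follows from $B_{\leq k^*+1} = B_{\leq k^*} + B_{k^*+1}$):
\[
P^*(k^*) > \gamma(k^*) \iff B_{\leq k^*} > \big(\Q - (\numbuyers - k^*)\big) B_{k^*},
\]
\[
P^*(k^*) \leq \gamma(k^*+1) \iff B_{\leq k^*} \leq \big(\Q - (\numbuyers - k^*)\big) B_{k^*+1}.
\]
The right-hand sides are exactly \eqref{condk1} and \eqref{condk2}. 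Hence the constraint is binding if and only if both \eqref{condk1} and \eqref{condk2} hold, i.e.\ if and only if a partial \emph{and} a full participation equilibrium both exist at $k^*$.

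Third I would close with the uniqueness of $\bar k$ established in Subpart~1, together with Claim~\ref{clm:alpha_monotone}: $\bar k$ is the unique index at which both equilibrium types coexist, so ``both \eqref{condk1} and \eqref{condk2} hold'' is equivalent to $k^* = \bar k$. For the $k^* = \bar k$ direction I would additionally note that existence of a full participation equilibrium at $\bar k > 0$ forces $\Q > \numbuyers - \bar k$ --- otherwise $\big(\Q - (\numbuyers - \bar k)\big)B_{\bar k + 1} \leq 0 < B_{\leq \bar k}$ would contradict \eqref{condk2} --- so the constraint is genuinely imposed at $\bar k$ and the equivalences of the previous paragraph apply.

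The main obstacle is not the algebra, which merely re-runs the cross-multiplications of Claims~\ref{clm:condk1}--\ref{clm:condk2}, but correctly pinning down ``binding'' and checking the degenerate boundary cases. For $k^* < \bar k$ only partial participation exists, so \eqref{condk2} fails and $P^*(k^*) > \gamma(k^*+1)$ makes the constraint slack; for $k^* > \bar k$ only full participation exists, so \eqref{condk1} fails and either $P^*(k^*) \leq \gamma(k^*)$ or $\Q \leq \numbuyers - k^*$ (in which case the constraint is not imposed at all). I would verify that the sign condition $\Q > \numbuyers - k^*$ required for the cross-multiplications holds in precisely the case where binding is asserted, namely $k^* = \bar k$.
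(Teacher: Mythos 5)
Your proof is correct, and it is worth noting that the paper itself never writes out a proof of Claim~\ref{clm:P_binding_alphaless1}: the claim is stated bare and invoked immediately in the proof of Subpart~2, so your argument supplies details the authors left implicit rather than paralleling a written proof. Your route is exactly the one the surrounding machinery suggests. The equivalence $P^*(k^*) > \gamma(k^*) \iff \eqref{condk1}$ is the same cross-multiplication already performed inside the proof of Claim~\ref{clm:condk1}, and your upper comparison against $\gamma(k^*+1)$ is interchangeable with the paper's comparison against $\xi(k^*+1)$ in Claim~\ref{clm:condk2}, since for $\Q > \numbuyers - k^*$ and $k^* < \numbuyers$ both collapse to $B_{\leq k^*} \leq \left(\Q - (\numbuyers - k^*)\right)B_{k^*+1}$ after cancelling the positive factor $\numbuyers - k^*$. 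Two things you add are genuinely needed and absent from the paper: first, a precise reading of ``binding'' ($P^*(k^*)$ lies in $(\gamma(k^*), \gamma(k^*+1)]$, so it truncates the partial-participation price window) --- this is the only reading under which the ``only if'' direction is true, because for $k^* > \bar k$ the constraint empties the window entirely ($P^*(k^*) \leq \gamma(k^*)$), which a looser ``removing it changes feasibility'' definition would also count as binding; your reading is the one consistent with how the claim is actually used in Subparts~2--3 and in the proof of Theorem~\ref{thm:eq_exist_lowQ}. Second, the check that \eqref{condk2} at $\bar k > 0$ forces $\Q > \numbuyers - \bar k$, so the constraint is genuinely imposed at $\bar k$ and your cross-multiplications are licit there. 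One harmless redundancy: in your boundary analysis for $k^* > \bar k$, the disjunct $\Q \leq \numbuyers - k^*$ can never occur, since failure of \eqref{condk1} at $k^* > 0$ already forces $\Q > \numbuyers - k^*$ (otherwise the right-hand side of \eqref{condk1} is nonpositive while $B_{\leq k^*} > 0$, so \eqref{condk1} would hold trivially); hence for $k^* > \bar k$ you always land in the case $P^*(k^*) \leq \gamma(k^*)$.
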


\begin{claim}\label{clm:P_binding_alpha1}
The constraint on price $P$ in Lemma \ref{lem:price_alpha1} given by: 
\[
        P \geq P^*(k^*),
\]
is binding if and only if $k^* = \bar k$. 
\end{claim}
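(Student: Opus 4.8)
The plan is to first pin down what ``binding'' means here. By Lemma~\ref{lem:price_alpha1}, a full participation equilibrium at $k^*$ exists exactly for prices with $\xi(k^*) < P \leq \xi(k^*+1)$ \emph{and} $P \geq P^*(k^*)$. The inequality $P \geq P^*(k^*)$ imposes a lower bound on $P$, and it is binding precisely when it strictly tightens the lower bound $\xi(k^*)$ already present, i.e. when $P^*(k^*) > \xi(k^*)$; if instead $P^*(k^*) \leq \xi(k^*)$, then every admissible $P > \xi(k^*)$ automatically satisfies $P \geq P^*(k^*)$ and the constraint is redundant. So I would reduce the claim to showing that, among the $k^*$ that admit a full participation equilibrium, $P^*(k^*) > \xi(k^*)$ holds if and only if $k^* = \bar k$.

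The main step is to control the sign of the denominator of $P^*(k^*)$ so that the comparison $P^*(k^*) > \xi(k^*)$ can be turned into a budget inequality. Any $k^*$ admitting a full participation equilibrium satisfies condition~\eqref{condk2} of Claim~\ref{clm:condk2}, i.e. $B_{\leq k^*} \leq \left(\Q - (\numbuyers - k^*)\right) B_{k^*+1}$; since $B_{\leq k^*} > 0$ (as $k^* > 0$) and $B_{k^*+1} > 0$, this forces $\Q - (\numbuyers - k^*) > 0$. Hence the denominator $\numusers\left(\Q - (\numbuyers - k^*)\right)$ of $P^*(k^*)$ is strictly positive, and cross-multiplying in $P^*(k^*) > \xi(k^*)$ is valid without reversing the inequality. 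Substituting $P^*(k^*) = \frac{B_{\leq k^*}}{\numusers(\Q - (\numbuyers - k^*))}$ and $\xi(k^*) = \frac{B_{k^*}}{\numusers}$ then yields
\[
P^*(k^*) > \xi(k^*) \iff B_{\leq k^*} > \left(\Q - (\numbuyers - k^*)\right) B_{k^*},
\]
whose right-hand side is exactly condition~\eqref{condk1} of Claim~\ref{clm:condk1}, the condition for a \emph{partial} participation equilibrium to exist at $k^*$.

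To finish, I would invoke the ordering already established: by Claim~\ref{clm:condk1} condition~\eqref{condk1} characterizes exactly the $k^*$ at which a partial participation equilibrium exists, and by Claim~\ref{clm:alpha_monotone} the unique such $k^*$ that \emph{also} admits full participation is $\bar k$ itself (every $k^* > \bar k$ admits full participation only, so~\eqref{condk1} fails there). Chaining the two equivalences gives, for any $k^*$ admitting full participation, that $P^*(k^*) > \xi(k^*)$ holds iff condition~\eqref{condk1} holds iff $k^* = \bar k$, which is the assertion.

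The only delicate point I anticipate is the sign argument in the middle step: the clean equivalence with~\eqref{condk1} hinges on the denominator of $P^*(k^*)$ being positive, which fails in general but is guaranteed here by the standing assumption that a full participation equilibrium exists (via~\eqref{condk2}). Everything else is a direct translation between the $P^*$-versus-$\xi$ comparison and the budget conditions~\eqref{condk1}--\eqref{condk2}, plus the already-proved uniqueness and ordering of $\bar k$ from Claim~\ref{clm:alpha_monotone}.
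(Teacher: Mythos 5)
Your proof is correct and takes essentially the same approach as the paper: the paper actually states Claim~\ref{clm:P_binding_alpha1} without writing out a proof, but its proof of the exactly analogous linear-case result (Claim~\ref{clm:lin_binding_constraint}) proceeds just as you do --- interpreting ``binding'' as $P^*(k^*) > \xi(k^*)$, cross-multiplying to reduce this to condition~\eqref{condk1}, combining with condition~\eqref{condk2} from the standing assumption that $k^*$ admits full participation, and invoking the uniqueness of $\bar k$ as the only index admitting both equilibrium types. Your explicit verification that the denominator $\numusers\left(\Q - (\numbuyers - k^*)\right)$ is positive (so the cross-multiplication is valid) is a detail the paper leaves implicit, and it is handled correctly.
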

\paragraph{Proof of Subpart 2}
The proof follows directly from Claims \ref{clm:P_binding_alphaless1} and \ref{clm:P_binding_alpha1}. By Claim \ref{clm:P_binding_alphaless1}, the range of prices $P$ which induce equilibria at $\bar k$ with $\alpha < 1$ is given by: 
\begin{align}
        \gamma(\bar k)  < P < P^*(\bar k). \label{ineq:temp*}
\end{align}
Similarly, by claim \ref{clm:P_binding_alpha1}, the range of prices $P$ which induce equilibria at $\bar k$ with $\alpha = 1$ is given by: 
\begin{align}
           P^*(\bar k) \leq P \leq \xi(\bar k+1). \label{ineq:temp**}
\end{align}
It can be shown easily that the price ranges in \eqref{ineq:temp*} and \eqref{ineq:temp**} are non-empty and clearly, they are non-overlapping, separated by $P^*(\bar k)$. This, combined with Claim \ref{clm:alpha_monotone} proves that the $\alpha < 1$ and $\alpha = 1$ regimes lie to the left and right of $P^*(\bar k)$ respectively.  
\paragraph{Subpart 3} 
For $k^* < \bar k$, the only possible equilibria at $k^*$ is $\alpha < 1$ (by subpart 1). The range of prices which induce equilibrium at $k^*$ with $\alpha < 1$, is given by: 
\[
         \gamma(k^*) < P \leq \gamma(k^*+1). 
\]
It is immediately seen that these price ranges are non-overlapping and their union is the set: 
\begin{align}
\mathcal{R}_1 := \left(0, \gamma(\bar k) \right]. \label{range1}
\end{align}
For $k^* > \bar k$, the only possible equilibria at $k^*$ is $\alpha = 1$ (again, by subpart 1). The range of prices which induce equilibrium at $k^*$ with $\alpha = 1$, is given by: 
\[
          \xi(k^*) < P \leq \xi(k^*+1).
\]
Again, these ranges do not overlap and their union is the set:
\begin{align}
\mathcal{R}_2 := \left(\xi(\bar k+1), +\infty \right). \label{range2}
\end{align}
Finally, from subpart 2, the range of prices which induce equilibrium at $\bar k$, is given by:
\begin{align}
        \mathcal{R}_3 := \left(\gamma(\bar k ), \xi(\bar k+1)\right]. \label{range3}
\end{align}
Clearly, $\mathcal{R}_1 \cup \mathcal{R}_2 \cup \mathcal{R}_3 = \mathbb{R}_{>0}$. Uniqueness of equilibrium at any price $P > 0$ follows from the fact that for each $k^* \in \{0,1,2...\numbuyers\}$, the range of prices which induce equilibria at $k^*$ are non-overlapping. 

In summary, we have shown that there exists a price threshold $\bar P = P^*(\bar k)$ for some $0 < \bar k < \numbuyers$ such that for all $P < \bar P$, there exists a unique partial participation equilibrium while for all $P \geq \bar P$, there exists a unique full participation equilibrium. This concludes the proof of Theorem \ref{thm:eq_exist_modQ}.

\subsubsection{Proof of Theorem \ref{thm:eq_exist_lowQ}: Low Benefit Case $(\Q < \frac{B_{\leq \numbuyers}}{B_{\numbuyers}})$}
We argue that in order to prove Theorem \ref{thm:eq_exist_lowQ}, it is sufficient to show that when $\Q < \frac{B_{\leq \numbuyers}}{B_{\numbuyers}}$, the following statements are true: 
\begin{enumerate}
    \item (Statement 1) $\bar k = \numbuyers$. \label{statement1}
    \item (Statement 2) There are multiple partial participation equilibria at $k^* = \numbuyers$, all occurring at the price $P = \gamma(\numbuyers)$. \label{statement2}
\end{enumerate}
Assuming (for now) that the above statements are true, we present the following sequence of arguments that will help us prove the theorem. Note that we reuse many of the results from section \ref{app:constant_mod_red}.
\begin{enumerate}
    \item Since $\bar k = \numbuyers$, $\numbuyers$ admits both types of equilibria with $\alpha < 1$ and $\alpha = 1$.
    \item By Claim \ref{clm:alpha_monotone}, all $k^* < \numbuyers$ only admit equilibria with $0 < \alpha < 1$. 
    \item By Claim \ref{clm:P_binding_alphaless1}, $\bar k$ is the only value of $k^*$ which admits partial participation equilibria and where the condition $P < P^*(k^*)$ is binding. Since $\bar k = \numbuyers$, for all $k^* < \numbuyers$, the range of prices which induce equilibria at $k^*$ with $\alpha < 1$ is given by $\gamma(k^*) < P \leq \gamma(k^*+1)$. This follows from Lemma \ref{lem:price_alphaless1}. 
    \item By Claim \ref{clm:P_binding_alpha1}, $\bar k$ is the only value of $k^*$ which admits full participation equilibria such that the condition $P \geq P^*(k^*)$ is binding. Hence, using Lemma \ref{lem:price_alpha1}, the range of prices which induce equilibria at $\bar k$ with $\alpha = 1$ is given by $P^*(\bar k) \leq P \leq \xi(\bar k+1)$. Noting that $\bar k = \numbuyers$, $P^*(\bar k) = \gamma(\numbuyers)$ and $\xi(\bar k + 1) = +\infty$, we conclude that for all $P \geq \gamma(\numbuyers)$, there exists equilibria at $\numbuyers$ with $\alpha = 1$. 
    \item Since all the partial participation equilibria at $\numbuyers$ occur at price $\gamma(\numbuyers)$, for all $P > \gamma(\numbuyers)$, $\alpha = 1$ is the only equilibrium.
\end{enumerate}
Now consider price threshold $\bar P = \gamma(\numbuyers)$. For all $P > \bar P$, there only exists equilibria with full participation (by subpart 5). For all $P < \bar P$, the only equilibria are with $\alpha < 1$ where $\alpha$ takes the form: 
\[
    \alpha = \frac{1}{\numbuyers - k^*}\left[\Q - \frac{B_{\leq k^*}}{P\numusers} \right]
\]
for $\gamma(k^*) < P \leq \gamma(k^*+1)$ for all $k^* < \numbuyers$ (subparts 2 and 3). Finally, at $P = \gamma(\numbuyers)$, there exists multiple equilibria with $\alpha \in \left[ \frac{\Q B_{\numbuyers}}{B_{\leq \numbuyers}}, 1 \right]$. This proves Theorem \ref{thm:eq_exist_lowQ}. 

Now that we have established how statements \ref{statement1} and \ref{statement2} prove Theorem \ref{thm:eq_exist_lowQ}, we go back to proving the statements, using Lemma \ref{lem:lowbenefit} and Claims \ref{clm:lowben_feas_alpha} and \ref{clm:lowbenefit_bark1}. 

\begin{lemma}\label{lem:lowbenefit}
There exists an equilibrium at $k^* = \numbuyers$ with $0 < \alpha < 1$ if and only if: 
\begin{align}
    \Q B_{\numbuyers} < B_{\leq \numbuyers}. \label{eq:condB_lowbenefit}
\end{align}
Additionally, the price $P$ which induces all such equilibria is given by:
\begin{align}
    P = \frac{B_{\leq \numbuyers}}{\Q \numusers} = \gamma(\numbuyers). \label{eq:condP_lowbenefit}
\end{align}
\end{lemma}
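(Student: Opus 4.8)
The plan is to analyze the equilibrium conditions of \eqref{opt:platform} specialized to the corner case $k^* = \numbuyers$, which the moderate-benefit argument of Appendix~\ref{app:constant_mod_red} deliberately excluded. The key preliminary observation is that at $k^* = \numbuyers$, Claim~\ref{clm:N_monotone} forces every buyer to be budget-constrained, so $N_k = B_k/P$ for all $k$ and hence $\sum_{k=1}^{\numbuyers} N_k = B_{\leq \numbuyers}/P$ \emph{independently of $\alpha$}. This is the crux of the whole lemma: once all buyers exhaust their budgets, the total mass of data sold no longer depends on the participation rate, which is exactly what permits a continuum of equilibria at one fixed price.

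For necessity, I would start from a partial-participation equilibrium at $k^* = \numbuyers$ with $0 < \alpha < 1$. Since $\alpha < 1$, the relation in \eqref{opt:platform} must bind, i.e. $\alpha = \frac{\alpha \Q \numusers}{\sum_k N_k}$; dividing by $\alpha > 0$ and substituting $\sum_k N_k = B_{\leq \numbuyers}/P$ forces $P = \frac{B_{\leq \numbuyers}}{\Q \numusers} = \gamma(\numbuyers)$, which is precisely \eqref{eq:condP_lowbenefit}. It then remains to convert the buyer-threshold condition into the budget inequality: being at $k^* = \numbuyers$ requires $N_{\numbuyers} = B_{\numbuyers}/P < \alpha \numusers$, and plugging in the value of $P$ just obtained yields $\frac{\Q B_{\numbuyers}}{B_{\leq \numbuyers}} < \alpha$. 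Combining this with $\alpha < 1$ gives $\Q B_{\numbuyers} < B_{\leq \numbuyers}$, establishing \eqref{eq:condB_lowbenefit}.

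For sufficiency, assuming $\Q B_{\numbuyers} < B_{\leq \numbuyers}$, I would set $P = \gamma(\numbuyers) = \frac{B_{\leq \numbuyers}}{\Q \numusers}$ and exhibit an explicit equilibrium by choosing any $\alpha$ in the interval $\left(\frac{\Q B_{\numbuyers}}{B_{\leq \numbuyers}},\, 1\right)$, which is nonempty exactly because of the assumed inequality. I then verify the two equilibrium requirements in turn: first, the lower bound on $\alpha$ together with budget monotonicity (Assumption~\ref{ass:budget_order}) gives $B_k/P \le B_{\numbuyers}/P < \alpha \numusers$ for every $k$, so indeed $N_k = B_k/P$ and the induced buyer threshold is $k^* = \numbuyers$; second, $\sum_k N_k = B_{\leq \numbuyers}/P = \Q \numusers$ makes the right-hand side of \eqref{opt:platform} equal to $\min(1,\alpha) = \alpha$, so $\alpha$ satisfies the equilibrium equation. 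This simultaneously proves the ``if'' direction and shows the whole interval of such $\alpha$ arises at the single price $\gamma(\numbuyers)$, which is what Statement~2 of the low-benefit proof needs.

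The step I expect to be the most delicate is conceptual rather than computational: keeping straight \emph{why} $\alpha$ is underdetermined here, whereas it was pinned down uniquely in the moderate case. The difference is precisely that $\sum_k N_k$ loses its $\alpha$-dependence once all buyers are budget-constrained, so the binding equilibrium relation degenerates into a constraint on $P$ alone. I would therefore be careful to state the feasible set of $\alpha$ as the \emph{open} interval $\left(\frac{\Q B_{\numbuyers}}{B_{\leq \numbuyers}},\,1\right)$ and to confirm its endpoints, since these feed directly into the closed equilibrium set $\alpha \in \left[\frac{\Q B_{\numbuyers}}{B_{\leq \numbuyers}},\,1\right]$ claimed in Theorem~\ref{thm:eq_exist_lowQ}, with the closed lower endpoint coming from the tie-breaking convention and the closed upper endpoint from the full-participation branch meeting this interval at $\gamma(\numbuyers)$.
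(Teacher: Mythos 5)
Your proof is correct, and its computational core coincides with the paper's: pinning the price via the binding fixed-point relation ($\alpha<1$ and $\alpha>0$ force $\Q\numusers=\sum_k N_k=B_{\leq\numbuyers}/P$, hence $P=\gamma(\numbuyers)$) is exactly the second half of the paper's proof of Lemma~\ref{lem:lowbenefit}, and your sufficiency argument---fix $P=\gamma(\numbuyers)$, take any $\alpha\in\left(\frac{\Q B_{\numbuyers}}{B_{\leq\numbuyers}},1\right)$, verify $B_{\numbuyers}/P<\alpha\numusers$ so that budget monotonicity gives $N_k=B_k/P$ for all $k$ and $\sum_k N_k=\Q\numusers$---is literally the paper's Claim~\ref{clm:lowben_feas_alpha}. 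Where you differ, slightly to your advantage: for necessity of $\Q B_{\numbuyers}<B_{\leq\numbuyers}$ the paper simply cites Claim~\ref{clm:condk2} (evidently a slip for Claim~\ref{clm:condk1}, since Claim~\ref{clm:condk2} concerns $\alpha=1$ and is trivially satisfied at $k^*=\numbuyers$ because $B_{\numbuyers+1}=+\infty$), and even Claim~\ref{clm:condk1} was established via Lemma~\ref{lem:price_alphaless1}, which is stated only for $k^*\le\numbuyers-1$; your direct derivation---$N_{\numbuyers}=B_{\numbuyers}/P<\alpha\numusers$ with $P=\gamma(\numbuyers)$ gives $\alpha>\Q B_{\numbuyers}/B_{\leq\numbuyers}$, which combined with $\alpha<1$ yields \eqref{eq:condB_lowbenefit}---handles the $k^*=\numbuyers$ corner from scratch and closes that small citation gap. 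One peripheral inaccuracy in your closing remarks, harmless to the lemma itself: the closed lower endpoint $\alpha=\frac{\Q B_{\numbuyers}}{B_{\leq\numbuyers}}$ of the equilibrium set in Theorem~\ref{thm:eq_exist_lowQ} does not come from the tie-breaking convention; it is the $k^*=\numbuyers-1$ partial-participation equilibrium at $P=\gamma(\numbuyers)$, which lies in the half-open price interval $\gamma(\numbuyers-1)<P\le\gamma(\numbuyers)$ of Lemma~\ref{lem:price_alphaless1} and evaluates to $\alpha=\Q-\frac{B_{\leq\numbuyers-1}}{P\numusers}=\frac{\Q B_{\numbuyers}}{B_{\leq\numbuyers}}$, independently of how indifferent users are resolved.
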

\begin{proof}
The first part of the proof follows directly from Claim \ref{clm:condk2}. Plugging $k^* = \numbuyers$, we see immediately that there exists an equilibrium at $\numbuyers$ if and only if: 
\[
     \Q B_{\numbuyers} < B_{\leq \numbuyers}.    
\]
We just need to show that all such equilibria are induced at $P = \gamma(\numbuyers)$. Suppose, there exists an equilibrium at $k^* = \numbuyers$ with $0 < \alpha < 1$. By the definition of $k^*$ in Claim \ref{clm:N_monotone}, we have:
\[
N_k = \frac{B_k}{P} \quad \forall~k \in [\numbuyers].
\]
This allows us to derive the expression for $\sum_k N_k$ as follows:
\[
   \sum_{k=1}^{\numbuyers} N_k = \sum_{k=1}^{\numbuyers} \frac{B_k}{P} = \frac{B_{\leq \numbuyers}}{P}.  
\]
Now, recall from \ref{opt:platform} that at equilibrium, $\alpha$ is given by: 
\[
    \alpha = \min \left(1, \frac{\alpha \Q \numusers}{\sum_{k=1}^{\numbuyers}N_k} \right).
\]
Since $\alpha < 1$ by assumption, it must be that $\alpha = \frac{\alpha \Q \numusers}{\sum_{k=1}^{\numbuyers}N_k}$ which implies $\Q \numusers = \sum_{k=1}^{\numbuyers}N_k$ (since $\alpha > 0$). Substituting the expression for $\sum_k N_k$, we obtain the desired condition that $P = \frac{B_{\leq K}}{\Q \numusers} = \gamma(\numbuyers)$. This concludes the proof. 
\end{proof}
We can use Lemma \ref{lem:lowbenefit} to show that any $\alpha \in \left( \frac{\Q B_{\numbuyers}}{B_{\leq \numbuyers}}, 1\right)$ is a partial participation equilibrium at $\numbuyers$ induced by $P = \gamma(\numbuyers)$. 

\begin{claim}\label{clm:lowben_feas_alpha}
When $\Q B_{\numbuyers} < B_{\leq \numbuyers}$, any $\alpha \in \left( \frac{\Q B_{\numbuyers}}{B_{\leq \numbuyers}}, 1\right)$ is a feasible partial participation equilibrium at $\numbuyers$.
\end{claim}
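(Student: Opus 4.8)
The plan is to exhibit, for the single price $P = \gamma(\numbuyers) = \frac{B_{\leq \numbuyers}}{\Q \numusers}$ identified in Lemma~\ref{lem:lowbenefit}, that every $\alpha$ in the stated open interval simultaneously satisfies the buyer-side consistency that defines an equilibrium at $k^* = \numbuyers$ and the user-side fixed-point equation from \eqref{opt:platform}. I would fix such an $\alpha \in \left(\frac{\Q B_{\numbuyers}}{B_{\leq \numbuyers}}, 1\right)$ throughout and verify both conditions in turn.

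First I would confirm that at this $(P,\alpha)$ pair all buyers exhaust their budgets, i.e. $N_k = \frac{B_k}{P}$ for every $k$, so that the induced threshold is genuinely $k^* = \numbuyers$. By the budget ordering of Assumption~\ref{ass:budget_order} and the monotonicity of Claim~\ref{clm:N_monotone}, it suffices to verify the single inequality $\frac{B_{\numbuyers}}{P} < \alpha \numusers$ for the largest budget. Substituting $P = \gamma(\numbuyers) = \frac{B_{\leq \numbuyers}}{\Q \numusers}$ reduces this inequality to exactly $\alpha > \frac{\Q B_{\numbuyers}}{B_{\leq \numbuyers}}$, which holds by our choice of $\alpha$; the strictness here is precisely what forces the left endpoint of the interval to be excluded.

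Next I would compute the aggregate demand at this price. Since all buyers are budget-constrained, $\sum_{k=1}^{\numbuyers} N_k = \sum_{k=1}^{\numbuyers}\frac{B_k}{P} = \frac{B_{\leq \numbuyers}}{P}$, and plugging in $P = \gamma(\numbuyers)$ yields $\sum_{k=1}^{\numbuyers} N_k = \Q \numusers$ exactly. Feeding this into the user fixed-point equation gives $\min\left(1, \frac{\alpha \Q \numusers}{\sum_k N_k}\right) = \min(1, \alpha) = \alpha$, where the last equality uses $\alpha < 1$. Thus the equilibrium condition is met for the chosen $\alpha$, and together with the buyer optimality established above, $\left(\alpha, (N_k)_{k=1}^{\numbuyers}\right)$ is a valid partial participation equilibrium at $k^* = \numbuyers$.

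The key conceptual point, rather than any computational obstacle, is that at the threshold price the total demand $\sum_k N_k$ becomes \emph{independent of $\alpha$} (it equals $\Q \numusers$ for every $\alpha$ large enough that all buyers are budget-constrained), so the fixed-point equation degenerates into the identity $\alpha = \alpha$ across the whole range. This degeneracy is exactly the source of the continuum of equilibria, and the only care needed is to track the strict inequality $\alpha > \frac{\Q B_{\numbuyers}}{B_{\leq \numbuyers}}$ that keeps us in the all-buyers-budget-constrained regime while still allowing $\alpha < 1$; the standing hypothesis $\Q B_{\numbuyers} < B_{\leq \numbuyers}$ is what guarantees this interval is nonempty.
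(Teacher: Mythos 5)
Your proposal is correct and follows essentially the same route as the paper: fix the price $P = \gamma(\numbuyers) = \frac{B_{\leq \numbuyers}}{\Q \numusers}$ from Lemma~\ref{lem:lowbenefit}, check $\frac{B_{\numbuyers}}{P} < \alpha \numusers$ (which is exactly $\alpha > \frac{\Q B_{\numbuyers}}{B_{\leq \numbuyers}}$) so that budget monotonicity forces every buyer to be budget-constrained, and then observe $\sum_{k} N_k = \frac{B_{\leq \numbuyers}}{P} = \Q \numusers$, which makes the fixed-point equation hold identically for all $\alpha < 1$. Your closing remark that the total demand becomes independent of $\alpha$ at this price, collapsing the fixed-point equation to $\alpha = \alpha$, is a nice articulation of why the continuum of equilibria appears, but it is the same mechanism the paper's verification exploits.
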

\begin{proof}
By Lemma \ref{lem:lowbenefit}, we know that any partial participation equilibrium at $\numbuyers$ is induced by $P = \gamma(\numbuyers)$. Now, pick any $\alpha'$ such that $\frac{\Q B_{\numbuyers}}{B_{\leq \numbuyers}} < \alpha' < 1$. We need to show that $\alpha'$ is a feasible partial participation equilibrium at $\numbuyers$. It is sufficient to verify the following: 
\begin{align}
    \frac{B_{\numbuyers}}{P} < \alpha' \numusers;\label{useless_cond1}
\end{align}
and, 
\begin{align}
    \Q \numusers = \sum_{k=1}^{\numbuyers}N_k. \label{useless_cond2}
\end{align}
We can verify condition \eqref{useless_cond1} as follows: 
\begin{align*}
    \frac{B_{\numbuyers}}{P} = \numusers \cdot \frac{\Q B_{\numbuyers}}{B_{\leq \numbuyers}} < \alpha' \numusers. 
\end{align*}
Since $\frac{B_{\numbuyers}}{P} < \alpha' \numusers$, by the budget monotonicity assumption, $\frac{B_k}{P} < \alpha' \numusers$ for all $k \in [\numbuyers]$. This implies, $\sum_{k=1}^{\numbuyers}N_k = \frac{B_{\leq \numbuyers}}{P} = \Q \numusers$. This verifies condition \eqref{useless_cond2} and concludes the proof. 
\end{proof}

\begin{claim}\label{clm:lowbenefit_bark1}
There exists equilibria at $\numbuyers$ with $\alpha = 1$. 
\end{claim}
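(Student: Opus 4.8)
The plan is to obtain this claim as an immediate specialization of the full-participation characterization already proved in Lemma~\ref{lem:price_alpha1} to the boundary value $k^* = \numbuyers$. That lemma, stated for every $k^* \in \{1,\ldots,\numbuyers\}$, asserts that an equilibrium at $k^*$ with $\alpha = 1$ exists if and only if $\xi(k^*) < P \leq \xi(k^*+1)$ together with $P \geq P^*(k^*)$. Since $k^* = \numbuyers$ lies in the admissible range, the entire task reduces to evaluating the three threshold quantities $\xi(\numbuyers)$, $\xi(\numbuyers+1)$, and $P^*(\numbuyers)$ and checking that the induced price interval is non-empty.

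First I would evaluate the boundary terms. Using the convention $B_{\numbuyers+1} \defneq +\infty$ from Assumption~\ref{ass:budget_order}, we get $\xi(\numbuyers+1) = B_{\numbuyers+1}/\numusers = +\infty$, so the upper constraint $P \leq \xi(\numbuyers+1)$ is vacuous. Next, substituting $k^* = \numbuyers$ into $P^*(k) = \frac{B_{\leq k}}{\numusers(\Q - (\numbuyers - k))}$ gives $P^*(\numbuyers) = \frac{B_{\leq \numbuyers}}{\numusers \Q}$, which is exactly $\gamma(\numbuyers)$. Thus the two surviving constraints are $P > \xi(\numbuyers) = B_{\numbuyers}/\numusers$ and $P \geq \gamma(\numbuyers)$.

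The key step is to show that these two constraints collapse to the single condition $P \geq \gamma(\numbuyers)$, and this is where the low-benefit hypothesis enters. The inequality $\gamma(\numbuyers) > \xi(\numbuyers)$ is equivalent to $\frac{B_{\leq \numbuyers}}{\Q \numusers} > \frac{B_{\numbuyers}}{\numusers}$, i.e. to $\Q B_{\numbuyers} < B_{\leq \numbuyers}$, which is precisely the defining inequality of the regime $\Q < \frac{B_{\leq \numbuyers}}{B_{\numbuyers}}$. Hence $\gamma(\numbuyers)$ strictly dominates $\xi(\numbuyers)$, so every $P \geq \gamma(\numbuyers)$ also satisfies $P > \xi(\numbuyers)$, and the admissible set $[\gamma(\numbuyers), +\infty)$ is non-empty. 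By Lemma~\ref{lem:price_alpha1}, each such price induces an equilibrium at $\numbuyers$ with $\alpha = 1$, which establishes the claim.

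I do not anticipate a genuine obstacle: the statement is a short corollary of Lemma~\ref{lem:price_alpha1}, and the only subtleties are bookkeeping with the boundary conventions ($B_{\numbuyers+1} = +\infty$) and recognizing that the low-benefit inequality is exactly what makes $\gamma(\numbuyers)$ the binding threshold. Should a self-contained argument be preferred, the alternative would be to directly exhibit, for any $P \geq \gamma(\numbuyers)$, the profile $\alpha = 1$ with $N_k = B_k/P$, verify $N_k = B_k/P < \numusers$ for all $k$ (using $P > B_{\numbuyers}/\numusers$) so that the buyer threshold is indeed $\numbuyers$, and then check the fixed-point condition $\Q \numusers \geq \sum_{k=1}^{\numbuyers} N_k = B_{\leq \numbuyers}/P$, which holds exactly when $P \geq \gamma(\numbuyers)$.
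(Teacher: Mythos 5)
Your proposal is correct and follows essentially the same route as the paper: the paper invokes Claim~\ref{clm:condk2} (itself the non-emptiness condition for the price interval in Lemma~\ref{lem:price_alpha1}), noting that $B_{\leq \numbuyers} \leq \Q\, B_{\numbuyers+1}$ holds trivially since $B_{\numbuyers+1} = +\infty$, while you unpack the same lemma directly at $k^* = \numbuyers$. The only difference is cosmetic: the paper's one-line check does not even need the low-benefit hypothesis, whereas you additionally use it to identify $[\gamma(\numbuyers), +\infty)$ as the exact admissible price range --- a harmless bonus, since existence already follows from the interval $\bigl(\xi(\numbuyers), +\infty\bigr) \cap \bigl[P^*(\numbuyers), +\infty\bigr)$ being non-empty regardless.
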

\begin{proof}
Since $B_{\numbuyers +1} = +\infty$, we have: 
\begin{align*}
    B_{\leq \numbuyers} \leq \left(\Q - (\numbuyers - \numbuyers) \right) B_{\numbuyers +1}.
\end{align*}
Therefore, $\numbuyers$ satisfies condition \eqref{condk2}. Hence, by Claim \ref{clm:condk2}, there exists equilibria at $\numbuyers$ with $\alpha = 1$.
\end{proof}
\noindent
Combining the results of Lemma \ref{lem:lowbenefit} and Claim \ref{clm:lowbenefit_bark1}, we see that $\numbuyers$ induces equilibria with both $\alpha < 1$ and $\alpha = 1$. Therefore, $\bar k = \numbuyers$. Further, Lemma \ref{lem:lowbenefit} and Claim \ref{clm:lowben_feas_alpha} show that there are multiple partial participation equilibria at $k^* = \numbuyers$ at the price $P = \gamma(\numbuyers)$.


\section{Proofs for the linear case}\label{app:linear}
\subsection{Part 1: A Full Equilibrium Characterization}
We provide a complete closed-form characterization of market equilibria in the linear case under all possible benefit regimes.   

\subsubsection{Proof of Theorem \ref{thm:lin_highben}: High Benefit Case ($C\numusers > \numbuyers$)} 
We start by introducing the following claim which shows that in the high benefit case, the only possible non-trivial equilibrium is with $\alpha = 1$, for any price $P > 0$. Note that this claim also directly proves Theorem \ref{thm:lin_highben}.
\begin{claim}
For any price $P > 0$, $\alpha$ is a feasible market equilibrium if and only if $\alpha = 1$. 
\end{claim}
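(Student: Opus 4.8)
The plan is to prove both directions of the ``if and only if'' using a single elementary bound, namely
\[
\sum_{k=1}^{\numbuyers} N_k \;=\; \sum_{k=1}^{\numbuyers} \min\!\left(\alpha\numusers, \tfrac{B_k}{P}\right) \;\le\; \numbuyers\,\alpha\,\numusers,
\]
which holds because each term is at most $\alpha\numusers$. Everything then follows by feeding this bound into the fixed-point condition of Equation~\eqref{opt:lin} and invoking the hypothesis $C\numusers > \numbuyers$.

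For the \emph{if} direction, I would fix an arbitrary $P > 0$ and verify that $\alpha = 1$ satisfies Equation~\eqref{opt:lin}. At $\alpha = 1$ the bound gives $\sum_k N_k \le \numbuyers\numusers$, so
\[
\frac{C\cdot 1^2\cdot\numusers^2}{\sum_{k=1}^{\numbuyers} N_k} \;\ge\; \frac{C\numusers^2}{\numbuyers\numusers} \;=\; \frac{C\numusers}{\numbuyers} \;>\; 1,
\]
where the final inequality is exactly $C\numusers > \numbuyers$. Hence $\min\bigl(1, C\numusers^2/\sum_k N_k\bigr) = 1$, so $\alpha = 1$ is a feasible equilibrium.

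For the \emph{only if} direction, I would rule out any equilibrium with $0 < \alpha < 1$ by contradiction. If such an $\alpha$ existed, then since $\alpha < 1$ the minimum in the first equation of \eqref{opt:lin} must be attained by its second argument, giving $\alpha = C\alpha^2\numusers^2/\sum_k N_k$; dividing through by $\alpha > 0$ yields $\sum_k N_k = C\alpha\numusers^2$. Combining this with the bound $\sum_k N_k \le \numbuyers\alpha\numusers$ forces $C\alpha\numusers^2 \le \numbuyers\alpha\numusers$, i.e. $C\numusers \le \numbuyers$, contradicting the hypothesis. Together with the previous paragraph, this shows $\alpha = 1$ is the unique non-trivial equilibrium, which also directly establishes Theorem~\ref{thm:lin_highben}. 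I do not expect a genuine obstacle here; the only point needing care is the justification that, when $\alpha < 1$, the $\min$ is resolved in favor of the non-trivial branch (so that the fixed-point equation $\sum_k N_k = C\alpha\numusers^2$ is valid), after which the contradiction is immediate.
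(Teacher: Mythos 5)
Your proof is correct and takes essentially the same approach as the paper: both directions rest on the single bound $\sum_{k=1}^{\numbuyers} N_k \le \numbuyers \alpha \numusers$, and your sufficiency check at $\alpha = 1$ is verbatim the paper's. Your only-if step, which resolves the $\min$ into the fixed-point identity $\sum_k N_k = C\alpha\numusers^2$ and contradicts $C\numusers > \numbuyers$, is just a cosmetic rearrangement of the paper's observation that $\alpha \ge \min\left(1, \frac{\alpha C \numusers}{\numbuyers}\right) > \alpha$ whenever $0 < \alpha < 1$.
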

\paragraph{Necessary Conditions} Recall from \eqref{opt:lin} that at equilibrium, the participation rate $\alpha$ is given by:
\[
\alpha = \min \left(1, \frac{C\alpha^2 \numusers^2}{\sum_{k=1}^{\numbuyers}N_k} \right).
\]
Given $P > 0$, $N_k = \min \left( \frac{B_k}{P}, \alpha \numusers \right)$ for all $k \in [\numbuyers]$. Therefore, $\sum_{k=1}^{\numbuyers}N_k \leq \numbuyers \alpha \numusers$. This implies, 
\begin{align*}
    \alpha &= \min \left(1, \frac{C\alpha^2 \numusers^2}{\sum_{k=1}^{\numbuyers}N_k} \right) \\
    &\geq \min \left(1, \frac{C\alpha^2 \numusers^2}{\numbuyers \alpha \numusers} \right) \\
    &= \min \left(1, \frac{\alpha C\numusers}{\numbuyers} \right).
\end{align*}
Note that $\alpha < 1$ can never be a solution. This is because $\frac{\alpha C\numusers}{\numbuyers} >\alpha$ (since $\frac{C\numusers}{\numbuyers} > 1$), therefore, $\min\left(1, \frac{\alpha C\numusers}{\numbuyers}\right) > \alpha$ which violates the inequality derived above. Therefore, $\alpha$ must be equal to $1$. 

\paragraph{Sufficient Conditions} For sufficiency, we need to show that $\alpha = 1$ constitutes a feasible equilibrium for any price $P > 0$. It is enough to verify that: 
\[
          C\numusers^2 \geq \sum_{k=1}^{\numbuyers}N_k.
\]
Now, at price $P$ and $\alpha = 1$, $N_k = \min\left(\frac{B_k}{P}, \numusers \right)$. Therefore, $\sum_{k=1}^{\numbuyers}N_k \leq \numbuyers \numusers < C\numusers^2$. The last inequality follows because $C\numusers > \numbuyers$. This concludes the proof.  

\subsubsection{Proof of Theorem \ref{thm:lin_splcase}: Special Case ($C\numusers = \numbuyers$)}
In this segment, we introduce the following two claims: Claim \ref{clm:lin_splcase} finds a necessary and sufficient condition on the buyer threshold $k^*$ for the existence of partial participation equilibria. It also provides insights on the range of such equilibria at any price $P$. Claim \ref{clm:lin_splcase2} shows that any price $P > 0$ also induces a full participation equilibrium. Thus, we have a complete characterization of equilibria at any price $P$ and it should be clear that this proves Theorem \ref{thm:lin_splcase}. 

\begin{claim}\label{clm:lin_splcase}
When $C\numusers = \numbuyers$, there exists an equilibrium at $k^*$ with $0 < \alpha < 1$ if and only if $k^* = 0$. The range of prices $P$ which induce such equilibria spans $\mathbb{R}_{>0}$. Additionally, at each price $P > 0$, $\alpha$ is a partial participation equilibrium at $k^* = 0$ if and only if $\alpha \in \left(0, \frac{B_1}{P\numusers} \right]\cap (0, 1)$.
\end{claim}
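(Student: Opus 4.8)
The plan is to prove the three assertions in the order: (a) no partial equilibrium can occur at $k^* > 0$; (b) characterize exactly which $\alpha$ are partial equilibria at $k^* = 0$; and (c) conclude that every price admits such an equilibrium. The starting move, common to all parts, is to simplify the equilibrium condition. For any candidate with $0 < \alpha < 1$, Equation~\eqref{opt:lin} forces $\alpha = \frac{C\alpha^2\numusers^2}{\sum_k N_k}$; dividing through by $\alpha > 0$ and substituting the hypothesis $C\numusers = \numbuyers$ yields the clean binding identity $\sum_{k=1}^{\numbuyers} N_k = \numbuyers\,\alpha\numusers$. This identity is the workhorse for the rest of the argument.

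Next I would rule out $k^* > 0$, which I expect to be the one genuinely non-mechanical step. By Claim~\ref{clm:N_monotone}, an equilibrium at $k^*$ has $N_k = B_k/P$ for $k \le k^*$ and $N_k = \alpha\numusers$ for $k > k^*$, so $\sum_k N_k = \frac{B_{\leq k^*}}{P} + (\numbuyers - k^*)\alpha\numusers$. Combining this with the binding identity pins down $\alpha = \frac{B_{\leq k^*}}{k^* P \numusers}$. But the definition of $k^*$ also requires the \emph{strict} inequality $N_{k^*} < \alpha\numusers$, i.e. $B_{k^*} < \alpha\numusers P = \frac{B_{\leq k^*}}{k^*}$, equivalently $k^* B_{k^*} < B_{\leq k^*}$. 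The crux is that this is impossible: the budget ordering of Assumption~\ref{ass:budget_order} gives $B_{\leq k^*} = \sum_{j=1}^{k^*} B_j \le \sum_{j=1}^{k^*} B_{k^*} = k^* B_{k^*}$, directly contradicting $k^* B_{k^*} < B_{\leq k^*}$. Hence no partial equilibrium sits at any $k^* > 0$, establishing the ``only if'' direction.

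Finally I would treat $k^* = 0$ together with the price range. At $k^* = 0$ every buyer purchases the full available mass, $N_k = \alpha\numusers$, which holds precisely when $\alpha\numusers \le B_k/P$ for every $k$, i.e. (using the smallest budget $B_1$) when $\alpha \le \frac{B_1}{P\numusers}$. Under this regime $\sum_k N_k = \numbuyers\,\alpha\numusers$, so the binding identity is satisfied automatically once $C\numusers = \numbuyers$; substituting back, $\min\!\big(1, \frac{C\alpha^2\numusers^2}{\numbuyers\alpha\numusers}\big) = \min(1,\alpha) = \alpha$ for any $\alpha < 1$, confirming that every such $\alpha$ is indeed an equilibrium. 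This delivers exactly the stated set $\alpha \in \left(0, \frac{B_1}{P\numusers}\right] \cap (0,1)$, and I would note that the right endpoint is included because at $\alpha = \frac{B_1}{P\numusers}$ one still has $N_1 = \alpha\numusers$, keeping buyer $1$ in the $k^* = 0$ regime. Since $B_1 > 0$, this intersection is non-empty for every $P > 0$, so partial participation equilibria at $k^* = 0$ arise at all prices, giving the claimed span of $\mathbb{R}_{>0}$ and completing the proof.
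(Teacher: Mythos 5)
Your proof is correct and takes essentially the same route as the paper's: the same binding identity $\sum_{k} N_k = \numbuyers \alpha \numusers$ obtained from the fixed-point equation under $C\numusers = \numbuyers$, the same contradiction $k^* B_{k^*} < B_{\leq k^*}$ against the budget ordering of Assumption~\ref{ass:budget_order} to rule out $k^* > 0$, and the same reduction at $k^* = 0$ to the condition $\alpha \leq \frac{B_1}{P\numusers}$ holding vacuously in the binding identity since $B_{\leq 0} = 0$. Your explicit verification of the fixed point $\min(1,\alpha) = \alpha$ and of the endpoint $\alpha = \frac{B_1}{P\numusers}$ merely spells out details the paper leaves implicit.
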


\paragraph{Necessary Conditions} Suppose, there exists an equilibrium at $k^*$ with $0 < \alpha < 1$. This implies that the following two conditions must be true: 
\begin{align}\label{eq:lin_cond1}
    \sum_{k=1}^{\numbuyers}N_k = C\alpha \numusers^2;
\end{align}
and 
\begin{align}\label{ineq:lin_condP}
    \frac{B_{k^*}}{P} < \alpha \numusers \leq \frac{B_{k^*+1}}{P}.
\end{align}
Using condition \eqref{ineq:lin_condP} and the budget monotonicity assumption, we can derive $\sum_{k=1}^{\numbuyers}N_k$ as follows: 
\[
\sum_{k=1}^{\numbuyers}N_k = \frac{B_{\leq k^*}}{P} + \alpha \numusers (\numbuyers - k^*).
\]
Now, condition \eqref{eq:lin_cond1} implies: 
\begin{align*}
    \frac{B_{\leq k^*}}{P} + \alpha \numusers (\numbuyers - k^*) = C\alpha \numusers^2 = \numbuyers \alpha \numusers.
\end{align*}
In the last equality step, we use $C\numusers = \numbuyers$. Therefore, we have: 
\begin{align}\label{eq:lin_temp1}
        k^* \alpha \numusers = \frac{B_{\leq k^*}}{P}. 
\end{align}
Now, we will argue that $k^*$ must be zero. It suffices to show that for any $k^* > 0$, conditions \eqref{ineq:lin_condP} and \eqref{eq:lin_temp1} cannot be satisfied simultaneously. We will prove by contradiction. Suppose, there does exist some $k^* > 0$ which satisfies both conditions simultaneously. Therefore, it must be that: 
\[
     \alpha \numusers = \frac{B_{\leq k^*}}{Pk^*},
\]
which when plugged into condition \eqref{ineq:lin_condP}, implies:
\[
       \frac{B_{k^*}}{P} < \frac{B_{\leq k^*}}{Pk^*} \leq \frac{B_{k^*+1}}{P}.
\]
But this means $B_{\leq k^*} > k^* B_{k^*}$ which is a contradiction (due to the budget monotonicity assumption). This concludes the proof. 

\paragraph{Sufficient Conditions} For sufficiency, we need to show that if $k^* = 0$, there exists an equilibrium at $k^*$ with $0 < \alpha < 1$. So, it suffices to come up with any feasible equilibrium at $k^* = 0$ with $0 < \alpha < 1$.
Note that when $k^* = 0$, condition \eqref{ineq:lin_condP} reduces to: 
\[
       \alpha \numusers \leq \frac{B_1}{P},
\]
and \eqref{eq:lin_temp1} is trivially satisfied for any $P > 0$ since $B_{\leq 0} = 0$ (Refer to Defn \ref{defn:Bleq}). Therefore, given price $P > 0$, any $\alpha \in \left(0, \frac{B_1}{P\numusers} \right] \cap \left(0, 1 \right)$ is a feasible equilibrium at $k^* = 0$. This concludes the proof of sufficiency. 

\begin{claim}\label{clm:lin_splcase2}
When $C\numusers = \numbuyers$, there exists an equilibrium with $\alpha = 1$ for any price $P > 0$. 
\end{claim}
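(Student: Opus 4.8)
The plan is to verify directly that $\alpha = 1$ satisfies the equilibrium fixed-point condition in \eqref{opt:lin}, mirroring the sufficiency argument already used for the high benefit case in Theorem~\ref{thm:lin_highben}. Recall that at equilibrium the participation rate must satisfy $\alpha = \min\left(1, \frac{C\alpha^2 \numusers^2}{\sum_{k=1}^{\numbuyers} N_k}\right)$. Substituting $\alpha = 1$, it suffices to show that the argument of the minimum is at least $1$, i.e.\ that $C\numusers^2 \geq \sum_{k=1}^{\numbuyers} N_k$; in that case the minimum evaluates to $1$ and the fixed-point equation holds, certifying that $\alpha = 1$ is a feasible equilibrium.

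First I would compute the buyers' demand at $\alpha = 1$: each buyer buys $N_k = \min\left(\numusers, \frac{B_k}{P}\right) \leq \numusers$, so that $\sum_{k=1}^{\numbuyers} N_k \leq \numbuyers \numusers$. Then, using the defining hypothesis of this case, $C\numusers = \numbuyers$, I would substitute to obtain $C\numusers^2 = \numbuyers \numusers \geq \sum_{k=1}^{\numbuyers} N_k$. This is exactly the inequality required above, and since it holds for every $P > 0$ (the bound $\sum_k N_k \leq \numbuyers \numusers$ is price-independent), $\alpha = 1$ is an equilibrium at all prices.

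The argument presents essentially no obstacle; the only point worth flagging is the role played by the boundary equality $C\numusers = \numbuyers$. In Theorem~\ref{thm:lin_highben} the strict inequality $C\numusers > \numbuyers$ yields the strict bound $\sum_k N_k \leq \numbuyers \numusers < C\numusers^2$, which both certifies $\alpha = 1$ and forces it to be the \emph{unique} equilibrium. Here the equality only gives the non-strict $\sum_k N_k \leq C\numusers^2$, which still suffices to make $\alpha = 1$ an equilibrium but no longer excludes the partial participation equilibria identified in Claim~\ref{clm:lin_splcase}. The coexistence of $\alpha = 1$ with a continuum of partial equilibria is precisely the phenomenon Theorem~\ref{thm:lin_splcase} is designed to capture, so this claim supplies the full-participation branch while Claim~\ref{clm:lin_splcase} supplies the partial branch.
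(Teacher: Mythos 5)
Your proof is correct and follows essentially the same route as the paper's: both verify that $\alpha = 1$ satisfies the fixed-point condition by reducing it to $C\numusers^2 \geq \sum_{k=1}^{\numbuyers} N_k$ and then bounding $\sum_{k=1}^{\numbuyers} N_k \leq \numbuyers\numusers = C\numusers^2$ via the hypothesis $C\numusers = \numbuyers$. Your closing remark contrasting the strict inequality of Theorem~\ref{thm:lin_highben} (which forces uniqueness) with the boundary equality here (which permits the partial equilibria of Claim~\ref{clm:lin_splcase}) is accurate and consistent with the paper's discussion.
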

\begin{proof}
To prove this claim, it is sufficient to check that for any price $P > 0$, $\alpha = 1$ is a feasible equilibrium. Recall that at equilibrium, $\alpha = \min\left(1, \frac{C\alpha^2 \numusers^2}{\sum_{k=1}^{\numbuyers}N_k} \right)$. Therefore, verifying whether $\alpha = 1$ is a solution is equivalent to checking that:
\[
      C\numusers^2 \geq \sum_{k=1}^{\numbuyers} N_k. 
\]
We know that $\sum_{k=1}^{\numbuyers}N_k \leq \numbuyers \numusers$. So the proof is completed just by noting that $\numbuyers = C\numusers$ which implies $\numbuyers \numusers = C\numusers^2$. 
\end{proof}

\subsubsection{Low Benefit Case ($C \numusers < \numbuyers$)}
In this segment, we provide a full closed-form characterization of equilibria in the low benefit case through Theorem \ref{thm:lin_exist}. We will demonstrate later in Section \ref{app:lin_simplethm} how Theorem \ref{thm:lin_exist} reduces to the simplified Theorem \ref{thm:lin_lowben}. 

We start with Claim \ref{clm:lin_noeq} which characterizes the range of buyer threshold $k^*$ for which there are no equilibria. 

\begin{claim}\label{clm:lin_noeq}
There exists no market equilibrium with buyer threshold $k^*$ if $k^* \leq \numbuyers - C \numusers$. 
\end{claim}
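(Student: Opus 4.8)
The plan is to rule out both types of non-trivial equilibria at any threshold $k^*$ with $k^* \le \numbuyers - C\numusers$: full participation ($\alpha = 1$) and partial participation ($0 < \alpha < 1$). The common starting point is the threshold structure of Claim~\ref{clm:N_monotone}: at an equilibrium with threshold $k^*$, every buyer $k \le k^*$ exhausts their budget ($N_k = B_k/P$) while every buyer $k > k^*$ buys all available data ($N_k = \alpha\numusers$). Summing and invoking Definition~\ref{defn:Bleq} gives the closed form $\sum_{k=1}^{\numbuyers} N_k = \frac{B_{\leq k^*}}{P} + (\numbuyers - k^*)\alpha\numusers$, which I would substitute into the equilibrium condition~\eqref{opt:lin} in each case.

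For the partial-participation case, since $0 < \alpha < 1$ the equilibrium equation forces $\sum_k N_k = C\alpha\numusers^2$; combining with the closed form and cancelling $\alpha\numusers > 0$ yields $\frac{B_{\leq k^*}}{P} = \alpha\numusers\,[\,C\numusers - (\numbuyers - k^*)\,]$. The left-hand side is nonnegative (and strictly positive when $k^* > 0$), whereas the bracket on the right is $\le 0$ precisely because $k^* \le \numbuyers - C\numusers$. I would then check that equality can only hold if $k^* = 0$ and $C\numusers = \numbuyers$, which is excluded by the low-benefit hypothesis $C\numusers < \numbuyers$; hence no partial-participation equilibrium exists.

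For the full-participation case, $\alpha = 1$ requires the ratio in~\eqref{opt:lin} to be at least $1$, i.e. $C\numusers^2 \ge \sum_k N_k = \frac{B_{\leq k^*}}{P} + (\numbuyers - k^*)\numusers$. Since $k^* \le \numbuyers - C\numusers$ gives $(\numbuyers - k^*)\numusers \ge C\numusers^2$, and $\frac{B_{\leq k^*}}{P} \ge 0$, the right-hand side is at least $C\numusers^2$; I would argue the inequality is in fact strict (again using $C\numusers < \numbuyers$ to eliminate the degenerate equality), so the full-participation condition cannot be met and no such equilibrium exists.

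The bulk of the work is routine algebra; the one point requiring care is the boundary analysis at equality $k^* = \numbuyers - C\numusers$ together with the corner $k^* = 0$ (where $B_{\leq 0} = 0$), where I must lean on the strictness of $C\numusers < \numbuyers$ to close off the degenerate case rather than the weaker $\le$. I would also make explicit that the argument only uses the equilibrium equation and the threshold definition, so that showing the $\alpha$-equation is infeasible for threshold $k^*$ suffices to exclude any equilibrium with that threshold.
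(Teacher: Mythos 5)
Your proposal is correct and takes essentially the same route as the paper's proof: both rule out the partial and full participation cases separately by substituting the threshold form $\sum_{k=1}^{\numbuyers} N_k = \frac{B_{\leq k^*}}{P} + (\numbuyers - k^*)\alpha\numusers$ into the equilibrium condition and deriving a sign contradiction from $C\numusers - (\numbuyers - k^*) \leq 0$, handling the corner $k^* = 0$ (where $B_{\leq 0} = 0$) via the strict inequality $C\numusers < \numbuyers$ exactly as the paper does. No gaps.
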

\begin{proof}
The proof is in two parts: we show that there cannot exist an equilibrium at any such $k^*$ with either $0 < \alpha < 1$ or $\alpha = 1$. We will prove by contradiction. 

Suppose there exists some $0 \leq k^* \leq \numbuyers - C\numusers$ such that there exists an equilibrium at $k^*$ with $0 < \alpha < 1$. This implies that conditions \eqref{eq:lin_cond1} and \eqref{ineq:lin_condP} must be true. Following exactly identical steps as in the proof of lemma \ref{clm:lin_splcase}, we have:  
\[
     \frac{B_{\leq k^*}}{P} + \alpha \numusers (\numbuyers - k^*) = C\alpha \numusers^2,
\]
which implies
\[
     \alpha \numusers \left(C \numusers - (\numbuyers - k^*) \right) = \frac{B_{\leq k^*}}{P}.
\]
Since $k^* \leq \numbuyers - C\numusers$, this implies $C\numusers - (\numbuyers - k^*) \leq 0$. Now, if $k^* > 0$, $B_{\leq k^*} > 0$ which leads to a contradiction because the LHS is $\leq 0$ while the RHS is strictly $> 0$. If $k^* = 0$, then the RHS $= 0$, but for the LHS to equal $0$, it must be that $C\numusers = \numbuyers$ which is not true by assumption ($C\numusers < \numbuyers$). Thus, we again have a contradiction.  

For the second part of the proof, we need to show that there does not exist $k^*$ such that $0 \leq k^* \leq \numbuyers - C\numusers$ and there exists an equilibrium at $k^*$ with $\alpha = 1$. If there does exist such a $k^*$, then it must be that: 
\[
          C\numusers^2 \geq \frac{B_{\leq k^*}}{P} + \numusers(\numbuyers - k^*),
\]
which by re-arranging terms, is equivalent to:
\[
          \numusers \left( C\numusers - (\numbuyers - k^*)\right) \geq \frac{B_{\leq k^*}}{P}. 
\]
Using an identical argument as earlier, we can show that the LHS is always $\leq 0$ so there does not exist $k^*$ for which the above inequality holds. This concludes the proof. 
\end{proof}

We now state our main theorem for $k^* > \numbuyers - C \numusers$. This theorem is a stronger version of Theorem \ref{thm:lin_lowben} that also characterizes which values of $k^*$ our equilibria arise at. 

\begin{theorem} \label{thm:lin_exist}
Suppose, $C\numusers < \numbuyers$. Fix $k^* \in \{0,1,2,....\numbuyers\}$ such that $k^* > \numbuyers - C\numusers$. There exists an equilibrium with buyer threshold $k^*$ at price $P$ with: 
\begin{itemize}
    \item full participation of users $(\alpha = 1)$ if and only if:
    \[
          \frac{B_{k^*}}{\numusers} < P \leq \frac{B_{k^*+1}}{\numusers} \quad \text{and} \quad P \geq \frac{B_{\leq k^*}}{C\numusers^2 - \numusers (\numbuyers - k^*)};
    \]
    \item partial participation of users $(0 < \alpha < 1)$ if and only if:
    \[
         P > \frac{B_{\leq k^*}}{C\numusers^2 - \numusers (\numbuyers - k^*)}, 
    \]
    and additionally, $k^*$ satisfies: 
    \[
           \frac{B_{k^*}}{N} < \frac{B_{\leq k^*}}{C\numusers^2 - \numusers (\numbuyers - k^*)} \leq \frac{B_{k^*+1}}{N}.
    \]
    In this case, the participation rate of users $\alpha$ is given by:
    \[
        \alpha = \frac{B_{\leq k^*}}{C\numusers^2 - \numusers (\numbuyers - k^*)} \cdot \frac{1}{P}.
    \]
\end{itemize}    
\end{theorem}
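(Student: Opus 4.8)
The plan is to mirror the structure of the constant-benefit proof (Theorem~\ref{thm:eq_existence_full}): fix the buyer threshold $k^*$ supplied by Claim~\ref{clm:N_monotone}, and treat the partial participation case ($0 < \alpha < 1$) and the full participation case ($\alpha = 1$) separately, establishing the stated conditions as both necessary and sufficient. Since Claim~\ref{clm:lin_noeq} already rules out any equilibrium at $k^* \leq \numbuyers - C\numusers$, the entire analysis here takes place under the hypothesis $k^* > \numbuyers - C\numusers$, whose role is precisely to guarantee that $C\numusers - (\numbuyers - k^*) > 0$; this positivity is what makes the candidate participation rate well-defined and correctly signed.

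For the partial participation direction, I would first note that $0 < \alpha < 1$ forces the second branch of the $\min$ in Equation~\eqref{opt:lin}, so after cancelling $\alpha > 0$ we get $\sum_{k=1}^{\numbuyers} N_k = C\alpha\numusers^2$. Decomposing the sum at the threshold exactly as in Lemma~\ref{lem:alphaless1} gives $\sum_{k=1}^{\numbuyers} N_k = \frac{B_{\leq k^*}}{P} + (\numbuyers - k^*)\alpha\numusers$, and equating the two expressions and solving for $\alpha$ yields the claimed closed form $\alpha = \frac{B_{\leq k^*}}{C\numusers^2 - \numusers(\numbuyers - k^*)}\cdot\frac{1}{P}$, where the denominator is positive by the hypothesis on $k^*$. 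The condition $\alpha < 1$ is then equivalent to $P > \frac{B_{\leq k^*}}{C\numusers^2 - \numusers(\numbuyers - k^*)}$, and the threshold-consistency requirement $\frac{B_{k^*}}{P} < \alpha\numusers \leq \frac{B_{k^*+1}}{P}$---after substituting $\alpha$ and cancelling $P$---collapses to the $P$-independent condition $\frac{B_{k^*}}{\numusers} < \frac{B_{\leq k^*}}{C\numusers^2 - \numusers(\numbuyers - k^*)} \leq \frac{B_{k^*+1}}{\numusers}$. Sufficiency follows by running these equivalences in reverse: given a $k^*$ meeting the consistency condition and a $P$ above the threshold, I would define $\alpha$ by the formula and verify that $0 < \alpha < 1$, the threshold ordering, and the balance equation $\sum_k N_k = C\alpha\numusers^2$ all hold by construction.

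For the full participation direction, $\alpha = 1$ forces $\sum_{k=1}^{\numbuyers} N_k \leq C\numusers^2$; decomposing the sum at the threshold (now with $\alpha\numusers = \numusers$) gives $\frac{B_{\leq k^*}}{P} + (\numbuyers - k^*)\numusers \leq C\numusers^2$, which is equivalent to the price lower bound $P \geq \frac{B_{\leq k^*}}{C\numusers^2 - \numusers(\numbuyers - k^*)}$. The threshold-consistency requirement $\frac{B_{k^*}}{P} < \numusers \leq \frac{B_{k^*+1}}{P}$ is exactly the interval $\frac{B_{k^*}}{\numusers} < P \leq \frac{B_{k^*+1}}{\numusers}$. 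As before, sufficiency is obtained by checking that any $P$ in this interval satisfying the lower bound makes $\alpha = 1$ feasible, mirroring Lemma~\ref{lem:price_alpha1}.

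I expect the only genuine subtlety---rather than an obstacle---to be the quadratic benefit term $C\alpha^2\numusers^2$: the reduction to a tractable linear relation hinges on dividing through by $\alpha > 0$ and on the sign of $C\numusers - (\numbuyers - k^*)$. This sign is exactly what separates the present existence regime from the non-existence regime of Claim~\ref{clm:lin_noeq}, so the hypothesis $k^* > \numbuyers - C\numusers$ must be invoked carefully at each place where I divide by the denominator or orient an inequality. A secondary point worth flagging is that, unlike in the full participation case, the partial participation threshold-consistency condition carries no dependence on $P$; consequently, for a fixed admissible $k^*$, partial equilibria exist along an entire ray of prices $P > \frac{B_{\leq k^*}}{C\numusers^2 - \numusers(\numbuyers - k^*)}$, which is what ultimately produces the $\alpha = A/P$ family appearing in the simplified Theorem~\ref{thm:lin_lowben}.
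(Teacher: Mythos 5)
Your proposal is correct and takes essentially the same route as the paper's own proof: the paper likewise splits into the partial and full participation cases, derives the closed form $\alpha = \frac{B_{\leq k^*}}{C\numusers^2 - \numusers(\numbuyers - k^*)}\cdot\frac{1}{P}$ by cancelling $\alpha > 0$ in the equilibrium equation and decomposing $\sum_{k=1}^{\numbuyers} N_k$ at the buyer threshold (Lemma~\ref{lem:lin_alphaless1}), and then establishes exactly the price and threshold-consistency conditions you state as necessary and sufficient, with sufficiency obtained by running the equivalences in reverse (Lemmas~\ref{lem:lin_alphaless1_nec_cond} and~\ref{lem:lin_alpha1_nec_cond}). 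Your flagged subtleties---that $k^* > \numbuyers - C\numusers$ is what signs the denominator $C\numusers - (\numbuyers - k^*)$, and that the partial-participation consistency condition is $P$-independent, yielding the $\alpha = A/P$ ray used in Theorem~\ref{thm:lin_lowben}---are precisely the points the paper's argument hinges on as well.
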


We will complete the proof in two parts: first the partial participation case ($0 < \alpha < 1$) followed by the full participation case ($\alpha = 1$). 
\paragraph{Case 1: $0 < \alpha < 1$ (Partial Participation)}
In this segment, we derive necessary and sufficient conditions for the existence of partial participation equilibria. We start by obtaining a closed form expression for the participation rate $\alpha$ for an equilibrium at buyer threshold $k^*$ when we know that $\alpha$ is fractional. 

\begin{lemma}\label{lem:lin_alphaless1}
If there exists an equilibrium at buyer threshold $k^*$ (with $k^* > \numbuyers - C\numusers$) induced by price $P$ with partial user participation $(0 < \alpha < 1)$, it must be that, at this equilibrium, 
 \begin{align}
    \alpha = \frac{B_{\leq k^*}}{C\numusers^2 - \numusers(\numbuyers - k^*) } \cdot \frac{1}{P}. \label{exp:lin_alphaless1} 
 \end{align}
\end{lemma}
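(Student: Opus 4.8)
The plan is to imitate the proof of Lemma~\ref{lem:alphaless1} from the constant case, now with the quadratic benefit term $\alpha\numusers Q(\alpha) = C\alpha^2\numusers^2$ replacing $\alpha\numusers\Q$. First I would invoke the definition of the buyer threshold $k^*$ from Claim~\ref{clm:N_monotone}: at an equilibrium at $k^*$ we have $\buydec = \budget/P$ for every $k \leq k^*$ and $\buydec = \alpha\numusers$ for every $k > k^*$. Summing these two regimes and using Definition~\ref{defn:Bleq} gives the closed form
\[
\sum_{k=1}^{\numbuyers} \buydec = \frac{B_{\leq k^*}}{P} + \alpha\numusers(\numbuyers - k^*).
\]

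Next I would use the hypothesis $0 < \alpha < 1$ to select the binding branch of the equilibrium condition~\eqref{opt:lin}. Since $\alpha = \min\!\left(1, \frac{C\alpha^2\numusers^2}{\sum_k \buydec}\right)$ is strictly less than $1$, the minimum is attained at its second argument, so $\alpha = C\alpha^2\numusers^2 / \sum_k \buydec$; dividing through by $\alpha > 0$ yields $\sum_k \buydec = C\alpha\numusers^2$. Equating this with the closed form above gives
\[
\frac{B_{\leq k^*}}{P} + \alpha\numusers(\numbuyers - k^*) = C\alpha\numusers^2,
\]
and collecting the $\alpha$-terms produces $\frac{B_{\leq k^*}}{P} = \alpha\left(C\numusers^2 - \numusers(\numbuyers - k^*)\right)$, from which the claimed expression for $\alpha$ follows by a single division.

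The only point that requires genuine care — and it is exactly where the hypothesis $k^* > \numbuyers - C\numusers$ is used — is verifying that the denominator $C\numusers^2 - \numusers(\numbuyers - k^*) = \numusers\left(C\numusers - (\numbuyers - k^*)\right)$ is strictly positive, so that the final division is legitimate. This is immediate, since $k^* > \numbuyers - C\numusers$ rearranges to $C\numusers > \numbuyers - k^*$. I expect no real obstacle beyond this: because the lemma asserts only the \emph{necessary} form of $\alpha$, I do not need to check the buyer-side ordering constraints $\budget[k^*]/P < \alpha\numusers \leq B_{k^*+1}/P$ or that the resulting $\alpha$ lies strictly below $1$. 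Those constraints, which pin down the admissible price range and deliver the converse, are instead established when proving the existence (sufficiency) direction of Theorem~\ref{thm:lin_exist}.
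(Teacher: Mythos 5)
Your proposal is correct and follows essentially the same route as the paper's proof: decompose $\sum_{k=1}^{\numbuyers} N_k$ via the buyer threshold $k^*$ from Claim~\ref{clm:N_monotone}, use $0 < \alpha < 1$ to select the interior branch of the equilibrium condition so that $\sum_{k=1}^{\numbuyers} N_k = C\alpha\numusers^2$, and solve the resulting linear equation for $\alpha$. Your explicit observation that $k^* > \numbuyers - C\numusers$ is precisely what guarantees the denominator $\numusers\left(C\numusers - (\numbuyers - k^*)\right)$ is strictly positive is a point the paper leaves implicit, and your remark that the buyer-side ordering constraints belong to the sufficiency direction correctly matches how the paper defers them to Lemma~\ref{lem:lin_alphaless1_nec_cond}.
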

\begin{proof}
Since we know that this is an equilibrium at $k^*$, we have: 
\begin{align}\label{exp:lin_sumNk}
    \sum_{k=1}^{\numbuyers}N_k = \frac{B_{\leq k^*}}{P} + \alpha \numusers (\numbuyers - k^*). 
\end{align}
Now, recall that $\alpha = \min \left(1, \frac{C \alpha^2 \numusers^2}{\sum_{k=1}^{\numbuyers}N_k} \right)$. Since $0 < \alpha < 1$ by assumption, it must be that: 
\begin{align}\label{eq:lin_alphaless1}
 \alpha = \frac{C\alpha^2 \numusers^2}{\sum_{k=1}^{\numbuyers}N_k} \quad \text{which implies} \quad C\alpha \numusers^2 = \sum_{k=1}^{\numbuyers}N_k. 
\end{align}
Combining the results in \eqref{exp:lin_sumNk} and \eqref{eq:lin_alphaless1}, we conclude:
\begin{align*}
    C  \alpha \numusers^2 = \frac{B_{\leq k^*}}{P} + \alpha \numusers (\numbuyers - k^*).
\end{align*}
By re-arranging terms, we derive the final expression for $\alpha$:
\[ 
   \alpha = \frac{B_{\leq k^*}}{C\numusers^2 - \numusers(\numbuyers - k^*) } \cdot \frac{1}{P}.
\]
\end{proof}

\begin{lemma}\label{lem:lin_alphaless1_nec_cond}
 Fix $k^*$ such that $k^* > \numbuyers - C\numusers$. There exists an equilibrium at $k^*$ induced by price $P$ with $0 < \alpha < 1$ if and only if: 
\begin{align}\label{eq:lin_cond1_alphaless1}
    P > \frac{B_{\leq k^*}}{C \numusers^2 - \numusers (\numbuyers - k^*)};  
\end{align}
and,
\begin{align}\label{eq:lin_cond2_alphaless1} 
B_{k^*} < \frac{B_{\leq k^*}}{C \numusers - (\numbuyers - k^*)} \leq B_{k^*+1}. 
\end{align}
\end{lemma}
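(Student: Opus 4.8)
The plan is to mirror the structure of Lemma~\ref{lem:price_alphaless1} from the constant case, establishing the two conditions \eqref{eq:lin_cond1_alphaless1}--\eqref{eq:lin_cond2_alphaless1} as jointly necessary and sufficient. Throughout, write $D \defneq C\numusers^2 - \numusers(\numbuyers - k^*) = \numusers\bigl(C\numusers - (\numbuyers - k^*)\bigr)$, and observe at the outset that the hypothesis $k^* > \numbuyers - C\numusers$ is exactly what guarantees $D > 0$; this positivity is what makes every division and inequality manipulation below valid, and it is the single fact I would track most carefully (it is also the boundary with Claim~\ref{clm:lin_noeq}, which rules out $k^* \le \numbuyers - C\numusers$). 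I characterize a partial participation equilibrium at buyer threshold $k^*$ by three requirements: \text{(i)} $0 < \alpha < 1$; \text{(ii)} the threshold is genuinely $k^*$, i.e.\ $\frac{B_{k^*}}{P} < \alpha\numusers \le \frac{B_{k^*+1}}{P}$ via $\buydec = \min(\alpha\numusers, \budget/P)$ and Claim~\ref{clm:N_monotone}; and \text{(iii)} the fixed-point equation $\sum_k \buydec = C\alpha\numusers^2$.

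For necessity, I assume such an equilibrium exists. Lemma~\ref{lem:lin_alphaless1} already pins the participation rate to $\alpha = \frac{B_{\leq k^*}}{D}\cdot\frac{1}{P}$. Substituting this into the threshold inequalities of \text{(ii)} and clearing the common positive factors $P$ and $\numusers$ turns $\frac{B_{k^*}}{P\numusers} < \alpha \le \frac{B_{k^*+1}}{P\numusers}$ into $B_{k^*} < \frac{B_{\leq k^*}}{C\numusers - (\numbuyers - k^*)} \le B_{k^*+1}$, which is exactly \eqref{eq:lin_cond2_alphaless1}; note this condition is price-independent, reflecting that it is a feasibility condition on the threshold $k^*$ itself. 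The remaining requirement $\alpha < 1$ reads $\frac{B_{\leq k^*}}{DP} < 1$, i.e.\ $P > \frac{B_{\leq k^*}}{D}$, which is \eqref{eq:lin_cond1_alphaless1}. The requirement $\alpha > 0$ is automatic, since $k^* > \numbuyers - C\numusers > 0$ forces $k^* \ge 1$ and hence $B_{\leq k^*} > 0$, while $D > 0$.

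For sufficiency, I fix $P$ satisfying \eqref{eq:lin_cond1_alphaless1}--\eqref{eq:lin_cond2_alphaless1} and set $\alpha' = \frac{B_{\leq k^*}}{D}\cdot\frac{1}{P}$, then verify \text{(i)}--\text{(iii)}. Positivity of $\alpha'$ is immediate, and $\alpha' < 1$ follows directly from \eqref{eq:lin_cond1_alphaless1}, giving \text{(i)}. For \text{(ii)}, substituting $\alpha'$ into $\frac{B_{k^*}}{P\numusers} < \alpha' \le \frac{B_{k^*+1}}{P\numusers}$ reduces, after clearing $P$ and $\numusers$, to precisely \eqref{eq:lin_cond2_alphaless1}, so the induced threshold is indeed $k^*$. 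Finally \text{(iii)} holds by construction: with threshold $k^*$ one has $\sum_k \buydec = \frac{B_{\leq k^*}}{P} + \alpha'\numusers(\numbuyers - k^*)$, and plugging in $\alpha'$ together with the identity $D + \numusers(\numbuyers-k^*) = C\numusers^2$ collapses this to $\frac{C\numusers^2 B_{\leq k^*}}{DP} = C\alpha'\numusers^2$. This simply reverses the algebra of Lemma~\ref{lem:lin_alphaless1}, confirming that $(\alpha', (\buydec)_{k=1}^{\numbuyers})$ is an equilibrium at $k^*$.

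The bulk of the work is bookkeeping rather than any deep obstacle: the only genuinely delicate point is maintaining $D > 0$, so that multiplying through inequalities preserves their direction and $\alpha'$ stays positive. A secondary point worth stating explicitly is the clean separation of roles between the two conditions --- \eqref{eq:lin_cond2_alphaless1} decides whether the threshold $k^*$ is admissible at all (independently of $P$), while \eqref{eq:lin_cond1_alphaless1} carves out the price range over which the partial equilibrium at that $k^*$ exists. This separation is precisely what later lets Theorem~\ref{thm:lin_exist} and the simplified Theorem~\ref{thm:lin_lowben} stitch the admissible thresholds together into the full price-versus-$\alpha$ picture.
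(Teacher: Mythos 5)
Your proposal is correct and follows essentially the same route as the paper's proof: necessity by combining the closed form $\alpha = \frac{B_{\leq k^*}}{C\numusers^2 - \numusers(\numbuyers - k^*)}\cdot\frac{1}{P}$ from Lemma~\ref{lem:lin_alphaless1} with the threshold inequalities $\frac{B_{k^*}}{P} < \alpha\numusers \leq \frac{B_{k^*+1}}{P}$, and sufficiency by constructing $\alpha'$ from that same formula and verifying the threshold and fixed-point conditions. Your only departures are cosmetic improvements --- the explicit identity $D + \numusers(\numbuyers - k^*) = C\numusers^2$ where the paper says ``repeat the steps in reverse order,'' and an explicit check that $k^* \geq 1$ gives $B_{\leq k^*} > 0$, which the paper leaves implicit.
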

\begin{proof}
We need to establish that conditions \eqref{eq:lin_cond1_alphaless1} and \eqref{eq:lin_cond2_alphaless1} are both necessary and sufficient for the existence of equilibrium at $k^*$ with $0 < \alpha < 1$. 

\paragraph{Necessary Conditions} We will first show that if there exists an equilibrium at $k^*$ with partial user participation ($0 < \alpha < 1$), then conditions \eqref{eq:lin_cond1_alphaless1} and \eqref{eq:lin_cond2_alphaless1} must be satisfied. We start by showing that condition \eqref{eq:lin_cond1_alphaless1} must be satisfied by direct inspection of Eq. \eqref{exp:lin_alphaless1}. Recall from Lemma \ref{lem:lin_alphaless1} that if there exists an equilibrium at $k^*$ ($k^* > \numbuyers - C\numusers$) induced by $P$ with $0 < \alpha < 1$, then $\alpha$ is given by: 
\[
    \alpha = \frac{B_{\leq k^*}}{C\numusers^2 - \numusers(\numbuyers - k^*) } \cdot \frac{1}{P}.
\]
In order for $\alpha$ to be strictly less than $1$, we must ensure that: 
\[
    \frac{B_{\leq k^*}}{C\numusers^2 - \numusers(\numbuyers - k^*) } \cdot \frac{1}{P} < 1 \quad \text{which implies} \quad P > \frac{B_{\leq k^*}}{C \numusers^2 - \numusers (\numbuyers - k^*)}. 
\]
Now we will establish that condition \eqref{eq:lin_cond2_alphaless1} is also satisfied by using the monotonicity property of $N_k$'s at any equilibrium. We note that if there is an equilibrium at $k^*$, it must be that: 
\begin{align*}
    N_{k^*} < \alpha N; \quad \text{and} \quad N_{k^*+1} = \alpha N.
\end{align*} 
Recall that for all $k \in [\numbuyers]$, $N_k = \min \left(\alpha \numusers, \frac{B_k}{P} \right)$. Therefore, from the condition above, we get:
\[
    \frac{B_{k^*}}{P} < \alpha \numusers \leq \frac{B_{k^*+1}}{P}.
\]
Plugging in the value of $\alpha$ in Eq. \eqref{exp:lin_alphaless1}, we get: 
\[
  \frac{B_{k^*}}{P} < \frac{B_{\leq k^*}}{P \left(C \numusers - (\numbuyers - k^*) \right)} \leq \frac{B_{k^*+1}}{P}, 
\]
and since $P > 0$, we obtain the final condition:
\[
    \quad B_{k^*} < \frac{B_{\leq k^*}}{C \numusers - (\numbuyers - k^*)} \leq B_{k^*+1}.
\]
This concludes the proof of necessary conditions for existence of equilibrium at $k^*$ with $0 < \alpha < 1$. 

\paragraph{Sufficient Conditions} For the proof of sufficiency, we need to show that for any $k^* > \numbuyers - C\numusers$ such that $k^*$ satisfies condition \eqref{eq:lin_cond2_alphaless1} and any price $P$ which satisfies condition \eqref{eq:lin_cond1_alphaless1}, there exists an equilibrium at $k^*$ with $0 < \alpha < 1$ at price $P$. \\ \\
Choose $k^{'} > \numbuyers - C \numusers$ such that:
\[
    B_{k'} < \frac{B_{\leq k'}}{C \numusers - (\numbuyers - k')} \leq B_{k'+1}.
\]
Now, pick any price $P' > \frac{B_{\leq k'}}{C \numusers^2 - \numusers (\numbuyers - k')}$ (clearly $P' > 0$). Define $\alpha' = \frac{B_{\leq k'}}{P' \numusers\left(C\numusers - (\numbuyers - k')\right)}$. Because of our choice of $k'$, it is clear that $\alpha' > 0$. Also, since $P' > \frac{B_{\leq k'}}{C \numusers^2 - \numusers (\numbuyers - k')}$, this implies $\alpha' = \frac{B_{\leq k'}}{P' \numusers\left(C\numusers - (\numbuyers - k')\right)} < 1$. Thus, we have established that $\alpha' \in (0, 1)$. Now, we have to argue that $\alpha'$ is one of the possible equilibria at $k'$ induced by price $P'$. \\ \\
We have the following condition on $k'$: 
\[
     B_{k'} < \frac{B_{\leq k'}}{C \numusers - (\numbuyers - k')} \leq B_{k'+1}.
\]
Since $P' > 0$, dividing by $P'$, we arrive at the following inequality: 
\[
    \frac{B_{k'}}{P'} < \frac{B_{\leq k'}}{P' \left(C \numusers - (\numbuyers - k')\right)} \leq \frac{B_{k'+1}}{P'}.
\]
Using our definition of $\alpha'$, this is equivalent to:
\[
    \frac{B_{k'}}{P'} < \alpha' \numusers \leq \frac{B_{k'+1}}{P'}.
\]
Therefore, by the monotonicity of budget $B_k$'s (refer Assumption \ref{ass:budget_order}), we get: 
\[
     \frac{B_k}{P'} < \alpha' \numusers \quad \forall~k \leq k' \quad \text{and} \quad \frac{B_k}{P'} \geq \alpha' \numusers \quad \forall~k > k', 
\]
which implies:
\[
    N_k < \alpha' \numusers \quad \forall ~k \leq k' \quad \text{and } N_k = \alpha' \numusers \quad \forall~k > k'.
\]
Thus, $\alpha'$ is a candidate equilibrium at $k'$ induced by price $P'$, provided it satisfies:
\[
   \alpha' = \min \left(1, \frac{C \alpha'^2 \numusers^2}{\sum_{k=1}^{\numbuyers} N_k} \right) = \frac{C \alpha'^2 \numusers^2}{\sum_{k=1}^{\numbuyers} N_k} \quad (\text{since }\alpha' < 1)
\]
We can easily verify that the above equality is satisfied, by repeating the steps of the proof of Lemma \ref{lem:lin_alphaless1} in the reverse order. 
This concludes the proof of sufficiency. 
\end{proof}

\paragraph{Case 2: $\alpha = 1$ (Full Participation)} 
Similarly, we derive necessary and sufficient conditions for the existence of equilibria at buyer threshold $k^*$ with full user participation ($\alpha = 1$). 
\begin{lemma}\label{lem:lin_alpha1_nec_cond}
Fix $k^*$ such that $k^* > \numbuyers - C\numusers$. There exists an equilibrium at $k^*$ induced by price $P$, with $\alpha = 1$, if and only if: 
\begin{align}\label{ineq1:lin_alpha1}
 \frac{B_{k^*}}{\numusers} < P \leq \frac{B_{k^*+1}}{\numusers} 
\end{align}
and 
\begin{align}\label{ineq2:lin_alpha1}
 \quad \text{and} \quad P \geq \frac{B_{\leq k^*}}{C \numusers^2 - \numusers (\numbuyers - k^*)}.
\end{align}
\end{lemma}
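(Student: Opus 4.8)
The plan is to mirror the structure of the full-participation lemma in the constant-benefit case (Lemma~\ref{lem:price_alpha1}), adapting the algebra to the linear equilibrium condition~\eqref{opt:lin}. As in that proof, I would establish necessity and sufficiency separately, and in each direction condition~\eqref{ineq1:lin_alpha1} will come from the buyer-threshold (monotonicity) structure of an equilibrium at $k^*$, while condition~\eqref{ineq2:lin_alpha1} will come from requiring that $\alpha = 1$ actually solves the fixed-point equation for the participation rate.

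\textbf{Necessity.} Suppose an equilibrium at $k^*$ with $\alpha = 1$ exists. By the definition of the buyer threshold (Claim~\ref{clm:N_monotone}), $N_{k^*} < \numusers$ and $N_{k^*+1} = \numusers$. Substituting $N_k = \min\left(\numusers, \frac{B_k}{P}\right)$ turns these into $\frac{B_{k^*}}{P} < \numusers \leq \frac{B_{k^*+1}}{P}$, which is exactly condition~\eqref{ineq1:lin_alpha1} after multiplying through by $\frac{P}{\numusers}$. For condition~\eqref{ineq2:lin_alpha1}, I would plug $\alpha = 1$ into~\eqref{opt:lin}: since $\alpha = \min\left(1, \frac{C\numusers^2}{\sum_k N_k}\right)$ equals $1$, we need $C\numusers^2 \geq \sum_k N_k$. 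At threshold $k^*$ with $\alpha = 1$, we have $\sum_k N_k = \frac{B_{\leq k^*}}{P} + (\numbuyers - k^*)\numusers$, and rearranging $C\numusers^2 \geq \frac{B_{\leq k^*}}{P} + (\numbuyers - k^*)\numusers$ yields~\eqref{ineq2:lin_alpha1}.

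\textbf{Sufficiency.} Given a price $P$ satisfying both conditions, I would set $\alpha = 1$ and verify the two equilibrium requirements. Condition~\eqref{ineq1:lin_alpha1} immediately forces $\frac{B_k}{P} < \numusers$ for all $k \leq k^*$ and $\frac{B_k}{P} \geq \numusers$ for all $k > k^*$ (by budget monotonicity, Assumption~\ref{ass:budget_order}), so $N_k = \frac{B_k}{P} < \numusers$ for $k \leq k^*$ and $N_k = \numusers$ for $k > k^*$; thus the candidate is genuinely an equilibrium at threshold $k^*$. Then condition~\eqref{ineq2:lin_alpha1} gives $C\numusers^2 \geq \frac{B_{\leq k^*}}{P} + (\numbuyers - k^*)\numusers = \sum_k N_k$, so that $\min\left(1, \frac{C\numusers^2}{\sum_k N_k}\right) = 1 = \alpha$, confirming the fixed-point equation.

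The only real subtlety---and the one place where the restriction $k^* > \numbuyers - C\numusers$ is used---is the sign of the denominator $C\numusers^2 - \numusers(\numbuyers - k^*)$ when rearranging $C\numusers^2 \geq \frac{B_{\leq k^*}}{P} + (\numbuyers - k^*)\numusers$ into the lower-bound form~\eqref{ineq2:lin_alpha1}. Since $k^* > \numbuyers - C\numusers$ gives $C\numusers > \numbuyers - k^*$ and hence $C\numusers^2 - \numusers(\numbuyers - k^*) > 0$, the division preserves the inequality direction and produces a lower bound on $P$ rather than an upper bound. I expect this sign check to be the main thing to get right; everything else is routine substitution paralleling the constant-benefit argument of Lemma~\ref{lem:price_alpha1}.
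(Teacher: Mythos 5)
Your proposal is correct and follows essentially the same route as the paper's proof: necessity via the buyer-threshold inequalities from Claim~\ref{clm:N_monotone} plus the fixed-point condition $C\numusers^2 \geq \sum_k N_k$, and sufficiency by verifying that $\alpha = 1$ with the induced $N_k$'s satisfies both requirements under budget monotonicity. Your explicit sign check that $k^* > \numbuyers - C\numusers$ makes the denominator $C\numusers^2 - \numusers(\numbuyers - k^*)$ positive is the right subtlety to flag, and it is exactly what the paper's rearrangement implicitly relies on.
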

\begin{proof}
We need to show that conditions \eqref{ineq1:lin_alpha1} and \eqref{ineq2:lin_alpha1} are necessary and sufficient conditions for the existence of equilibria at $k^*$ with $\alpha = 1$. 

\paragraph{Necessary Conditions}
Suppose, given $k^* > \numbuyers - C\numusers$, there exists an equilibrium at $k^*$ with $\alpha = 1$. Then we need to show that conditions \eqref{ineq1:lin_alpha1} and \eqref{ineq2:lin_alpha1} are satisfied. Since the said equilibrium is at $k^*$, by the definition of $k^*$ (Claim \ref{clm:N_monotone}), we must have: 
\[
    \frac{B_{k^*}}{P} < \alpha \numusers \leq \frac{B_{k^*+1}}{P}.
\]
Now, since $\alpha = 1$, the above condition reduces to: 
\[
   \frac{B_{k^*}}{P} < \numusers \leq \frac{B_{k^*+1}}{P}.
\]
Re-arranging terms, we have the following condition on $P$: 
\[
 \frac{B_{k^*}}{\numusers} < P \leq \frac{B_{k^*+1}}{\numusers}.
\]
and thus condition \eqref{ineq1:lin_alpha1} is satisfied. 
Now, recall that $\alpha = \min \left(1, \frac{C\alpha^2 \numusers^2}{\sum_{k=1}^{\numbuyers}N_k} \right)$. Therefore, $\alpha = 1$ implies:  
\[
    C\numusers^2 \geq \frac{B_{\leq k^*}}{P} + \numusers (\numbuyers - k^*); 
\]
and by re-arranging terms, we have:
\[
    P \geq \frac{B_{\leq k^*}}{C \numusers^2 - \numusers (\numbuyers - k^*)}.
\]
Thus condition \eqref{ineq2:lin_alpha1} is also satisfied. This concludes the proof. 

\paragraph{Sufficient Conditions} For sufficiency, let us pick some $k' > \numbuyers - C \numusers$ and assume that there exists some price $P'$ for which conditions \eqref{ineq1:lin_alpha1} and \eqref{ineq2:lin_alpha1} are satisfied. We need to show that $\alpha' = 1$ is a feasible equilibrium at $k'$, induced by price $P'$. It suffices to verify the following conditions: 
\begin{align}\label{verf1}
    \frac{B_{k'}}{P'\numusers} < 1 \leq \frac{B_{k'+1}}{P' \numusers}; 
\end{align}
and 
\begin{align}\label{verf2}
    C\numusers^2 \geq \sum_{k=1}^{\numbuyers}N_k.
\end{align}
Note that \eqref{verf1} is trivially satisfied because $P'$ satisfies \eqref{ineq1:lin_alpha1}. Now, by the budget monotonicity assumption, we see that $\frac{B_k}{P'\numusers} < 1$ for all $k < k'$ and $\frac{B_k}{P'\numusers} \geq 1$ for all $k > k'$. Therefore, 
\[
     \sum_{k=1}^{\numbuyers}N_k = \frac{B_{\leq k'}}{P'} + \numusers (\numbuyers - k'). 
\]
Since $P'$ also satisfies condition \eqref{ineq2:lin_alpha1}, we have $P' \geq \frac{B_{\leq k'}}{C \numusers^2 - \numusers (\numbuyers - k')}$ which implies:
\[
    \frac{B_{\leq k'}}{P'} + \numusers (\numbuyers - k') \leq C \numusers^2 
\]
Since we have already shown the LHS of the above inequality to be equal to $\sum_{k=1}^{\numbuyers}N_k$, we verify that condition \eqref{verf2} is also satisfied. Therefore, $\left(1, (N_k)_{k \in [\numbuyers]}\right)$ constitute an equilibrium at $k'$ induced by price $P'$. This concludes the proof of sufficiency. 
\end{proof}
We can have an alternate characterization of the necessary and sufficient conditions for existence of equilibria at $k^*$ with $\alpha = 1$, one that depends only on $k^*$. We present the following claim. 
\begin{claim}\label{clm:alt_condk_alpha1}
Fix $k^*$ such that $k^* > \numbuyers - C \numusers$. There exists an equilibrium at $k^*$ with $\alpha = 1$, if and only if $k^*$ satisfies: 
\begin{align}\label{eq:alt_condk_alpha1}
    \frac{B_{\leq k^*}}{C\numusers - (\numbuyers - k^*)} \leq B_{k^*+1}. 
\end{align}
\end{claim}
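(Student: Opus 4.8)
The plan is to remove the price variable $P$ from Lemma~\ref{lem:lin_alpha1_nec_cond} and collapse the existence question into a single inequality in $k^*$. By that lemma (valid precisely because $k^* > \numbuyers - C\numusers$), an equilibrium at $k^*$ with $\alpha = 1$ exists if and only if some price $P$ satisfies both \eqref{ineq1:lin_alpha1} and \eqref{ineq2:lin_alpha1}; equivalently, the intersection
\[
\left(\tfrac{B_{k^*}}{\numusers},\ \tfrac{B_{k^*+1}}{\numusers}\right] \cap \left[\tfrac{B_{\leq k^*}}{C\numusers^2 - \numusers(\numbuyers - k^*)},\ \infty\right)
\]
is non-empty. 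The first thing I would record is that $k^* > \numbuyers - C\numusers$ forces $C\numusers - (\numbuyers - k^*) > 0$, so the lower endpoint $P_0 \defneq \frac{B_{\leq k^*}}{C\numusers^2 - \numusers(\numbuyers - k^*)}$ is a well-defined positive number; in particular, cross-multiplying by this denominator (and by $\numusers$) preserves the direction of every inequality below.

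For the necessity direction, suppose an equilibrium at $k^*$ with $\alpha=1$ exists. Then there is a price $P$ with $P_0 \le P$ from \eqref{ineq2:lin_alpha1} and $P \le \frac{B_{k^*+1}}{\numusers}$ from \eqref{ineq1:lin_alpha1}; chaining these gives $P_0 \le \frac{B_{k^*+1}}{\numusers}$, and clearing denominators turns this into $\frac{B_{\leq k^*}}{C\numusers - (\numbuyers - k^*)} \le B_{k^*+1}$, which is exactly \eqref{eq:alt_condk_alpha1}. For sufficiency I would run the same algebra backwards: \eqref{eq:alt_condk_alpha1} yields $P_0 \le \frac{B_{k^*+1}}{\numusers}$, and then I exhibit a feasible price and invoke Lemma~\ref{lem:lin_alpha1_nec_cond}. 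The natural choice is $P = \frac{B_{k^*+1}}{\numusers}$ when $k^* < \numbuyers$ (and any sufficiently large $P$ when $k^* = \numbuyers$, where $B_{\numbuyers+1} = +\infty$ makes the upper bound vacuous): this $P$ satisfies \eqref{ineq2:lin_alpha1} by the rearranged hypothesis and meets the upper bound of \eqref{ineq1:lin_alpha1} with equality.

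The only step that is more than bookkeeping — and hence the main obstacle — is the strict lower bound $\frac{B_{k^*}}{\numusers} < P$ in \eqref{ineq1:lin_alpha1}, which must be reconciled with the non-strict inequality in the claim. It holds immediately whenever $B_{k^*} < B_{k^*+1}$, and alternatively, if $P_0 > \frac{B_{k^*}}{\numusers}$ one may instead take $P = P_0$, which lies strictly above $\frac{B_{k^*}}{\numusers}$ while still satisfying both \eqref{ineq1:lin_alpha1} and \eqref{ineq2:lin_alpha1}. The residual case is a budget tie $B_{k^*} = B_{k^*+1}$ (with $k^* < \numbuyers$), where buyers $k^*$ and $k^*+1$ are indistinguishable and the threshold at $k^*$ is degenerate; I would handle this either by restricting to strictly increasing budgets or by noting it is absorbed into the global partition of prices across the feasible values of $k^*$. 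Modulo this edge case, the equivalence between \eqref{eq:alt_condk_alpha1} and non-emptiness of the price interval establishes the claim.
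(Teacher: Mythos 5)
Your argument is essentially identical to the paper's proof: both reduce existence to non-emptiness of the intersection of the price ranges from Lemma~\ref{lem:lin_alpha1_nec_cond}, which after clearing the (positive, since $k^* > \numbuyers - C\numusers$) denominators is exactly \eqref{eq:alt_condk_alpha1}. You are in fact slightly more careful than the paper, which silently ignores the degenerate budget-tie case $B_{k^*} = B_{k^*+1}$ where the interval $\left(\frac{B_{k^*}}{\numusers}, \frac{B_{k^*+1}}{\numusers}\right]$ is empty and sufficiency can fail; your explicit handling of the strict lower endpoint (taking $P = P_0$ when $P_0 > \frac{B_{k^*}}{\numusers}$) closes everything except that tie case, which you correctly flag.
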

\begin{proof}
The proof is straightforward and follows directly from Lemma \ref{lem:lin_alpha1_nec_cond}. Recall that the following conditions on price $P$ are necessary and sufficient for the existence of equilibria at $k^*$ with $\alpha = 1$: 
\[
     \frac{B_{k^*}}{\numusers} < P \leq \frac{B_{k^*+1}}{\numusers};
\]
and 
\[
   P \geq \frac{B_{\leq k^*}}{C\numusers^2 - \numusers (\numbuyers - k^*)}. 
\]
Therefore, a feasible price $P$ that can induce an equilibrium at $k^*$ with $\alpha = 1$ exists if and only if the two price ranges intersect, or equivalently:
\[
     \frac{B_{\leq k^*}}{C\numusers^2 - \numusers (\numbuyers - k^*)} \leq \frac{B_{k^*+1}}{\numusers}, 
\]
which directly leads to the condition in \eqref{eq:alt_condk_alpha1}. This concludes the proof. 
\end{proof}

\subsection{Part 2: Our Simplified Theorem \ref{thm:lin_lowben}}\label{app:lin_simplethm}
In this section, we show how Theorem \ref{thm:lin_exist} reduces to the simpler/more interpretable form in Theorem \ref{thm:lin_lowben}. Before we proceed, we will introduce two technical lemmas which will be building blocks for the rest of the proofs. 

\begin{lemma}\label{lem:tech1}
Suppose $\hat k > \numbuyers - C\numusers$ satisfies: 
\begin{subequations}
\begin{align}\label{tech1:ineq1}
    B_{\hat k} < \frac{B_{\leq \hat k}}{C\numusers - (\numbuyers - \hat k)}; \quad \text{and},
\end{align}
\begin{align}\label{tech1:ineq2}
    \frac{B_{\leq \hat k}}{C\numusers - (\numbuyers - \hat k)} \leq B_{\hat k + 1}. 
\end{align}
\end{subequations}
Then, for all $k^* > \hat k$, $k^*$ violates \eqref{tech1:ineq1} and satisfies \eqref{tech1:ineq2}. 
\end{lemma}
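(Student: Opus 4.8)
The plan is to reduce everything to tracking a single quantity. Define $g(k) \defneq \frac{B_{\leq k}}{C\numusers - (\numbuyers - k)}$, so that inequalities \eqref{tech1:ineq1} and \eqref{tech1:ineq2} at an index $k$ read simply $B_k < g(k)$ and $g(k) \le B_{k+1}$. I would first record that the denominator $d(k) \defneq C\numusers - (\numbuyers - k) = k - (\numbuyers - C\numusers)$ is strictly positive for every $k \ge \hat k$, because $\hat k > \numbuyers - C\numusers$ by hypothesis; hence $g(k)$ is well defined and positive throughout the range we care about.

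The key structural observation is that $g$ advances by forming a mediant. Using $B_{\leq k+1} = B_{\leq k} + B_{k+1}$ and $d(k+1) = d(k) + 1$, we have
\[
g(k+1) = \frac{B_{\leq k} + B_{k+1}}{d(k) + 1},
\]
which is exactly the mediant of the two fractions $g(k) = \frac{B_{\leq k}}{d(k)}$ and $B_{k+1} = \frac{B_{k+1}}{1}$. Since both denominators $d(k)$ and $1$ are positive, the mediant always lies (weakly) between the two fractions; in particular, whenever $g(k) \le B_{k+1}$ we obtain
\[
g(k) \le g(k+1) \le B_{k+1}.
\]

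With this in hand the conclusion follows by a one-step induction that propagates \eqref{tech1:ineq2} on its own. Suppose $g(k) \le B_{k+1}$ holds at some $k \ge \hat k$; this is precisely hypothesis \eqref{tech1:ineq2} when $k = \hat k$. The mediant bound then gives $g(k+1) \le B_{k+1}$, i.e. $B_{k+1} \ge g(k+1)$, which is exactly the negation of \eqref{tech1:ineq1} at index $k+1$. Moreover, combining $g(k+1) \le B_{k+1}$ with the budget monotonicity $B_{k+1} \le B_{k+2}$ from Assumption~\ref{ass:budget_order} yields $g(k+1) \le B_{k+2}$, which is \eqref{tech1:ineq2} at index $k+1$. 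Thus each step simultaneously makes \eqref{tech1:ineq1} fail and reproduces \eqref{tech1:ineq2} one index higher, so iterating upward from $k = \hat k$ establishes the claim for every $k^* > \hat k$.

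I expect the only genuine content to be spotting the mediant structure of $g(k+1)$; once that is identified, positivity of the denominators and the budget ordering make each step routine. It is worth remarking that the argument never invokes hypothesis \eqref{tech1:ineq1} at $\hat k$: only \eqref{tech1:ineq2} drives the induction, and \eqref{tech1:ineq1} is carried in the statement merely for symmetry with how $\hat k$ is characterized in the surrounding reduction.
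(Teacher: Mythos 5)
Your proof is correct and follows essentially the same route as the paper: a one-step induction that propagates \eqref{tech1:ineq2} upward, where your mediant observation is exactly the paper's algebraic step (add $B_{\hat k+1}$ to the numerator and $1$ to the denominator, using positivity of $C\numusers - (\numbuyers - \hat k - 1)$), and the second half likewise uses budget monotonicity $B_{\hat k+1} \le B_{\hat k+2}$. Your closing remark is also accurate --- the paper's derivation similarly never uses \eqref{tech1:ineq1} at $\hat k$ (despite its phrasing ``using the first inequality,'' the displayed starting point is \eqref{tech1:ineq2}).
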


\begin{lemma}\label{lem:tech2}
Given $\hat k > \numbuyers - C\numusers$ which satisfies both \eqref{tech1:ineq1} and \eqref{tech1:ineq2}, for all $k^*$ such that $\numbuyers - C\numusers < k^* < \hat k$, $k^*$ satisfies \eqref{tech1:ineq1} and violates \eqref{tech1:ineq2}. 
\end{lemma}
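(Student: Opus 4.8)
The plan is to reduce both assertions to a single monotonicity statement about \eqref{tech1:ineq1}. Throughout, write $d = \numbuyers - C\numusers$, so that $d > 0$ in the low benefit case and the common denominator $C\numusers - (\numbuyers - k) = k - d$ is positive precisely when $k > d$. For an index $m > d$, let $\Phi(m)$ denote the analogue of \eqref{tech1:ineq1}, namely $B_m(m-d) < B_{\leq m}$, and let $\Psi(m)$ denote the analogue of \eqref{tech1:ineq2}, namely $B_{\leq m} \leq B_{m+1}(m-d)$. The hypothesis is that $\Phi(\hat k)$ and $\Psi(\hat k)$ both hold, and the goal is to show, for every integer $k^*$ with $d < k^* < \hat k$, that $\Phi(k^*)$ holds while $\Psi(k^*)$ fails.

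The first key step is a \emph{duality} between the two conditions at consecutive indices: $\Phi(m+1)$ holds if and only if $\Psi(m)$ fails. Indeed, substituting $B_{\leq m+1} = B_{\leq m} + B_{m+1}$ into $\Phi(m+1)$, namely $B_{m+1}(m+1-d) < B_{\leq m} + B_{m+1}$, and cancelling $B_{m+1}$ gives the equivalent inequality $B_{m+1}(m-d) < B_{\leq m}$, which is exactly the strict negation of $\Psi(m)$. Consequently, showing that $\Psi(k^*)$ fails on the range $d < k^* < \hat k$ is the same as showing that $\Phi(k^*+1)$ holds there; as $k^*+1$ then ranges over integers in $(d, \hat k]$, both of the required assertions follow at once from the single claim that $\Phi(k)$ holds for \emph{every} integer $k$ with $d < k \leq \hat k$.

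The second step is that $\Phi$ propagates \emph{downward}: if $\Phi(k)$ holds and $k-1 > d$, then $\Phi(k-1)$ holds. To see this, split $B_{\leq k} = B_{\leq k-1} + B_k$ in $\Phi(k)$ to obtain $B_k(k-1-d) < B_{\leq k-1}$, and then invoke the budget ordering $B_{k-1} \leq B_k$ from Assumption~\ref{ass:budget_order} together with $k-1-d > 0$ to conclude $B_{k-1}(k-1-d) \leq B_k(k-1-d) < B_{\leq k-1}$, which is precisely $\Phi(k-1)$. Starting from the hypothesis $\Phi(\hat k)$ and applying this downward step repeatedly, each step legitimate so long as the smaller index still exceeds $d$, yields $\Phi(k)$ for every integer $k$ in $(d, \hat k]$. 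This is the isolated claim above, so the proof is complete.

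The only genuine content is spotting the two algebraic identities. The duality relation is the crux, as it is what converts the ``$\Psi$ fails'' half of the conclusion into an instance of the ``$\Phi$ holds'' half; the downward-propagation step is then a one-line computation once $B_{k-1} \leq B_k$ is used. I anticipate no real obstacle beyond bookkeeping: keeping the strict-versus-nonstrict inequalities consistent in the duality, and handling the boundary correctly so that every application of the downward step keeps the index strictly above $d$ (which is what guarantees $k-1-d > 0$).
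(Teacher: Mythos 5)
Your proof is correct and takes essentially the same route as the paper: the paper proves Lemma \ref{lem:tech1} by precisely the cancellation $B_{\leq m+1} = B_{\leq m} + B_{m+1}$ together with the budget ordering of Assumption \ref{ass:budget_order}, inducting upward along $k^*$, and declares the proof of Lemma \ref{lem:tech2} to be the identical argument run in the opposite direction, which is exactly your downward-propagation step. Your ``duality'' observation that $\Phi(m+1)$ holds if and only if $\Psi(m)$ fails is a clean repackaging of the same algebra, with the minor bonus of making explicit that the conclusion already follows from \eqref{tech1:ineq1} at $\hat k$ alone.
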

\begin{proof}
We will provide the proof for Lemma \ref{lem:tech1} here. The proof for Lemma \ref{lem:tech2} will be exactly identical and so, will be omitted to avoid repetition. It is given that:
\begin{align}\label{tech1_ineq:combined}
    B_{\hat k} < \frac{B_{\leq \hat k}}{C\numusers - (\numbuyers - \hat k)} \leq B_{\hat k + 1}. 
\end{align}
Using the first inequality from \eqref{tech1_ineq:combined}, we see that: 
\begin{align}\label{tech1:kplus1}
    &\frac{B_{\leq \hat k}}{C\numusers - (\numbuyers - \hat k)} \leq B_{\hat k + 1} \nonumber \\
    &\implies B_{\leq \hat k} \leq \left( C\numusers - (\numbuyers - \hat k) \right) B_{\hat k + 1}  \nonumber\\
    &\implies B_{\leq \hat k} + B_{\hat k + 1} \leq \left( C\numusers - (\numbuyers - \hat k) \right) B_{\hat k + 1} + B_{\hat k + 1} \nonumber \\
    &\implies B_{\leq \hat k + 1} \leq \left( C\numusers - (\numbuyers - \hat k - 1) \right) B_{\hat k + 1}  \nonumber\\
    &\implies \frac{B_{\leq \hat k + 1}}{\left( C\numusers - (\numbuyers - \hat k - 1) \right)} \leq  B_{\hat k + 1}.  
\end{align}
The last step follows because $C\numusers > \numbuyers - \hat k - 1$, so we can divide throughout by $\left( C\numusers - (\numbuyers - \hat k - 1) \right)$ while still retaining the direction of the inequality. Thus, we have established that $\hat k + 1$ violates \eqref{tech1:ineq1}. \\
Starting with the inequality we derived in \eqref{tech1:kplus1}, note that we immediately have: 
\[
     \frac{B_{\leq \hat k + 1}}{\left( C\numusers - (\numbuyers - \hat k - 1) \right)} \leq  B_{\hat k + 2}, 
\]
because $B_{\hat k + 1} \leq B_{\hat k + 2}$ by monotonicity of buyer budgets. Therefore, $\hat k + 1$ satisfies \eqref{tech1:ineq2}. By using an exactly identical argument as above, we can show: 
\[
  \frac{B_{\leq \hat k + 1}}{\left( C\numusers - (\numbuyers - \hat k - 1) \right)} \leq  B_{\hat k + 2}. 
\]
This implies that $\hat k + 2$ violates \eqref{tech1:ineq1} and satisfies \eqref{tech1:ineq2}. Finally, a straightforward induction argument extends the proof for all $k^* > \hat k$.   
\end{proof}
We can characterize the price threshold $\bar P$ and the structure of the equilibria in Theorem \ref{thm:lin_lowben} based on whether $\hat k$ exists or not. We will tackle the two cases separately.   

\subsubsection{Case 1 ($\hat k$ exists):}
In this segment, we will characterize the structure of the equilibria when $\hat k$ exists. The sequence of arguments is as follows:
\begin{enumerate}
\item First, we show in Claim \ref{clm:lin_alphaless1_unique} that $\hat k$ is the only value of $k^*$ which admits partial participation equilibria. This implies (by Lemma \ref{lem:lin_alphaless1_nec_cond}) that for all prices $P > \frac{B_{\leq \hat k}}{C\numusers^2 - \numusers (\numbuyers - \hat k)}$, there exists an equilibrium at $\hat k$ of the form $\alpha = \frac{B_{\leq \hat k}}{C\numusers^2 - \numusers (\numbuyers - \hat k)}\cdot \frac{1}{P}$. 
\item We then see that $\hat k$ also admits equilibria with $\alpha = 1$. This is immediate, since $\hat k$ satisfies the condition in Claim \ref{clm:alt_condk_alpha1}. 
\item Next, we show in Claim \ref{clm:lin_binding_constraint} that the constraint $P \geq \frac{B_{\leq k^*}}{C\numusers^2 - \numusers (\numbuyers - k^*)}$ in Lemma \ref{lem:lin_alpha1_nec_cond} is active if and only if $k^* = \hat k$. This implies that the price interval which induces full participation equilibria at $\hat k$ is given by: 
\[
        \frac{B_{\leq \hat k}}{C\numusers^2 - \numusers (\numbuyers - \hat k)} \leq P \leq \frac{B_{\hat k+1}}{\numusers}.
\]
\item By technical lemma \ref{lem:tech1}, we know that all $k^* > \hat k$ admit equilibria with full participation. Since the constraint $P \geq \frac{B_{\leq k^*}}{C\numusers^2 - \numusers (\numbuyers - k^*)}$ is active only when $k^* = \hat k$, we conclude that the price range which induces equilibria at $k^* > \hat k$ with $\alpha = 1$ is given by: 
\[
         \frac{B_{k^*}}{\numusers} < P \leq \frac{B_{k^*+1}}{\numusers}.
\]
\end{enumerate}
In summary, we conclude that $\bar P = \frac{B_{\leq \hat k}}{C\numusers^2 - \numusers (\numbuyers - \hat k)}$. For all $P \geq \bar P$, there exists a full participation equilibria. This follows from subparts 3 and 4 above. Additionally, for all $P > \bar P$, there exists a partial participation equilibria of the form $\alpha = \frac{A}{P}$ where $A = \frac{B_{\leq \hat k}}{C\numusers^2 - \numusers (\numbuyers - \hat k)}$ (by subpart 1). Computation of $A$ reduces to finding $\hat k$ which can be done efficiently. 

\paragraph{Subpart 1} We now prove subpart 1 using the following claim.
\begin{claim}\label{clm:lin_alphaless1_unique}
Suppose there exists $\hat k$ which satisfies \eqref{tech1:ineq1} and \eqref{tech1:ineq2}. Then $\hat k$ is the unique value of $k^*$ which admits equilibria with partial user participation $0 < \alpha < 1$. 
\end{claim}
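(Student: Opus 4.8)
The plan is to reduce the claim to a purely combinatorial statement about which buyer thresholds $k^*$ can satisfy the existence condition of Lemma~\ref{lem:lin_alphaless1_nec_cond}, and then close the argument using the two technical lemmas. First I would recall that, by Lemma~\ref{lem:lin_alphaless1_nec_cond}, a value $k^* > \numbuyers - C\numusers$ admits a partial participation equilibrium \emph{for some price} $P$ if and only if condition~\eqref{eq:lin_cond2_alphaless1} holds, namely
\[
B_{k^*} < \frac{B_{\leq k^*}}{C\numusers - (\numbuyers - k^*)} \leq B_{k^*+1}.
\]
The accompanying price condition~\eqref{eq:lin_cond1_alphaless1} does not restrict \emph{existence}: for any $k^*$ meeting~\eqref{eq:lin_cond2_alphaless1} one can always choose $P$ large enough that $P > \frac{B_{\leq k^*}}{C\numusers^2 - \numusers(\numbuyers - k^*)}$. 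The key observation is that condition~\eqref{eq:lin_cond2_alphaless1} is exactly the conjunction of the two inequalities~\eqref{tech1:ineq1} and~\eqref{tech1:ineq2}, so a threshold admits partial participation equilibria precisely when it satisfies \emph{both} of these.

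Next I would dispose of the thresholds outside the range of $\hat k$. For $k^* \le \numbuyers - C\numusers$, Claim~\ref{clm:lin_noeq} rules out \emph{any} equilibrium, so in particular no partial participation equilibrium arises there. This restricts attention to $k^* > \numbuyers - C\numusers$, which is exactly the regime covered by the technical lemmas and, by hypothesis, contains $\hat k$.

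The heart of the argument is then to invoke Lemmas~\ref{lem:tech1} and~\ref{lem:tech2} to show that no threshold other than $\hat k$ can satisfy both~\eqref{tech1:ineq1} and~\eqref{tech1:ineq2}. By Lemma~\ref{lem:tech2}, every $k^*$ with $\numbuyers - C\numusers < k^* < \hat k$ satisfies~\eqref{tech1:ineq1} but \emph{violates}~\eqref{tech1:ineq2}, and hence fails~\eqref{eq:lin_cond2_alphaless1}. By Lemma~\ref{lem:tech1}, every $k^* > \hat k$ \emph{violates}~\eqref{tech1:ineq1} (while satisfying~\eqref{tech1:ineq2}), and again fails~\eqref{eq:lin_cond2_alphaless1}. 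Combined with the hypothesis that $\hat k$ itself satisfies both inequalities, this forces $\hat k$ to be the unique threshold meeting condition~\eqref{eq:lin_cond2_alphaless1}, and therefore the unique $k^*$ admitting a partial participation equilibrium.

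I expect the main point of care — rather than a genuine obstacle — to be the bookkeeping involved in translating the equilibrium-existence condition of Lemma~\ref{lem:lin_alphaless1_nec_cond} into the exact pair of inequalities analyzed by the technical lemmas, and in confirming that the price condition is never the binding constraint for \emph{existence} (as opposed to merely shrinking the \emph{range} of admissible prices). Once that correspondence is made explicit, uniqueness follows immediately from the monotone, ``one-sided'' crossover behavior established in Lemmas~\ref{lem:tech1} and~\ref{lem:tech2}: one inequality can only fail on one side of $\hat k$ and the other only on the opposite side, so they coincide at a single threshold.
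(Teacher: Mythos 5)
Your proposal is correct and follows essentially the same route as the paper: invoke Lemmas~\ref{lem:tech1} and~\ref{lem:tech2} to show no $k^* \neq \hat k$ in the range $k^* > \numbuyers - C\numusers$ can satisfy both \eqref{tech1:ineq1} and \eqref{tech1:ineq2}, and identify that conjunction with the existence condition \eqref{eq:lin_cond2_alphaless1} of Lemma~\ref{lem:lin_alphaless1_nec_cond}. You merely spell out two points the paper leaves implicit --- that the price condition \eqref{eq:lin_cond1_alphaless1} never obstructs existence since one can take $P$ large (the denominator being positive for $k^* > \numbuyers - C\numusers$), and that thresholds $k^* \leq \numbuyers - C\numusers$ are excluded by Claim~\ref{clm:lin_noeq} --- both of which are correct and welcome additions to the same argument.
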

\begin{proof}
Using technical lemmas \ref{lem:tech1} and \ref{lem:tech2}, we conclude that $\hat k$ is unique. Now, we will argue that $\hat k$ satisfies \eqref{tech1:ineq1} and \eqref{tech1:ineq2} if and only if it admits equilibria with $0 < \alpha < 1$. Observe that this follows directly from the result in Lemma \ref{lem:lin_alphaless1_nec_cond}. This concludes the proof. 
\end{proof}

\paragraph{Subpart 3} For the proof of subpart 3, we just need to prove Claim \ref{clm:lin_binding_constraint}.
\begin{claim}\label{clm:lin_binding_constraint}
Suppose $k^*$ admits full participation equilibria. Then the constraint $P \geq \frac{B_{\leq k^*}}{C\numusers^2 - \numusers (\numbuyers - k^*)}$ in Lemma \ref{lem:lin_alpha1_nec_cond} is binding if and only if $k^* = \hat k$.
\end{claim}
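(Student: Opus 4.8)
The plan is to reduce the notion of ``binding'' to a clean algebraic inequality and then close the argument with the technical Lemmas~\ref{lem:tech1} and~\ref{lem:tech2} together with Claim~\ref{clm:alt_condk_alpha1}. First I would recall, from Lemma~\ref{lem:lin_alpha1_nec_cond}, that the feasible prices for a full participation equilibrium at $k^*$ are exactly those $P$ in the intersection of the interval $\left(\frac{B_{k^*}}{\numusers}, \frac{B_{k^*+1}}{\numusers}\right]$ coming from \eqref{ineq1:lin_alpha1} with the half-line $P \geq \frac{B_{\leq k^*}}{C\numusers^2 - \numusers(\numbuyers - k^*)}$ coming from \eqref{ineq2:lin_alpha1}. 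The second constraint is binding precisely when it raises the left endpoint of this intersection strictly above $\frac{B_{k^*}}{\numusers}$, i.e. when
\[
\frac{B_{\leq k^*}}{C\numusers^2 - \numusers(\numbuyers - k^*)} > \frac{B_{k^*}}{\numusers}.
\]

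Next I would simplify this inequality. Since $k^*$ admits a full participation equilibrium, Claim~\ref{clm:lin_noeq} forces $k^* > \numbuyers - C\numusers$, so $C\numusers - (\numbuyers - k^*) > 0$ and I may factor $C\numusers^2 - \numusers(\numbuyers - k^*) = \numusers\left(C\numusers - (\numbuyers - k^*)\right)$ and cross-multiply without reversing the inequality. This turns the binding condition into
\[
B_{k^*} < \frac{B_{\leq k^*}}{C\numusers - (\numbuyers - k^*)},
\]
which is exactly inequality~\eqref{tech1:ineq1}. In parallel, Claim~\ref{clm:alt_condk_alpha1} tells me that $k^*$ admits a full participation equilibrium if and only if it satisfies~\eqref{tech1:ineq2}. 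Thus the whole claim collapses to the statement: among indices $k^* > \numbuyers - C\numusers$ that satisfy~\eqref{tech1:ineq2}, inequality~\eqref{tech1:ineq1} holds if and only if $k^* = \hat k$.

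Finally I would dispatch the three cases using the two technical lemmas. By hypothesis $\hat k$ satisfies both~\eqref{tech1:ineq1} and~\eqref{tech1:ineq2}, so the constraint is binding there. Lemma~\ref{lem:tech1} shows that every $k^* > \hat k$ violates~\eqref{tech1:ineq1}, so the constraint is not binding for those even though they still admit full participation. Lemma~\ref{lem:tech2} shows that every $k^*$ with $\numbuyers - C\numusers < k^* < \hat k$ violates~\eqref{tech1:ineq2}, hence by Claim~\ref{clm:alt_condk_alpha1} admits no full participation equilibrium and falls outside the hypothesis of the claim. These cases exhaust all indices admitting full participation and pin the binding condition to $\hat k$ alone. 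I expect the only delicate point to be tracking the sign of $C\numusers - (\numbuyers - k^*)$ during the cross-multiplication; once positivity is secured via $k^* > \numbuyers - C\numusers$, everything else follows mechanically from results already established.
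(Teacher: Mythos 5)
Your proposal is correct and follows essentially the same route as the paper's proof: reduce ``binding'' to the inequality $B_{k^*} < \frac{B_{\leq k^*}}{C\numusers - (\numbuyers - k^*)}$ (i.e.\ \eqref{tech1:ineq1}), invoke Claim~\ref{clm:alt_condk_alpha1} to get \eqref{tech1:ineq2} from the hypothesis, and conclude $k^* = \hat k$ by the uniqueness guaranteed by Lemmas~\ref{lem:tech1} and~\ref{lem:tech2}. Your explicit appeal to Claim~\ref{clm:lin_noeq} to secure the positivity of $C\numusers - (\numbuyers - k^*)$ before cross-multiplying is a welcome bit of care that the paper leaves implicit, but it does not change the argument.
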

\begin{proof}
Recall from Lemma \ref{lem:lin_alpha1_nec_cond} that there are the following two conditions on price $P$ for the existence of equilibria at $k^*$ with $\alpha = 1$: 
\[
             \frac{B_{k^*}}{\numusers} < P \leq \frac{B_{k^*+1}}{\numusers};
\]
and 
\[
           P \geq \frac{B_{\leq k^*}}{C\numusers^2 - \numusers (\numbuyers - k^*)}. 
\]
Therefore, the constraint $P \geq \frac{B_{\leq k^*}}{C\numusers^2 - \numusers (\numbuyers - k^*)}$ is binding if and only if: 
\[
        \frac{B_{\leq k^*}}{C\numusers^2 - \numusers (\numbuyers - k^*)} > \frac{B_{k^*}}{\numusers} \quad \text{which is equivalent to}
\]
\[
        \frac{B_{\leq k^*}}{C\numusers - (\numbuyers - k^*)} > B_{k^*}.
\]
Note that, we already have $\frac{B_{\leq k^*}}{C\numusers - (\numbuyers - k^*)} \leq B_{k^*+1}$ since $k^*$ admits equilibria with $\alpha = 1$ (by Claim \ref{clm:alt_condk_alpha1}). Therefore, combining both conditions we get: 
\[
      B_{k^*} < \frac{B_{\leq k^*}}{C\numusers - (\numbuyers - k^*)} \leq B_{k^*+1}.
\]
Therefore, $k^*$ must be equal to $\hat k$ because $\hat k$ is the unique value of $k^*$ which satisfies this condition. This concludes the proof. 
\end{proof}

\subsubsection{Case 2 ($\hat k$ does not exist):}
Suppose, $\hat k$ does not exist. This implies that \textbf{there does not exist $k^* > \numbuyers - C\numusers$} which satisfies: 
\[
     B_{k^*} < \frac{B_{\leq k^*}}{C\numusers - (\numbuyers - k^*)} \leq B_{k^*+1}.    
\]
We can infer the following about the equilibria structure in this case: 
\begin{enumerate}
    \item There are no partial participation equilibria. 
    \item Let $\widetilde k$ be the smallest value of $k^* > \numbuyers - C\numusers$ which satisfies: 
    \[
        \frac{B_{\leq k^*}}{C\numusers - (\numbuyers - k^*)} \leq B_{k^*+1}.  
    \]
    Note that $\widetilde k$ is guaranteed to exist because $k^* = \numbuyers$ always satisfies the above condition. By technical lemma \ref{lem:tech1}, all $k^* > \widetilde k$ also satisfy this condition. Therefore, by Claim \ref{clm:alt_condk_alpha1}, for all $k^* \geq \widetilde k$, there exists equilibria at $k^*$ with $\alpha = 1$.    
    \item For $k^* \geq \widetilde k$, the price range which induces equilibria at $k^*$ with $\alpha = 1$, is given by: 
    \[
          \frac{B_{k^*}}{\numusers} < P \leq \frac{B_{k^*+1}}{\numusers}. 
    \]
    This follows from the fact that $\hat k$ does not exist, so the constraint $P \geq \frac{B_{\leq k^*}}{C\numusers^2 - \numusers(\numbuyers - k^*)}$ is never active (refer to Claim \ref{clm:lin_binding_constraint}). 
\end{enumerate}
Based on the above observations, we conclude that $\bar P = \frac{B_{\widetilde k}}{\numusers}$. For all $P \geq \bar P$, there exists a unique equilibria with $\alpha = 1$. There are no  partial participation equilibria.

\end{document}